\algnewcommand{\LineComment}[1]{\State \(\triangleright\) #1}
\newenvironment{omittedproof}[2]{
	\begin{proof}[of {#1}~\ref{#2}]
	\label{proof:#2}
}{
	\end{proof}
}
\newcommand{\rotateClockwise}{\ensuremath{\operatorname{RotateClockwise}}}
\newcommand{\rotateClockwiseAA}[1]{\ensuremath{\rotateClockwise\left(#1,120^\circ\right)}}
\DeclareSymbolFont{extraup}{U}{zavm}{m}{n}
\DeclareMathSymbol{\varheart}{\mathalpha}{extraup}{86}
\newcommand{\heart}{\ensuremath{\mathbin{\textcolor{red}{\ensuremath\varheart}}}}
\renewcommand{\mod}{\mathbin{\operatorname{mod}}}
\newcommand{\elong}[2]{{#1}^{\hspace*{.05em}\triangleright{#2}}}
\newcommand{\Dynamics}{\operatorname{\mathscr D}}
\newcommand{\Ccal}{{\mathcal C}}
\newcommand{\Tlat}{{\mathbb T}}
\newcommand{\delay}{\delta}
\renewcommand{\leq}{\leqslant}
\renewcommand{\geq}{\geqslant}
\newcommand{\TM}[1]{\ensuremath{\mathcal{#1}}}
\newcommand{\TMO}{{\TM O}}
\newcommand{\scaling}[2]{\ensuremath{\mathscr{#1}_{#2}}}
\newcommand{\indexscaling}[3]{\ensuremath{\operatorname{#1}_{\scaling{#2}{#3}}}}
\newcommand{\linscaling}[2]{{\indexscaling\lambda{#1}{#2}}}
\newcommand{\muscaling}[2]{{\indexscaling\mu{#1}{#2}}}
\newcommand{\Linscaling}[2]{{\indexscaling\Lambda{#1}{#2}\!}}
\newcommand{\cell}{\Linscaling{}{}}
\newcommand{\ccenter}{\linscaling{}{}}
\newcommand{\figPatNoSign}[4]{%
	\includegraphics[scale=#1]{{#2scale_#3-#4}.pdf}%
}
\newcommand{\NSpattern}[1]{\texttt{#1}}
\newcommand{\figPatSign}[5]{%
	\begin{tabular}[t]{c}%
	{\small\NSpattern{#5}}\\[-0.5mm]%
	\figPatNoSign{#1}{#2}{#3}{#4}
	\end{tabular}
}
\newcommand{\figPat}[4]{%
	\figPatSign{#1}{#2}{#3}{#4}{#4}
}
\newcommand{\tablePatsXFive}[2]{%
    \begin{tabular}{c@{\hspace*{5mm}}c@{\hspace*{5mm}}c}
        \figPatNoSign{#1}{patterns/scale_x.5/}{#2.5}{0}
        & \figPatNoSign{#1}{patterns/scale_x.5/}{#2.5}{1}
        & \figPatNoSign{#1}{patterns/scale_x.5/}{#2.5}{10}
    \end{tabular}%
}
\newcommand{\tablePatsXEasy}[2]{%
    \begin{tabular}{c@{\hspace*{5mm}}c}
        \figPatNoSign{#1}{patterns/scale_x.0+/}{#2.0+}{0}
        & \figPatNoSign{#1}{patterns/scale_x.0+/}{#2.0+}{1}
    \end{tabular}%
}
\newcommand{\tablePatsXZero}[2]{%
        \begin{tabular}{cccc}
        	\figPat{#1}{patterns/scale_x.0/scale_#2.0/}{#2.0}{0}
        &	\figPat{#1}{patterns/scale_x.0/scale_#2.0/}{#2.0}{1}
        &	\figPat{#1}{patterns/scale_x.0/scale_#2.0/}{#2.0}{101}
        &	\figPat{#1}{patterns/scale_x.0/scale_#2.0/}{#2.0}{1001}
        \\
        	\figPat{#1}{patterns/scale_x.0/scale_#2.0/}{#2.0}{1101}
        &	\figPat{#1}{patterns/scale_x.0/scale_#2.0/}{#2.0}{10101}
        &	\figPat{#1}{patterns/scale_x.0/scale_#2.0/}{#2.0}{11001}
        &	\figPat{#1}{patterns/scale_x.0/scale_#2.0/}{#2.0}{11101}
        \\
        	\figPat{#1}{patterns/scale_x.0/scale_#2.0/}{#2.0}{100001}
        &	\figPat{#1}{patterns/scale_x.0/scale_#2.0/}{#2.0}{101101}
        &	\figPat{#1}{patterns/scale_x.0/scale_#2.0/}{#2.0}{110001}
        &	\figPat{#1}{patterns/scale_x.0/scale_#2.0/}{#2.0}{111001}
        \\
        &	\figPat{#1}{patterns/scale_x.0/scale_#2.0/}{#2.0}{111101}
        &	\figPat{#1}{patterns/scale_x.0/scale_#2.0/}{#2.0}{111111}
        \end{tabular}%
}
\newcommand{\tablePatsXZeroSmall}[2]{%
        \begin{tabular}{ccccc}
        	\figPat{#1}{patterns/scale_x.0/scale_#2.0/}{#2.0}{0}
        &	\figPat{#1}{patterns/scale_x.0/scale_#2.0/}{#2.0}{1}
        &	\figPat{#1}{patterns/scale_x.0/scale_#2.0/}{#2.0}{101}
        &	\figPat{#1}{patterns/scale_x.0/scale_#2.0/}{#2.0}{1001}
        &	\figPat{#1}{patterns/scale_x.0/scale_#2.0/}{#2.0}{1101}
        \\
            \figPat{#1}{patterns/scale_x.0/scale_#2.0/}{#2.0}{10101}
        &	\figPat{#1}{patterns/scale_x.0/scale_#2.0/}{#2.0}{11001}
        &	\figPat{#1}{patterns/scale_x.0/scale_#2.0/}{#2.0}{11101}
        &	\figPat{#1}{patterns/scale_x.0/scale_#2.0/}{#2.0}{100001}
        &	\figPat{#1}{patterns/scale_x.0/scale_#2.0/}{#2.0}{101101}
        \\
        	\figPat{#1}{patterns/scale_x.0/scale_#2.0/}{#2.0}{110001}
        &	\figPat{#1}{patterns/scale_x.0/scale_#2.0/}{#2.0}{111001}
        &	\figPat{#1}{patterns/scale_x.0/scale_#2.0/}{#2.0}{111101}
        &	\figPat{#1}{patterns/scale_x.0/scale_#2.0/}{#2.0}{111111}
        \end{tabular}%
}
\newcommand{\fittablePatsXZero}[1]{%
    \resizebox{\textwidth}{!}{\tablePatsXZero{1}{#1}}%
}
\newcommand{\fittablePatsXZeroSmall}[1]{%
    \resizebox{\textwidth}{!}{\tablePatsXZeroSmall{1}{#1}}%
}
\newcommand{\fitfigurePatsXZero}[1]{
        \begin{figure}[t]
        \centering
        \fittablePatsXZero{#1}
        \caption{\captionpar{The 14 self-supported tight routings at scale \protect\scaling A{#1}:} in purple, the clean edge used to extend the routing in this cell; in green, the sides already covered by an occupied neighboring cell; in red, the new clean edges at the second clockwise-most position on every available side; highlighted in orange, the seed.}
        \label{fig:scaling:algo:routing:A#1}
        \end{figure}
}
\newcommand{\fitfigurePatsXZeroSmall}[1]{
        \begin{figure}[t]
        \centering
        \fittablePatsXZeroSmall{#1}
        \caption{\captionpar{The 14 self-supported tight routing extensions at scale \protect\scaling A{#1}:} in purple, the clean edge used to extend the path in this cell; in red, the new clean edges at the second clockwise-most position on every available side; highlighted in orange, the seed.}
        \label{fig:scaling:algo:routing:A#1}
        \end{figure}
}
\newcommand{\tabFigFourZero}[1]{
       \begin{tabular}{ccccc}
        	\figPat{#1}{patterns/scale_x.0/scale_4.0/}{4.0}{0}
        &	\figPat{#1}{patterns/scale_x.0/scale_4.0/}{4.0}{1}
        &	\figPat{#1}{patterns/scale_x.0/scale_4.0/}{4.0}{11}
        &	\figPat{#1}{patterns/scale_x.0/scale_4.0/}{4.0}{101}
        &	\figPat{#1}{patterns/scale_x.0/scale_4.0/}{4.0}{111}
       \\	\figPat{#1}{patterns/scale_x.0/scale_4.0/}{4.0}{1001}
        &	\figPat{#1}{patterns/scale_x.0/scale_4.0/}{4.0}{1011}
        &	\figPat{#1}{patterns/scale_x.0/scale_4.0/}{4.0}{1101}
        &	\figPat{#1}{patterns/scale_x.0/scale_4.0/}{4.0}{1111}
        &	\figPat{#1}{patterns/scale_x.0/scale_4.0/}{4.0}{10001}
        \\	\figPat{#1}{patterns/scale_x.0/scale_4.0/}{4.0}{10011}
        &	\figPat{#1}{patterns/scale_x.0/scale_4.0/}{4.0}{10101}
        &	\figPat{#1}{patterns/scale_x.0/scale_4.0/}{4.0}{10111}
        &	\figPat{#1}{patterns/scale_x.0/scale_4.0/}{4.0}{11001}
        &	\figPat{#1}{patterns/scale_x.0/scale_4.0/}{4.0}{11011}
        \\	\figPat{#1}{patterns/scale_x.0/scale_4.0/}{4.0}{11101}
        &	\figPat{#1}{patterns/scale_x.0/scale_4.0/}{4.0}{11111}
        &	\figPat{#1}{patterns/scale_x.0/scale_4.0/}{4.0}{100001}
        &	\figPat{#1}{patterns/scale_x.0/scale_4.0/}{4.0}{100011}
        &	\figPat{#1}{patterns/scale_x.0/scale_4.0/}{4.0}{100101}
        \\	\figPat{#1}{patterns/scale_x.0/scale_4.0/}{4.0}{100111}
        &	\figPat{#1}{patterns/scale_x.0/scale_4.0/}{4.0}{101001}
        &	\figPat{#1}{patterns/scale_x.0/scale_4.0/}{4.0}{101011}
        &	\figPat{#1}{patterns/scale_x.0/scale_4.0/}{4.0}{101101}
        &	\figPat{#1}{patterns/scale_x.0/scale_4.0/}{4.0}{101111}
        \\	\figPat{#1}{patterns/scale_x.0/scale_4.0/}{4.0}{110001}
        &	\figPat{#1}{patterns/scale_x.0/scale_4.0/}{4.0}{110011}
        &	\figPat{#1}{patterns/scale_x.0/scale_4.0/}{4.0}{110101}
        &	\figPat{#1}{patterns/scale_x.0/scale_4.0/}{4.0}{110111}
        &	\figPat{#1}{patterns/scale_x.0/scale_4.0/}{4.0}{111001}
        \\	\figPat{#1}{patterns/scale_x.0/scale_4.0/}{4.0}{111011}
        &	\figPat{#1}{patterns/scale_x.0/scale_4.0/}{4.0}{111101}
        &	\figPat{#1}{patterns/scale_x.0/scale_4.0/}{4.0}{111111}
        \end{tabular}
 }
\newcommand{\figPatThreeZero}[2]{%
    \figPat{#1}{patterns/scale_x.0/scale_3.0/}{3.0}{#2}%
}
\newcommand{\tabFigBasicThreeZeroCompact}[1]{
      \begin{tabular}{ccccccccc}
    	\figPatThreeZero{#1}{0}
    	&   \figPatThreeZero{#1}{1}
    	&   \figPatThreeZero{#1}{100001}
    	&   \figPatThreeZero{#1}{110001}
    	&   \figPatThreeZero{#1}{111001}
        &  	\figPatThreeZero{#1}{111101}
    	&   \figPatThreeZero{#1}{111111}
    	&   \figPatThreeZero{#1}{101}
    	&   \figPatThreeZero{#1}{1001}
    	\\
    	\figPatThreeZero{#1}{10001}
    	&   \figPatThreeZero{#1}{100101}
    	&   \figPatThreeZero{#1}{101001}
    	&   \figPatThreeZero{#1}{1101}
    	&   \figPatThreeZero{#1}{11001}
    	&   \figPatThreeZero{#1}{101101}
    	&   \figPatThreeZero{#1}{110101}
    	&   \figPatThreeZero{#1}{11101}
    	&   \figPatThreeZero{#1}{10101}
   \end{tabular}
}
\newcommand{\figPatAThree}[2]{%
     \figPatSign{\NSpatScale}{patterns/scale_x.0/scale_3.0/}{3.0}{#1}{#2}%
}
\tikzstyle{NSpat}=[]
\tikzstyle{NSarr}=[->,>=latex,thick]
\tikzstyle{NSdir}=[midway,above,sloped]
\tikzstyle{NSdirB}=[near end,above,sloped]
\newcommand{\NSpatScale}{1}
\newcommand{\NSpatDist}{3.2cm}
\newcommand{\NSpatDistV}{2.5cm}
\newcommand{\NSnx}{\raisebox{.1em}{{\scalebox{0.5}{\,\ensuremath{\rangle\!\rangle}\,}}}}
\newcommand{\mathopsf}[1]{\ensuremath{\operatorname{\textsf{#1}}}}
\newcommand{\vertexColor}{{\mathopsf{color}}}
\newcommand{\dirStyle}{\mathopsf}
\newcommand{\dirNW}{{\dirStyle{nw}}}
\newcommand{\dirN}{{\dirStyle{n}}}
\newcommand{\dirNE}{{\dirStyle{ne}}}
\newcommand{\dirE}{{\dirStyle{e}}}
\newcommand{\dirSE}{{\dirStyle{se}}}
\newcommand{\dirS}{{\dirStyle{s}}}
\newcommand{\dirSW}{{\dirStyle{sw}}}
\newcommand{\dirW}{{\dirStyle{w}}}
\newcommand{\dirSet}{\ensuremath{\mathcal{D}}}
\newcommand{\allDir}{\ensuremath{\{\dirNW, \dirNE, \dirE, \dirSE, \dirSW, \dirW\}}}
\newcommand{\makecell}[1]{\ensuremath{{#1}^{\scalebox{0.75}{$\hexagon$}}}}
\newcommand{\mc}[1]{\makecell{(#1)}}
\newcommand{\makelat}[1]{\ensuremath{{#1}^{\scalebox{0.6}{$\bigtriangleup$}}}}
\newcommand{\ml}[1]{\makelat{(#1)}}
\newcommand{\cdirNW}{{\makecell{\dirStyle{nw}}}}
\newcommand{\cdirN}{{\makecell{\dirStyle{n}}}}
\newcommand{\cdirNE}{{\makecell{\dirStyle{ne}}}}
\newcommand{\cdirSE}{{\makecell{\dirStyle{se}}}}
\newcommand{\cdirSW}{{\makecell{\dirStyle{sw}}}}
\newcommand{\cdirS}{{\makecell{\dirStyle{s}}}}
\newcommand{\dirCell}{\ensuremath{\makecell{\mathcal{D}}}}
\newcommand{\allCellDir}{\ensuremath{\{\cdirNW, \cdirN, \cdirNE, \cdirSE, \cdirS, \cdirSW\}}}
\newcommand{\diropp}[1]{\ensuremath{\bar{#1}}}
\newcommand{\captionpar}[1]{\textit{#1}}
\renewcommand{\vec}{\overrightarrow}
\newcommand{\NSfct}[1]{\ensuremath{\operatorname{\mathsf{#1}}}}
\newcommand{\inter}[1]{\ensuremath{[\![#1]\!]}}
\newcommand{\rtime}{\NSfct{rtime}}
\newcommand{\CW}{\NSfct{cw}}
\newcommand{\CCW}{\NSfct{ccw}}
\newcommand{\sig}{\NSfct{sig}}
\newcommand{\NSalgoName}[1]{\textup{\scshape #1}}
\newcommand{\algoCoverHex}{\NSalgoName{CoverPseudoHexagon}}
\newcommand{\fitbox}[1]{\resizebox{\textwidth}{!}{#1}}
\newcommand{\subfigB}[1]{
    \begin{subfigure}{.48\textwidth}
        \fitbox{\tablePatsXEasy{1}{#1}}
        \caption{Extensions for \protect\scaling B{#1}}
        \label{fig:scaling:algo:routing:B#1}
    \end{subfigure}
}
\newcommand{\subfigC}[1]{
    \begin{subfigure}{\textwidth}
        \fitbox{\tablePatsXFive{1}{#1}}
        \caption{Extensions for \protect\scaling C{#1}}
        \label{fig:scaling:algo:routing:C#1}
    \end{subfigure}
}
\newtheorem{invariant}{Invariant}
\newcommand{\ALGOmultiline}[1]{%
  \begin{tabularx}{\dimexpr\linewidth-\ALG@thistlm}[t]{@{}X@{}}
    #1
  \end{tabularx}
}
\newcommand{\incMovie}[3]{\includegraphics[width=#1]{./movies/#2-#3.pdf}}
\newcommand{\tabMovieThirteenFrames}[2]{
    \begin{tabular}{cccc}
        \incMovie{#1}{#2}{0000}
    &   \incMovie{#1}{#2}{0001}
    &   \incMovie{#1}{#2}{0002}
    &   \incMovie{#1}{#2}{0003}
    \\
        \incMovie{#1}{#2}{0004}
    &   \incMovie{#1}{#2}{0005}
    &   \incMovie{#1}{#2}{0006}
    &   \incMovie{#1}{#2}{0007}
    \\
        \incMovie{#1}{#2}{0008}
    &   \incMovie{#1}{#2}{0009}
    &   \incMovie{#1}{#2}{0010}
    &   \incMovie{#1}{#2}{0011}
    \\
        \incMovie{#1}{#2}{0012}
    \end{tabular}
}
\newcommand{\tabMovieFourteenFrames}[2]{
    \begin{tabular}{cccccc}
        \incMovie{#1}{#2}{0000}
    &   \incMovie{#1}{#2}{0001}
    &   \incMovie{#1}{#2}{0002}
    &   \incMovie{#1}{#2}{0003}
    \\
        \incMovie{#1}{#2}{0004}
    &    \incMovie{#1}{#2}{0005}
    &   \incMovie{#1}{#2}{0006}
    &   \incMovie{#1}{#2}{0007}
    \\
        \incMovie{#1}{#2}{0008}
    &   \incMovie{#1}{#2}{0009}
    &   \incMovie{#1}{#2}{0010}
    &   \incMovie{#1}{#2}{0011}
    \\
        \incMovie{#1}{#2}{0012}
    &   \incMovie{#1}{#2}{0013}
    \end{tabular}
}
\newcommand{\NSomitted}[1]{}
\newcommand{\figPatAAA}{Fig.~\ref{fig:scaling:algo:A3:routing}}
\newcommand{\refInvAAA}[1]{\mbox{(\ref{inv:algo:A3}.\ref{#1})}}
\newcommand{\withOrWithoutAppendix}[2]{\ifthenelse{\boolean{withAppendix}}{#1}{#2}}
\newcommand{\omittedSeeFullArticle}{omitted, see~\cite{shape2018DNAfull}}
    \newcommand{\href}[2]{#2}
\newif\ifabstract
\newif\iffull
\newtoks\magicAppendix
\newtoks\magictoks
\newif\iflater
\long\def\later#1{\magictoks={#1}%
  \edef\magictodo{\noexpand\magicAppendix={\the\magicAppendix \par
    \the\magictoks%
  }}
  \magictodo}
\long\def\both#1{\magictoks={#1}%
  \edef\magictodo{\noexpand\magicAppendix={\the\magicAppendix \par
    \noexpand\setcounter{theorem-preserve}{\noexpand\arabic{theorem}}%
    \noexpand\setcounter{theorem}{\arabic{theorem}}%
    \noexpand\setcounter{section-preserve}{\noexpand\arabic{section}}%
    \noexpand\setcounter{section}{\arabic{section}}%
	\noexpand\let\noexpand\oldsection=\noexpand\thesection
	\noexpand\def\noexpand\thesection{\thesection}
	\noexpand\let\noexpand\oldlabel=\noexpand\label
	\noexpand\let\noexpand\label=\noexpand\blank
    \the\magictoks%
    \noexpand\setcounter{theorem}{\noexpand\arabic{theorem-preserve}}%
    \noexpand\setcounter{section}{\noexpand\arabic{section-preserve}}%
	\noexpand\let\noexpand\thesection=\noexpand\oldsection
	\noexpand\let\noexpand\label=\noexpand\oldlabel
  }}
  \magictodo
  \the\magictoks}
\long\def\later#1{#1}
\long\def\both#1{#1}
\long\def\magicappendix{
	\latertrue%
	\the\magicAppendix%
}
\newtheorem{observation}[theorem]{Observation}
\newcommand{\calF}{\mathcal{F}}
\newcommand{\calPB}{\mathcal{P} \mathcal{B}}
\newcommand{\calNB}{NB}
\newcommand{\todoi}[1]{\todo[inline]{#1}}
\newcommand{\urltt}[1]{{\ttfamily\protect\url{#1}}}
\newcommand{\hreftt}[2]{\protect\href{#1}{\ttfamily #2}}
\title{Know When to Fold 'Em: \\ Self-Assembly of Shapes by Folding in Oritatami} 
\authorrunning{E.D., J.H., M.O., M.J.P., T.A.R., N.S., S.S.\/ and H.T.}
\titlerunning{Self-Assembly of Shapes by Folding in Oritatami}
\institute{%
	CSAIL, Massachusetts Institute of Technology, USA. \urltt{edemaine@mit.edu}
    \and
    Department of Computer Science and Information Systems, University of Wisconsin - River Falls, River Falls, WI, USA.
    \urltt{jacob.hendricks@uwrf.edu}
    \and
    Department of Computer Science and Computer Engineering, University of Arkansas, Fayetteville, AR, USA.
    \hreftt{mailto:mo015@uark.edu}{\{mo015,} \hreftt{mailto:patitz@uark.edu}{patitz,} \hreftt{mailto:tar003@uark.edu}{tar003\}@uark.edu}.
    \and
    Supported in part by NSF Grant CCF-1422152 and CAREER-1553166.
    \and
    CNRS, \'Ecole Normale Sup\'erieure de Lyon (LIP, UMR 5668) \& IXXI, U. Lyon. \hreftt{http://perso.ens-lyon.fr/nicolas.schabanel}{perso.ens-lyon.fr/first.last} Supported by Moprexprogmol CNRS MI grant.
    \and
	University of Electro-Communications, Tokyo, Japan.
	\urltt{s.seki@uec.ac.jp}. Supported in part by JST Program to Disseminate Tenure Tracking System, MEXT, Japan, No.~6F36, JSPS Grant-in-Aid for Young Scientists (A) No.~16H05854, and JSPS Bilateral Program No.~YB29004
 	\and
	Colorado School of Mines, Golden, CO, USA.
 	\urltt{hadleythomas88@gmail.com}
}
\author{
 	Erik D. Demaine%
	\inst{1}
\and
	Jacob Hendricks%
 	\inst{2}
\and
	 Meagan Olsen\inst{3}%
\and
 	Matthew J. Patitz%
 	\inst{3,4}%
\and
	Trent A. Rogers
 	\inst{3,4}%
\and
	Nicolas Schabanel
 	\inst{5}%
\and
 	Shinnosuke Seki%
	\inst{6}
\and
 	Hadley Thomas%
	\inst{7}
}
\date{}
\begin{document}

\maketitle


\begin{abstract}
An oritatami system (OS) is a theoretical model of self-assembly via co-transcriptional folding.  It consists of a growing chain of beads which can form bonds with each other as they are transcribed. During the transcription process, the $\delta$ most recently produced beads  dynamically fold so as to maximize the number of bonds formed, self-assemblying into a shape incrementally. The parameter $\delta$ is called the \emph{delay} and is related to the transcription rate in nature.  

This article initiates the study of shape self-assembly using oritatami. 
A shape is a connected set of points in the triangular lattice.
We first show that oritatami systems differ fundamentally from tile-assembly systems by exhibiting a family of infinite shapes that can be tile-assembled but cannot be folded by any OS.
As it is NP-hard in general to determine whether there is an OS that folds into (self-assembles) a given finite shape, we explore the folding of upscaled versions of finite shapes. We show that any shape can be folded from a constant size seed, at any scale $n\geq 3$, by an OS with delay $1$. We also show that any shape can be folded at the smaller scale $2$ by an OS with \emph{unbounded} delay. 
This leads us to investigate the influence of delay and to prove that, for all $\delta > 2$, there are shapes that can be folded  (at scale~$1$) with delay $\delta$ but not with delay $\delta'<\delta$. 
%

These results serve as a foundation for the study of shape-building in this new model of self-assembly, and have the potential to provide better understanding of cotranscriptional folding in biology, as well as improved abilities of experimentalists to design artificial systems that self-assemble via this complex dynamical process.

\end{abstract}



\section{Introduction}

Transcription is the process in which an RNA polymerase enzyme (colored in orange in Fig.~\ref{fig:rna_origami}) synthesizes the temporal copy (blue) of a gene (gray spiral) out of  ribonucleotides of four types {\tt A}, {\tt C}, {\tt G}, and {\tt U}.
The copied sequence is called the \textit{transcript}.

\begin{figure}[b]
\scalebox{.93}{%
\begin{minipage}{0.6\linewidth}
\includegraphics[width=\linewidth]{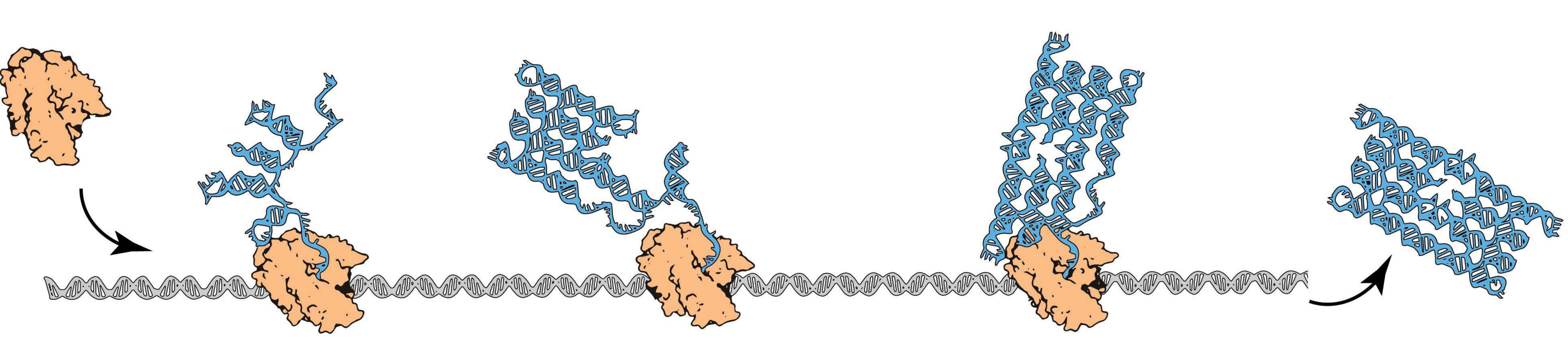}
\end{minipage}
{\Large $\Rightarrow$}
\begin{minipage}{0.3\linewidth}
\scalebox{0.3}{\begin{tikzpicture}
\tikzstyle{mol} = [fill,circle,inner sep=2pt]

\draw[-triangle 90, very thick] (3, 0) node[mol] {}
-- ++(180:1) node[mol] {}
-- ++(180:1) node[mol] {}
-- ++(180:1) node[mol] {}
-- ++(180:1) node[mol] {}
-- ++(180:1) node[mol] {}
-- ++(180:1) node[mol] {}
-- ++(180:1) node[mol] {}
-- ++(180:1) node[mol] {}
-- ++(180:1) node[mol] {}
-- ++(60:1) node[mol] {}
-- ++(0:1) node[mol] {}
-- ++(0:1) node[mol] {}
-- ++(60:1) node[mol] {}
-- ++(180:1) node[mol] {}
-- ++(180:1) node[mol] {}
-- ++(60:1) node[mol] {}
-- ++(0:1) node[mol] {}
-- ++(0:1) node[mol] {}
-- ++(60:1) node[mol] {}
-- ++(180:1) node[mol] {}
-- ++(180:1) node[mol] {}
-- ++(60:1) node[mol] {}
-- ++(0:1) node[mol] {}
-- ++(0:1) node[mol] {}
-- ++(0:1) node[mol] {}
-- ++(0:1) node[mol] {}
-- ++(0:1) node[mol] {}
-- ++(0:1) node[mol] {}
-- ++(240:1) node[mol] {}
-- ++(180:1) node[mol] {}
-- ++(180:1) node[mol] {}
-- ++(180:1) node[mol] {}
-- ++(240:1) node[mol] {}
-- ++(0:1) node[mol] {}
-- ++(0:1) node[mol] {}
-- ++(0:1) node[mol] {}
-- ++(240:1) node[mol] {}
-- ++(180:1) node[mol] {}
-- ++(180:1) node[mol] {}
-- ++(180:1) node[mol] {}
-- ++(240:1) node[mol] {}
-- ++(0:1) node[mol] {}
-- ++(0:1) node[mol] {}
-- ++(0:1) node[mol] {}
-- ++(0:1) node[mol] {}
-- ++(0:1) node[mol] {}
-- ++(0:1) node[mol] {}
-- ++(0:1) node[mol] {}
-- ++(60:1) node[mol] {}
-- ++(180:1) node[mol] {}
-- ++(180:1) node[mol] {}
-- ++(180:1) node[mol] {}
-- ++(60:1) node[mol] {}
-- ++(0:1) node[mol] {}
-- ++(0:1) node[mol] {}
-- ++(0:1) node[mol] {}
-- ++(60:1) node[mol] {}
-- ++(180:1) node[mol] {}
-- ++(180:1) node[mol] {}
-- ++(180:1) node[mol] {}
-- ++(60:1) node[mol] {}
-- ++(0:1) node[mol] {}
-- ++(0:1) node[mol] {}
-- ++(0:1) node[mol] {}
-- ++(0:1) node[mol] {}
-- ++(0:1) node[mol] {}
-- ++(0:1) node[mol] {}
-- ++(240:1) node[mol] {}
-- ++(180:1) node[mol] {}
-- ++(180:1) node[mol] {}
-- ++(240:1) node[mol] {}
-- ++(0:1) node[mol] {}
-- ++(0:1) node[mol] {}
-- ++(240:1) node[mol] {}
-- ++(180:1) node[mol] {}
-- ++(180:1) node[mol] {}
-- ++(240:1) node[mol] {}
-- ++(0:1) node[mol] {}
-- ++(0:1) node[mol] {}
-- ++(240:1) node[mol] {}
-- ++(180:1) node[mol] {}
-- ++(180:1) node[mol] {}
-- ++(180:1)
;

\draw[dashed, very thick]
(-5, 0) -- ++(120:1)
(-4, 0) -- ++(120:1)
(-3, 0) -- ++(120:3)
(-2, 0) -- ++(120:3)
(-1, 0) -- ++(120:5)
(0, 0) -- ++(120:1) ++(120:1) -- ++(120:3)
(1, 0) -- ++(120:1) ++(120:1) -- ++(120:3)
(2, 0) -- ++(120:1) ++(120:1) -- ++(120:1) ++(120:1) -- ++(120:1)
(3, 0) -- ++(120:1) ++(120:3) -- ++(120:1)
(4, 0) -- ++(120:1) ++(120:1) -- ++(120:1) ++(120:1) -- ++(120:1)
(5, 0) -- ++(120:3)
(6, 0) -- ++(120:3) ++(120:1) -- ++(120:1)
(7, 0) -- ++(120:5)
(8, 0) ++(120:2) -- ++(120:3)
(9, 0) ++(120:2) -- ++(120:3)
(10, 0) ++(120:4) -- ++(120:1)
(11, 0) ++(120:4) -- ++(120:1)
(3, 0) -- ++(0:1)
;

\foreach \y in {3, 5} {
\draw[dashed,very thick] (0, 0)++(60:\y) -- ++(300:1);	
}

\end{tikzpicture}}

\end{minipage}
}
\caption{
(Left) RNA Origami \cite{GeRoAn2014}. 
(Right) An abstraction of the resulting RNA tile in the oritatami system, where a dot $\bullet$ represents a sequence of 3-4 nucleotides, and the solid arrow and dashed lines represent its transcript and interactions based on hydrogen bonds between nucleotides, respectively.
}
\label{fig:rna_origami}
\end{figure}

The transcript starts folding upon itself into intricate tertiary structures immediately after it emerges from the RNA polymerase.
Fig.~\ref{fig:rna_origami} (Left) illustrates \textit{cotranscriptional folding} of a transcript into a rectangular RNA tile structure while being synthesized out of an artificial gene engineered by Geary, Rothemund, and Andersen \cite{GeRoAn2014}.
The RNA tile is provided with a kissing loop (KL) structure, which yields a $120^{\circ}$ bend, at its four corners, and sets of six copies of it self-assemble into hexagons and further into a hexagonal lattice.
\emph{Structure} is almost synonymous to \emph{function} for RNA complexes since they are highly correlated, as exemplified by various natural and artificial RNAs \cite{MolBiolRNA}.
Cotranscriptional folding plays significant roles in determining the structure (and hence function) of RNAs. 
To give a few examples, introns along a transcript cotranscriptionally fold into a loop recognizable by spliceosome and get excised \cite{MerkhoferHuJohnson2014}, and riboswitches make a decision on gene expression by folding cotranscriptionally into one of two mutually exclusive structures: an intrinsic terminator hairpin and a pseudoknot, as a function of specific ligand concentration \cite{WaStYuLiLu2016}.

What is folded is affected by various environmental factors including transcription rate.
Polymerases have their own transcription rate: e.g., bacteriophage 3ms/nucleotide (nt) and eukaryote 200ms/nt \cite{Isambert2009} (less energy would be dissipated at slower transcription \cite{FeynmanLecComp}).
Changing the natural transcription rate, by adjusting, e.g., NTP concentration \cite{ReWiRaScRiSt1999}, can impair cotranscriptional processes \cite{ChaoKanChao1995, LeMaReNi1993} (note that polymerase pausing can also facilitate efficient folding \cite{WongSosnickPan2007} but it is rather a matter of gene design).
Given a target structure, it is hence necessary to know not only what to fold but at what rate to fold, that is, to know when to fold `em.

The primary goal of both natural and artificial self-assembling systems is to form predictable structures, i.e. shapes grown from precisely placed components, because the form of the products is what yields their functions. Mathematical models have proven useful in developing an understanding of how shapes may self-assemble, and self-assembling finite shapes is one of the fundamental goals of theoretical modeling of systems capable of self-assembly.
e.g. in tile-based self-assembly~\cite{RNAPods,DDFIRSS07,SolWin07} as well as other models of programmable matter~\cite{derakhshandeh2016universal, Nubots}.

An oritatami system (abbreviated as OS) is a novel mathematical model of cotranscriptional folding,  introduced by \cite{GeMeScSe2016}. It abstracts an RNA tertiary structure as a triple of 1) a sequence of abstract molecules (of finite types) called \emph{bead types}, 2) a directed path over a triangular lattice of \emph{beads} (i.e. a location/bead type pair), and 3) a set of pairs of adjacent beads that are considered to interact with each other via hydrogen bonds. Such a triple is called a \textit{configuration}. An abstraction of the RNA tile from~\cite{GeRoAn2014} as a configuration is shown in Fig.~\ref{fig:rna_origami} (Right). In the figure, each bead (represented as a dot) abstractly represents a sequence of 3-4 nucleotides, whose type is not stated explicitly but retrievable from the transcript's sequence of the tile (available in \cite{GeRoAn2014}); moreover, the interactions (or bonds) between pairs of beads are represented by dashed lines. An OS is provided with a finite alphabet $B$ of bead types, a sequence $w$ of beads over $B$ called its \textit{transcript}, and a rule $\heart$, which specifies between which types of beads interactions are allowed. The OS cotranscriptionally folds its transcript $w$, beginning from its initial configuration (\textit{seed}), over the triangular lattice by stabilizing beads of $w$ from the beginning one by one. Two parameters of OS govern the bead stabilization: \textit{arity} and \textit{delay}; arity models valence (maximum number of bonds per bead). Delay models the transcription rate in the sense that the system stabilizes the next bead in such a way that the sequence of the next bead and the $\delta-1$ succeeding beads is folded so as to form as many bonds as possible.

Using this model, researchers have mainly explored the computational power of cotranscriptional folding (see \cite{GeMeScSe2016} and the recent surveys \cite{RogersSeki2017, Seki2017}).
In contrast, little has been done on self-assembly of shapes.
Elonen in \cite{Elonen2016} informally sketched how an OS can fold a transcript whose beads are all of distinct types (hardcodable transcript) into a finite shape using a provided Hamiltonian path. 
Masuda et al.~implemented an OS that folds its periodic transcript into a finite portion of the Heighway dragon fractal \cite{MasudaSekiUbukata}.

\paragraph{Our results.}
We initiate a systematic study of shape self-assembly by oritatami systems. We start with the formal definitions of OS and shapes in Section~\ref{sec:def}. As it is NP-hard to decide if a given connected shape of the triangular lattice contains a Hamiltonian path \cite{Arkin}, it is also NP-hard to decide if there is an OS that folds into (self-assembles) a given finite shape. We thus explore the folding of upscaled versions of finite shapes. We introduce three upscaling schemes \scaling An, \scaling Bn and \scaling Cn, where $n$ is the scale factor (see Fig.~\ref{fig:def:upscaling:ABC}). 
We first show that oritatami systems differ fundamentally from tile-assembly systems by exhibiting a family of infinite shapes that can be tile-assembled but cannot be folded by any OS (Theorem~\ref{thm:infinite-shapes-short}, Section~\ref{sec:finite:cut}).
We then show that any shape can be folded at scale factor $2$ by an OS with \emph{unbounded} delay (Theorem~\ref{thm:hard-coded}, Section~\ref{sec:finite:unbounded}). 
In section~\ref{sec:scaling:algo},
we present various incremental algorithms that produce a delay-$1$ arity-$4$ OS that folds any shape from a seed of size $3$, at any scale $n\geq 3$ (Theorems~\ref{thm:scaling:algo:routing:Bn} and \ref{thm:scaling:algo:A3}, Section~\ref{sec:scaling:algo}). For this purpose, we introduce a universal set of $114$ bead types suitable for folding any delay-$1$ tight OS (Theorem~\ref{thm:scaling:algo:universal:bead:type}) that can be used in other oritatami designs.
We then show that the delay impacts our ability to build shapes: we prove that there are shapes that can be folded  (at scale~$1$) with delay $\delta$ but not with delay $\delta'<\delta$ (Theorem~\ref{thm:small-delay-weak}, Section~\ref{sec:small-delay-weak}). 
\withOrWithoutAppendix{%
}{%
    Omitted proofs may be found in \cite{shape2018DNAfull}.
}
%

These results serve as a foundation for the study of shape-building in this new model of self-assembly, and have the potential to provide better understanding of cotranscriptional folding in biology, as well as improved abilities of experimentalists to design artificial systems that self-assemble via this complex dynamical process. 

\paragraph{Note that} in \cite{HanKim2018DNA} in the present proceedings, the authors study a slightly different problem: they show that one can design an oritatami transcript that folds an upscaled version of a \emph{non-self-intersecting path} (instead of a shape). The initial path may come from the triangular grid or from the square grid. The scale of the resulting path is somewhere in between our scales 3 and 4 according to our definition. Note that the cells are only partially covered by their scheme. Combining their result with our theorem~\ref{thm:hard-coded}, their algorithm provides an oritatami transcript partially covering the upscaled version of any shape at scale 6.




\section{Definitions}

\label{sec:def}


\subsection{Oritatami System}
\label{sec:def:OS}

Let $B$ be a finite set of \emph{bead types}.
A \emph{routing} $r$ of a bead type sequence $w \in B^* \cup B^{\mathbb{N}}$ is a directed self-avoiding path in the triangular lattice $\Tlat$,\footnote{The triangular lattice is defined as $\mathbb{T} = (\mathbb{Z}^2, \sim)$, where $(x, y) \sim (u, v)$ if and only if $(u, v) \in \cup_{\epsilon = \pm1}\{{(x+\epsilon, y)},  {(x, y+\epsilon)}, {(x+\epsilon, y+\epsilon)}\}$. Every position $(x,y)$ in $\Tlat$ is mapped in the euclidean plane to $x\cdot X + y \cdot Y$ using the vector basis $X = (1,0)$ and $Y = \rotateClockwiseAA{X} = (-\frac12, -\frac{\sqrt3}2)$.} where for all integer $i$, vertex $r_i$ of $r$ is labelled by $w_i$. $r_i$ is the \emph{position} in $\Tlat$ of the $(i+1)$th bead, of type $w_i$, in routing $r$. A \emph{partial routing} of a sequence $w$ is a routing of a prefix of $r$. 

\paragraph{An Oritatami system $\TMO=(B, w,\heart,\delay,\alpha)$} is composed of (1) a set of bead types $B$, (2) a (possibly infinite) bead type sequence $w$, called the \emph{transcript}, (3) an \emph{attraction rule}, which is a symmetric relation $\heart\subseteq B^2$, (4) a parameter $\delay$ called the \emph{delay}, and (5) a parameter $\alpha$ called the \emph{arity}. 

We say that two bead types $a$ and $b$ \emph{attract} each other when $a\heart b$. Given a (partial) routing $r$ of a bead type sequence $w$, we say that there is a \emph{potential (symmetric) bond $r_ir_j$} between two adjacent positions $r_i$ and $r_j$ of $r$ in $\Tlat$ if  $w_i\heart w_j$ and $|i-j|>1$. A \emph{set of bonds} $H$ for a (partial) routing $r$ is a subset of its potential bonds. A couple $c=(r,H)$ is called a (partial) \emph{configuration} of $w$.  The \emph{arity} $\alpha_i(c)$ of position $r_i$ in the partial configuration $c=(r,H)$ is the number of bonds in $H$ involving $r_i$, i.e. ${\alpha_i(c) = \#\{j\,: r_ir_j\in H\}|}$.  A (partial) configuration $c$ is \emph{valid} if each position $r_i$ is involved in at most $\alpha$ bonds in $H$, i.e. if $(\forall i)~ \alpha_i(c)\leq \alpha$. We denote by $h(c)=|H|$ the number of bonds in configuration~$c$.

For any partial valid configuration $c=(r,H)$ of some sequence $w$, an \emph{elongation} of $c$ by $k$ beads (or \emph{$k$-elongation}) is a partial valid configuration $c'=(r',H')$ of $w$ of length $|c|+k$ where $r'$ extends the self-avoiding path $r$ by $k$ positions and such that $H\subseteq H'$ . We denote by $\Ccal_w$ the set of all partial configurations of $w$ (the index $w$ will be omitted when the context is clear). We denote by $\elong{c}{k}$ the set of all  $k$-elongations of a partial configuration~$c$ of sequence~$w$.

\paragraph{Oritatami dynamics.}
The folding of an oritatami system is controlled by the delay $\delta$ and the arity $\alpha$. Informally, the configuration grows from a \emph{seed configuration}, one bead at a time. This new bead adopts the position(s) that maximise the  number of valid bonds the configuration can make when elongated by $\delta$ beads in total.  This dynamics is \emph{oblivious} as it keeps no memory of the previously preferred positions; it differs thus slightly from the hasty dynamics studied in~\cite{GeMeScSe2016} but is more prevailing in the OS research \cite{GeMeScSe2017,HanKim2017,MasudaSekiUbukata,OtaSeki2017,RogersSeki2017} because it seems closer to experimental conditions such as in \cite{GeRoAn2014}.

Formally,  given an oritatami system $\TMO = (B, w, \heart, \delay, \alpha)$ and a \emph{seed configuration} $\sigma$ of the $|\sigma|$-prefix of $w$, we denote by $\Ccal_{\sigma,w}$ the set of all partial  configurations of the sequence $w$ elongating the seed configuration $\sigma$. The considered \emph{dynamics} ${\Dynamics:2^{\Ccal_{\sigma,w}}\rightarrow2^{\Ccal_{\sigma,w}}}$ maps every subset $S$ of partial configurations of length~$\ell$, elongating $\sigma$, of the sequence $w$ to the subset $\Dynamics(S)$ of partial configurations of length~$\ell+1$ of $w$ as follows:\\
\centerline{
$
\displaystyle{\Dynamics(S) = \bigcup_{\mbox{\footnotesize $c\in S$}}\, \underset{\mbox{\footnotesize$\gamma\in\elong{c}{1}$}}{\arg\max} \left(\, 
\max_{\mbox{\footnotesize$\eta\in\elong{\gamma}{\min(\delta-1,\, |w|-|\gamma|)}$}} h(\eta)\,\right)
}
$}

We say that a (partial) configuration $c$ \emph{produces} a configuration $c'$ over $w$, denoted ${c\vdash c'}$, if $c'\in\Dynamics(\{c\})$. We write $c\vdash^* c'$ if there is a sequence of configurations $c=c^0,\ldots, c^t = c'$, for some $t\geq 0$, such that $c^0 \vdash \cdots \vdash c^t$. A sequence of configurations $c=c^0\vdash\cdots \vdash c^t=c'$ is called a \emph{foldable sequence  over $w$ from configuration $c$ to configuration $c'$}. The \emph{foldable configurations} in $t$~steps of $\TMO$ are the elongations of the seed configuration~$\sigma$ by $t$ beads in the set $\Dynamics^t(\{\sigma\})$. We denote by $\prodasm{\TMO} = \cup_{t\geq0} \Dynamics^t(\{\sigma\})$ the set of all foldable configurations. A configuration $c\in\prodasm{\TMO}$ is \emph{terminal} if $\Dynamics(\{c\}) =\varnothing$. We denote by $\termasm{\TMO}$ the set of all terminal foldable configurations of $\TMO$. A finite foldable sequence $\sigma = c^0\vdash \cdots\vdash c^t$ \emph{halts} at $c^t$ after $t$ steps if $c^t$ is terminal; then, $c^t$ is called the \emph{result} of the foldable sequence. A foldable sequence may halt after $|w|-|\sigma|$ steps or earlier if  the growth is geometrically obstructed  (i.e., if no more elongation is possible because the configuration is trapped in a closed area). An infinite foldable sequence $\sigma = c^0\vdash \cdots\vdash c^t\vdash \cdots$ admits a \emph{unique limiting configuration} $c^\infty = \sqcup_{t}\, c^t$ (the superposition of all the configurations $(c^t)$), which is called the \emph{result} of the foldable sequence. 

We say that the oritatami system is \emph{deterministic} if at all time~$t$, $\Dynamics^t(\{\sigma\})$ is either a singleton or the empty set. In this case, we denote by $c^t$ the configuration at time~$t$, such that: $c^0 = \sigma$ and $\Dynamics^t(\{\sigma\}) = \{c^t\}$ for all $t>0$; we say that the partial configuration $c^t$ \emph{folds (co-transcriptionally) into} the partial configuration $c^{t+1}$ deterministically. In this case, at time $t$, the $(t+1)$-th bead of $w$ is placed in $c^{t+1}$ at the position that maximises the number of valid bonds that can be made in a $\min(\delta, |w|-t-|\sigma|)$-elongation of $c^t$. Note that when $\alpha\geq4$ the arity constraint vanishes (as a vertex may bond to at most $4$ neighbors, $5$ if the growth is at a dead end) and then, there is only one maximum-size bond set for every routing, consisting of all its potential bonds.  

\subsection{Shape folding and scaling}
\label{sec:def:shape:upscaling}

The goal of this article is to study how to fold shapes. A \emph{shape} is a connected set of points in $\Tlat$. The \emph{shape} associated to a configuration $c=(r,H)$ of an OS $\TMO$ is the set of the points $S(c) = \cup_i \{r_i\}$ covered by the routing of $c$. A shape $S$ is \emph{foldable} from a seed of size $s$ if there is a deterministic OS $\TMO$ and a seed configuration $\sigma$ with $|\sigma| = s$, whose terminal configuration has shape $S$.  

Note that every shape admitting a Hamiltonian path is trivially foldable from a seed of size $|S|$, whose routing is a Hamiltonian path of the shape itself. The challenge is to design an OS folding into a given shape whose seed size is an \emph{absolute} constant.  
One classic approach in self-assembly is then to try to fold an \emph{upscaled} version of the shape. The goal is then to minimize the scale at which an upscaled version of every shape can be folded.     

From now on, we denote by $(i,j)\in\mathbb N^2$ the point $i\cdot X + j \cdot Y$ of $\Tlat$ in $\mathbb R^2$ where $X = (1,0)$ (east) and $Y = (-\frac12, -\frac{\sqrt3}2)$ (south west) in the canonical basis.

As it turns out, there are different possible upscaling schemes for shapes in~$\Tlat$. A \emph{scaling scheme} $\Lambda = (\lambda, \mu)$ of $\Tlat$ is defined by a homothetic linear map $\lambda$ from $\Tlat$ to $\Tlat$, and a shape $\mu$ containing the point $(0,0)$, called the \emph{cell mold}. For all $p\in\Tlat$, the \emph{cell} associated to $p$ by $\Lambda$ is the set $\cell(p) = \lambda(p)+\mu = \{\lambda(p)+q: q\in\mu\}$, i.e. the translation of the cell mold by $\lambda(p)$. $\lambda(p)$ is called the \emph{center} of the cell $\cell(p)$. The $\cell$-scaling of a shape $S$ is then the set of points $\cell(S) = \cup_{p\in S} \cell(p)$. We say that two cells $\cell(p)$ and $\cell(q)$ are neighbors, denoted by $\cell(p)\sim\cell(q)$, if they intersect or have neighboring points, i.e. if $\cell(p)\cap\cell(q)\neq\varnothing$ or there are two points $p'\in\cell(p)$ and $q'\in\cell(q)$ such that $p'\sim q'$. We require upscaling schemes to preserve the topology of~$S$, in particular that $\cell(p)\sim\cell(q)$ iff $p\sim q$.
%
%
%
%
%
We consider the following upscaling schemes (see Fig.~\ref{fig:def:upscaling:ABC}):
%
\begin{description}
\item[Scaling \scaling An:] $\linscaling An (i,j) = i\cdot(n-1,1-n) + j\cdot(n-1,2n-2)$ and $\muscaling An = H_n$   
\item[Scaling \scaling Bn:] $\linscaling Bn (i,j) = i\cdot(n-1,-n) + j\cdot(n,2n-1)$ and $\muscaling Bn  = H_n$   
\item[Scaling \scaling Cn:] $\linscaling Cn (i,j) = i\cdot(n,-n) + j\cdot(n,2n)$ and $\muscaling Cn  = H'_n$   
\end{description}
where ${H_n = \{(i,j)\in\Tlat: |i|< n, |j|< n, |i-j|< n\}}$ is the (filled) hexagon of radius $n-1$ with $n$ vertices on each side, and ${H'_n = \{(i,j)\in\Tlat: -n< i \leq n, -n <  j \leq n, -n\leq i-j < n\}}$ is the irregular hexagon whose sides are of alternating sizes $n$ and $n+1$. Note that ${H_n \subset H'_n \subset H_{n+1}}$.
\begin{figure}[t]
    \resizebox{\textwidth}{!}{
        \includegraphics[height=3cm]{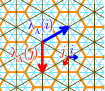}
        \includegraphics[height=3cm]{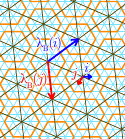}
        \includegraphics[height=3cm]{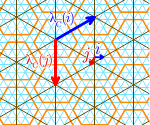}
        \includegraphics[height=3cm]{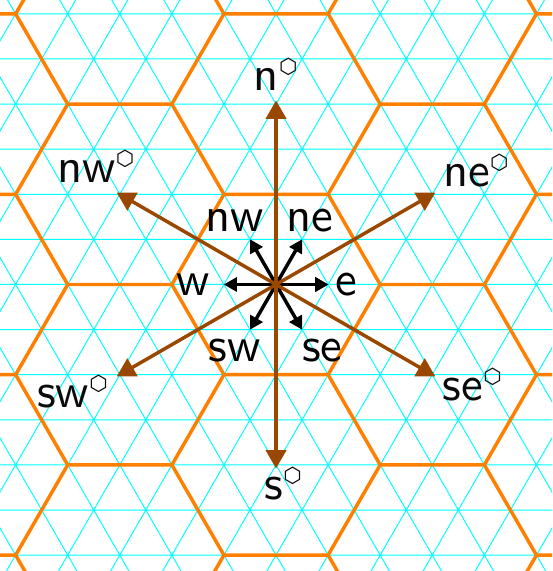}
    }  
    \caption{The three upscaling schemes \scaling A3, \scaling B3 and \scaling C3 (cell boundaries are represented in orange and the upscaled triangular grid in brown); to the right: the lattice directions $\protect\dirSet = \protect\allDir$ in $\Tlat$, and the cell directions $\protect\dirCell = \protect\allCellDir$.}
    \label{fig:def:upscaling:ABC}
\end{figure}
Each of these upscaling schemes have their ups and downs:
\begin{itemize}
\item Every cell in \scaling An is a regular hexagon. It is the most compact but, as the sides of the cells overlap, the area of $\Linscaling An(S)$ scales linearly only asymptotically with the size of the original shape $S$. In particular empty cells are smaller than occupied cell. 
\item Every cell in \scaling Bn is a regular hexagon. It is less compact than \scaling An and twisted, but the edges of neighboring cells never overlap so the area of $\Linscaling An(S)$ scales linearly with the size of the original shape $S$.
\item \scaling Cn can be considered as a non-overlapping version of \scaling A{n+1} where the \cdirNW-, \cdirN- and \cdirNE-sides of each cell have been trimmed by $1$. It is isotropic as its cells are irregular hexagons, but it is untwisted and $\Linscaling Cn(S)$ scales linearly with the size of the original shape $S$. One can also see the irregular hexagons as concentric spheres growing from the center of the triangles in lattice $\Tlat$. 
\end{itemize}

In terms of the resulting size of $\Linscaling{}{}(S)$, \scaling An is strictly more compact than \scaling Bn which is strictly more compact than \scaling Cn which is as compact as \scaling A{n+1} for all $n\geq 2$. $n$ is referred as the \emph{scale} for each scheme.
Our goal is to find an OS with constant seed size for each of these schemes that can fold any shape at the smallest scale $n$. 

\NSomitted{
\begin{figure}[H]
\hfill
\includegraphics[height=3cm]{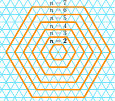}
\hfill
\includegraphics[height=3cm]{scale-3_0.pdf}
\hfill
\,
\caption{\captionpar{Left:} Scaling \scaling An cell shapes. \captionpar{Right:}  the cells at scale \scaling C3 (in orange) and the underlying  rotated triangular lattice (in brown), by $-30^\circ$, whose vertices are located at the center vertices of the hexagons.}
\end{figure}

\begin{figure}[H]
\hfill
\includegraphics[height=3cm]{scaling-x_0.pdf}
\hfill
\includegraphics[height=3cm]{scale-3_0+.pdf}
\hfill
\,
\caption{\captionpar{Left:} Scaling \scaling Bn cell shapes. \captionpar{Right:}  the cells at scale \scaling C3 (in orange) and the underlying  rotated triangular lattice (in brown), by $-30^\circ+\epsilon$, whose vertices are located at the center vertices of the hexagons.}
\end{figure}

\begin{figure}[H]
\hfill
\includegraphics[height=3cm]{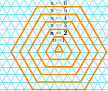}
\hfill
\includegraphics[height=3cm]{scale-3_5.pdf}
\hfill
\,
\caption{\captionpar{Left:} Scaling \scaling Cn cell shapes. \captionpar{Right:}  the cells at scale \scaling Cn (in orange) and the underlying rotated triangular lattice (in brown), by $-30^\circ$, whose vertices are located at the center of the triangle at the center of each cell in the original triangular lattice (the orange dot in the figure to the left).}
\end{figure}
}

Before we give our algorithms, we note the importance of scaling the shape in order to self-assemble it. Figure~\ref{fig:impossible-shape-unscaled} shows an example of a shape which cannot be self-assembled by any OS (at scale~$1$), as it does not contain any Hamiltonian path. In fact, \cite{Arkin} proves that it is NP-hard to decide if a shape in $\Tlat$ has a Hamiltonian path. Note that, if we are given a Hamiltonian path, there is a (hard-coding) OS that ``folds'' it, by simply using this path as the seed with no transcript. The existence of an OS (with unbounded seed) self-assembling a shape is thus equivalent to the existence of an Hamiltonian path. It follows that:

\begin{observation}\label{obs:NP-hard}
Given an arbitrary shape $S$, it is NP-hard to decide if there is an oritatami system (with unbounded seed) which self-assembles it.
\end{observation}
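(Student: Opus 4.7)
The plan is to give a polynomial-time reduction from the Hamiltonian-path problem in shapes of $\Tlat$, which Arkin et al.~\cite{Arkin} showed to be NP-hard, to the decision problem of whether some oritatami system self-assembles a given shape $S$. Since the reduction will be the identity on the input shape $S$, it suffices to prove the equivalence: \emph{$S$ admits a Hamiltonian path if and only if some deterministic OS with unbounded seed self-assembles $S$.}

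For the easy direction, suppose $S$ admits a Hamiltonian path $p = (p_0, p_1, \ldots, p_{|S|-1})$. Fix any singleton bead alphabet $B=\{a\}$, the empty attraction rule, and the empty transcript $w=\varepsilon$. Let $\sigma = (p, \varnothing)$ be the seed configuration of length $|S|$ whose routing is $p$ (labelled entirely by $a$). Then $\Dynamics^t(\{\sigma\}) = \{\sigma\}$ for every $t\geq 0$ (there is nothing to transcribe), so the OS is deterministic, its unique terminal configuration is $\sigma$, and its shape is $S(\sigma) = \{p_0,\ldots,p_{|S|-1}\} = S$. Hence this OS self-assembles $S$ from the seed $\sigma$.

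For the converse, suppose some deterministic OS $\TMO$ with seed $\sigma$ self-assembles $S$, and let $c=(r,H)$ be its terminal configuration. By the very definition of a routing, $r$ is a \emph{self-avoiding} path in $\Tlat$; hence each point of $\Tlat$ appears at most once among $r_0, r_1, \ldots, r_{|r|-1}$. By assumption, $S(c) = \bigcup_i \{r_i\} = S$, so $r$ visits every point of $S$ at least once, and by self-avoidance it visits each exactly once. Consecutive points of $r$ are neighbors in $\Tlat$, so $r$ is a Hamiltonian path of $S$.

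The two directions together yield the claimed equivalence, and since the reduction is trivially polynomial (it does not alter $S$), NP-hardness of deciding the existence of a self-assembling OS follows from NP-hardness of Hamiltonian path in $\Tlat$. The only subtle point to check carefully is that ``self-assembly'' is interpreted with the seed being part of the instance of the OS (as stated in the observation, the seed is unbounded) so that one may freely hard-code a Hamiltonian path as the seed; no real obstacle arises here.
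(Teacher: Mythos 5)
Your proposal is correct and follows essentially the same route as the paper: the paper likewise notes that a Hamiltonian path can be hard-coded as the seed of an OS with no transcript, that conversely any self-assembling OS's terminal routing is a self-avoiding path covering $S$ and hence a Hamiltonian path, and then invokes the NP-hardness of Hamiltonian path in $\Tlat$ from~\cite{Arkin}. You merely spell out both directions of the equivalence more explicitly than the paper's one-line justification.
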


In Section~\ref{sec:scaling:algo}, we will present three algorithms building delay-1 OS that fold into arbitrary shapes at any of the scales \scaling An, \scaling Bn, and \scaling Cn with $n\geq 3$. 
\section{Infinite shapes with finite cut}

\label{sec:finite:cut}

The self-assembly of shapes in oritatami systems is fundamentally different from the self-assembly of shapes in the Tile Assembly Model due to the fact that every configuration in an OS has a routing that is a linear path of beads. To illustrate this difference, let us say an infinite shape has a \emph{finite cut}  if there is a finite subset of points $K$ in $S$ such that $S\smallsetminus K$ contains at least two \emph{infinite} connected components, $S_1$ and $S_2$. As every path going between $S_1$ and $S_2$ has to pass through the cut $K$ of finite size, after a finite number of back and forth passes it will no longer be possible and the routing will not be able to fill at least one of $S_1$ or $S_2$.
\NSomitted{
Thus,

\begin{theorem}\label{thm:infinite-shapes-short}
For any infinite shape $S$ having a finite cut, there is no OS that folds into $S$.
\end{theorem}

Details of the proof can be found in Section~\ref{sec:finite-cut-append}. Furthermore, since any scaling of $S$ has also a finite cut, scaling cannot help here and we conclude that:

\begin{corollary}\label{cor:infinite-shapes-short}
Let $S$ be an infinite shape having a finite cut. Then for any scaling scheme $\Lambda$ and any OS $\TMO$, $\Lambda(S)$ is not foldable in $\TMO$.
\end{corollary}
}
Furthermore, since any scaling of $S$ has also a finite cut, scaling cannot help here and we conclude that:

\begin{theorem}\label{thm:infinite-shapes-short}
Let $S$ be an infinite shape having a finite cut. Then for any scaling scheme $\Lambda$ and any OS $\TMO$, $\Lambda(S)$ is not foldable in $\TMO$.
\end{theorem}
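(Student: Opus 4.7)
The plan is to proceed in two steps: first show that finite cuts survive under scaling, then show that no self-avoiding routing can fill any infinite shape with a finite cut. For the first step, let $K\subset S$ be a finite cut separating $S\setminus K$ into infinite components $S_1,S_2$, and define $K':=\bigcup_{p\in K}\cell(p)$. Because $K$ and every cell mold are finite, $K'$ is finite. Any lattice path in $\Lambda(S)\setminus K'$ visits only points lying in cells $\cell(p)$ with $p\in S\setminus K$; by the topology-preservation axiom $\cell(p)\sim\cell(q)\Leftrightarrow p\sim q$, such a path lifts to a walk in $S\setminus K$. Hence no path in $\Lambda(S)\setminus K'$ can connect a point of $\Lambda(S_1)$ to a point of $\Lambda(S_2)$, so $K'$ disconnects $\Lambda(S)$ into at least two parts each containing infinitely many points (since each $\cell(S_i)$ is infinite while $K'$ is finite).

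For the second step, suppose toward contradiction that some deterministic OS has a terminal configuration whose self-avoiding routing $r$ covers $\Lambda(S)$ exactly once. Only $m:=|K'|$ indices $i$ satisfy $r_i\in K'$, so the integer domain of $r$ splits into at most $m+1$ maximal intervals of indices, and each such interval maps to a contiguous subpath lying within a single connected component of $\Lambda(S)\setminus K'$. At most one of these intervals can be unbounded (the tail, if $r$ is infinite); all others contribute only finitely many positions. If one of $\Lambda(S_i)\setminus K'$ contains an infinite connected component, then the other side—also infinite—can only be covered by finite intervals and so cannot be fully covered. If instead some $\Lambda(S_i)\setminus K'$ is a union of infinitely many finite components, then covering all of them would require infinitely many distinct finite intervals, again exceeding the bound $m+1$. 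Either case yields a contradiction.

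The main obstacle I anticipate is the first step, where the overlapping-cell scheme $\scaling An$ forces shared-boundary points between $\cell(p)$ with $p\in K$ and $\cell(q)$ with $q\in S\setminus K$ to be absorbed into $K'$, possibly fragmenting $\Lambda(S_i)\setminus K'$ internally. The crucial observation is that this fragmentation never creates a bridge between $\Lambda(S_1)\setminus K'$ and $\Lambda(S_2)\setminus K'$: any such bridge would lift through the topology-preservation axiom to a walk in $S\setminus K$ from $S_1$ to $S_2$, contradicting the cut property. This makes the argument uniform across $\scaling An$, $\scaling Bn$, and $\scaling Cn$, and leaves the routing argument of step two unchanged.
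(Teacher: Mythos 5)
Your proof is correct and rests on the same core idea as the paper's: a self-avoiding routing can meet a finite set of lattice points at only finitely many indices, so it cannot fill two infinite pieces separated by that set. The execution differs in two respects worth noting. First, where the paper simply asserts that ``any scaling of $S$ also has a finite cut'' and proves only the unscaled statement in its appendix, you explicitly construct $K'=\bigcup_{p\in K}\cell(p)$ and verify, via the axiom $\cell(p)\sim\cell(q)\Leftrightarrow p\sim q$, that a path in $\Lambda(S)\smallsetminus K'$ lifts to a walk in $S\smallsetminus K$; this genuinely fills a gap, and your remark about the overlapping scheme \scaling An\ is exactly the point that needs care there. Second, the paper's argument picks $|K|+1$ points alternating between $S_1$ and $S_2$ in transcription order and argues that each consecutive pair forces a fresh point of the cut onto the routing, whereas you count the maximal index intervals of the routing avoiding $K'$ (at most $|K'|+1$, at most one of them unbounded) and observe that each is confined to one side of the cut. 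The two counts are equivalent pigeonhole arguments, but yours avoids the bookkeeping of selecting the alternating points. One small looseness: in your first case you assert the unbounded interval must lie on the side containing an infinite component; the cleaner statement is simply that the unbounded interval lies in a single component, hence on at most one side, so at least one infinite side is covered by finitely many finite intervals --- a contradiction either way.
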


\section{Self-assembling finite shapes at scale 2 with linear delay}

\label{sec:finite:unbounded}

\NSomitted{
\begin{wrapfigure}{r}{1.1in}
  \begin{center}
  \vspace{-20pt}
    \includegraphics[width=1.0in]{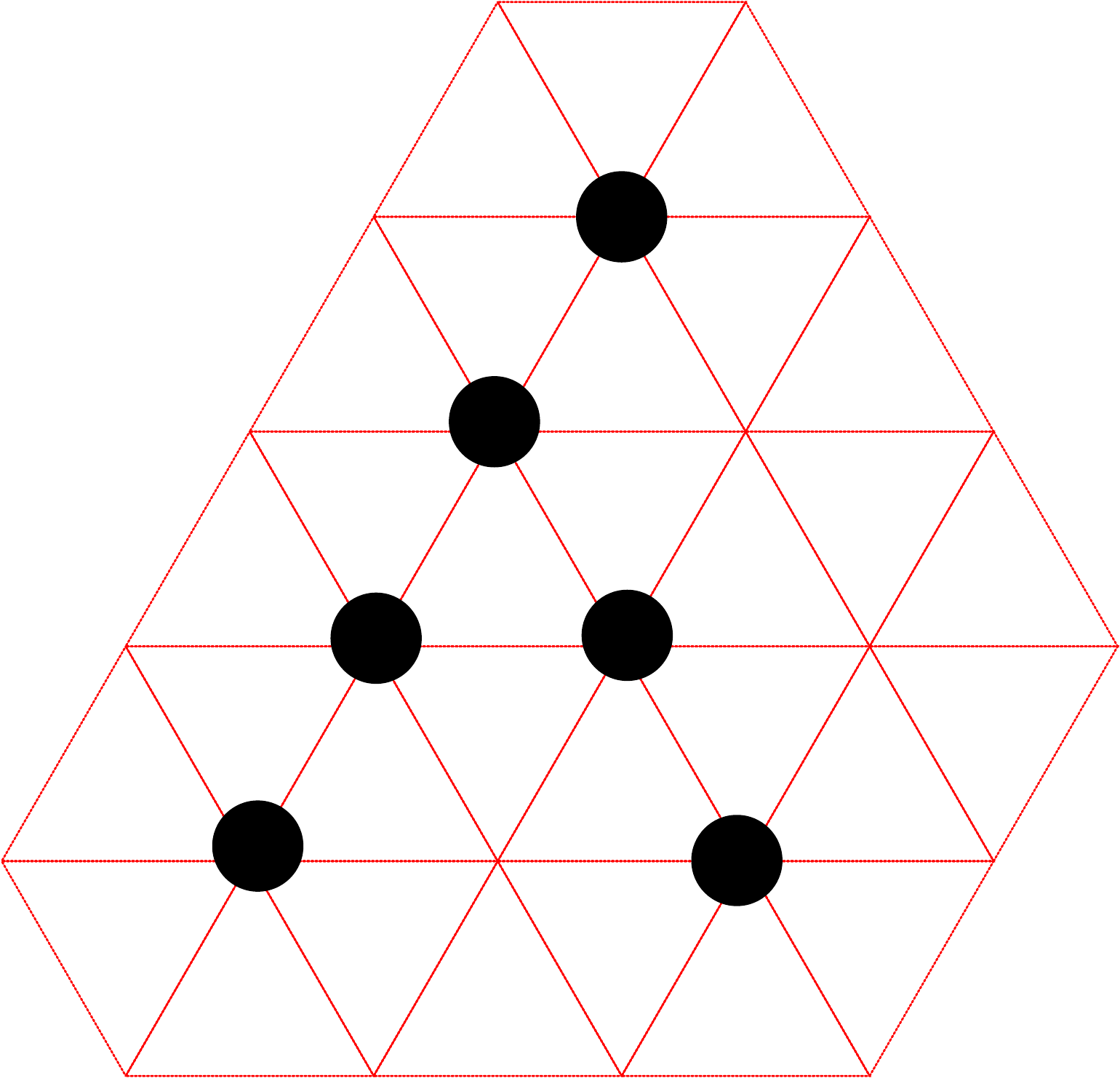}
  \end{center}
  \caption{An example shape which cannot be self-assembled by an oritatami system without being scaled\vspace{-12pt}}
  \label{fig:impossible-shape-unscaled}
\end{wrapfigure}
}

In this section, we show how to create an oritatami system for building an arbitrary finite shape $S$ at scales~\scaling A2, \scaling B2, and \scaling C2, with a delay equal to $|S|$.

\NSomitted{Before we give our algorithm, we note the importance of scaling the shape in order to self-assemble it. Figure~\ref{fig:impossible-shape-unscaled} shows an example of a shape which cannot be self-assembled by any OS (at scale~$1$), as it does not contain any Hamiltonian path. In fact it is NP-hard to decide if a shape has a Hamiltonian path~\cite{Arkin}; and given a Hamiltonian path, there is a (hard-coding) OS that folds it, which simply uses this path as the seed with no transcript. The existence of an OS (with unbounded seed) self-assembling a shape is thus equivalent to the existence of an Hamiltonian path. It follows that:




\begin{observation}\label{obs:NP-hard}
Given an arbitrary shape $S$, it is NP-hard to decide if there is an oritatami system which self-assembles it.
\end{observation}
}

%
%

The theorem below proves that: every \scaling A2-, \scaling B2- and \scaling C2-upscaled version of a given shape $S$ has a Hamiltonian cycle (HC); and furthermore, presents an algorithm that outputs an OS with delay $|\Linscaling{}{}(S)|= O(|S|)$ that folds into this cycle from a seed of size~$3$. The OS relies on set of beads following the HC and custom designed to bind to all of their neighboring beads. Using a delay factor equivalent to the size of the shape, all beads after the first three of the seed are transcribed before they then all lock into their optimal placements along the HC which allows them to form the maximum number of bonds. A schematic overview of the scaling, HC, and bead path is shown in Fig.~\ref{fig:small-series}.

\begin{theorem}\label{thm:hard-coded}
Let $S$ be a finite shape.  For each scale $s \in \{\scaling A2, \scaling B2, \scaling C2\}$, there is an OS $\mathcal{O}_S$ with delay $|\Linscaling{}{s}(S)| = O(|S|)$ and seed size~$3$ that self-assembles $S$ at scale $s$.
\end{theorem}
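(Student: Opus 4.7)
The plan is to reduce the theorem to a Hamiltonian-cycle argument together with a hard-coded (``one bead type per position'') attraction rule that is so rigid that, with sufficient lookahead, the only way to place the remaining beads is to trace out the cycle. The three main steps are: (i) construct a Hamiltonian cycle $C$ through $\Linscaling{}{s}(S)$ at each scale $s\in\{\scaling A2,\scaling B2,\scaling C2\}$; (ii) define an OS whose transcript walks $C$ and whose attraction rule encodes exactly the lattice-adjacencies induced by $C$; and (iii) verify that with delay $\delta=|\Linscaling{}{s}(S)|$ the OS folds $C$ deterministically from a $3$-bead seed.

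For step (i), consider the cell-adjacency graph $G_S$ of $S$ (one vertex per cell of $\Linscaling{}{s}(S)$, edges between lattice-neighboring cells). Since $S$ is connected, $G_S$ is connected. I would fix a spanning tree $T$ of $G_S$ and perform a DFS traversal of $T$, which yields a closed walk that enters and leaves each cell through a controlled sequence of side-pairs. The key combinatorial lemma is that, at scale $2$, each cell in each of the three scaling schemes ($H_2$ or $H'_2$, i.e.\ $7$ points, or $8{-}9$ points) admits a Hamiltonian subpath between any prescribed pair of neighboring-side entry/exit ports visited by the DFS. This is a finite case check: enumerate the constant number of (entry side, list of child-subtree sides, exit side) patterns arising in a DFS of a cubic-like hexagonal cell adjacency, and for each one exhibit one Hamiltonian cell-routing that uses the required sides. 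Concatenating these local patterns along the DFS closed walk yields the global Hamiltonian cycle $C$ of $\Linscaling{}{s}(S)$.

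For step (ii), let $N = |\Linscaling{}{s}(S)|$ and let $p_0,p_1,\dots,p_{N-1}$ be the positions traversed by $C$. Introduce a distinct bead type $b_i$ for each index $i$, let the transcript be $w=b_3 b_4\cdots b_{N-1}$, and take as seed the three-bead path $\sigma$ placing $b_0,b_1,b_2$ at $p_0,p_1,p_2$. Define the attraction rule by $b_i\heart b_j$ iff $p_i\sim p_j$ in $\Tlat$ and $|i-j|>1$. This way every lattice-adjacency along the cycle (and only those) is a potential bond, so the configuration that routes $w$ along $C$ from the seed realizes the complete set of potential bonds $B^\star$ of the whole transcript.

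For step (iii), set $\delta=N$ so that when the $4$-th bead is being placed, the lookahead of $\delta-1=N-1$ beads is precisely the rest of $w$. Any valid $\delta$-elongation from $\sigma$ is a self-avoiding routing of $w$ inside $\Linscaling{}{s}(S)$; the number of bonds it realizes is at most $|B^\star|$ because each attracting pair $(b_i,b_j)$ can appear as a bond only when $b_i$ and $b_j$ land on lattice-adjacent positions, and by unique bead types this forces $p_i,p_j$ into lattice-adjacent sites. A short counting argument shows $|B^\star|$ is achieved only when the whole transcript sits on $C$ with each $b_i$ at $p_i$; in particular the $4$-th bead must occupy $p_3$. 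Iterating the same argument at each subsequent step (the lookahead shrinks but the remaining uniqueness is preserved because the already-placed beads anchor the rest), the unique foldable sequence traces $C$, so $\mathcal{O}_S$ is deterministic and its terminal configuration has shape $\Linscaling{}{s}(S)$.

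The main obstacle is step (i): producing a uniform local-routing library that realizes, for every scale-$2$ cell, every entry/exit pattern that can arise from a DFS of the cell-adjacency graph, in all three scaling schemes. Once that finite case analysis is in hand, steps (ii) and (iii) are essentially bookkeeping and a global-optimality argument enabled by the extreme $\delta=|\Linscaling{}{s}(S)|$ lookahead, together with the symmetry-breaking provided by the $3$-bead seed.
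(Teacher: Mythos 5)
Your steps (ii) and (iii) are essentially the paper's proof. The paper likewise introduces one bead type per point of a Hamiltonian cycle through the scaled shape, breaks the cycle at three consecutive points to form the seed, sets the delay to the full transcript length, and defines the rule so that every non-consecutive lattice adjacency along the cycle is an attracting pair; determinism is then argued by showing that the designed configuration is the unique bond-maximizing placement. Be aware that your ``short counting argument'' for uniqueness is exactly where the paper spends three separate Claims: it shows that the center bead of each $7$-point cell must be surrounded by its six partners (rigidifying each cell up to rotation and reflection), that the seed fixes the orientation of the first cell, and that the two points shared by consecutive cells propagate the orientation inductively. You should expect to do comparable work there, but the idea is the same.

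The genuine divergence --- and where your plan has a gap --- is step (i). The paper does \emph{not} obtain the Hamiltonian cycle from a spanning-tree DFS with a static per-cell port-routing library; it builds the cycle incrementally, adding one cell at a time in BFS order and \emph{re-routing} already-laid edges of the partial cycle, maintaining a ``boundary invariant'' (every location where a future cell could attach sees either a straight edge or a ``V'' of the current cycle on their shared points). At scale \scaling A2 your key lemma is problematic: cells are hexagons of only $7$ points, adjacent cells \emph{share} two boundary points, and each boundary point is shared with two distinct neighbors, so the points of the scaled shape are not partitioned by the cells. A DFS closed walk visits a cell once per incident tree edge, so a cell of tree-degree $k$ must have its not-yet-owned points split into $k$ port-to-port strands; with at most $7$ points, several of them already claimed by neighboring cells, and $k$ possibly $5$ or $6$, there is not enough room, and the assignment of shared points to visits is not under your control once the tree is fixed. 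For \scaling B2 and \scaling C2, whose cells are disjoint, your approach is plausible, but for \scaling A2 you would need at least a bounded-degree spanning tree together with a careful ownership convention for shared points --- or, more simply, the paper's incremental construction, whose whole point is that it never commits to a static per-cell decomposition and is free to rewire the existing cycle when a new cell arrives.
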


\begin{figure}[t]
\resizebox{\textwidth}{!}{
	\begin{subfigure}[t]{0.6in}
	\centering
	\includegraphics[width=\textwidth]{images/impossible-shape-unscaled}
	\caption{}
	\label{fig:impossible-shape-unscaled}
	\end{subfigure}
\quad    
	\begin{subfigure}[t]{0.6in}
	\centering
	\includegraphics[width=0.6in]{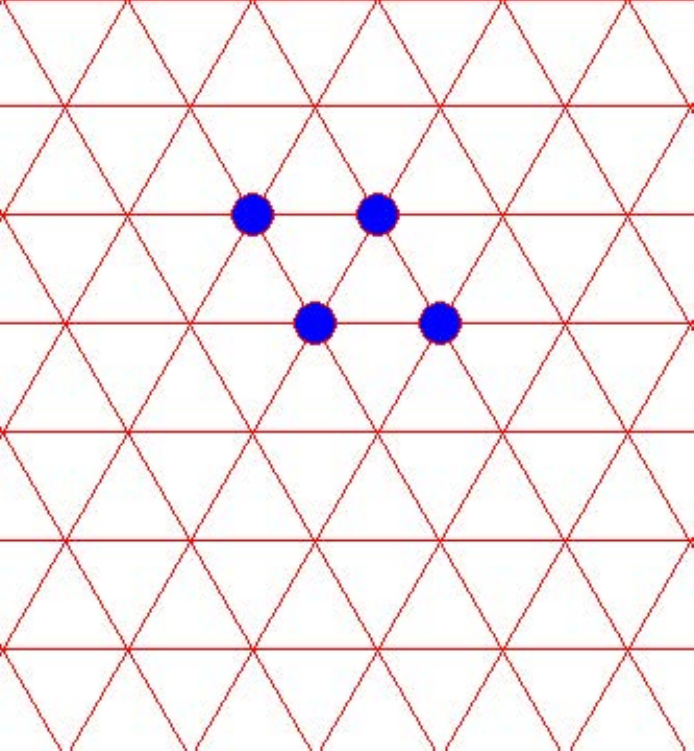}
	\caption{}
	\label{fig:small-series-original}
	\end{subfigure}
\quad
	\begin{subfigure}[t]{0.6in}
	\centering
	\includegraphics[width=0.6in]{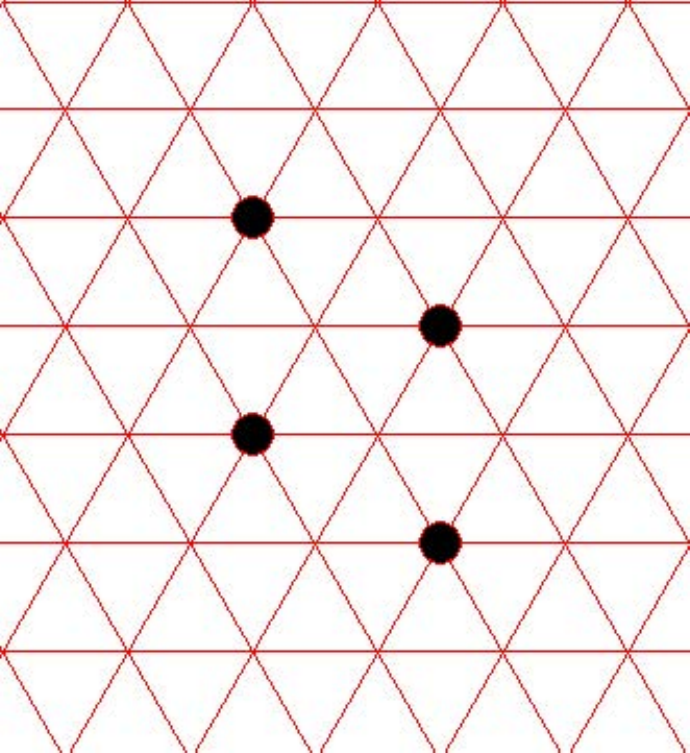}
	\caption{}
	\label{fig:small-series-scaled}
	\end{subfigure}
\quad
	\begin{subfigure}[t]{0.6in}
	\centering
	\includegraphics[width=0.6in]{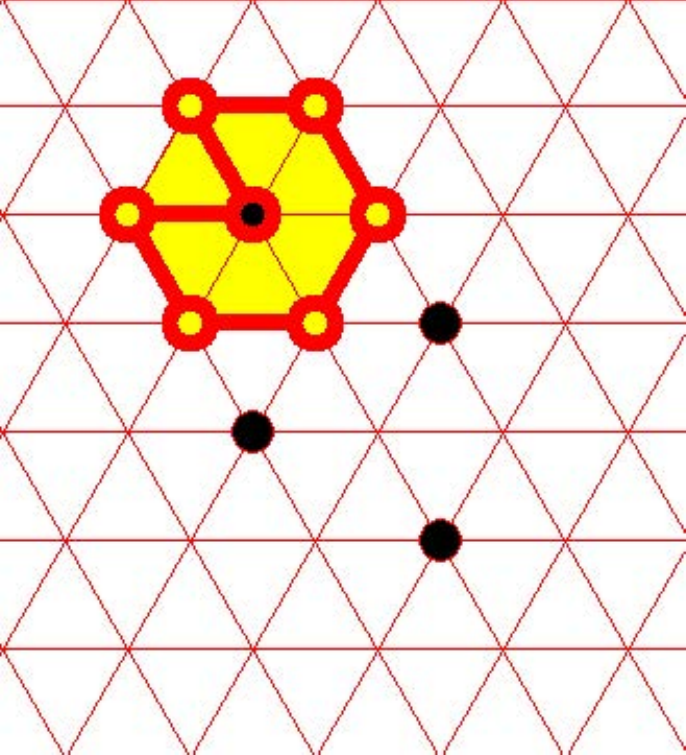}
	\caption{}
	\label{fig:small-series-first}
	\end{subfigure}
\quad
	\begin{subfigure}[t]{0.6in}
	\centering
	\includegraphics[width=0.6in]{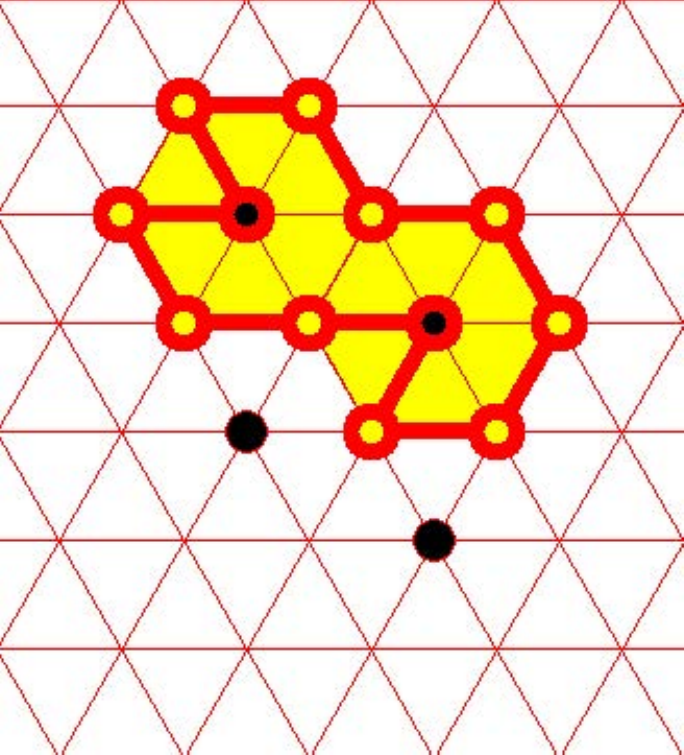}
	\caption{}
	\label{fig:small-series-second}
	\end{subfigure}
\quad
	\begin{subfigure}[t]{0.6in}
	\centering
	\includegraphics[width=0.6in]{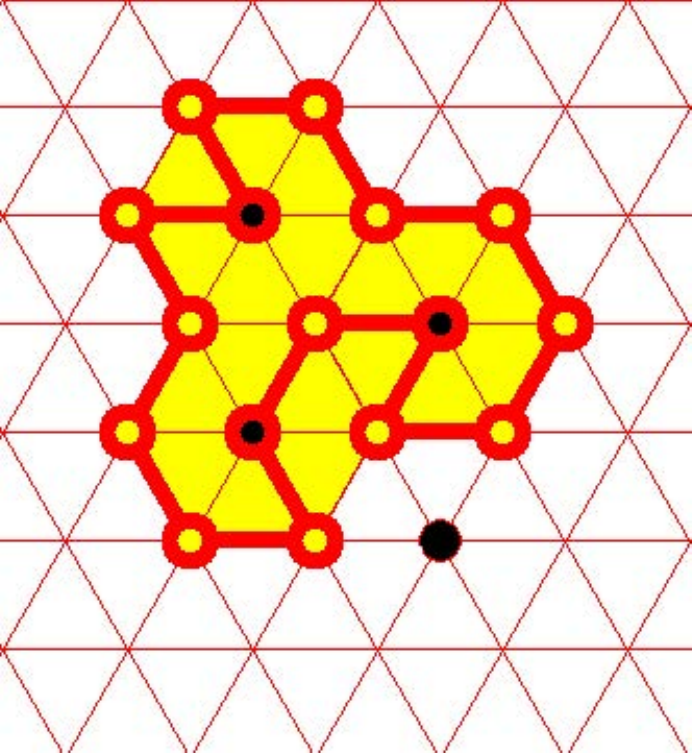}
	\caption{}
	\label{fig:small-series-third}
	\end{subfigure}
\quad
	\begin{subfigure}[t]{0.6in}
	\centering
	\includegraphics[width=0.6in]{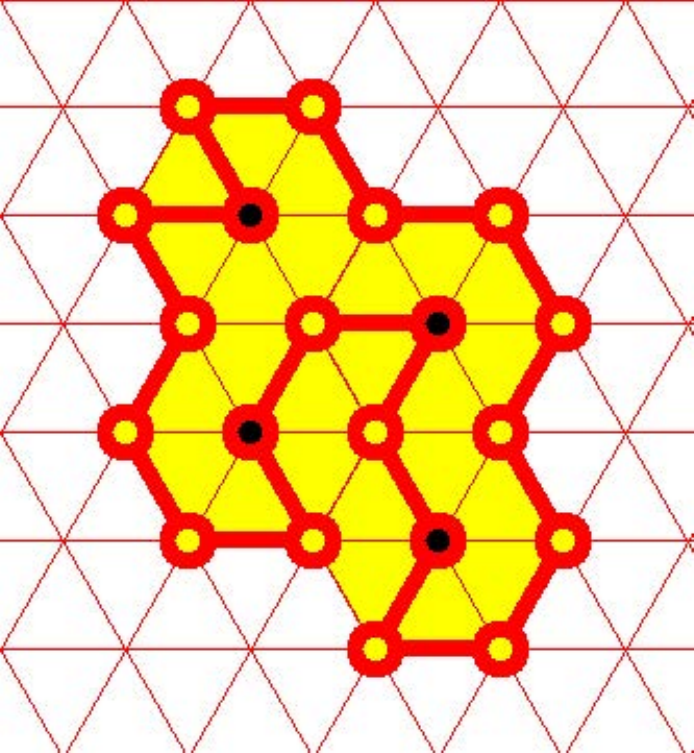}
	\caption{}
	\label{fig:small-series-fourth}
	\end{subfigure}
}
    \vspace*{-3mm}
	\caption{(a) An example shape which cannot be self-assembled by an oritatami system without being scaled (b)  Small example shape, (c) scaled to \scaling A2 and rotated version, (d) after addition of first gadget, (e) after second gadget, (f) after third gadget, (g) after fourth gadget and completion of HC.}
\label{fig:small-series}
\end{figure}

\section{Self-assembling finite shapes at scale $\geq$3 with delay 1}
\label{sec:scaling:algo}
All our algorithms are \emph{incremental} and proceed by extending the foldable routing at each step, to cover a new cell, neighboring the already covered cells. They proceed by maintaining a set of \emph{"clean edges"} in the routing, one on every \emph{"available side"} of each cell, from which we can extend the routing. Predictably, this is getting harder and harder as the scale gets smaller and as the edges of the cells overlap. We will present our different scaling algorithms by increasing difficulty: \scaling Bn for $n\geq 3$, then \scaling Cn for $n\geq 3$, then \scaling An for $n\geq 5$, then \scaling A4 and finally our most compact scaling \scaling A3. 

All the scaling algorithms presented in this section have been implemented in Swift on iOS.%
\footnote{Our app \hreftt{https://itunes.apple.com/us/app/id1335581323}{Scary~Pacman} can be freely downloaded from the app store at \hreftt{https://itunes.apple.com/us/app/id1335581323}{https://apple.co/2qP9aCX} and its source code can be downloaded and compiled from the public Darcs repository at \hreftt{https://hub.darcs.net/nikaoOoOoO/OritatamiScaling}{https://bit.ly/2qQjzy6}.}
All the figures in this section have been generated by this program and reflect its actual implementation.

\subsection{Universal tight oritatami system with delay~$1$}
\label{sec:scaling:algo:univ:OS}

\begin{definition}
We say that an OS is \emph{tight} if (1)~its delay is~$1$, (2)~every bead makes only one bond when it is placed by the folding and there is only one location where it can make a bond at the time it is placed during the folding.
\end{definition}

All the OS presented in this section are tight. Tight OS can be conveniently implemented using the following result:

\begin{theorem} \label{thm:scaling:algo:universal:bead:type}
Every tight OS can be implemented using a universal set of $114 = 19\times 6$ bead types together with a universal rule, from a seed of size~$3$.
\end{theorem}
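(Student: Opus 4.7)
The plan is to explicitly construct a universal bead set $B_u = \mathcal{D} \times L$, with $|\mathcal{D}|=6$ (the lattice directions) and $|L|=19$ (abstract labels), together with a universal attraction rule $\heart_u$ and a fixed size-$3$ seed template, and then show that any tight OS can be re-encoded over $B_u$ while preserving its folding dynamics. I would define $\heart_u$ by an involutive pairing $t\mapsto t^*$ on $L$, declaring $(d_1,t_1)\heart_u(d_2,t_2)$ iff $t_2=t_1^*$, so bond formation in the universal system depends only on labels; the direction component $d$ records, for each bead, the lattice direction from its predecessor on the path, and plays no role in bonding but serves to make the next placement deterministic.

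Given a tight OS $\TMO$ with transcript $w$, because the delay is $1$ and each newly placed bead makes exactly one bond at the unique location where a bond is possible, the folding is deterministic. Simulating it yields, for every bead $b_i$, an incoming direction $d_i$ and the index $\pi_i<i$ of its unique bond partner. I would then re-label each $b_i$ with the universal type $(d_i, t_i)$, choosing $t_i\in L$ so that (a) $t_{\pi_i}=t_i^*$, to realize the prescribed bond, and (b) $t_j\neq t_i^*$ for every other lattice neighbor $b_j$ of $r_i$ with $|i-j|>1$, so that no spurious bond can form and the same unique maximizing position is selected by the universal dynamics. The size-$3$ universal seed is then shown to suffice as a generic anchor: three beads encode an initial direction, an initial bondable neighbor, and enough geometric context to absorb the original (possibly larger) seed into a bounded universal prefix of the re-encoded transcript.

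The crux of the proof is showing that $|L|=19$ labels suffice for constraint (b). I would formulate the labeling as an offline graph-coloring problem: group each bonded pair $\{b_i, b_{\pi_i}\}$ into a super-vertex, and place an edge between two super-vertices whenever one contains a bead that is lattice-adjacent to a bead of the other without being its path-neighbor or its bond partner. Because the triangular lattice has only $6$ neighboring directions, each bead contributes at most $4$ such ``conflict'' neighbors, so each super-vertex has bounded degree; a careful enumeration of the finitely many local neighborhood patterns around a bonded pair in $\Tlat$ gives a uniform bound of at most $18$ simultaneously forbidden labels, leaving a valid choice inside a $19$-element palette.

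The main obstacle is precisely this counting argument. The partner constraint is rigid: once $t_{\pi_i}$ is fixed, $t_i$ is forced, so a purely greedy labeling in transcript order cannot dodge conflicts, and the argument must be carried out globally on the constraint graph of the fully folded configuration. The delicate step is showing that the local neighborhood of any bonded pair in a tight OS can host no more than $6\cdot 3 = 18$ distinct label-exclusion constraints---a bound that relies on the tightness of $\TMO$ (at most one bond per bead, hence a sparse bond structure) and on the rotational structure of $\Tlat$, yielding the factor of $6$ in $114=6\times 19$ for the directional component.
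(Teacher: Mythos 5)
Your plan diverges from the paper's proof in a way that opens a genuine gap. The paper does not solve any constraint-satisfaction or graph-coloring problem at all: it colors the \emph{positions} of $\Tlat$ by the affine map $\vertexColor(i,j)=(2i+3j)\bmod 19$, so that the $19$ vertices of every radius-$2$ ball receive distinct colors, and it keys the rule to the \emph{direction-dependent} color offsets $\Delta_d$ (a bead of type $(k,d)$ attracts precisely the color $k+\Delta_d$ of its intended $d$-neighbor). The label of a bead is thus read off from its absolute position in linear time, and local distinctness is an algebraic fact, not something to be arranged. By contrast, you declare that the direction component ``plays no role in bonding'' and make attraction an involutive pairing $t\mapsto t^*$ on labels alone. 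This discards exactly the mechanism the paper relies on, and it breaks your own super-vertex decomposition: in a tight arity-$4$ OS a single bead $b_j$ is typically the bond partner of \emph{several} later beads $b_i,b_{i'},\dots$ (each bead makes one bond when placed, but may receive many), so your scheme forces $t_i=t_{i'}=t_j^*$, and these forced equalities propagate along entire switchback rows of the very routings this theorem is meant to serve. The bonded pairs do not partition the beads, the ``super-vertices'' blow up into large forced components, and the claimed bound of $18$ simultaneously forbidden labels no longer bounds anything.

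Two further points. First, your non-interference condition (b) is too weak even locally: at delay $1$ the dynamics compares \emph{all} free neighbors of $r_{i-1}$, so determinism requires that no occupied vertex within distance $2$ of $r_{i-1}$ (other than the intended partner) carry the label $t_i^*$ --- not merely the lattice neighbors of the final position $r_i$. This is precisely why the paper needs distinct colors on the whole radius-$2$ ball ($19$ vertices), and it is where the number $19$ actually comes from; in your accounting it appears only as the output of an unperformed enumeration. Second, your treatment of the seed asserts that an arbitrary original seed can be ``absorbed into a bounded universal prefix,'' which cannot work for seeds of unbounded size and is not what the theorem needs; the paper simply re-encodes the size-$3$ seeds its tight systems already use. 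To repair your approach you would essentially have to reintroduce position-dependence into the labels, at which point you have rediscovered the affine coloring.
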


\NSomitted{
\begin{proof}
Let $\inter m = \{0,\ldots,m-1\}$. Let $\dirSet = \allDir$ be the set of all directions in $\Tlat$. Consider the following \emph{affine $19$-coloring} of the vertices $(i,j)$ of~$\Tlat$:
$$
\vertexColor(i,j) = (2i+3j) \mod 19.
$$
For each $d\in\dirSet$, let $\Delta_d$ be the difference of the colors (modulo $19$) of a vertex and its $d$-neighbor (as the coloring is affine, $\Delta_d$ only depends on $d$): ${\Delta_\dirSE = -\Delta_\dirNW = 5}$, ${\Delta_\dirSW = -\Delta_\dirNE = 3}$, ${\Delta_\dirE = -\Delta_\dirW = 2}$. One can check (see Fig.~\ref{fig:scaling:algo:color}) that every of the $19$ vertices of any translation of the hexagon $H_2$ gets a distinct color. 
   
\begin{figure}[t]
\centerline{\includegraphics[width=4cm]{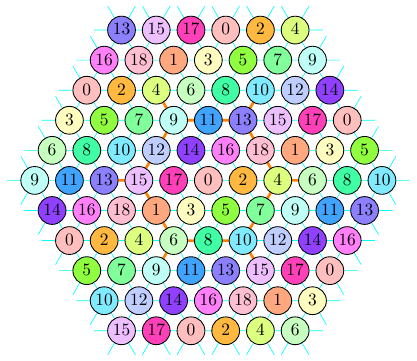}}
\caption{Affine coloring of $\Tlat$. Note that every vertex in any translation of the hexagon $H_3$ receives a distinct color in $\inter{19}$. The neighbor of a given color in a given direction, always receives the same color.} 
\label{fig:scaling:algo:color}
\end{figure}


We consider the bead type set ${U=\{(k,d): k\in\inter{19} \text{ and } d\in\dirSet\}}$ together with the symmetric rule $\heart$ defined by: for all $(k,d)$ and $(k',d')$ in $U$,  
$$
(k,d)\heart(k',d')
~\Leftrightarrow~
k' = (k + \Delta_d) \mod 19
\text{ or }
k = (k' + \Delta_{d'}) \mod 19
$$
that is to say, if and only if $k'$ is the neighboring color of $k$ in direction $d$, or $k$ is the neighboring color of $k'$ in direction $d'$. 

Let $\TMO$ be a tight folding. Let us consider the routing $r$ of the result of the folding of $\TMO$ starting from an arbitrary vertex in $\Tlat$. We assign to each vertex $(i,j)$ of $r$, the bead type $(k,d)$ where $k = \vertexColor(i,j)$ and $d$ is the direction of the unique bond it makes when it is placed during the folding $\TMO$. By construction, the transcript obtained by reading the bead types along the routing $r$ will exactly fold into the same shape: indeed, as the delay is $1$, the to-be-placed beads might only get in touch with beads at distance at most $2$ from the last placed beads; as every bead within radius $2$ gets a different color, the unique location where the to-be-placed bead can make its bond, is uniquely defined by the color of the bead it will connect to, which is in turn uniquely characterized by the color of the to-be-placed bead and the direction of the bond it can make, i.e. by its bead type in $U$. \qed
\end{proof}


\paragraph{Bead type representation in the figures.} 
A bead with bead type $(k,d)$ will be represented as a small ball of color $k$ inside a link of the same color as the small ball, of color ${k'= (k + \Delta_d) \mod 19}$, of its neighbor in direction $d$ it is pointing to. The routing is shown as a thick translucent black line.

\begin{figure}[t]
\hfill
\includegraphics[width=2cm]{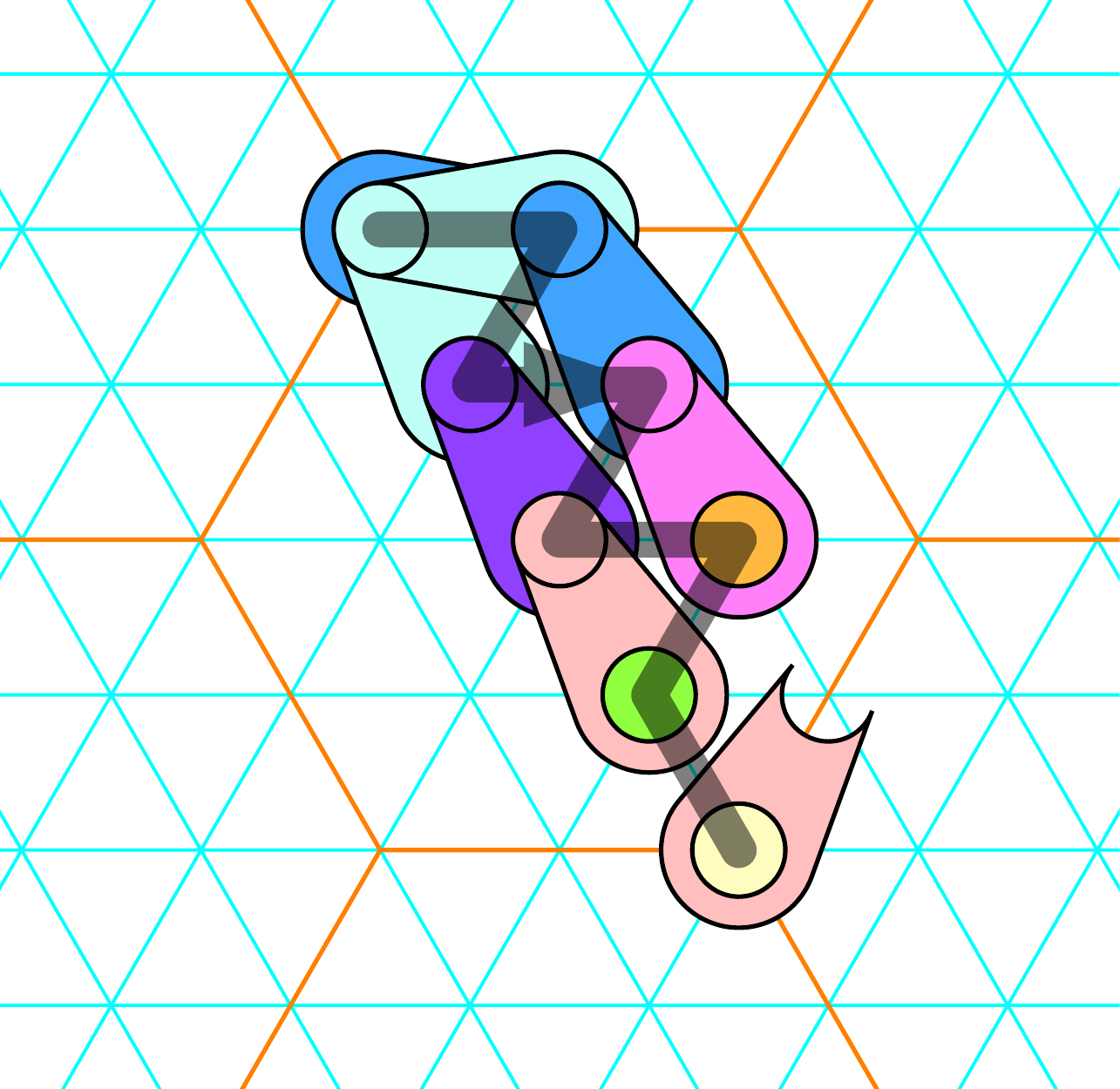}
\hfill
\includegraphics[width=2cm]{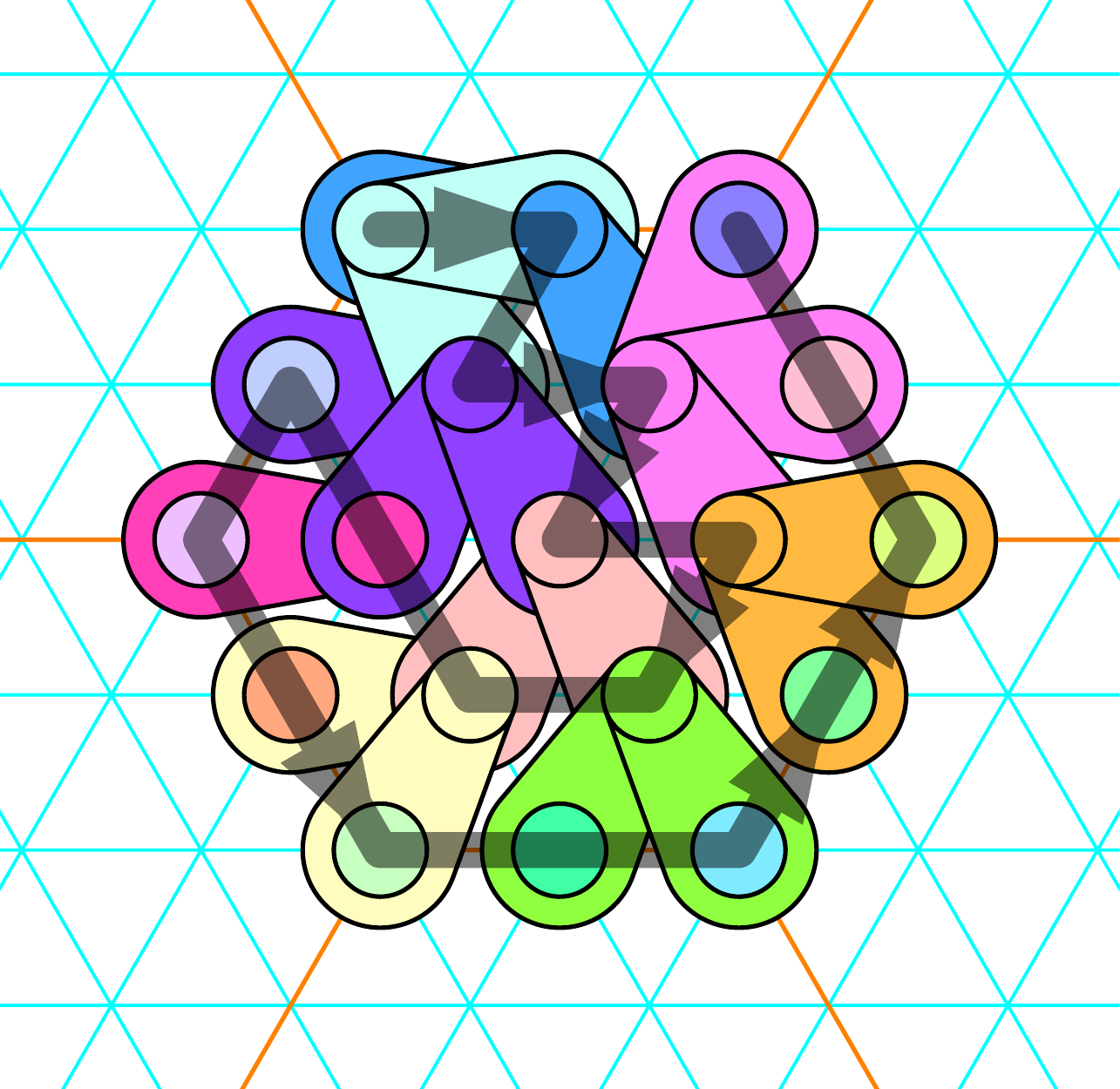}
\hfill\,
\caption{\captionpar{Bead type representation. Left:} a bead looking for its final position; \captionpar{Right:} the fully folded transcript.}
\end{figure}

\medskip
}

In the next subsections, all oritatami systems are tight. We will thus focus on designing routing with a single tight bond per bead, and rely on Theorem~\ref{thm:scaling:algo:universal:bead:type} for generating the transcript from the designed routing in linear time.


\subsection{Key definitions}

\NSomitted{
    In this section, we define all the tools used to design our algorithms. 
}
Consider a shape $S$ and $p_1,\ldots,p_{|S|}$ a \emph{search} of $S$, i.e. a sequence of distinct points covering $S$ such that for all $i\geq 2$, there is a $j<i$ such that $p_i\sim p_j$. W.l.o.g., we require that the \dirNW-neighbor of $p_1$ does not belong to~$S$ so that the $\dirN$-neighboring cell of $\cell(p)$ is empty in $\cell(S)$.

Starting from a tight routing covering the cell $\Linscaling{}{}(p_1)$, our algorithms cover each other cell $\Linscaling{}{}(p_i)$ in order $i=2\ldots|S|$, one by one, by extending the tight routing from a previously covered cell.

\paragraph{Lattice and cell directions.} 
We denote by $\dirSet = \allDir$ the set of all \emph{lattice directions} in $\Tlat$, and by $\dirCell = \allCellDir$ the set of all \emph{cell directions}, joining the centers of two neighboring cells (see Fig.~\ref{fig:def:upscaling:ABC}).
\NSomitted{
(see Fig.~\ref{fig:scaling:algo:lat:cell:dir}).
}
We denote by $\diropp{d}$ the direction opposite to $d$. We denote by $\CW(d)$ and $\CCW(d)$ the next direction in $\dirSet$ if $d\in\dirSet$ (or in $\dirCell$ if $d\in\dirCell$), in clockwise and counterclockwise order respectively. For $d\in\dirSet$ (resp. $d\in\dirCell$), we denote by $\mc d$ (resp. $\ml d$) the cell direction (resp. lattice direction) next to $d$ in counterclockwise order, e.g. $\mc\dirW = \cdirSW$ and $\ml\cdirNE = \dirNE$. 
\NSomitted{We denote by $\vec d$ the unit vector pointing in $d$ direction.
}

\NSomitted{
\begin{figure}[t]
    \centering
    \includegraphics[scale=0.5]{lattice-cell-directions.pdf}
    \caption{Lattice and cell directions.}
    \label{fig:scaling:algo:lat:cell:dir}
\end{figure}
}

A cell $\cell(p)$ is \emph{occupied} if the current routing covers it, otherwise it is \emph{empty}. Each cell has six \emph{sides}, its \cdirNW-, \cdirN-, \cdirNE-, \cdirSE-, \cdirS-, and \cdirSW-sides, connecting each of its six \dirW-, \dirNW-, \dirNE-, \dirE-, \dirSE-, and \dirSW-\emph{corners}. Given a cell, its neighboring cell in direction $d\in\dirCell$ is called its \emph{$d$-neighboring cell}. At scale \scaling An, the $d$-side of a cell is the $\diropp d$-side of its $d$-neighboring cell. At scales \scaling Bn and \scaling Cn, we say that the $d$-side of a cell and the $\diropp d$-side of its $d$-neighboring cell are \emph{neighboring sides}. 

The \emph{clockwise-most} and \emph{second clockwise-most} edges of the $d$-side of a cell are the two last edges in $\Tlat$ of this side in the direction $d'=\CCW(\ml d)$, e.g., if $d=\cdirNW$, the two \dirSW-most edges of the \cdirNW-side of the cell.

\paragraph{Routing Time.} At each step of our algorithms, the routing defines a \emph{total order over the vertices} of the currently occupied cells. For every vertex $p$ covered by the routing, we denote by $\rtime(p)$ its rank (from $0$ to $|r|-1$) in the current routing~$r$. We say of two occupied vertices $p$ and $q$, that $p$ is \emph{earlier} (resp. \emph{later}) than $q$ if ${\rtime(p)< \rtime(q)}$ (resp. ${\rtime(p)>\rtime(q)}$). 
\NSomitted{
Note that if \emph{the routing does not define any proper order over the cells}: it can happen that two cells $\cell(p)$ and $\cell(q)$ are such that some vertices of $\cell(p)$ are earlier than all vertices of $\cell(q)$ while some other vertices of $\cell(p)$ are later than all vertices of $\cell(q)$.
}

\paragraph{Clean edge.} The $d$-side of an occupied cell $\cell(p_i)$ is \emph{available} if its $d$-neighboring cell is empty. Consider an edge $uv$ of $\Tlat$ which belongs to an available  $d$-side of an occupied cell $\cell(a)$. Let $\cell(b)$ be the empty $d$-neighboring cell of $\cell(a)$. We say that edge $uv$ is \emph{clean} if: (1)~it belongs to the current routing; (2)~$uv$'s orientation $d'$ in the routing is clockwise with respect to the center $\ccenter(b)$ of $\cell(b)$, i.e. $d' = \CCW(\ml d)$  (e.g., $d'= \dirE$ if $d=\cdirS$); and (3)~the $\diropp{d'}$- and $\CW(\diropp {d'})$-neighbors $p$ and $q$ of its origin $u$ are both occupied and earlier than $u$ (e.g., the \dirW- and \dirNW-neighbors of $u$ if $d = \cdirS$). $p$ and $q$ are resp. called the \emph{support} and the \emph{bouncer} of the clean edge $uv$. Fig.~\ref{fig:scaling:algo:clean:example} gives examples of clean edges for the different scaling schemes. 
\begin{figure}[t]
    \resizebox{\textwidth}{!}{
    \includegraphics[height=3cm]{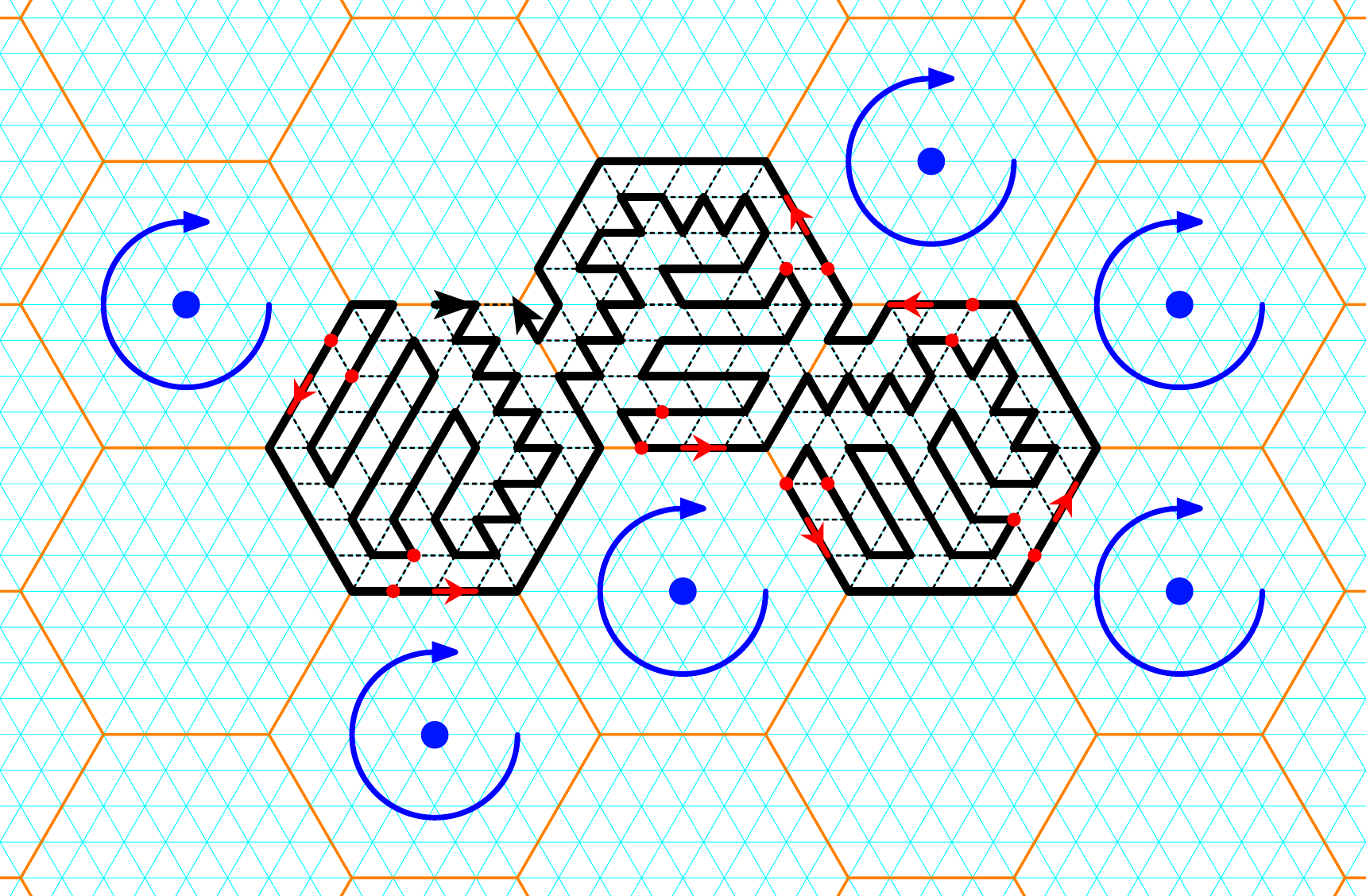}
    \includegraphics[height=3cm]{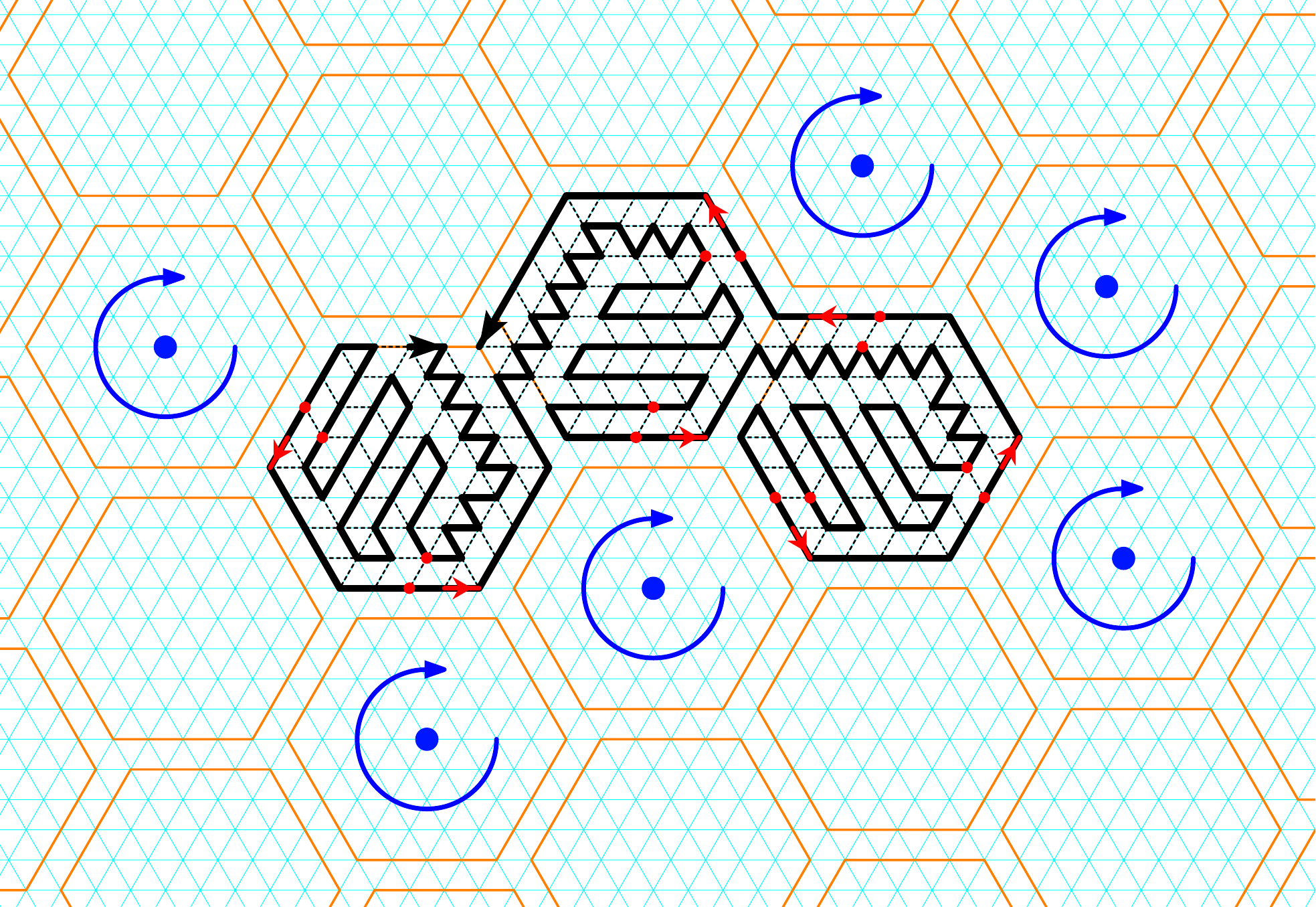}
    \includegraphics[height=3cm]{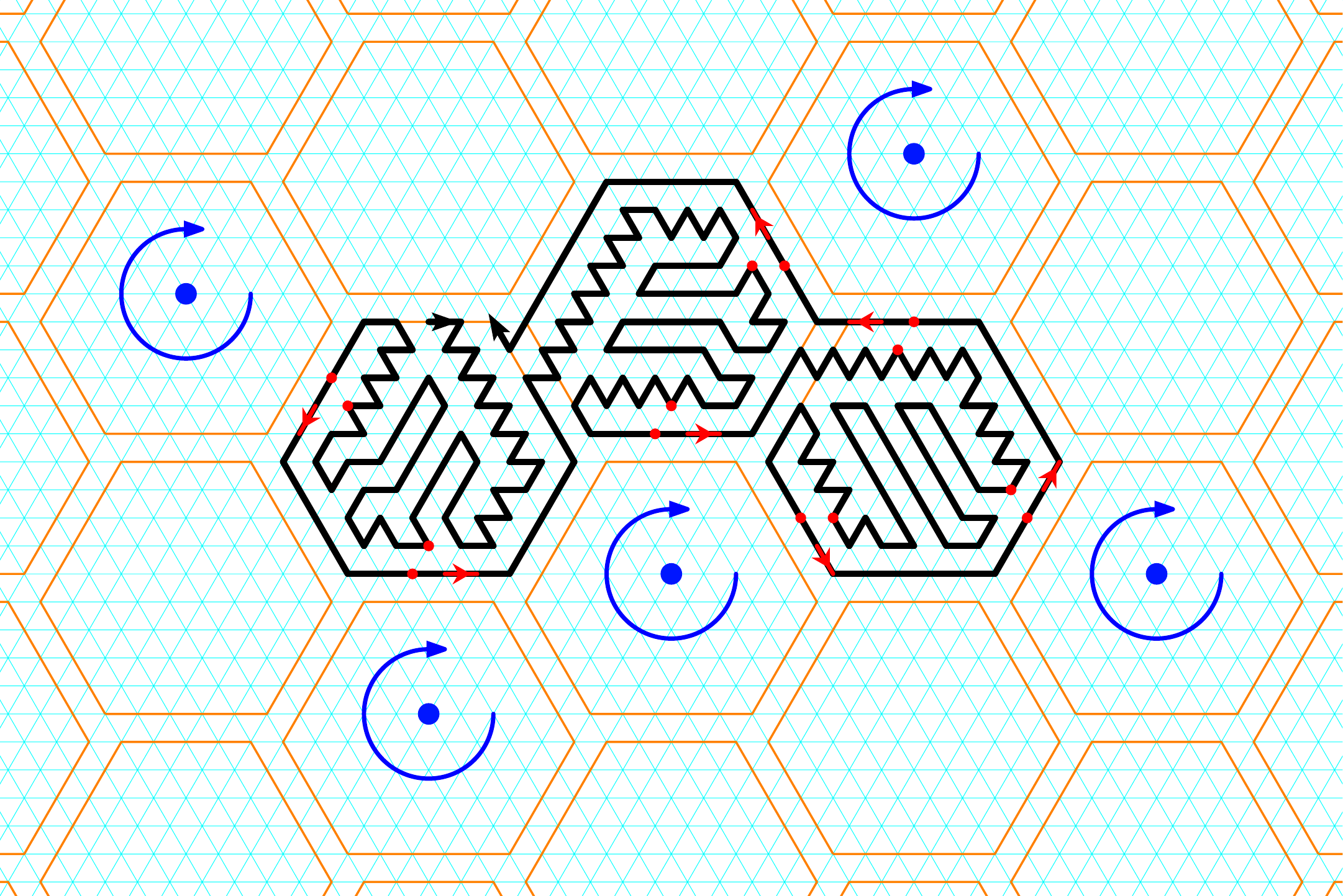}
    \includegraphics[height=3cm]{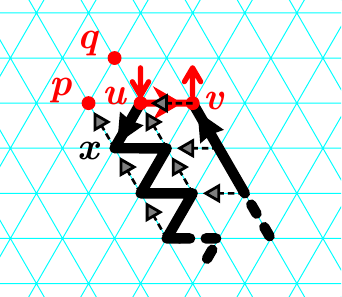}
    }
    \caption{\captionpar{Left: Examples of clean edges at scales \scaling A5, \scaling B5 and \scaling C5 --} the current routing is displayed in black; some clean edges are highlighted in red together with the two vertices required to be occupied, and earlier than the origin of the edge; the centers of some empty cells are highlighted in blue together with their clockwise orientation. %
    \captionpar{Right: Extending the routing from a clean edge --} the extension, drawn in black together with its tight bonds, replaces the clean edge $u\rightarrow v$ of the current routing $r$ (in red); because $p$ and $q$ are occupied and earlier than $u$ in $r$, the first bead of the extension is deterministically placed at $x$ by the folding and the zigzag pattern grows south-eastwards, self-supportedly; the way back to $v$ folds by bonding to the initial zigzag; note that all bonds are tight.}
    \label{fig:scaling:algo:clean:example}
    \label{fig:scaling:algo:grow:from:clean}
\end{figure}
Clean edges are a key component for our algorithms because they are the edges from which the routing is extended to cover a new empty cell. Indeed it is easy to grow a tight path from a clean edge as shown in Fig.~\ref{fig:scaling:algo:grow:from:clean}.

\NSomitted{
\begin{figure}
    \centering
    \includegraphics[scale=0.8]{grow-clean-edge.pdf}
    \caption{\captionpar{Extending the routing from a clean edge:} the extension is drawn in black together with its tight bonds and replaces the clean edge $u\rightarrow v$ of the current routing $r$ (in red); because $p$ and $q$ are occupied and earlier than $u$ in $r$, the first bead of the extension is deterministically placed at $x$ by the folding and the zig-zag pattern grows south-eastwards self-supportedly; the way back to $v$ folds by bonding to the initial zig-zag; note that all bonds are tight.}
    \label{fig:scaling:algo:grow:from:clean}
\end{figure}
}

\paragraph{Self-supported extension.}
We say that a path $\rho$ extending a routing from a clean edge $uv$ with support $p$ is \emph{self-supported} if all its bond are tight and made only with the beads at $u$, $p$ or within $\rho$. Self-supported extensions are convenient because they fold correctly independently on their surrounding.




\subsection{Design of self-supported tight paths covering pseudo-hexagons}
\label{sec:scaling:algo:cover:hex}

A \emph{$(a,b,c,d,e,f)$-pseudohexagon} is an hexagonal shape whose sides have length $a,b,c,d,e$ and $f$ respectively from the \cdirNE- to the \cdirN-side in clockwise order, i.e. is the convex shape in $\Tlat$ encompassed in a path consisting in $a$ steps to \dirSE, $b$ to \dirSW, $c$ to \dirW, $d$ to \dirNW, $e$ to \dirNE\/ and $f$ to \dirE. 
%
%
\NSomitted{
Note that we must have: ${a+b = d+e}$, ${b+c = e+f}$ and ${a+f = c+d}$, since ${a\,\vec\dirSE} + {b\,\vec\dirSW} + {c\,\vec\dirW} + {d\,\vec\dirNW} + {e\,\vec\dirNE} + {f\,\vec\dirW} = {(a-c-d+f) \vec\dirE + (a+b-d-e) \vec\dirSW} =  0$. 
}

\begin{theorem}%
\label{thm:scaling:algo:cover:hex}
Let $H$ be a $(a,b,c,d,e,f)$-pseudohexagon with ${a,b,c,d,e,f\geq 5}$.
There is an algorithm \algoCoverHex\/ that outputs in linear time a self-supported tight routing covering $H$ from a clean edge placed on either of the two eastmost edges above its \cdirN-side, and such that it ends with a counterclockwise tour covering the \cdirNW-, \cdirSW-, \cdirS-, \cdirSE- and finally \cdirNE-sides. 
\end{theorem}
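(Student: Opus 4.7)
The plan is a two-phase construction. In phase one, I extend the routing from the given clean edge by a self-supported zigzag that fills the bulk of $H$ while leaving an unoccupied five-side ribbon along the \cdirNW-, \cdirSW-, \cdirS-, \cdirSE-, and \cdirNE-sides. In phase two, the routing completes by tracing this ribbon counterclockwise.

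For phase one, starting from the clean edge on one of the two easternmost edges of the \cdirN-side, I would use the self-supported zigzag template of Fig.~\ref{fig:scaling:algo:grow:from:clean} to grow the path into $H$. After an initial segment that tightly enters $H$ and establishes the first row by running westward just below the \cdirN-side, the path makes a U-turn and comes back eastward on the next row, and so on, serpentining row by row down through the interior. Each row stops short of the reserved ribbon on its east and west ends, and each U-turn requires only two vertices of vertical room. Every new bead bonds uniquely to a bead placed earlier in the extension (or to the support/bouncer at the very start), so the construction is self-supported throughout. The zigzag terminates when the last bulk row is reached, positioning the path adjacent to the top of the \cdirNW-side.

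For phase two, the path crosses onto the \cdirNW-side at its topmost vertex and traces the reserved ribbon counterclockwise: along \cdirNW to the \cdirNW--\cdirSW corner, along \cdirSW to the \cdirSW--\cdirS corner, across the \cdirS-side, up the \cdirSE-side, and finally up the \cdirNE-side to the top-right corner where \cdirNE meets \cdirN. Each bead of this boundary tour bonds uniquely and tightly to its immediate interior neighbor placed during phase one, so the final tour is both tight and self-supported.

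The main obstacle is orchestrating the zigzag so that the uncovered region is exactly the five-side ribbon of width one, and managing the four intermediate corners of the ribbon as well as the initial westward run and the turn onto the \cdirNW-side, each of which must preserve tightness and self-support. The hypothesis $a,b,c,d,e,f \geq 5$ provides the required slack: each zigzag U-turn and each corner of the ribbon needs only a constant number of vertices of room, which $5$ comfortably accommodates. With these pieces choreographed as a function of the side lengths $(a,b,c,d,e,f)$, the routing visits every vertex of $H$ exactly once, and by Theorem~\ref{thm:scaling:algo:universal:bead:type} the corresponding transcript is produced in linear time.
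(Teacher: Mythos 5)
Your overall architecture --- fill the interior with a self-supported serpentine, then finish with a counterclockwise tour of the \cdirNW- through \cdirNE-sides --- is the same as the paper's (its areas $A$--$E$ are the interior fill, its area $F$ is the boundary tour). But your plan glosses over the one concrete difficulty that the paper's construction is actually organized around: \emph{parity}. A serpentine that covers a region band by band exits on the east or west end of its last band according to the parity of the number of bands, which here is governed by the side lengths (the paper's case analysis is on the parities of $c$ and $d$, the \cdirSW- and \cdirS-side lengths). Your phase one simply asserts that the zigzag ``terminates \ldots adjacent to the top of the \cdirNW-side,'' but for roughly half of the parity combinations a uniform row-by-row serpentine ends at the wrong corner and cannot hand off to the counterclockwise ribbon. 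This is precisely why the paper needs four cases, junctions that are either a plain edge or a ``$\lambda$''-shape, and an extra area $E$ (a switchback or a \dirNE-zigzag) that exists only when $c$ or $d$ is odd. The hypothesis $a,\dots,f\geq 5$ gives room for the turns, but it does not make the parity problem disappear; ``choreographed as a function of the side lengths'' is exactly the step that needs to be exhibited.

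Two smaller points. First, your ``initial segment that \ldots establishes the first row by running westward just below the \cdirN-side'' is not self-supported as stated: in a tight delay-1 system a bead in the middle of a single-file straight run has no earlier non-consecutive neighbor to bond to, which is why the template of Fig.~\ref{fig:scaling:algo:grow:from:clean} grows a two-row \dirSE-zigzag whose return leg bonds to the outgoing leg. Second, the bulk region left after reserving a width-one ribbon on five sides is itself a pseudohexagon with slanted ends, so the bands cannot all be horizontal rows of the same orientation; the paper switches from a \dirSE-zigzag (areas $A$, $B$) to \dirNE/\dirSW-switchbacks (areas $C$, $D$) to track those slanted sides while keeping every new bead supported by the previous band. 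Neither point is fatal to the strategy, but together with the missing parity analysis they mean the proposal does not yet constitute a proof.
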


By Theorem~\ref{thm:scaling:algo:universal:bead:type}, we conclude that all large enough pseudo-hexagons can be self-supportedly folded by a tight OS.

\NSomitted{
\begin{proof}[Proof sketch]
The algorithm proceeds by covering 6 areas numbered from $A$ to $F$. As illustrated on Fig.~\ref{fig:scaling:algo:cover:hex}, there are four cases depending on the parity if the \dirSW- and \dirS-side lengths $c$ and $d$. 
\begin{figure}[t]
    \centering
    \includegraphics[width=0.9\textwidth]{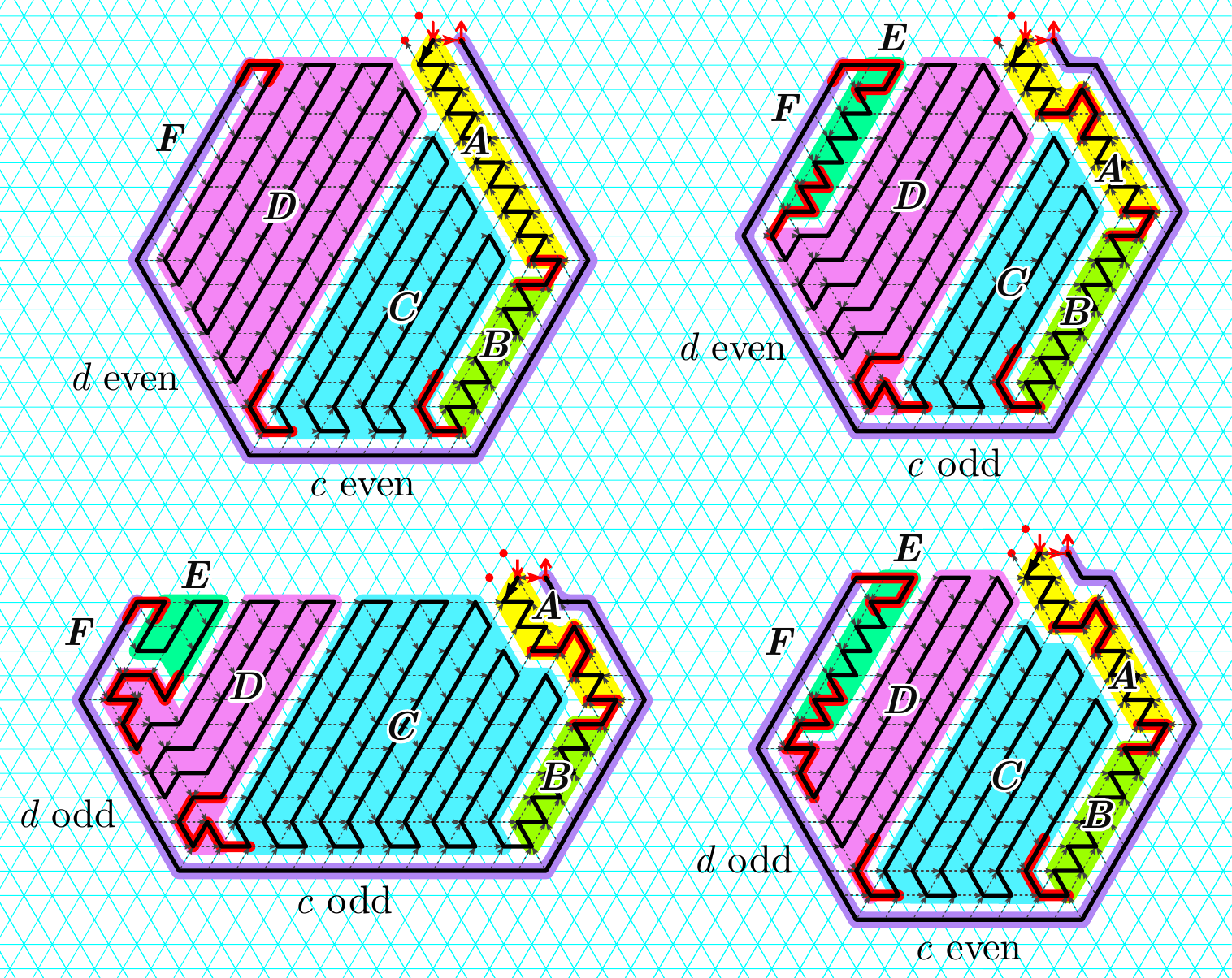}
    \caption{The four cases to design of the self-supported tight path for large enough pseudo-hexagons. Note that the clean edge in ref is not part of the (convex) pseudo-hexagon, but is the edge upon which the pseudo-hexagon is folded.}
    \label{fig:scaling:algo:cover:hex}
\end{figure}
Area $A$ consists in a simple \dirSE-zigzag pattern, or a \dirSE-zigzag pattern with a shift depending on the relative position of the supporting clean edge. Area $B$ consists in a simple zigzag. Area $C$ consists in long \dirNE/\dirSW-switchbacks. The junction to area $D$ is either a simple edge ($c$~even) or a ``$\lambda$''-shape ($c$ odd). Area $D$ consists in long \dirNE/\dirSW-switchbacks that stick along the shape of area $C$'s switchbacks. The junction to the next area is either a simple edge ($c$ and $d$ even) or a ``$\lambda$''-shape ($c$ or $d$ odd). Area $E$ does not exist if $c$ and $d$ are even. If $c$ or $d$ are odd, then area $E$ is either a long \dirNE/\dirSW-switchback ($c$ and $d$ odd), or a \dirNE-zigzag ($c$ and $d$ of opposite parity). Then area $F$ consists in a simple counterclockwise tour of the \cdirNW- to \cdirNE-sides. \qed
\end{proof}

Note that as all the routings computed by \algoCoverHex\/ are self-supported and tight, Theorem~\ref{thm:scaling:algo:universal:bead:type} applies and provides an OS with $114$ bead types in linear time that folds each of them correctly. Note also that in the routings generated by this algorithm, all the edges on the five sides \cdirNW\/ to \cdirNE\/ are clean, except for the five edges originating at a corner.
}
%

\subsection{Scale \scaling Bn and \scaling Cn with $n\geq 3$}
\label{sec:scaling:algo:Bn}

Cells in scaling \scaling Bn and \scaling Cn do not overlap. It is then enough to find one routing extension for the cell (with a clean edge on all of its all available side) from every possible neighboring clean edge. 

\paragraph{Scale \scaling Bn}
\label{sec:scaling:algo:B}
is isotropic. Thus, there are only two cases to consider up to rotations: either the cell is the first, or it will plug onto a neighboring clean edge. For \scaling Bn, the clean edges that we plug onto, are the \emph{counterclockwise-most} of each side of an neighboring occupied cell. For $n\geq 7$, we rely on Theorem~\ref{thm:scaling:algo:cover:hex} to construct such a routing. The two routings for $n=3$ are given in Fig.~\ref{fig:scaling:algo:routing:B3}.
We have then:

\begin{lemma}
At every step, the computed routing is self-supported and tight, covers all the cells inserted, and contains a clean edge on every available side with the exception of the \cdirN-side of the initial cell $\cell(p_1)$. 
\end{lemma}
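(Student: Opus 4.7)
The plan is to proceed by induction on the step number $i\in\{1,\dots,|S|\}$, where step $i$ is the moment when $\cell(p_i)$ becomes covered. At each step I will maintain, as a strengthened inductive invariant, the following statement: the current routing is self-supported and tight; it covers exactly the cells $\cell(p_1),\dots,\cell(p_i)$; and for every $k\leq i$ and every side $d\in\dirCell$ of $\cell(p_k)$ whose $d$-neighbor is not in $\{\cell(p_1),\dots,\cell(p_i)\}$, the counterclockwise-most edge of that side is a clean edge in the sense of Sec.~\ref{sec:scaling:algo}, with the sole exception of the \cdirN-side of $\cell(p_1)$.

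For the base case $i=1$, the initial routing is obtained either from one of the two patterns in Fig.~\ref{fig:scaling:algo:routing:B3} (when $n=3$), from the analogous hand-built patterns for $n\in\{4,5,6\}$, or, for $n\geq 7$, from Theorem~\ref{thm:scaling:algo:cover:hex} applied to the regular hexagon $H_n$: that theorem guarantees a self-supported tight routing of $H_n$ whose last portion is a counterclockwise tour of the \cdirNW-, \cdirSW-, \cdirS-, \cdirSE- and \cdirNE-sides, which makes the counterclockwise-most edge of each of these five sides clean (its support and bouncer lie on the preceding side, which is earlier in the routing). The absence of a clean edge on the \cdirN-side is compatible with the invariant since, by our hypothesis on $p_1$, the \dirNW-neighbor of $p_1$ is not in $S$ and no step will ever need to extend into the \cdirN-neighbor of $\cell(p_1)$.

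For the inductive step, suppose the invariant holds after $i-1$ steps and we add $\cell(p_i)$. Since $p_1,\dots,p_{|S|}$ is a search, there exists $j<i$ with $p_i\sim p_j$, so $\cell(p_i)$ is the $d$-neighbor of $\cell(p_j)$ for some $d\in\dirCell$. By the inductive invariant, because $\cell(p_i)$ was empty, the counterclockwise-most edge of the $d$-side of $\cell(p_j)$ is a clean edge $uv$ with an occupied earlier support $p$ and bouncer $q$. I would now append to the routing the second extension pattern of Fig.~\ref{fig:scaling:algo:routing:B3} plugged onto $uv$. Four things must then be verified. \textbf{(a) Foldability and self-supportedness}: this is a finite check on the two drawn patterns, exactly as in Fig.~\ref{fig:scaling:algo:grow:from:clean}; because $p$ and $q$ are both earlier and occupied, the first extension bead lands deterministically, and every subsequent bead bonds tightly only to previously laid beads of the extension or to $u$ or $p$. \textbf{(b) Coverage}: inspection of the pattern shows that all vertices of $\cell(p_i)$ are visited. \textbf{(c) New clean edges}: for every side $d'\neq\diropp{d}$ of $\cell(p_i)$, the pattern places the counterclockwise-most edge of that side into the routing with the correct clockwise orientation w.r.t.\ the empty $d'$-neighbor, and its support/bouncer are laid just before; for the shared side, nothing is claimed since the $\diropp d$-neighbor is $\cell(p_j)$ and thus already occupied. \textbf{(d) Preservation of previously clean edges}: the extension places new beads only inside $\cell(p_i)$ and along its boundary with $\cell(p_j)$, and forms bonds only with beads of $\cell(p_i)\cup\cell(p_j)$; since cells in \scaling Bn are disjoint and have non-overlapping edges, the support/bouncer/origin time-order relations guaranteeing cleanness of edges on all other cells are untouched.

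The main obstacle I anticipate is point (d), but it is precisely here that scale \scaling Bn is easier than \scaling An: the strict non-overlap of cell boundaries makes the proof a local check on two patterns, whereas shared edges would require a global argument. A secondary subtlety is to make sure that when the same side of $\cell(p_i)$ later gets a new occupied neighbor $\cell(p_k)$ with $k>i$, the clean edge we now install on $\cell(p_i)$ is still clean at time $k$; this follows because once a clean edge has been laid, its support, bouncer and origin keep their relative time order forever. Combining (a)--(d) re-establishes the invariant at step $i$, which completes the induction. \qed
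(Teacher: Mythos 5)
Your proposal is correct and follows essentially the same argument as the paper: an induction on the cell-insertion sequence whose inductive step reduces to checking that each of the finitely many extension patterns is self-supported and tight and exposes a clean edge on every available side. The paper states this in a single sentence; your write-up merely makes explicit the bookkeeping (in particular your points (c) and (d), where the non-overlap of \scaling Bn cells localizes the check) that the paper leaves implicit.
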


\begin{proof}
This is immediate by induction on the size of the cell insertion sequence by noticing that all the routing extensions are self-supported and tight and that every available side (but the \cdirN-side of the root cell) bears a clean edge. \qed
\end{proof}

Note that no insertion will occur on the \cdirN-neighboring cell of $\cell(p_1)$ because it is assumed w.l.o.g. to be empty. Theorem~\ref{thm:scaling:algo:universal:bead:type} thus applies and outputs, in linear time, a corresponding OS with $114$ bead types and a seed of size $3$.
\begin{figure}[t]
    \centering
    \tablePatsXEasy{.9}{3}
    \caption{\captionpar{The self-supported tight routing extensions for scale \protect\scaling B3:} in light purple, the clean edge used to extend the routing in this cell; in red, the ready-to-use new clean edges in every direction; highlighted in orange, the seed.}
    \label{fig:scaling:algo:routing:B3}
\end{figure}
\NSomitted{
\begin{figure}
    \subfigB4
    \hfill 
    \subfigB5
    \\[2mm]
    \subfigB6
    \hfill 
    \subfigB7
    \\[2mm]
    \subfigB8
    \hfill 
    \subfigB9
    \caption{\captionpar{The self-supported tight routing extensions for scales \scaling B4 to \scaling B9:} in purple, the clean edge used to extend the routing in this cell; in red, the ready-to-use new clean edges in every direction; highlighted in orange, the seed.}
    \label{fig:scaling:algo:routing:B4-9}
\end{figure}
}
The same technique applies at scale \scaling Cn with $n\geq 3$ 
\withOrWithoutAppendix{%
    (see appendix p.~\pageref{sec:scaling:algo:C}). Fig.~\ref{fig:scaling:algo:movie:B3} and~\ref{fig:scaling:algo:movie:C3} (p.~\pageref{fig:scaling:algo:movie:C3}) present a step-by-step execution of the routing extension algorithm at scales \scaling B3 and \scaling C3 respectively.%
}{%
    (\omittedSeeFullArticle).
}
It follows that:

\begin{theorem}
\label{thm:scaling:algo:routing:Bn}
\label{thm:scaling:algo:routing:Cn}
Any shape $S$ can be folded by a tight OS at all scales \scaling Bn and \scaling Cn with $n\geq 3$.
\end{theorem}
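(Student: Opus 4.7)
The plan is to combine the invariant-preserving Lemma just proved with the routing-extension patterns of Fig.~\ref{fig:scaling:algo:routing:B3} (and their analogues for \scaling Cn) and with Theorem~\ref{thm:scaling:algo:cover:hex} to construct, by induction on the search order $p_1,\dots,p_{|S|}$ of $S$, a self-supported tight routing that covers $\Linscaling{}{}(S)$ entirely, then to invoke Theorem~\ref{thm:scaling:algo:universal:bead:type} to turn this routing into a tight OS on $114$ bead types with a seed of size~$3$.

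For the base case, I would place in $\cell(p_1)$ the ``signature \sigtt 0'' routing shown in Fig.~\ref{fig:scaling:algo:routing:B3} (and its \scaling Cn counterpart, whose existence is asserted just before the theorem statement). This routing is self-supported and tight, and it exposes a clean edge on each of its five \cdirNW-, \cdirNE-, \cdirE-, \cdirSE-, \cdirSW- and \cdirW-sides; only the \cdirN-side is left without a clean edge, which is harmless since our assumption on the search forces the \cdirN-neighbor of $\cell(p_1)$ to remain empty throughout. For the inductive step, when cell $\cell(p_i)$ is inserted, by the search property there exists $j<i$ with $p_j\sim p_i$; by the induction hypothesis the side of $\cell(p_j)$ facing $\cell(p_i)$ is available and bears a clean edge $uv$ with support and bouncer already occupied and earlier than $u$. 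I would then apply the corresponding routing-extension pattern (one per incoming direction, up to rotation there is essentially a single case since \scaling Bn and \scaling Cn are isotropic): starting from $uv$, grow the self-supported tight extension inside $\cell(p_i)$ so that it again terminates with clean edges on every still-available side of $\cell(p_i)$. Applying the Lemma at each step, the invariant ``routing self-supported, tight, covering exactly the inserted cells, with a clean edge on every available side except \cdirN\ of $\cell(p_1)$'' is preserved.

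The main obstacle is exhibiting, for each $n\geq 3$, a routing pattern inside a single cell that (i) plugs onto a given clean edge of a neighboring cell, (ii) is itself self-supported and tight, and (iii) re-exposes a clean edge on every other side. For $n\geq 7$ this is essentially a consequence of Theorem~\ref{thm:scaling:algo:cover:hex}: each cell is a (regular or nearly regular) hexagon whose sides all have length $\geq 5$, so \algoCoverHex\ produces the required routing, and the final counterclockwise tour of the \cdirNW-\ through \cdirNE-sides guarantees a clean edge on every non-terminal side. For the small scales $n\in\{3,4,5,6\}$, where \algoCoverHex\ does not apply, the patterns must be designed by hand; one such pattern for $n=3$ is given in Fig.~\ref{fig:scaling:algo:routing:B3}, and the analogous catalogs for $n=4,5,6$ and for \scaling Cn can be verified by inspection (a finite case analysis). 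Because the number of incoming directions is constant, the total number of patterns to produce and verify is bounded.

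Once the routing is built, all its bonds are tight by construction and every bead makes exactly one bond when placed; hence the routing satisfies the hypotheses of Theorem~\ref{thm:scaling:algo:universal:bead:type}, which returns in linear time a tight OS on the universal $114$-bead alphabet that folds the routing deterministically from a seed of size~$3$. The resulting configuration has shape $\Linscaling{}{}(S)$, proving the theorem for both \scaling Bn and \scaling Cn, $n\geq 3$. \qed
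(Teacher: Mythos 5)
Your proposal is correct and follows essentially the same route as the paper: an induction on the cell-insertion sequence maintaining the invariant that every available side (except the \cdirN-side of $\cell(p_1)$) carries a clean edge, with hand-designed extension patterns for small $n$, Theorem~\ref{thm:scaling:algo:cover:hex} for $n\geq 7$, and Theorem~\ref{thm:scaling:algo:universal:bead:type} to produce the $114$-bead OS with seed size~$3$. The only nitpick is that \scaling Cn is treated in the paper as anisotropic (its cells have sides of alternating lengths $n$ and $n+1$), so one needs two extension cases there --- plugging onto a larger versus a smaller side --- rather than the single rotational case you describe; this does not affect the structure of the argument.
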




\NSomitted{
\paragraph{Scale \scaling Cn}
\label{sec:scaling:algo:C}
is anisotropic. Thus, there are three cases to consider up to rotations: either the cell is the first, or it will plug onto a neighboring clean edge that belongs to either a larger or a smaller side. In \scaling Cn, the clean edges that we will plug onto, are (1) the \emph{counterclockwise-most} of each smaller side, and (2) the \emph{second clockwise-most} of each larger side, of an neighboring occupied cell. For $n\geq 7$, we rely on Theorem~\ref{thm:scaling:algo:cover:hex} to construct such a routing. The tight and self-supported routings for \scaling C3 are listed in Fig.~\ref{fig:scaling:algo:routing:C3} (see Fig.~\ref{fig:scaling:algo:routing:C4-7} and~\ref{fig:scaling:algo:routing:C8-10} for $n=4\ldots10$).
\begin{figure}[t]
    \centering
    \tablePatsXFive{1}{3}
    \caption{\captionpar{The self-supported tight routing extensions for scale \protect\scaling C3:} in purple, the clean edge used to extend the routing in this cell; in red, the ready-to-use new clean edges in every direction; highlighted in orange, the seed.}
    \label{fig:scaling:algo:routing:C3}
\end{figure}
\begin{figure}[t]
    \subfigC4
    \\[2mm]
    \subfigC5
    \\[2mm]
    \subfigC6
    \\[2mm]
    \subfigC7
    \caption{\captionpar{Extensions for scales \scaling C4 to \scaling C7:} in purple, the clean edge used to extend the routing in this cell; in red, the ready-to-use new clean edges in every direction; highlighted in orange, the seed.}
    \label{fig:scaling:algo:routing:C4-7}
\end{figure}
\begin{figure}[t]
    \subfigC8
    \\[2mm]
    \subfigC9
    \\[2mm]
    \subfigC{10}
    \caption{\captionpar{The self-supported tight routing extensions for scales \scaling C8 to \scaling C{10}:} in purple, the clean edge used to extend the routing in this cell; in red, the ready-to-use new clean edges in every direction; highlighted in orange, the seed.}
    \label{fig:scaling:algo:routing:C8-10}
\end{figure}
Fig.~\ref{fig:scaling:algo:movie:C3} (p.~\pageref{fig:scaling:algo:movie:C3}) presents a step-by-step execution of the routing extension algorithm at scale \scaling C3.
It follows, as above, by Theorem~\ref{thm:scaling:algo:universal:bead:type} that:

The same technique yields to the following theorem for scales \scaling Cn (see appendix p.~\pageref{sec:scaling:algo:C}). Fig.~\ref{fig:scaling:algo:movie:C3} (p.~\pageref{fig:scaling:algo:movie:C3}) presents a step-by-step execution of the routing extension algorithm at scale \scaling C3.

\begin{theorem}
\label{thm:scaling:algo:routing:Cn}
Any shape $S$ can be folded by a tight OS at scale \scaling Cn, for $n\geq 3$.
\end{theorem}
}
%

\subsection{Scale \scaling An with $n\geq4$} 
\label{sec:scaling:algo:An>=5}

Scale \scaling An is the most compact considered in this article. It is isotropic but its cells do overlap. For this reason, we need to provide more extension in order to manage all the cases. The cases $n\geq 5$ are the easiest because we can provide a routing for each situation with a clean edge on every available side. Scale \scaling A4 is trickier because only one available side (the latest) may contain a clean edge. Scale \scaling A3 requires a careful management of time and geometry in the routing to ensure that a clean edge can be exposed when needed. 
\NSomitted{
Scales \scaling A4 and \scaling A3 are presented separately in the next subsections.
}
Scale \scaling A3 is presented separately in the next subsection. Scale \scaling A4 is 
\withOrWithoutAppendix{%
    deferred to the appendix p.~\pageref{sec:scaling:algo:A4}.
}{%
\omittedSeeFullArticle.
}

\paragraph{At scale \scaling An with $n\geq 5$,}
the clean edges are located at the \emph{second counterclockwise-most} edge on all of the available sides of every occupied cell
(e.g., see leftmost figure on Fig.~\ref{fig:scaling:algo:clean:example}).
Our design guarantees this property for every possible empty cell shape. As every occupied cell covers the $d$-side of all its $\diropp d$-neighboring empty cells, there are a priori $33 = 1+2^5$ different shapes to consider: the completely empty cell, for the first cell inserted; plus the $2^5$ possible shapes corresponding to the five possible states occupied/empty for the neighboring cells on which we do not plug. For \scaling An with $n\geq 5$, our design can extend the routing from any clean edge, regardless of its time or location. This reduces the number of shapes to consider to 14 cases, by rotating the configuration. The following definition allows to identify conveniently the various cases.

\paragraph{Segment and signature.} The \emph{signature rooted on $d\in\dirCell$} of an empty cell $\cell(p)$ is the  integer (written in binary) ${\sig_d(p) = \sum_{i=0}^5 s_i 2^i}$ where $s_i=1$ if the $\CW^i(d)$-neighboring cell of $\cell(p)$ is occupied, and $=0$ otherwise. $\sig_d(p) = 0$ if and only if all the neighboring cells of $\cell(p)$ are empty; $\sig_d(p)$ is odd if and only if the $d$-neighboring cell of $\cell(p)$ is occupied. A \emph{segment} of an empty cell $\cell(p)$ is a maximal sequence of consecutive sides already covered by its neighboring cells. \emph{We will always root the signature of an empty cell on the clockwise-most side of a segment}. With this convention, the two least significant bits of the signature of an empty cell with at least one and at most 5 neighboring occupied cells is always \texttt{01}.
\withOrWithoutAppendix{%
    We are then left with the following possible signatures for an empty cell, sorted by the number of segments around this cell (see Fig.~\ref{fig:scaling:algo:all:CWmost:signatures} p.~\pageref{fig:scaling:algo:all:CWmost:signatures}):
    \begin{description}
        \item[No segment:] \NSpattern{0}
        \item[1 segment:] \NSpattern{1}, \NSpattern{100001}, \NSpattern{110001}, \NSpattern{111001}, \NSpattern{111101}, \NSpattern{111111}.
        \item[2 segments:] \NSpattern{101}, \NSpattern{1001}, \NSpattern{10001}, when both have length $1$; \NSpattern{100101}, \NSpattern{101001}, \NSpattern{1101}, \NSpattern{11001}, when their lengths are $1$ and $2$; \NSpattern{101101} when both have length $2$; \NSpattern{110101}, \NSpattern{11101} when one has length $3$.
        \item[3 segments:] \NSpattern{10101}.
    \end{description}
    \begin{figure}[t]
        \centering
        \includegraphics[scale=1]{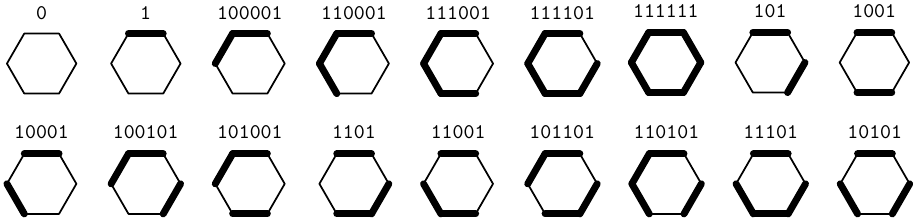}
        \caption{List of the 18 possible signatures rooted on the clockwise-most side of a segment, placed on the \protect\cdirN-side.}
        \label{fig:scaling:algo:all:CWmost:signatures}
    \end{figure}
    Now, note that the following signatures define an identical cell shape up to a rotation:  $\NSpattern{10001} \equiv \NSpattern{101}$, $\NSpattern{101001} \equiv \NSpattern{1101}$, $\NSpattern{100101} \equiv \NSpattern{11001}$, and $\NSpattern{110101} \equiv \NSpattern{11101}$. (note that symmetries are not allowed because they do not preserve ``clockwisevity''). 
}{%
}
By rotating the patterns, we are then left with designing self-supported tight routings for 14 shapes with clean edges at the second clockwise position of every available side. For $n\geq 8$, the 14 pseudo-hexagons are large enough for Theorem~\ref{thm:scaling:algo:cover:hex} to provide the desired routings. 
\withOrWithoutAppendix{%
    The routing extensions for $n=5,\ldots,8$ are given in Fig.~\ref{fig:scaling:algo:routing:A5}  to~\ref{fig:scaling:algo:routing:A8} in appendix.
    Scale \scaling A4 is handled similarly (see appendix p.~\pageref{sec:scaling:algo:A4}).
}{
    The routing extensions for $n=5,\ldots,8$ may be found in \cite{shape2018DNAfull}. Scale \scaling A4 is handled similarly (\omittedSeeFullArticle).
}
We can thus conclude by an immediate induction on the size of the cell insertion sequence, as for scale \scaling Bn, that:

\begin{theorem}
\label{thm:scaling:algo:An>=4}
Any shape $S$ can be folded by a tight OS at scale \scaling An, for $n\geq 4$.
\end{theorem}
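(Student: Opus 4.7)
The plan is to prove Theorem~\ref{thm:scaling:algo:An>=4} by induction on the length of the search $p_1,\ldots,p_{|S|}$ of $S$, maintaining the invariant stated for scale \scaling Bn adapted to the geometry of \scaling An: after the $i$-th insertion, the current routing is self-supported and tight, covers exactly the cells $\cell(p_1),\ldots,\cell(p_i)$, and exposes a clean edge at the \emph{second clockwise-most} position of every available side of each occupied cell, with the single exception of the \cdirN-side of $\cell(p_1)$. Maintaining this invariant is the key, since once it holds the tight-OS implementation in Theorem~\ref{thm:scaling:algo:universal:bead:type} supplies the $114$ bead types and seed of size~$3$ in linear time.

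The base case consists of folding one hexagonal cell starting from the seed so that all six sides (except the \cdirN-side) carry a clean edge at the prescribed position; this is a direct constant-size design at each of the scales $n=4,\ldots,7$ and, for $n\geq 8$, follows from \algoCoverHex\/ by choosing the pseudo-hexagon parameters $(n-1,\ldots,n-1)$. For the inductive step, when $\cell(p_{i+1})$ is inserted it neighbours at least one already occupied cell, so its empty shape is described by one of the signatures enumerated before the theorem. After rotating so that the clean edge used to plug in sits on the clockwise-most side of the rooting segment, the number of distinct shapes to handle collapses from $33$ to the $14$ cases listed. For each such case, I would produce a self-supported tight routing that (i) enters through the given clean edge, (ii) covers the entire cell, and (iii) exits leaving one clean edge at the second clockwise-most position of every remaining available side. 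For $n\geq 8$ this is exactly what \algoCoverHex\/ delivers, after checking that each of the $14$ sub-pseudo-hexagons has every side of length $\geq 5$. For $n=5,6,7$ the pseudo-hexagons may have sides shorter than~$5$, so I would exhibit the $14$ explicit extensions by hand (as in the figures of Section~\ref{sec:scaling:algo:An>=5}) and verify tightness and self-support locally.

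The main obstacle is guaranteeing that the exit clean edges are simultaneously created on \emph{every} available side of the newly-covered cell, since at scale \scaling An the sides of neighboring cells overlap, so the clean edge produced for a side of $\cell(p_{i+1})$ must not conflict geometrically or temporally with the edges already committed on the adjacent occupied cells. This is what forces the exact ``second clockwise-most'' choice: the two conditions on the support and bouncer of a clean edge (occupied and earlier than its origin) propagate naturally through the zig-zag extension of Fig.~\ref{fig:scaling:algo:grow:from:clean}, and the symmetry of the $14$ extensions can be arranged so that each new clean edge automatically inherits an earlier support and bouncer from the routing just laid down in $\cell(p_{i+1})$.

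Scale \scaling A4 is the delicate boundary case and I would treat it separately: the pseudo-hexagons are too small both for \algoCoverHex\/ and for the generic ``one clean edge per available side'' design, so the invariant needs to be weakened (only the latest available side is guaranteed to bear a clean edge), and the case analysis has to choose the search order $p_1,\ldots,p_{|S|}$ compatibly with this weaker invariant. Once the $14$ extensions are exhibited for $n=4$ (which is the technical heart of the scale-$4$ sub-section referenced in the appendix), the same induction closes the proof. Combining the base case, the $14$ case-by-case extensions for $n=4,\ldots,7$, and the \algoCoverHex-based extensions for $n\geq 8$, an immediate induction preserves the invariant; applying Theorem~\ref{thm:scaling:algo:universal:bead:type} to the resulting tight self-supported routing then yields the desired OS with $114$ bead types and a seed of size~$3$ that folds $S$ at scale \scaling An. \qed
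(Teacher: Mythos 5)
Your treatment of \scaling An for $n\geq 5$ matches the paper's: the same reduction from $33=1+2^5$ neighborhood shapes to $14$ signature classes by rotation, the same invariant (a clean edge at the prescribed position on every available side except the \cdirN-side of $\cell(p_1)$), \algoCoverHex\/ (Theorem~\ref{thm:scaling:algo:cover:hex}) for the large scales, explicit hand-designed extensions for the small ones, and an immediate induction followed by Theorem~\ref{thm:scaling:algo:universal:bead:type}. Up to that point the proposal is sound and essentially identical to the paper's argument.

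The gap is in your handling of \scaling A4, which is precisely the boundary case that makes the theorem start at $n=4$ rather than $n=5$. You correctly weaken the invariant to ``only the latest available side is guaranteed to bear a clean edge,'' but you then assert that $14$ extensions (the rotation-reduced signature classes) still suffice. They do not: once the side you must plug into is dictated by \emph{time} (the latest clean edge) rather than chosen freely, you lose the liberty to root the signature on the clockwise-most side of a segment, so the rotational normalization that collapses $33$ cases to $14$ is no longer available. The paper's appendix accordingly designs all $33$ routing extensions for \scaling A4, rooting each signature on the side carrying the latest clean edge, and proves the corresponding lemma that the potential clean edge of the latest available side of every empty cell is always clean. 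Your alternative suggestion --- constraining the search order $p_1,\ldots,p_{|S|}$ so that the weaker invariant becomes compatible with only $14$ cases --- is not what the paper does and is not obviously achievable: which side of an empty cell ends up being the latest is determined by the global routing history, not by the order in which cells appear in the search, so one cannot in general force the latest clean edge onto a canonical (rotatable) side. As written, your induction for $n=4$ would get stuck whenever the latest clean edge of the cell to be covered does not sit at the clockwise-most position of a segment.
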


\NSomitted{
\fitfigurePatsXZeroSmall5 
\fitfigurePatsXZero5
\fitfigurePatsXZero6
\fitfigurePatsXZero7
\fitfigurePatsXZero8
\fitfigurePatsXZero9
\fitfigurePatsXZero{10}
\fitfigurePatsXZero{11}
}

\NSomitted{
\subsection{Scale \scaling A4}
\label{sec:scaling:algo:A4}

At scale \scaling A4, the side are $3$ edges-long and the support of a clean edge may not belong to the same cell as the edge. We must then need to pay attention to the timing of the clean edges in the path. We solve this issue by always rooting the signature on the side with the \emph{latest} clean edge, where the time of a clean edge is the time of its origin. We are then guaranteed (by an immediate induction) that the support of this clean edge will always have been placed by the folding before the clean edge is folded. 

We can however no more freely rotate the signature and must then design the 33 routing extensions. At scale \scaling A4, the clean edges are located at the second clockwise-most edge of every available side of an occupied cell. The 33 routings are shown on Fig.~\ref{fig:scaling:algo:routing:A4}.
\begin{figure}[t]
    \resizebox{\textwidth}{!}{\tabFigFourZero
{0.8}}
    \caption{\captionpar{The 33 tight routing extensions for \scaling A4:} in purple, the latest edge, which is clean and can thus be used to extend the path in this cell; in red, the new potential clean edges available to extend the path for the neighboring cells (only the latest one around the empty cell might be clean); highlighted in orange, the seed for signature \NSpattern0.}
    \label{fig:scaling:algo:routing:A4}
\end{figure}
One can check that by immediate induction:

\begin{lemma}
At every step, the computed routing is self-supported and tight, covers all the cells inserted, and contains a potential clean edge on every available side with the exception of the \dirN-side of the initial cell $\cell(p_1)$. Furthermore, the potential clean edge of the latest available side of every empty cell is always clean. 
\end{lemma}

Theorem~\ref{thm:scaling:algo:universal:bead:type} allows then to conclude that:

\begin{theorem}
\label{thm:scaling:algo:A4}
Any shape $S$ can be folded by a tight OS at scale \scaling A4.
\end{theorem}
}


\subsection{Scale \scaling A3}
\label{sec:scaling:algo:A3}

At scale \scaling A3, the sides of each cell have length $2$, and no edge can fit in if both neighboring cells are already occupied. We must then pay extra attention to the order of self-assembly, i.e. to time. 
\NSomitted{
As pointed out earlier, we must be extra careful when dealing with time, because \emph{time defines a total order only on occupied vertices, and no order at all on occupied cells}. 
}
We define the \emph{time of an occupied side} as the routing time of its middle vertex (its rank in the current routing). In \scaling A3, the clean edges are located at the \emph{counterclockwise-most} position of the available sides of the occupied cells. Our routing algorithm maintains, before each insertion, an invariant for the routing that \emph{combines time and geometry} as follows:

\begin{invariant}[insertion]
\label{inv:scaling:algo:A3:CW:last}
Around an empty cell, the clockwise-most side of any segment is always the latest of that segment, and its clockwise-most edge is clean.
\end{invariant}

As it turns out, we cannot maintain this invariant for every empty cell at every step. The middle vertex of a side violating this invariant is called  a \emph{time-anomaly}.

The anomalies around an empty cell are fixed \emph{only} at the step the empty cell is covered by the algorithm. Because fixing anomalies consists in freeing the corresponding side (as if the neighboring cell was empty), without actually freeing the cell, we define the signature rooted on side~$d$ of an empty $\cell(p)$ slightly differently here, as: ${\sig_d(p) = \sum_{i=0}^5 s_i 2^i}$ where $s_i=1$ if the \emph{vertex at the middle} of the $\CW^i(d)$-side is occupied, and $=0$ otherwise.

The routing algorithm is described in Algorithm~\ref{algo:scaling:algo:ext:A3} and uses two series of routing extensions: the \emph{basic} patterns in Fig~\ref{fig:scaling:algo:A3:basic:ext}, and the \emph{anomaly-fixing} patterns in Fig.~\ref{fig:scaling:algo:A3:fix:anomalies}(b-d). There are two types of anomalies: \emph{path-anomalies} (marked as yellow dots) only require a local rerouting inside the cell to become clean; \emph{time-anomalies} (marked as red dots) cannot be turned into clean edge and must be freed according to the diagram in~Fig.~\ref{fig:scaling:algo:A3:fix:anomalies}(b-d). 
Fig.~\ref{fig:scaling:algo:movie:A3:anomalyA:main} gives a step-by-step construction of a shape which involves fixing several anomalies.
%
%
\begin{algorithm}[t]
\caption{Incremental routing algorithm for scale \scaling A3}
\label{algo:scaling:algo:ext:A3}
\small
\begin{algorithmic}[1]
\Procedure{FillEmptyCell}{centered at: $\lambda(p)$}
   \If{$\cell(p)$ has no occupied neighboring cell}
        \State \ALGOmultiline{Fill $\cell(p)$ with routing \NSpattern0 from Fig.~\ref{fig:scaling:algo:A3:basic:ext}, mark the \cdirN-cell as \emph{forbidden} and \textbf{return}.}
\EndIf
    \While{the latest side of $\cell(p)$ is an anomaly}
        \State \ALGOmultiline{Fix this anomaly in the corresponding neighboring cell according to the diagram in Fig.~\ref{fig:scaling:algo:A3:fix:anomalies}.}
    \EndWhile
    \State \label{algo:scaling:algo:A3:ext:compute:sign}
    \ALGOmultiline{Compute the $\cell(p)$'s signature rooted on the latest side and extend the path according to the corresponding basic pattern in Fig.~\ref{fig:scaling:algo:A3:basic:ext}.}
\EndProcedure
\end{algorithmic}
\end{algorithm}
\begin{figure}[t]
    \begin{subfigure}{\textwidth}
    \resizebox{\textwidth}{!}{\tabFigBasicThreeZeroCompact{1.5}}
    \caption{The 18 basic routing extensions.}
    \label{fig:scaling:algo:A3:basic:ext}
    \end{subfigure}
\\
    \begin{subfigure}{0.55\textwidth}
    \centering
    \scalebox{0.53}{


\begin{tikzpicture}

\node[NSpat] (p101) at (0,0){\figPatAThree{101}{101}};
\node[NSpat] (p1101) at (\NSpatDist,0){\figPatAThree{1101}{101\NSnx\dirS\/ = 1101}};
\node[NSpat] (p11101) at (2.1*\NSpatDist,0){\figPatAThree{11101}{1101\NSnx\dirSW\/ = 11101}};
\node[NSpat] (p111101) at (3.3*\NSpatDist,0){\figPatAThree{111101}{11101\NSnx\dirNW\/ = 111101}};

\node[NSpat] (p1101nw) at (2*\NSpatDist,-\NSpatDistV){\figPatAThree{1001101}{1101\NSnx\dirNW}};
\node[NSpat] (p1101nw_sw) at (3*\NSpatDist,-\NSpatDistV){\figPatAThree{10001101}{1101\NSnx\dirNW\NSnx\dirSW}};

\node[NSpat] (p101sw) at (\NSpatDist,-2*\NSpatDistV){\figPatAThree{1000101}{101\NSnx\dirSW}};
\node[NSpat] (p101sw_s) at (2*\NSpatDist,-2*\NSpatDistV){\figPatAThree{10000101}{101\NSnx\dirSW\NSnx\dirS}};

\node[NSpat] (p101nw) at (\NSpatDist,-3*\NSpatDistV){\figPatAThree{11000101}{101\NSnx\dirNW}};
\node[NSpat] (p101nw_sw) at (2*\NSpatDist,-3*\NSpatDistV){\figPatAThree{100000101}{101\NSnx\dirNW\NSnx\dirSW}};
\node[NSpat] (p101nw_sw_s) at (3*\NSpatDist,-3*\NSpatDistV){\figPatAThree{101000101}{101\NSnx\dirNW\NSnx\dirSW\NSnx\dirS}};

\node[NSpat] (p101nw_s) at (2*\NSpatDist,-4*\NSpatDistV){\figPatAThree{110000101}{101\NSnx\dirNW\NSnx\dirS}};
\node[NSpat] (p101nw_s_sw) at (3*\NSpatDist,-4*\NSpatDistV){\figPatAThree{111000101}{101\NSnx\dirNW\NSnx\dirS\NSnx\dirSW}};


\draw[NSarr] (p101)--(p1101) node[NSdir]{\cdirS};
\draw[NSarr] (p1101)--(p11101) node[NSdir]{\cdirSW};
\draw[NSarr] (p11101)--(p111101) node[NSdir]{\cdirNW};

\draw[NSarr] (p1101)to[bend right] node[NSdirB]{\cdirNW} (p1101nw);
\draw[NSarr] (p1101nw)--(p1101nw_sw) node[NSdir]{\cdirSW};

\draw[NSarr] (p101)to[bend right] node[NSdirB]{\cdirSW}(p101sw);
\draw[NSarr] (p101sw)--(p101sw_s) node[NSdir]{\cdirS};

\draw[NSarr] (p101)to[bend right] node[NSdirB]{\cdirNW} (p101nw);
\draw[NSarr] (p101nw)--(p101nw_sw) node[NSdir]{\cdirSW};
\draw[NSarr] (p101nw_sw)--(p101nw_sw_s) node[NSdir]{\cdirS};

\draw[NSarr] (p101nw)to[bend right] node[NSdirB]{\cdirS} (p101nw_s);
\draw[NSarr] (p101nw_s)--(p101nw_s_sw) node[NSdir]{\cdirSW};

\end{tikzpicture}}
    \vspace*{-6mm}
    \caption{Fixing anomalies in \NSpattern{101}, \NSpattern{1101} and \NSpattern{11101}.}
    \label{fig:scaling:algo:A3:fix:101}
\end{subfigure}
\begin{tabular}{c}
    \begin{subfigure}{.43\textwidth}
        \centering
        \scalebox{0.53}{


\begin{tikzpicture}

\node[NSpat] (p1001) at (0,0){\figPatAThree{1001}{1001}};
\node[NSpat] (p11001) at (\NSpatDist,0){\figPatAThree{11001}{1001\NSnx\dirSW\/ = 11001}};
\node[NSpat] (p111001) at (2.1*\NSpatDist,0){\figPatAThree{111001}{11001\NSnx\dirNW\/ = 111001}};

\node[NSpat] (p1001nw) at (\NSpatDist,-\NSpatDistV){\figPatAThree{1001001}{1001\NSnx\dirNW}};
\node[NSpat] (p1001nw_sw) at (2*\NSpatDist,-\NSpatDistV){\figPatAThree{10001001}{1001\NSnx\dirNW\NSnx\dirSW}};


\draw[NSarr] (p1001)--(p11001) node[NSdir]{\cdirSW};
\draw[NSarr] (p11001)--(p111001) node[NSdir]{\cdirNW};

\draw[NSarr] (p1001)to[bend right] node[NSdirB]{\cdirNW} (p1001nw);
\draw[NSarr] (p1001nw)--(p1001nw_sw) node[NSdir]{\cdirSW};

\end{tikzpicture}}
        \vspace*{-3mm}
        \caption{Fixing anomalies in \NSpattern{1001} and \NSpattern{11001}.}
        \label{fig:scaling:algo:A3:fix:1001}
    \end{subfigure}
    \\[7mm]
    \begin{subfigure}{.49\textwidth}
        \centering
        \scalebox{0.53}{


\begin{tikzpicture}

\node[NSpat] (p100101) at (0,0){\figPatAThree{100101}{100101}};
\node[NSpat] (p101101) at (\NSpatDist,0){\figPatAThree{101101}{100101\NSnx\dirS\/ = 101101}};
\node[NSpat] (p111101) at (2.1*\NSpatDist,0){\figPatAThree{111101}{101101\NSnx\dirSW\/ = 111101}};

\node[NSpat] (p100101sw) at (\NSpatDist,-\NSpatDistV){\figPatAThree{1100101}{100101\NSnx\dirSW}};
\node[NSpat] (p100101sw_s) at (2*\NSpatDist,-\NSpatDistV){\figPatAThree{10100101}{100101\NSnx\dirSW\NSnx\dirS}};


\draw[NSarr] (p100101)--(p101101) node[NSdir]{\cdirS};
\draw[NSarr] (p101101)--(p111101) node[NSdir]{\cdirSW};

\draw[NSarr] (p100101)to[bend right] node[NSdirB]{\cdirSW} (p100101sw);
\draw[NSarr] (p100101sw)--(p100101sw_s) node[NSdir]{\cdirS};

\end{tikzpicture}}
        \vspace*{-3mm}
        \caption{Fixing in \NSpattern{100101} and \NSpattern{101101}.}
        \label{fig:scaling:algo:fix:100101}
    \end{subfigure}
\end{tabular}
\caption{\captionpar{Routing extensions at \scaling A3:} in purple, the latest (clockwise-most) clean edge used to extend the routing; in green, the sides already covered, earlier in the routing; in yellow, the side shared with the newly covered neighboring cell after fixing a path-anomaly; the red arrows are the new potential clean edges available to extend the routing; time- and path-anomalies, that need to be fixed to allow extension on that side, are indicated resp. by red and yellow dots; the seed is highlighted in orange in signature \NSpattern0.}
\label{fig:scaling:algo:A3:fix:anomalies}
\label{fig:scaling:algo:A3:routing}
\end{figure}

The following key topological lemma and corollary ensure that time- and path-anomalies are very limited and can be handled locally 
\withOrWithoutAppendix{%
    (proofs may be found in Section~\ref{sec:scaling:algo:A3:apx}). 
}{%
    (\omittedSeeFullArticle).
}
And the theorem follows by immediate induction:

\begin{lemma}[Key topological lemma]
\label{lem:scaling:algo:one:boundary}
At every step of the algorithm, the boundary of each empty area contains exactly one time-anomaly vertex.
\end{lemma}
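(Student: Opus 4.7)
The plan is to prove this topological invariant by induction on the number of cell insertions performed by Algorithm~\ref{algo:scaling:algo:ext:A3}, with an inner induction on the number of anomaly-fixing substeps within a single call of \textsc{FillEmptyCell}. The conceptual content of the lemma is that the routing, being a single path, induces a ``time function'' on the boundary of each empty area which admits at most one genuine wrap-around, so the entire bookkeeping reduces to tracking how this wrap-around moves or splits when the path is extended or locally rerouted.

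First I would establish the base case by direct inspection of pattern \sigtt{0} in Fig.~\ref{fig:scaling:algo:A3:basic:ext}: after the first cell is placed, its \cdirN-neighbor is marked forbidden, and of the remaining five empty neighboring cells (which all belong to the single surrounding empty area), exactly one hosts the lone time-anomaly vertex at the location where the routing folds back on itself; the clockwise-most edge of the other segments is the latest of its segment. For the inductive step, I would analyze separately the two operation types in a single \textsc{FillEmptyCell} invocation. For a basic extension (one of the 18 diagrams of Fig.~\ref{fig:scaling:algo:A3:basic:ext}), the invariant is verified pattern by pattern: one checks that the unique pre-insertion anomaly of the empty area surrounding $\cell(p)$ either vanishes (if the area is entirely consumed) or is displaced to exactly one position on the boundary of each resulting empty subarea. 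For an anomaly-fixing extension (Fig.~\ref{fig:scaling:algo:A3:fix:anomalies}(b--d)), the local rerouting converts the offending time-anomaly into a path-anomaly that is then removed, while simultaneously creating a new time-anomaly on an adjacent side; I would verify on each diagram that these two local changes preserve the ``exactly one anomaly per empty area'' property.

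The main obstacle will be the case where the new cell $\cell(p)$ disconnects its surrounding empty area into two or three subareas; this happens precisely when $\cell(p)$ has several occupied-side segments, i.e. whenever its signature contains more than one maximal block of ones. One must argue that the single pre-existing time-anomaly, together with the new boundary contributed by the routing laid inside $\cell(p)$, redistributes into exactly one time-anomaly per resulting subarea. This relies on the key design property that every basic extension in Fig.~\ref{fig:scaling:algo:A3:basic:ext} places its new clean edges at the counterclockwise-most edge of each available side, so that on each resulting empty subarea the path's local turn-around point occurs at exactly one boundary location. The verification then amounts to a finite, pattern-by-pattern check in which the 18 basic extensions are grouped by the number and configuration of segments of $\cell(p)$, and for each the new descent count along each resulting subarea's boundary is tallied against the pre-existing anomaly count to confirm conservation modulo the splitting. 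The statement of the lemma then follows by chaining the base case, the anomaly-fixing subinvariant, and the basic-extension subinvariant.
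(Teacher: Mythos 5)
Your overall strategy --- an induction on insertions with a finite pattern-by-pattern verification --- is genuinely different from the paper's argument, and unfortunately it has a gap precisely at the point you identify as ``the main obstacle.'' When $\cell(p)$ splits its surrounding empty area into two or three subareas, the boundary of each subarea is made of (i) a portion of the old boundary and (ii) some of the freshly exposed sides of $\cell(p)$. The single pre-existing time-anomaly lies on the old boundary and is therefore inherited by exactly one of the subareas --- but \emph{which} one is not determined by the signature of $\cell(p)$: it depends on where along the old boundary the anomaly sits, which is global information invisible to a pattern-by-pattern check. For the invariant to survive the split, the subarea inheriting the old anomaly must receive none of the new time-anomalies created by the inserted pattern, while every other subarea must receive exactly one; your appeal to the placement of clean edges at the counterclockwise-most position, and your ``descent count tally,'' never pin down which gap of the pattern faces which subarea relative to the old anomaly, so the induction does not close. (A secondary inaccuracy: an anomaly-fixing step does not convert a time-anomaly into a path-anomaly; per Fig.~\ref{fig:scaling:algo:A3:fix:anomalies} it \emph{relocates} the time-anomaly outside the sides of the cell being filled, and path-anomalies are a separate, purely local rerouting issue.)

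The missing ingredient is a global topological argument. The paper proves existence exactly as you would (on a boundary cycle, time increases clockwise along every non-anomalous side by Invariant~\refInvAAA{inv:side:without:anomaly:flows:CW}, yet must decrease somewhere), but proves \emph{uniqueness} by applying Jordan's curve theorem to the routing: taking a time-anomaly $a$ and its earlier clockwise neighbour $e$, the sub-path of the routing from $e$ to $a$ seals off the empty area, so both the portion of the routing after $a$ and the portion before $e$ are isolated from the area's boundary; hence \emph{every} vertex on that boundary is earlier than $a$, and two distinct anomalies would each have to be earlier than the other. This single global fact is also what rescues the splitting step you worry about, since it locates the anomaly as the unique latest vertex of the boundary and thereby determines its position relative to the insertion. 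If you want to keep your inductive framing, you would need to strengthen the induction hypothesis to carry this ordering information (e.g.\ ``time increases clockwise around each empty area except across its unique anomaly''), at which point you are essentially re-deriving the Jordan-theorem argument.
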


\begin{corollary}
\label{cor:scaling:algo:CW:clean}
The \textbf{\upshape while} loop is executed at most twice, and it fixes: at most one time-anomaly, and at most one path-anomaly. After these fixes, the latest edge around the empty cell is always the clockwise-most of a segment and clean.
\end{corollary}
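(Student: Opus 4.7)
The plan is to combine Lemma~\ref{lem:scaling:algo:one:boundary} with a case-by-case inspection of the fix diagrams in Fig.~\ref{fig:scaling:algo:A3:fix:anomalies}(b--d). The while loop body can fire at most twice because the first iteration always consumes the unique time-anomaly on the empty area's boundary, while the second iteration, if it occurs, only cleans up a residual path-anomaly.

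First, I would bound the number of time-anomaly fixes to one. By Lemma~\ref{lem:scaling:algo:one:boundary}, the empty region containing $\cell(p)$ carries exactly one time-anomaly vertex on its boundary. If the first iteration of the while loop fires on a time-anomaly, this must be that unique one, and the corresponding rerouting inside the neighboring cell (some pattern in Fig.~\ref{fig:scaling:algo:A3:fix:anomalies}(b--d)) frees the shared side. The diagrams are designed so that no new time-anomaly appears on the boundary of the updated empty region; hence no subsequent iteration can target a time-anomaly.

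Second, I would show that at most one further iteration occurs, and that it addresses a path-anomaly. Unlike time-anomalies, path-anomalies are purely local: the routing already reaches the middle vertex of the shared side, merely in a wrong local configuration, and can be rerouted inside the neighboring cell using the appropriate pattern so that the shared edge becomes clean. A direct inspection of the fix diagrams confirms that after this second (and last) fix, the latest side around $\cell(p)$ is the clockwise-most edge of a segment and clean, matching Invariant~\ref{inv:scaling:algo:A3:CW:last} locally at $\cell(p)$, so Algorithm~\ref{algo:scaling:algo:ext:A3} can proceed to line~\ref{algo:scaling:algo:A3:ext:compute:sign}.

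The main obstacle is the verification that rerouting does not introduce new time-anomalies elsewhere. Each arrow in Fig.~\ref{fig:scaling:algo:A3:fix:anomalies}(b--d) changes the routing times of several sides of the neighboring cell, and I would need to check pattern by pattern that none of those time shifts promotes a previously well-ordered side into a new time-anomaly on the updated empty-region boundary. Once that exhaustive verification is in hand, the remaining claims follow by reading off the time and orientation of the latest edge directly from the diagrams.
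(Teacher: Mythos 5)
Your proposal follows essentially the same route as the paper: invoke the key topological lemma (Lemma~\ref{lem:scaling:algo:one:boundary}) to bound the time-anomaly fixes by one, treat the path-anomaly fix as a purely local rerouting that occurs at most once, and finish by inspecting the diagrams of Fig.~\ref{fig:scaling:algo:A3:fix:anomalies}(b--d) to read off that the latest edge ends up clean and clockwise-most of its segment. The only nuance you gloss over (and which the paper spells out) is that the time-anomaly is not eliminated but \emph{relocated} off the sides of $\cell(p)$ while remaining the unique one on the empty area's boundary, and that the path-anomaly fix yields a routing that is foldable only because the fixed edge is plugged into immediately afterwards; neither point changes the correctness of your argument for this corollary.
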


\begin{theorem}
\label{thm:scaling:algo:A3}
Any shape $S$ can be folded by a tight OS at scale \scaling A3.
\end{theorem}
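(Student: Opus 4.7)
The plan is to proceed by induction on the length $i$ of the prefix $p_1,\ldots,p_i$ of the search sequence of $S$, maintaining as the inductive hypothesis that after processing $p_i$: (a)~the current routing is self-supported and tight, and covers exactly $\Linscaling{}{A3}(\{p_1,\ldots,p_i\})$; (b)~Invariant~\ref{inv:scaling:algo:A3:CW:last} holds around every empty cell that borders this set; and (c)~the conclusion of Lemma~\ref{lem:scaling:algo:one:boundary} holds, namely that every empty connected region adjacent to the filled area carries exactly one time-anomaly on its boundary. The base case uses the single pattern \sigtt{0} from Fig.~\ref{fig:scaling:algo:A3:basic:ext} to cover $\cell(p_1)$ and mark the \cdirN-cell as forbidden (legitimate by our standing assumption that this cell lies outside $S$); one checks by direct inspection that the resulting routing is self-supported, tight, and that its \cdirNW-, \cdirSW-, \cdirS-, \cdirSE-, \cdirNE-sides each expose a clean clockwise-most edge, establishing (a)--(c).

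For the inductive step, assume (a)--(c) hold for $p_1,\ldots,p_i$ and consider inserting $\cell(p_{i+1})$. Since $p_{i+1}$ is adjacent in the search to some earlier $p_j$, at least one side of $\cell(p_{i+1})$ is already covered, so its signature is non-zero. By Corollary~\ref{cor:scaling:algo:CW:clean} (which is itself a direct consequence of Lemma~\ref{lem:scaling:algo:one:boundary} applied to the empty region containing $\cell(p_{i+1})$), the \textbf{while} loop terminates after at most two iterations, fixing at most one time-anomaly and at most one path-anomaly by applying the corresponding rerouting patterns in Fig.~\ref{fig:scaling:algo:A3:fix:anomalies}(b--d). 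After these fixes, the latest side of $\cell(p_{i+1})$ is the clockwise-most edge of a segment and is clean, so line~\ref{algo:scaling:algo:A3:ext:compute:sign} can compute a well-defined signature rooted on this latest side, and the corresponding extension from Fig.~\ref{fig:scaling:algo:A3:basic:ext} can be glued in. By inspection of each of the 18 basic patterns and the anomaly-fixing patterns, the resulting extension is self-supported and tight, covers the newly inserted cell, and re-exposes at the second clockwise-most position of each remaining available side an edge that is either immediately clean or becomes so once the unique time-anomaly around the corresponding empty region has been fixed; this re-establishes (a)--(c) for $i+1$.

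The main obstacle is really the combined topological and temporal bookkeeping that goes into re-establishing (b) and (c): one must check that the anomaly-fixing rerouting inside a neighboring cell does not create new time-anomalies in a third cell, and that after inserting the new cell the unique-time-anomaly count of Lemma~\ref{lem:scaling:algo:one:boundary} is preserved on every surviving empty region (noticing in particular that inserting $\cell(p_{i+1})$ may split one empty region into two, and that the two resulting regions each inherit exactly one time-anomaly). This amounts to a finite case analysis over the signatures and the geometry of the surrounding empty regions, which is precisely what Lemma~\ref{lem:scaling:algo:one:boundary} and Corollary~\ref{cor:scaling:algo:CW:clean} certify; we rely on those statements here rather than redo the analysis.

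Finally, applying the induction up to $i=|S|$ produces a self-supported tight routing whose shape is $\Linscaling{}{A3}(S)$. Since the construction is tight (delay $1$, one bond per bead placed at a unique location), Theorem~\ref{thm:scaling:algo:universal:bead:type} converts this routing in linear time into a deterministic oritatami system with the universal set of $114$ bead types and a seed of size~$3$ that folds into $\Linscaling{}{A3}(S)$, which proves the theorem. \qed
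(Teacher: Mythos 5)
Your overall route is the same as the paper's: an induction on the cell-insertion sequence whose step is certified by Lemma~\ref{lem:scaling:algo:one:boundary} and Corollary~\ref{cor:scaling:algo:CW:clean}, followed by an application of Theorem~\ref{thm:scaling:algo:universal:bead:type} to turn the tight self-supported routing into a $114$-bead-type OS with a size-$3$ seed. So there is no difference in strategy; the issue is with how you state the inductive hypothesis.

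Your item~(b) asserts that Invariant~\ref{inv:scaling:algo:A3:CW:last} holds around \emph{every} empty cell bordering the covered region. This is exactly the property that the construction \emph{cannot} maintain: the paper introduces time-anomalies precisely as the sides on which Invariant~\ref{inv:scaling:algo:A3:CW:last} fails, and Lemma~\ref{lem:scaling:algo:one:boundary} --- your own item~(c) --- guarantees that each empty region's boundary carries exactly one such violation. As written, (b) and (c) contradict each other, and (b) cannot be re-established after the inductive step (the basic extension patterns deliberately create a time-anomaly, e.g.\ in signature \sigtt{1101}). If (b) really held, the \textbf{while} loop of Algorithm~\ref{algo:scaling:algo:ext:A3} would never execute and the anomaly-fixing patterns would be unnecessary. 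The repair is to replace (b) by the weaker Invariant~\ref{inv:algo:A3} --- in particular \refInvAAA{inv:side:without:anomaly:flows:CW}, that every available side \emph{not marked as a time-anomaly} flows clockwise --- which is what the paper actually propagates by inspection of the patterns; Invariant~\ref{inv:scaling:algo:A3:CW:last} is then recovered \emph{locally}, only around the cell about to be inserted, after the anomaly fixes, which is all that line~\ref{algo:scaling:algo:A3:ext:compute:sign} needs. With that substitution your induction goes through and coincides with the paper's argument; the rest of your write-up (base case, termination of the \textbf{while} loop, region-splitting bookkeeping, and the final appeal to Theorem~\ref{thm:scaling:algo:universal:bead:type}) is consistent with the paper.
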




\begin{figure}[tbh]
    \resizebox{\textwidth}{!}{
        \begin{tabular}{@{}ccccc@{}}
            \incMovie{.15\textwidth}{anomalyA1}{0008}
        &   \incMovie{.15\textwidth}{anomalyA1}{0009}
        &   \incMovie{.15\textwidth}{anomalyA1}{0010}
        &   \incMovie{.15\textwidth}{anomalyA1}{0011}
        &   \incMovie{.15\textwidth}{anomalyA1}{0012}
        \end{tabular}
    }
    \caption{The step-by-step construction of a routing folding into a shape at scale \scaling A3 according to Algorithm~\ref{algo:scaling:algo:ext:A3}, involving fixing anomalies $\NSpattern{101} \rightarrow \NSpattern{101\protect\NSnx\dirNW} \rightarrow \NSpattern{101\protect\NSnx\dirNW\protect\NSnx\dirS}\rightarrow \NSpattern{101\protect\NSnx\dirNW\protect\NSnx\dirS\protect\NSnx\dirSW}$ in the four last steps.}
    \label{fig:scaling:algo:movie:A3:anomalyA:main}
\end{figure}


\NSomitted{
\subsection{Discussion}

\todoi{missing blabla + fig} no hamitlonian path.

\begin{proposition}
The 3-arms star family cannot be folded by any OS at scale \scaling B1.
\end{proposition}

\begin{conjecture}
The 3-arms star family cannot be folded by any absolutely bounded delay OS at any of the scales \scaling A2, \scaling B2 and \scaling C2.
\end{conjecture}

Our routing designs produce OS with delay $1$ and maximal arity $4$. One may ask if there is a bound on the minimal arity required to fold arbitrary shapes. The following result shows that any deterministic OS with delay~$1$ and minimal arity~$1$ has very limited capacities as it may only not fold any shape significantly larger than the seed, independently of the size of its transcript:

\begin{theorem}\label{thm:d1a1_det_finiteness_short}
	Let $\TMO$ be an OS of delay 1 and arity 1 whose seed consists of $n$ beads, and let $w$ be the transcript of $\TMO$. 
	If $\TMO$ is deterministic, then $|w| \le 9n$. 
\end{theorem}
}


\NSomitted{ 
In the patterns listed in \figPatAAA, some vertices are marked with yellow or red dots, these are respectively path- and time-anomalies and they require special attention in Algorithm~\ref{sec:scaling:algo:A3}.

After step of the algorithm, the current routing covers a connected set of cells. An \empty{empty area} is a connected component of not-fully-covered cells. Every empty area has a boundary which is a cycle made of neighboring cell sides (see Fig.~\ref{fig:scaling:algo:A3:boundary}).
\begin{figure}[t]
    \includegraphics[width=\textwidth]{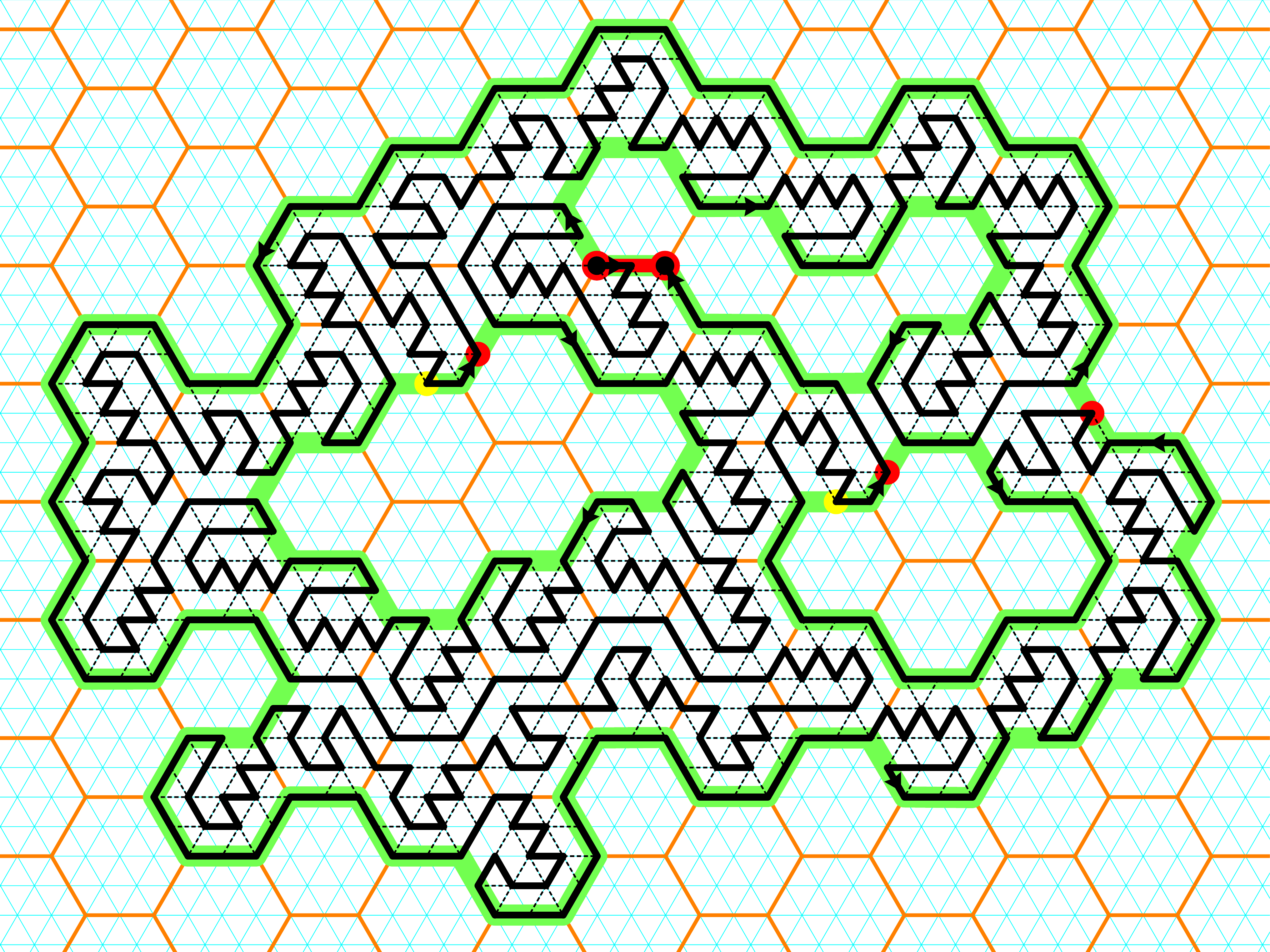}
    \caption{\captionpar{A configuration with four empty areas:} their boundaries (outlined in green) contain exactly one time-anomaly each (the red dots) (recall that we consider the originating side of the routing (in red) as one time-anomaly).} 
    \label{fig:scaling:algo:A3:boundary}
\end{figure}
The following topological lemma is the key to our result.

\begin{proof}
The proof relies on applying Jordan's theorem to the current routing. Consider an empty area $A$. Because its boundary is a cycle, it must contain at least one time-anomaly: indeed, according to invariant~\refInvAAA{inv:side:without:anomaly:flows:CW}, time increases clockwise along the sides of an empty area without time-anomaly; as it must decrease at some point, it must contain at least one time-anomaly. Now, consider a time-anomaly $a$ and its clockwise and earlier neighbor $e$ on the boundary. $a$ was produces by one of the patterns in \figPatAAA. We illustrate the proof with pattern~\NSpattern{1101} here (see Fig.~\ref{fig:scaling:algo:A3:one:anomaly}); the proof works identically with all patterns containing a time-anomaly, as they are all topologically identical w.r.t. this result. 
\begin{figure}[t]
    \begin{subfigure}{.42\textwidth}
    \centering
    \includegraphics[scale=.8]{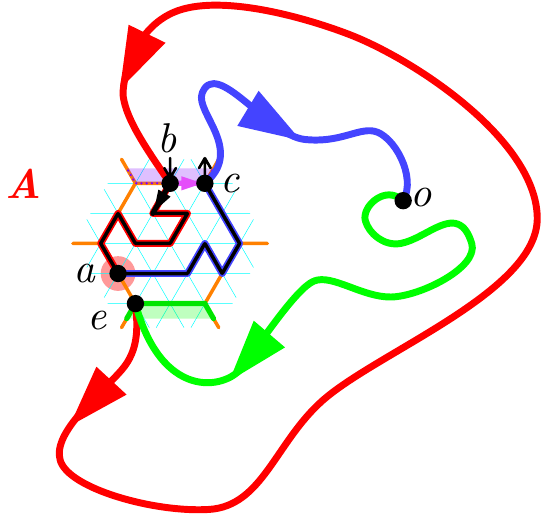}
    \caption{The routing from $e$ to $a$ goes to the right} 
    \label{fig:scaling:algo:A3:one:anomaly:jordan1}
    \end{subfigure}
    \hfill
    \begin{subfigure}{.55\textwidth}
    \centering
    \includegraphics[scale=.8]{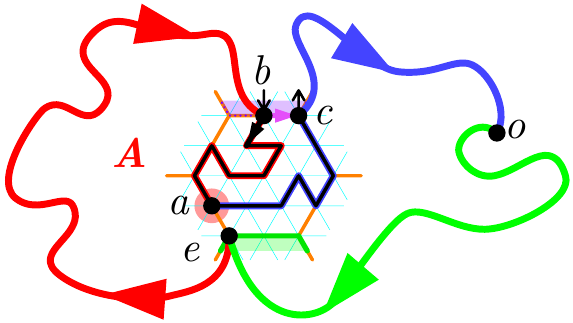}
    \caption{The routing from $e$ to $a$ goes to the left} 
    \label{fig:scaling:algo:A3:one:anomaly:jordan2}
    \end{subfigure}
    \caption{In both cases, all vertices on the boundary of the empty area $A$ must be earlier than the time-anomaly $a$.} 
    \label{fig:scaling:algo:A3:one:anomaly}
\end{figure}
As $e$ is earlier in the routing than $a$, the routing connects $e$ to $a$ by a self-avoiding path (in red on Fig.~\ref{fig:scaling:algo:A3:one:anomaly}) that goes either (a) to the left or (b) to the right. As $a$ and $e$ are next to each other, together with the path in the pattern, they both ''seal'' this path, which thus encloses the empty area $A$ (in its outside in (a); in its inside in (b)). According to the pattern~\NSpattern{1101}, the routing must continue to the right after passing through $a$, to get back to the origin. By Jordan's theorem, the part of the routing after $a$ (in blue) is thus entirely isolated from the empty area by the red path, and the origin must lie there as well. It follows again by Jordan's theorem, that the part of the routing connecting the origin to $e$ (in green) is also isolated from the empty area by the red path. It follows that the only vertices exposed at the boundary of $A$ belong to the red path. \emph{All} the vertices at the boundary of $A$ are thus \emph{earlier} than $a$. There can thus not be any other anomaly on this boundary; otherwise both anomalies would be earlier than each other.  
\end{proof}

We say that an available (occupied) side of an empty cell $\cell(p)$ \emph{flows clockwise} if its 3 vertices $a,c,d$ (taken in clockwise order around $\cell(p)$) plus the vertex $b$ neighboring $a$ and $c$ inside the occupied neighboring cell, appear in clockwise order in the current routing, i.e. if
$$
\max(\rtime(a),\rtime(b))<\rtime(c)<\rtime(d)
$$
($a$ and $b$ may appear in any relative order). 
\begin{figure}[t]
    \centering
    \includegraphics[scale=2]{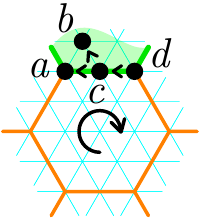}
    \caption{At all time, the vertices $a,b,c,d$ on and inside an available side verify the clockwise order property around an empty cell:\\ 
    \hspace*{3cm} ${\max(\rtime(a),\rtime(b))<\rtime(c)<\rtime(d)}$.} 
    \label{fig:scaling:algo:A3:one:anomaly:boundary}
\end{figure}
This property ensures the ``cleanability'' of an available side (recall Fig.~\ref{fig:scaling:algo:grow:from:clean}).

Regardless of the algorithm, the following invariants are valid for any sequence of valid insertions. These are proved by inspecting the extension patterns together with an immediate induction: 

\begin{invariant}\upshape
\label{inv:algo:A3}
After any sequence of insertions of patterns according to Fig.~\ref{fig:scaling:algo:A3:routing},
\begin{enumerate}
    \item
    \label{inv:cover:once:forever}
    every vertex covered once remains covered;
    \item 
    \label{inv:side:covered:CW}
    the empty sides of an empty cell are covered by the routing one after the other in clockwise order starting from the counter-clockwise side to the clockwise side of the insertion side; 
    \item 
    \label{inv:relative:order:unchanged}
    as the routing is extended by inserting a pattern on an edge, the relative order in the routing of the vertices outside the newly covered cell is unchanged by a insertion a new cell or fixing an anomaly (we consider that fixing an anomaly according to Fig.~\ref{fig:scaling:algo:A3:fix:anomalies} as a new cell insertion here);
    \item 
    \label{inv:side:without:anomaly:flows:CW}
    every available (occupied) side of an empty cell that is not marked as a time-anomaly, flows clockwise.
    \item 
    \label{inv:enters:CCW:leaves:CW}
    the routing enters the first time and leaves a cell for the last time from the same cell side (the latest side of the cell at the step of its insertion): it enters at its middle vertex and exits at the clockwise-most vertex;
\end{enumerate}
\end{invariant}

}




\withOrWithoutAppendix{}{
\vspace{-15pt}}
\section{A shape which can be assembled at delay $\delta$ but not $<\delta$}
\label{sec:small-delay-weak}
\withOrWithoutAppendix{}{
\vspace{-5pt}}

This section contains the statement of
Theorem~\ref{thm:small-delay-weak} and a high-level description of its proof. For full details, see
\withOrWithoutAppendix{%
    Section~\ref{sec:sdw}.
}{%
    \cite{shape2018DNAfull}.
}

\vspace{-5pt}
\begin{figure}[htp]
\centering
\includegraphics[width=4.0in]{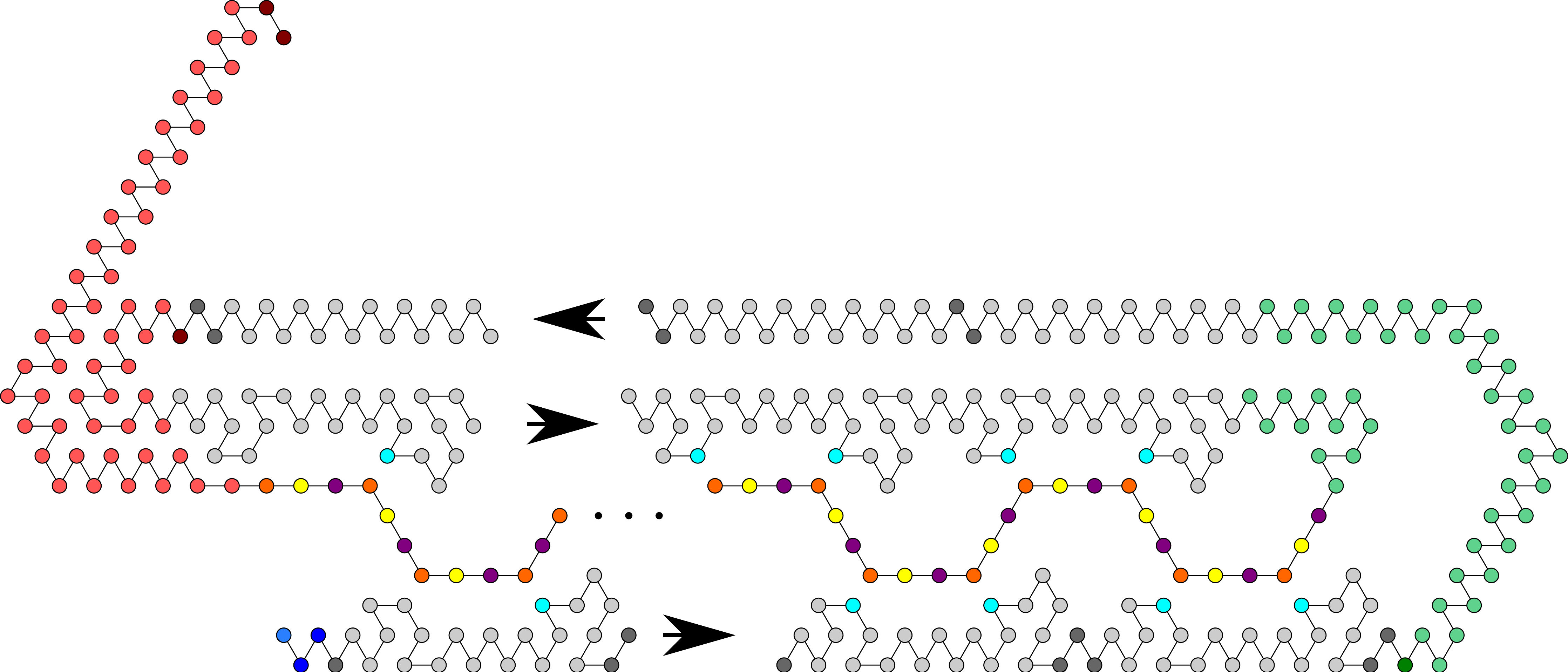}
\caption{A depiction of shape $S_{\delta}$ and a routing $R'_{\delta}$ for $\delta = 4$. This can be thought of as a ``slice'' of the shape (along with a forced routing) which cannot be self-assembled by an oritatami system with delay $< 4$, but can be assembled by an OS with delay $4$.  The arrows represent the direction of the directed path in the routing and the different colored beads represent the different gadgets in the routing.}
\label{fig:sw-main-full-example}
\end{figure}

\begin{theorem} \label{thm:small-delay-weak}
Let $\delta > 2$.  There exists a shape $S_{\delta}$ such that $S_{\delta}$ can be self-assembled by some OS $\TMO_{\delta}$ at delay $\delta$, but no OS with delay $\delta'$ self-assembles $S_{\delta}$ where $\delta' < \delta$.
\end{theorem}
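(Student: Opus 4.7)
The plan is to exhibit an explicit shape $S_\delta$, structured as in Figure~\ref{fig:sw-main-full-example}, consisting of a backbone plus one ``decision gadget'' that concentrates all the difficulty. The gadget is a pair of parallel corridors attached to a single entry vertex $u$ (the \emph{decision vertex}): both corridors have length $\delta$, are geometrically congruent up to the mirror symmetry across the edge entering $u$, and have exactly the same set of positions available for bonds with $u$'s predecessors. At distance $\delta$ from $u$, however, one corridor (the ``correct'' one) continues into the rest of $S_\delta$ while the other is pinched off by the outer contour so that it is a dead-end: placing the $\delta$-th bead there forbids ever completing $S_\delta$. The backbone around the gadget is rigid enough that any routing whose support is $S_\delta$ must enter the gadget at $u$ and then take the correct corridor; up to this point the routing of $S_\delta$ is essentially forced.

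For the positive direction I would fix a transcript, a bead-type set and an attraction rule that realise $S_\delta$ at delay $\delta$. The backbone is handled by a hard-coded Hamiltonian path using pairwise distinct bead types along it (in the style used in Section~\ref{sec:finite:unbounded}), so that the only non-trivial placement decision occurs at $u$. There, I place a distinguished bead type on the $\delta$-th bead of the transcript after $u$, and an attracting neighbour of that type on the cell just outside the far end of the correct corridor. The $\delta$-elongation following $u$ can form this extra bond if and only if the path enters the correct corridor, so the maximiser is unique and $u$'s successor is deterministically placed as required. Determinism along the rest of the backbone follows from the uniqueness of bead types.

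For the negative direction I would argue independently of bead types. Fix any OS $\TMO'$ of delay $\delta' < \delta$ whose terminal configuration has shape $S_\delta$. Its routing coincides with $R'_\delta$ on the backbone up to $u$ (by rigidity of the backbone inside $S_\delta$), and the next bead must be placed at the entry of the correct corridor. Consider the two candidate positions at that step, one in each corridor. Because the two corridors are congruent through a symmetry $\sigma$ that fixes $u$ and its already-placed neighbours, the map $\sigma$ is a bijection between $(\delta'-1)$-elongations in one corridor and $(\delta'-1)$-elongations in the other that preserves the placement of every bead type along the chain and preserves every potential bond with beads already placed. Hence for any transcript suffix of length $\delta'$ and any attraction rule, the two $\delta'$-elongations achieve exactly the same maximum bond count. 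The $\arg\max$ in $\Dynamics$ therefore contains both positions, so $\TMO'$ cannot deterministically select the correct corridor, contradicting the assumption that it folds into $S_\delta$.

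The main obstacle is engineering the decision gadget so that the geometric symmetry $\sigma$ is \emph{exact} out to distance $\delta-1$ and then broken at exactly distance $\delta$, while simultaneously ensuring that the rest of $S_\delta$ forces the routing to pass through $u$ with the specified orientation. One must check carefully that the symmetry extends to all positions that a $(\delta'-1)$-elongation can possibly occupy (not just positions inside the corridor), which constrains how close other parts of $S_\delta$ may come to the gadget; padding the neighbourhood of $u$ with a symmetric empty buffer of radius $\delta-1$ suffices. The remaining verifications (rigidity of the backbone and the fact that the wrong corridor is truly incompletable) are routine once the gadget is in place.
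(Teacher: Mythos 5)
There is a genuine gap, and it is fatal to the construction as proposed: your $S_\delta$ is \emph{finite}. The theorem places no bound on the seed size of the adversary OS, and your shape (by the positive direction) admits a Hamiltonian path; hence an OS at any delay $\delta'<\delta$ whose seed is that entire Hamiltonian path, with an empty transcript, trivially and deterministically ``self-assembles'' $S_\delta$. No finite shape with a Hamiltonian path can witness the separation. This is precisely why the paper's $S_\delta$ is \emph{infinite}: it stacks infinitely many copies of the gadget, so that any finite seed leaves infinitely many copies to be folded co-transcriptionally, and the impossibility argument is then run on a copy untouched by the seed (the paper's Claim on contiguity of the bead-line uses the fact that a directed path has only two endpoints, so all but at most two of the infinitely many iterations must be traversed ``honestly'').

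Beyond that, your negative direction takes a genuinely different route from the paper's --- a mirror-symmetry/tie-in-the-$\arg\max$ argument at a single decision vertex, versus the paper's counting argument showing that the number of ``phantom bonds'' needed to stabilize successive beads across a width-$\delta$ cave strictly increases and eventually exceeds the $5\delta'$ bound on bonds available to a nascent segment. Your idea is attractive because it is more local, but even after repairing finiteness (e.g.\ by tiling the decision gadget infinitely often), you would still need to control the adversary much more tightly than you acknowledge: the adversary chooses the routing order, so it may reach the gadget from a direction other than the intended backbone, place asymmetric beads within radius $\delta'-1$ of $u$ before the decision step, or arrange for the ``dead-end'' corridor to be covered first so that no decision remains. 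You would have to prove that \emph{every} covering order forces a symmetric partial configuration at the moment the branching bead is placed, which is a substantially harder rigidity statement than the one you wave at; the paper's cave construction is engineered exactly so that the forced single-stranded crossing is unavoidable regardless of routing order.
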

\withOrWithoutAppendix{}{
\vspace{-5pt}}

We prove Theorem~\ref{thm:small-delay-weak} by constructing a deterministic OS $\TMO_{\delta}$ for every $\delta >2$, and we define $S_{\delta} = \dom(C_{\delta})$ where $C_{\delta} \in \termasm{\TMO_{\delta}}$.  It then immediately follows that there exists a system at delay $\delta$ which assembles $S_{\delta}$, and we complete the proof by showing that there exists no OS with delay less than $\delta$ which can assemble $S_{\delta}$.
A schematic depiction of the shape $S_{\delta}$ (for $\delta = 4$) can be seen in Fig.~\ref{fig:sw-main-full-example}. $\TMO_{\delta}$ forms the shape as follows. First a ``cave'' is formed where the distance between the top and the bottom is $\delta$ at specified points.  At regular intervals along the top and bottom, blue beads are placed. Once the cave is complete, a single-bead-wide path grows through it from right to left, and every $\delta$ beads is a red bead which interacts with the blue.  To optimize bonds, each red binds to a blue, which is possible since the spacing between locations adjacent to blue beads is exactly $\delta$, allowing the full transcription length to ``just barely'' discover the binding configuration. The geometry of $S_{\delta}$ ensures that any oritatami system forming it must have single-stranded portions that reach all the way across the cave. So, in any system with $\delta' < \delta$, since the minimal distance at which beads can form a bond across the cave is $\delta$, when the transcription is occurring from a location adjacent to one of the sides, no configuration can be possible which forms a bond with a bead across the cave. Thus, the beads must stabilize without a bond across the cave forcing their orientation and so can stabilize in incorrect locations, meaning $S_{\delta}$ isn't deterministically formed.


\withOrWithoutAppendix{}{
}

\withOrWithoutAppendix{%
	\section{Finiteness of delay-1, arity-1 deterministic oritatami systems}

\tikzstyle{mol} = [fill, circle, inner sep=1.25pt]
\tikzstyle{point} = [fill, circle, inner sep=0.5pt]

\begin{figure}[h]
\centering
\begin{minipage}{0.3\linewidth}
\centering
\scalebox{0.6}{\begin{tikzpicture}

\draw[red, thick] (0, 0) node[mol]{} node[above] {$a$}
-- ++(300:1) node[mol]{} 
-- ++(240:1) node[mol]{} node[below] {$\overline{b}$}
-- ++(0:1) node[mol]{} node[below] {$b$}
-- ++(60:1) node[mol]{} 
-- ++(120:1) node[mol]{} node[above] {$\overline{a}$}
;
\draw[-latex, thick] (1, 0) -- ++(0:1) node[mol]{} node[above] {$a$}
-- ++(300:1) node[mol]{} 
-- ++(240:1) node[mol]{} node[below] {$\overline{b}$}
-- ++(0:1) node[mol]{} node[below] {$b$}
-- ++(60:1) node[mol]{} 
-- ++(120:1) node[mol]{} node[above] {$\overline{a}$}
-- ++(0:1)
;
\draw[dotted, thick, red] (0, 0) -- ++(0:1); 
\draw[dotted, thick] (2, 0) -- ++(0:1) ++(240:2) -- ++(180:1);

\draw(4,0)++(300:1) node {\Large $\cdots$};

\end{tikzpicture}}
\end{minipage}
\begin{minipage}{0.025\linewidth}
\ \\
\end{minipage}
\begin{minipage}{0.3\linewidth}
\centering
\scalebox{0.6}{\begin{tikzpicture}

\draw[red, thick] (0, 0) node[mol]{} node[above] {$a$}
-- ++(300:1) node[mol]{} node[left] {$\overline{b}$}
-- ++(240:1) node[mol]{} node[below] {$\overline{b}$}
-- ++(0:1) node[mol]{} node[below] {$b$}
-- ++(60:1) node[mol]{} node[left] {$\overline{a}$}
-- ++(120:1) node[mol]{} node[above] {$\overline{a}$}
;
\draw[-latex, thick] (1, 0) -- ++(0:1) node[mol]{} node[above] {$a$}
-- ++(300:1) node[mol]{} node[left] {$\overline{b}$}
-- ++(240:1) node[mol]{} node[below] {$\overline{b}$}
-- ++(0:1) node[mol]{} node[below] {$b$}
-- ++(60:1) node[mol]{} node[left] {$\overline{a}$}
-- ++(120:1) node[mol]{} node[above] {$\overline{a}$}
-- ++(0:1)
;
\draw[dotted, thick, red] (0, 0) -- ++(0:1) ++(240:1) -- ++(300:1);
\draw[dotted, thick] (0,0) ++(300:2) -- ++(0:1) ++(120:1) -- ++(60:1) -- ++(0:1) ++(240:1) -- ++(300:1);

\draw(4,0)++(300:1) node {\Large $\cdots$};

\end{tikzpicture}}
\end{minipage}
\begin{minipage}{0.025\linewidth}
\ \\
\end{minipage}
\begin{minipage}{0.3\linewidth}
\centering
\scalebox{0.6}{\begin{tikzpicture}

\draw[red, thick] (0, 0) node[mol]{} node[below] {$a$}
-- ++(60:1) node[mol]{} node[above] {$b$}
-- ++(300:1) node[mol]{} node[below] {$\overline{a}$}
;
\draw[-latex, thick] (1, 0)
-- ++(60:1) node[mol]{} node[above] {$\overline{b}$}
-- ++(300:1) node[mol]{} node[below] {$a$}
-- ++(60:1) node[mol]{} node[above] {$b$}
-- ++(300:1) node[mol]{} node[below] {$\overline{a}$}
-- ++(60:1) node[mol]{} node[above] {$\overline{b}$}
-- ++(300:1) node[mol]{} node[below] {$a$}
-- ++(60:1) node[mol]{} node[above] {$b$}
-- ++(300:0.5)
;

\draw(5,0)++(60:0.5) node {\Large $\cdots$};

\draw[dotted, thick] (0, 0) -- ++(0:4) (0, 0)++(60:1) -- ++(0:4);
\end{tikzpicture}}
\end{minipage}

\caption{Deterministically foldable infinite shapes: 
(Left) A glider at delay-3 and arity-1; 
(Middle) A glider at delay-2 and arity-2, and 
(Right) A zigzag at delay-1 and arity-2.
Seeds are colored in red. 
The rule set used in common is complementary: $a$ with $\overline{a}$ and $b$ with $\overline{b}$. 
}
\label{fig:infinite_shapes_short}
\end{figure}

In this section, we briefly argue that oritatami systems cannot yield any infinite conformation at delay~1 and arity~1 deterministically. For more detail, see Section~\ref{sec:d1a1}.
The finiteness stems essentially from the particular settings of these parameters. 
The \textit{glider} is a well-known infinite conformation foldable deterministically at delay~3 and arity~1, shown in Figure~\ref{fig:infinite_shapes_short} (Left), and can be ``widened'' to adapt to longer delays. 
The glider can be ``reinforced'' with more bonds to fold at delay~2 and arity~2 as shown in Figure~\ref{fig:infinite_shapes_short} (Middle). 
Even at delay~1, arity being 2 allows for the infinite zigzag conformation shown in Figure~\ref{fig:infinite_shapes_short} (Right). 
Thus, we are left with just two possible settings of delay and arity under which infinite deterministic folding is impossible: arity~1 and delay at most 2. 
Here we set delay to 1 and leave the problem open at the other setting. 
Note that infinite \textit{nondeterministic} folding is always possible at arbitrary delay and arity, as exemplified by an infinite transcript of inert beads, which can fold into an arbitrary non-self-interacting path. 

\begin{figure}[tb]
\centering
\scalebox{0.65}{\begin{tikzpicture}

\foreach \x in {5, 9, 13} {
	\draw (\x, 0)++(60:1) node[mol] {} node[above] {$a_{j_1}$};
	\draw (\x, 0)++(120:1) node[mol] {} node[above] {$a_{j_4}$};
	\draw (\x, 0)++(240:1) node[mol] {} node[below] {$a_{j_3}$};
	\draw (\x, 0)++(300:1) node[mol] {} node[below] {$a_{j_2}$};
}

\draw (4, 0) node[point] {} ++(0:1) node[point]{} node[above] {$p$} ++(0:1) node[point]{};

\draw (7, 0) node {$\Rightarrow$};

\draw[-latex, thick] (8, 0) node[mol] {} node[below] {$a_{i-2}$} 
-- ++(0:1) node[mol] {} node[below] {$a_{i-1}$};
\draw (10, 0) node[point] {} node[above] {};

\draw (11, 0) node {$\Rightarrow$};

\draw[-latex, thick] (12, 0) node[mol] {} node[below] {$a_{i-2}$} 
-- ++(0:1) node[mol] {} node[below] {$a_{i-1}$}
-- ++(0:1) node[mol] {} node[below] {$a_i$}
;

\draw (16, 0) ++(60:1) node[point]{};
\draw (16, 0) ++(300:1) node[point]{};

\foreach \x in {17, 18, 19, 20} {
\draw (\x, 0) ++(60:1) node{$\times$};
\draw (\x, 0) ++(300:1) node{$\times$};
}
\draw (17,0)++(60:1) -- ++(0:3);
\draw (17,0)++(300:1) -- ++(0:3);

\draw[-latex, thick] (16, 0) node[mol]{} node[below] {$a_{i-5}$} -- ++(0:1) node[mol]{} node[below] {$a_{i-4}$} -- ++(0:1) node[mol]{} node[below] {$a_{i-3}$} -- ++(0:1) node[mol]{} node[below] {$a_{i-2}$} -- ++(0:1) node[mol] {} node[above] {$a_{i-1}$} -- ++(0:1) node[mol]{} node[right] {$a_i$} -- ++(60:1) node[mol]{} node[above left] {$a_{i+1}$};

\draw[very thick, dotted] (18, 0) -- ++(60:1);
\draw[very thick, dotted] (20, 0) -- ++(300:1);
\draw[very thick, dotted] (20, 0)++(60:1) -- ++(0:1);

\draw (21, 0)++(300:1) node[point]{} ++(60:1) node[point]{} ++(60:1) node[point]{}; 

\draw[-latex, dashed] (21, 0) -- ++(300:0.75);

\end{tikzpicture}}
\caption{Stabilization of the bead $a_i$ through a tunnel section formed by the four beads $a_{j_1}, a_{j_2}, a_{j_3}, a_{j_4}$. 
}
\label{fig:tunnels_short}
\end{figure}

At delay~1, a bead cannot collaborate with its successors so that it has to bind to as many (other) beads as possible for stabilization. 
It can however get stabilized without binding to any other bead only when just one point left unoccupied around.  
Such a non-binding stabilization requires four beads already stabilized around one common point; see Figure~\ref{fig:tunnels_short}, where four beads $a_{j_1}, a_{j_2}, a_{j_3}, a_{j_4}$ are at neighbors of the point $p$. 
Once the $i-2$-th bead of a transcript, say $a_{i-2}$, is stabilized at one of the two free neighbors of $p$ and also the next bead $a_{i-1}$ is stabilized at $p$, then the next bead $a_i$ cannot help but be put at the sole free neighbor of $p$ and the stabilization does not require any binding. 
Such a structure of four beads around one point is called a \textit{tunnel section}. 
Tunnel sections can be concatenated into a longer tunnel, as shown in Figure~\ref{fig:tunnels_short} (Right). 
Tunnels and unbound beads, or more precisely, their \textit{one-time} binding capabilities are resources for an oritatami system to fold deterministically at delay~1 and arity~1. 
Once bound, a bead cannot bind to any other bead. 
One tunnel consumes two binding capabilities to guide the transcript into it and to decide which way to lead the transcript to, while it can create only one new binding capability; in Figure~\ref{fig:tunnels_short}, $a_i$ does. 
Thus, intuitively, we can see that the number of binding capabilities is monotonically decreasing, and once they are used up, the system cannot stabilize beads deterministically any more. 
Formalizing this intuition brings the following theorem. 

\begin{theorem}\label{thm:d1a1_det_finiteness_short}
	Let $\Xi$ be an OS of delay 1 and arity 1 whose seed consists of $n$ beads, and let $w$ be the transcript of $\Xi$. 
	If $\Xi$ is deterministic, then $|w| \le 9n$. 
\end{theorem}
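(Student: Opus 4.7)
The plan is an amortized argument using a potential function that tracks the resources available for deterministic placements. After the seed, every placed bead is either a \emph{bonder} (forms exactly one bond when stabilized) or a \emph{non-bonder} (stabilized with no bond, and hence geometrically forced into the unique empty neighbor of its predecessor). Arity~$1$ implies that every bond saturates two beads, so the number of bonders is at most $(n+|w|)/2$; the crux is to bound the non-bonders.

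I would define $\Phi = \alpha A + \beta T$, where $A$ counts the currently \emph{open} (arity-$0$) stabilized beads and $T$ counts the currently available \emph{tunnel sections}: unoccupied points of $\Tlat$ whose four cross-neighbors are all stabilized, in the sense of Fig.~\ref{fig:tunnels_short}. A non-bonder $a_i$ forces its predecessor $a_{i-1}$'s five non-return neighbors to include exactly one empty point $p$, which means $p$ is the centre of a tunnel section that $a_i$ traverses and thereby destroys (since $p$ is henceforth occupied).

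Next I would analyze how $\Phi$ changes per placement. For a bonder, $A$ drops by $1$ and only a bounded number of new tunnel sections can be completed, since the new bead has only $6$ neighbors in $\Tlat$ and each can be the centre of at most one newly completed tunnel. For a non-bonder, $A$ rises by $1$ but at least one tunnel section is consumed, and again only a bounded number of new tunnels can appear. Choosing $\alpha$ sufficiently large relative to $\beta$ times the maximum tunnel-creation count, both cases yield a strict decrease of at least $1$ in $\Phi$. Since $\Phi$ is nonnegative and initially at most a constant multiple of $n$ (each seed bead contributes $1$ to $A$ and participates as a wall in a bounded number of potential tunnel sections), the total number of post-seed placements is $O(n)$, and optimizing the weights yields the explicit bound $|w|\le 9n$.

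The main obstacle is pinning down the bound on how many new tunnel sections a single placement can create: a careless definition would let a bonder spawn many new tunnels and inflate $\Phi$. Restricting tunnel sections to the strict configuration of Fig.~\ref{fig:tunnels_short}---an unoccupied centre with four occupied cross-neighbors---and exploiting the fact that each bead has only six neighbors in $\Tlat$ caps this proliferation. Once that constant is controlled, the argument in the preamble (``a tunnel consumes two binding capabilities but creates only one'') becomes a precise inequality on $\Phi$, and the linear bound with explicit constant $9$ follows.
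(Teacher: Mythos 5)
Your high-level resources are the right ones (open beads as binding capabilities, tunnel sections as forced-placement opportunities), and this matches the spirit of the paper's accounting. But the central claim of your proposal --- that every placement strictly decreases $\Phi=\alpha A+\beta T$ --- fails, and the failure is exactly where the real difficulty of the theorem lives. First, the bookkeeping is off: when $a_i$ is stabilized without a bond, the tunnel section it ``used'' is the one centred at the position of $a_{i-1}$, and that section was already destroyed at the \emph{previous} step, when $a_{i-1}$ occupied the centre; at the step that places $a_i$ itself, $A$ goes up by $1$ and $T$ can also go up (the new bead may complete fresh tunnel sections), so $\Phi$ increases for any choice of positive weights. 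Second, and more fundamentally, counting arity-$0$ beads as available capability overcounts: a bead stabilized in the interior of a tunnel of length $\ell\ge 2$ ends up with all six neighbors occupied and can never bind, so its ``capability'' is dead, whereas the bead exiting the tunnel may retain a usable one. Distinguishing these two situations cannot be done by a local per-step potential; the paper instead amortizes over whole factors $bt^{\ell}b$ of the stabilization sequence, charging each tunnel two capabilities (entry and exit) against at most one new usable capability (its last bead).

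The second missing ingredient is the tandem analysis. In a factor $bt^ibt^jb$ the middle $b$ serves simultaneously as the exit bond of the first tunnel and the entry bond of the second, and the last beads of \emph{both} tunnels are candidates to provide new capabilities, so a naive count makes a tandem capability-neutral --- which would destroy any linear bound, not just the constant $9$. The paper closes this with a genuinely geometric step: a Jordan-curve argument showing that unless the second tunnel turns right acutely, the first tunnel's exit bead is separated from the rest of the plane and its capability is unusable, and that a third concatenated tunnel cannot improve the ratio; this is what yields ``capabilities decrease by one every two tunnels'' and, combined with the explicit per-bead caps on tunnel-section creation (at most $3$ for a bound bead, at most $4$ per tunnel, at most $n$ equipped in the seed), the constant $9$. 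Your proposal contains neither this topological step nor the explicit creation counts, and ``optimizing the weights'' of $\Phi$ cannot substitute for them.
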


}{%
}

\enlargethispage*{1cm}

\bibliographystyle{amsplain}
\bibliography{scale2}


\withOrWithoutAppendix{ 
    \newpage
\newgeometry{margin={1in,1in}} 

\appendix

\section{Omitted contents for Section~\ref{sec:def}}
\label{sec:def:appendix}

\subsection{Omitted contents for Subsection~\ref{sec:def:shape:upscaling}}
\label{sec:def:shape:upscaling:app}

A scaling $\Lambda$ is \emph{valid} if it preserves the topology of any shape, that is if:  (1) for all shape $S$ and $p\in\Tlat$, $p\in S$ if and only if $\lambda(p)\in \Lambda(S)$ (we do not allow a cell to be fully covered by others); (2) for all $p,q\in\Tlat$, we have $p\sim q$ if and only if $\Lambda(p) \sim \Lambda(q)$ (cells are neighbors if and only if their associated points in the original shape are neighbors). We say that a scaling $\Lambda$ is \emph{fully covering} if every shape $S$ without hole is mapped to a shape $\Lambda(S)$ without hole.%
\footnote{Recall that a hole of a shape $S$ is a finite non-empty connected component of $\Tlat\smallsetminus S$.}
%
%
Upscaling schemes \scaling An, \scaling Bn and \scaling Cn are all of them are valid and fully covering.

\begin{figure}[H]
\hfill
\includegraphics[height=3cm]{scaling-x_0.pdf}
\hfill
\includegraphics[height=3cm]{scale-3_0.pdf}
\hfill
\,
\caption{\captionpar{Left:} Scaling \scaling An cell shapes; they are the concentric balls centered at a vertex in $\Tlat$. \captionpar{Right:}  the cells at scale \scaling C3 (in orange) and the underlying  rotated triangular lattice (in brown), by $-30^\circ$, whose vertices are located at the center vertices of the hexagons.}
\end{figure}

\begin{figure}[H]
\hfill
\includegraphics[height=3cm]{scaling-x_0.pdf}
\hfill
\includegraphics[height=3cm]{scale-3_0+.pdf}
\hfill
\,
\caption{\captionpar{Left:} Scaling \scaling Bn cell shapes; they are the concentric balls centered at a vertex in $\Tlat$. \captionpar{Right:}  the cells at scale \scaling C3 (in orange) and the underlying  rotated triangular lattice (in brown), by $-30^\circ+\epsilon$, whose vertices are located at the center vertices of the hexagons.}
\end{figure}

\begin{figure}[H]
\hfill
\includegraphics[height=3cm]{scaling-x_5.pdf}
\hfill
\includegraphics[height=3cm]{scale-3_5.pdf}
\hfill
\,
\caption{\captionpar{Left:} Scaling \scaling Cn cell shapes; they are the concentric balls, in $\Tlat$, centered on the vertex at the center of the triangles of $\Tlat$. \captionpar{Right:}  the cells at scale \scaling Cn (in orange) and the underlying rotated triangular lattice (in brown), by $-30^\circ$, whose vertices are located at the center of the triangle at the center of each cell in the original triangular lattice (the orange dot in the figure to the left).}
\end{figure}

\afterpage\clearpage

\section{Infinite shapes with finite cut technical details}\label{sec:finite-cut-append}

In this section, we provide the details of the proof of Theorem~\ref{thm:infinite-shapes-short}.

\begin{proof}
Let $K$ be a finite subset of $S$ such that $S\setminus K$ contains two disjoint infinite connected components $S_1$ and $S_2$. For the sake of contradiction, suppose that $\TMO$ is an OS and $S$ is foldable in $\TMO$. 
As $S$ is foldable in $\TMO$ by assumption, there must exist a foldable sequence, $\vec{C} = \{C_i\}_{i=1}^\infty$ say, with result $S$ and each $C_i$ a valid foldable configuration in $\TMO$. Note that, since both $S_1$ and $S_2$ are infinite, we can find a sequence of points in $S$ $\{q_i\}_{i=1}^{|K|+1}$ such that the following properties hold. (1) $q_i$ is in $S_1$ for $i$ odd and $q_i$ is in $S_2$ for $i$ even, and (2) for some $j\in \N$, $q_i$ is a location for a bead in $C_j$ but not a bead in $C_{j-1}$. Then, the routing of $C_j$ must contain a path from a bead at location $q_{i-1}$ to a bead at location $q_i$ as a subpath for each $i$ between $1$ and $|K|+1$ inclusive. This subpath must must contain a point in $K$, for otherwise we arrive at a contradiction of the assumption that $S$ is weakly connected, with $S_1$ and $S_2$ connected solely by $K$. Moreover, one can show that $C_j$ must contain beads at $i$ many distinct points in $K$. Therefore, for $i = |K|+1$, such a configuration $C_j$ contains $|K|+1$ distinct points of $K$. Hence we arrive at a contradiction. Therefore, $S$ is not foldable in $\TMO$.
\end{proof}

\afterpage\clearpage

\section{Self-Assembling Finite Shapes At Small Scale And Linear Delay: Technical Details}\label{sec:hard-coded-shapes-long}

In this section, we give details for the proof of Theorem~\ref{thm:hard-coded}. We do this by first proving the case for scaling \scaling A2, and then the cases for scalings \scaling B2 and \scaling C2 are straightforward extensions.  For the case of scaling \scaling A2 we first show how to construct Hamiltonian cycles in the scaled shapes.

\subsection{Details for construcing Hamiltonian cycles in scaled shapes}\label{sec:HC-long}

\begin{lemma}\label{lem:scaling}
For any finite shape $S$ in the triangular grid graph, there exists a scaling \scaling A2 of $S$, say $S'$, such that there exists a Hamiltonian cycle through the points of $S'$.
\end{lemma}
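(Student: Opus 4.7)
The plan is to prove the lemma by induction on $|S|$, exploiting the fact that each \scaling A2 cell is a translate of $H_2$, a $7$-vertex graph isomorphic to the wheel graph $W_6$ (six rim vertices forming a hexagonal $6$-cycle plus a central vertex adjacent to all six rim vertices), which is well known to be Hamiltonian-connected. For the base case $|S|=1$, the single cell $H_2$ admits the explicit Hamiltonian cycle
\[
(0,0) \to (1,0) \to (1,1) \to (0,1) \to (-1,0) \to (-1,-1) \to (0,-1) \to (0,0)
\]
through all seven vertices.

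For the inductive step, I would fix a spanning tree $T$ of the cell-adjacency graph of $S$, pick a leaf $c$ of $T$ with tree-parent $c^*$, and set $S' = S \setminus \{c\}$ (which remains connected). The cells at $c$ and $c^*$ share a common $\Tlat$-edge with endpoints $v_1, v_2$. Assuming by a suitably strengthened induction hypothesis that $\Linscaling A2(S')$ admits a Hamiltonian cycle $H'$ containing this attachment edge $v_1 v_2$, I would extend $H'$ to a Hamiltonian cycle of $\Linscaling A2(S)$ by deleting $v_1 v_2$ from $H'$ and splicing in a Hamiltonian path inside the new cell from $v_1$ to $v_2$ that visits every vertex of the new cell not already present in $\Linscaling A2(S')$; such a path exists by the Hamiltonian-connectedness of $W_6$.

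The main obstacle lies in two intertwined technicalities. First, at \scaling A2 each shared vertex of two neighboring cells is in fact shared by three cells (the two cells plus the corner cell where they meet), so which vertices of the new cell are genuinely new depends on which of the non-tree cell-neighbors of $c$ already lie in $S'$; the spliced Hamiltonian path must be chosen to visit precisely the new ones, while respecting the existing HC-structure at already-present vertices. Second, preserving the invariant that the HC contains the next attachment edge requires rooting $T$ in advance and carrying, as part of the induction, the stronger statement that the current HC contains every tree-edge whose cell on the unbuilt side has not yet been attached. Both points reduce to a finite case analysis on the local configuration of tree- and non-tree cell-neighbors around $c$, and the Hamiltonian-connectedness of $W_6$, together with the multiplicity of Hamiltonian paths between any pair of its vertices, provides the flexibility needed to resolve every case and to designate the next attachment edge for each of $c$'s still-unattached tree-children.
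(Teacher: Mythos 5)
Your overall strategy (induction on the cells, splicing a Hamiltonian path through the new cell's genuinely new vertices) is in the same spirit as the paper's proof, but the specific splicing mechanism you rely on breaks down, and the Hamiltonian-connectedness of $W_6$ does not rescue it. Label the rim of the new cell $c$ in cyclic order $v_1, v_2, u_1, u_2, u_3, u_4$, with $v_1v_2$ the attachment edge shared with the tree-parent $N_0$, and let $z$ be the center of $c$. Consider the realizable configuration in which the already-present neighbors of $c$ are exactly $N_0$ and the two cells flanking it (those sharing the rim edges $v_2u_1$ and $u_4v_1$); this occurs, for instance, for the four-cell shape consisting of $c$ and three consecutive neighbors, and some such configuration is essentially forced near the end of any build order of a filled hexagonal patch of seven cells. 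Then the new vertices of $c$ are precisely $z, u_2, u_3$, and in the induced graph on $\{v_1, v_2, z, u_2, u_3\}$ the only edges are $v_1z$, $v_2z$, $zu_2$, $zu_3$ and $u_2u_3$ (none of $u_2,u_3$ is lattice-adjacent to $v_1$ or $v_2$). Every path from $v_1$ must start $v_1\text{--}z$, and after $z\text{--}u_2\text{--}u_3$ (or the reverse) it is stuck, so no Hamiltonian path from $v_1$ to $v_2$ through the new vertices exists. Replacing the single attachment edge therefore cannot work in this case, no matter how the spanning tree is rooted, and your appeal to Hamiltonian-connectedness of the wheel is where the argument fails: the relevant graph is not $W_6$ but the subgraph induced by the new vertices together with the two attachment endpoints, and that subgraph need not be traceable between $v_1$ and $v_2$.

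The paper's proof avoids exactly this trap by allowing the extension step to reroute edges of the existing cycle \emph{inside previously added gadgets}, not just to replace one attachment edge; its inductive invariant is correspondingly weaker and more robust (every potential attachment side exposes either a straight edge of the cycle or a ``V''), and the case analysis in its figures explicitly modifies such ``green'' edges in the old gadgets when the new cell has several occupied neighbors. To repair your argument you would need to strengthen the splice in the same way: when the new rim vertices form a block reachable only through $z$, the extension must absorb an edge or a V of $H'$ sitting on one of the other occupied sides of $c$ (e.g.\ at $u_1$ or $u_4$), which in turn forces you to carry an invariant guaranteeing that such an edge or V is always available --- at which point you have essentially reconstructed the paper's boundary invariant and its case analysis.
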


To prove Lemma~\ref{lem:scaling}, we give a polynomial time algorithm which, given an arbitrary shape $S$ (i.e. a set of connected points) in the triangular grid, creates a version of $S$ scaled by \scaling A2, $S'$, and a Hamiltonian cycle through $S'$.  (See Figure~\ref{fig:example-shape-original} for an example shape.) A program performing this algorithm has been implemented in Python and can be downloaded from \url{http://self-assembly.net/wiki/index.php?title=OritatamiShapeMaker}.  Please note that in order to remain consistent with that code, in this section we present the scalings as rotated $90\degree$ clockwise from the formulation given in Section~\ref{sec:def:shape:upscaling}, which clearly results in the same shapes, just at a different rotation.

\begin{figure}[ht]
\centering
    \begin{subfigure}[b]{.22\textwidth}
	\centering
	\includegraphics[width=1.6in]{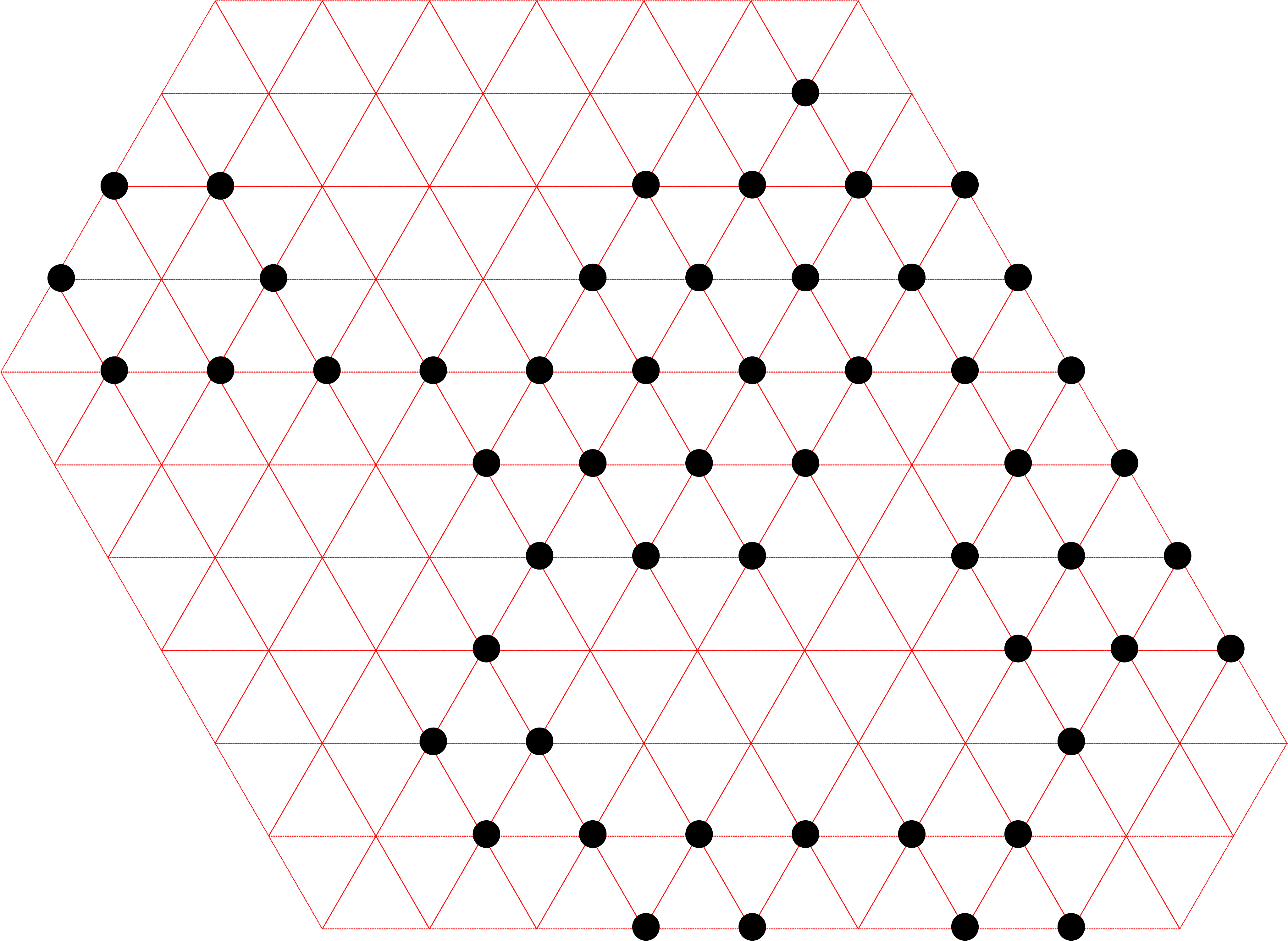}
	\caption{}
    \label{fig:example-shape-original}
    \end{subfigure}
\quad
    \begin{subfigure}[b]{.22\textwidth}
	\centering
	\includegraphics[width=1.5in]{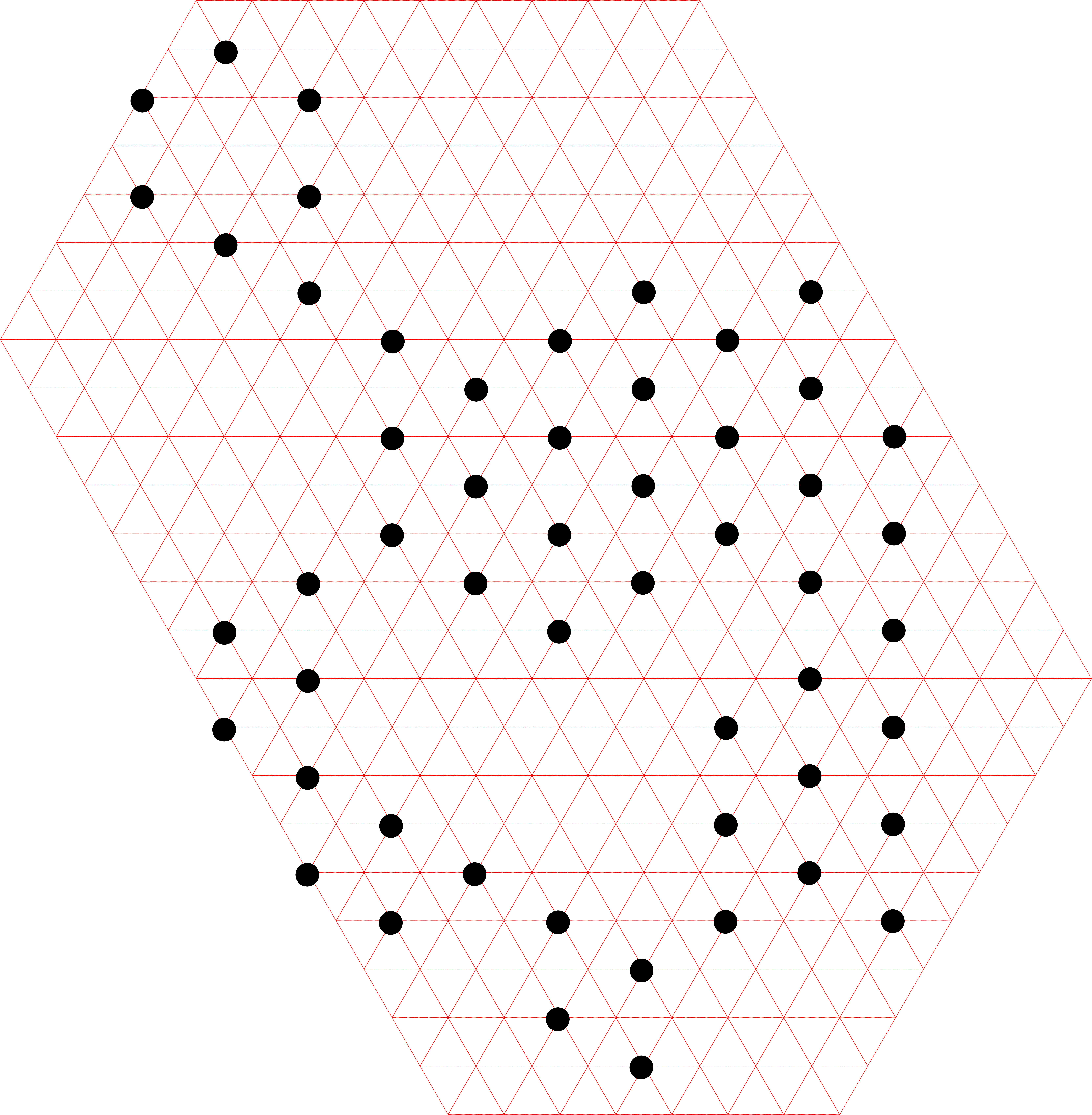}
	\caption{}
	\label{fig:example-shape-scaled-points}
    \end{subfigure}
\quad
\begin{subfigure}[b]{.22\textwidth}
    	\centering
        \includegraphics[width=1.6in]{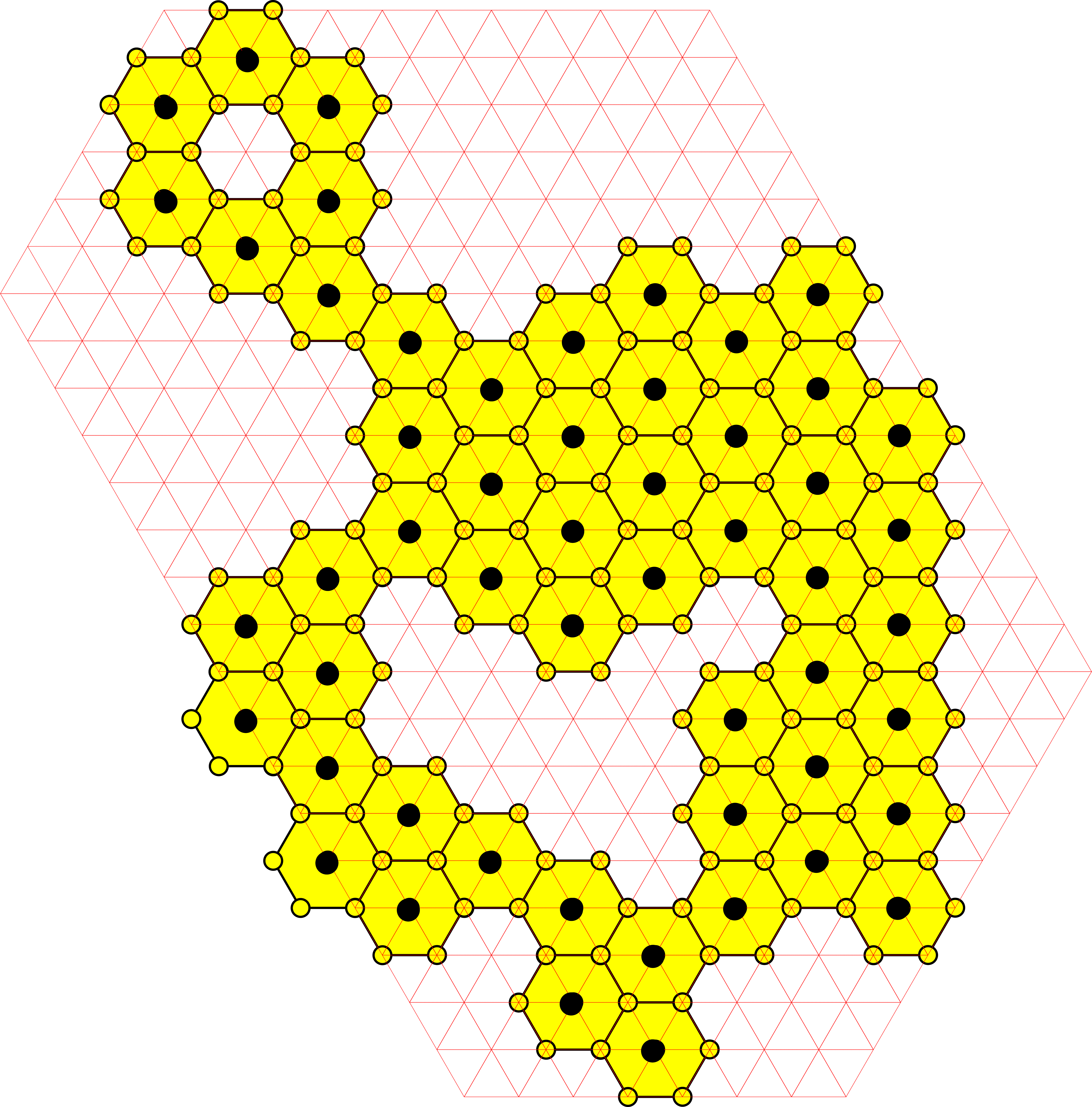}
        \caption{}
        \label{fig:example-shape-pixel-widgets}
    \end{subfigure}
\quad
    \begin{subfigure}[b]{.22\textwidth}
        \centering
        \includegraphics[width=1.4in]{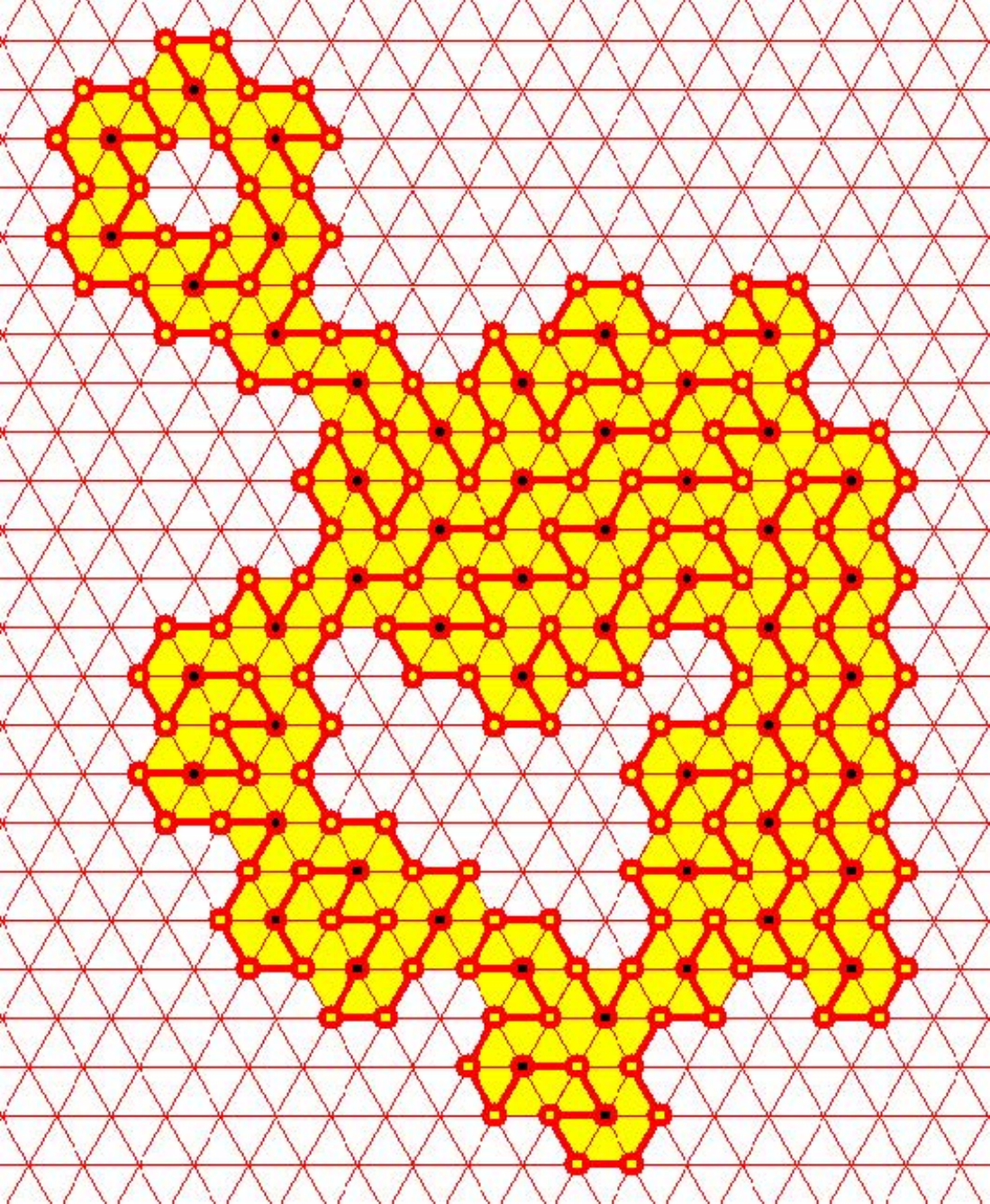}
        \caption{}
        \label{fig:example-shape-cycle}
    \end{subfigure}
	\caption{(a) Example shape $S$, (b) the points of example shape $S$ rotated and scaled. It should be noted that the figure shown in (b) is not the shape $S'$. $S'$ is given in Figure~\ref{fig:example-shape-pixel-widgets}. (c) Example $S'$ consisting of points of $S$ rotated and scaled, then replaced with scaled points, i.e. pixel gadgets. $S'$ is a $2$-scaling of $S$. (d) A Hamiltonian cycle drawn through the points of the gadgets}
	\label{fig:example-shape}
\end{figure}

Given a finite shape $S$, the algorithm to obtain $S'$ and a HC in $S'$ is composed of sub-algorithms which we outline here. For detailed algorithms, see Section~\ref{sec:finite-shapes-algorithms}.

First, the $\texttt{ORDER-POINTS}$ sub-algorithm takes a shape $S$ as input and outputs an ordered list $L$ of all of the points in $S$. $\texttt{ORDER-POINTS}$ is as defined in Algorithm~\ref{alg:order-points} (and the subroutines it utilizes are defined in Algorithms~\ref{alg:get-top-left-point}-\ref{alg:get-bottom-nbrs}) in Section~\ref{sec:finite-shapes-algorithms}.  After the completion of $L = \texttt{ORDER-POINTS}(S)$, the ordered list $L$ contains all of the points in $S$ (this is a standard breadth-first search).

Next, the $\texttt{SCALE-AND-ROTATE-POINTS}$ sub-algorithm takes an ordered list $L$ of all of the points in $S$ and outputs $L'$, which is a scaled and rotated version of $L$. Algorithm~\ref{alg:scale-and-rotate-points} in Section~\ref{sec:finite-shapes-algorithms} formally describes this algorithm. The scaling and rotation is essentially equivalent to expanding the distances between pairs of adjacent points from $1$ to $2\cos(30\degree)$ and rotating points $30\degree$ clockwise relative to the top-left point (which is the first point in both $L$ and then $L'$ by definition of $\texttt{ORDER-POINTS}$). See Figure~\ref{fig:example-shape-scaled-points} for an example shape and the scaled and rotated shape. While $L'$ is an ordered list of these points, the shape $S'$ is defined to simply be the set of points in $L'$.


\begin{figure}[htp]
\centering
\includegraphics[width=5.5in]{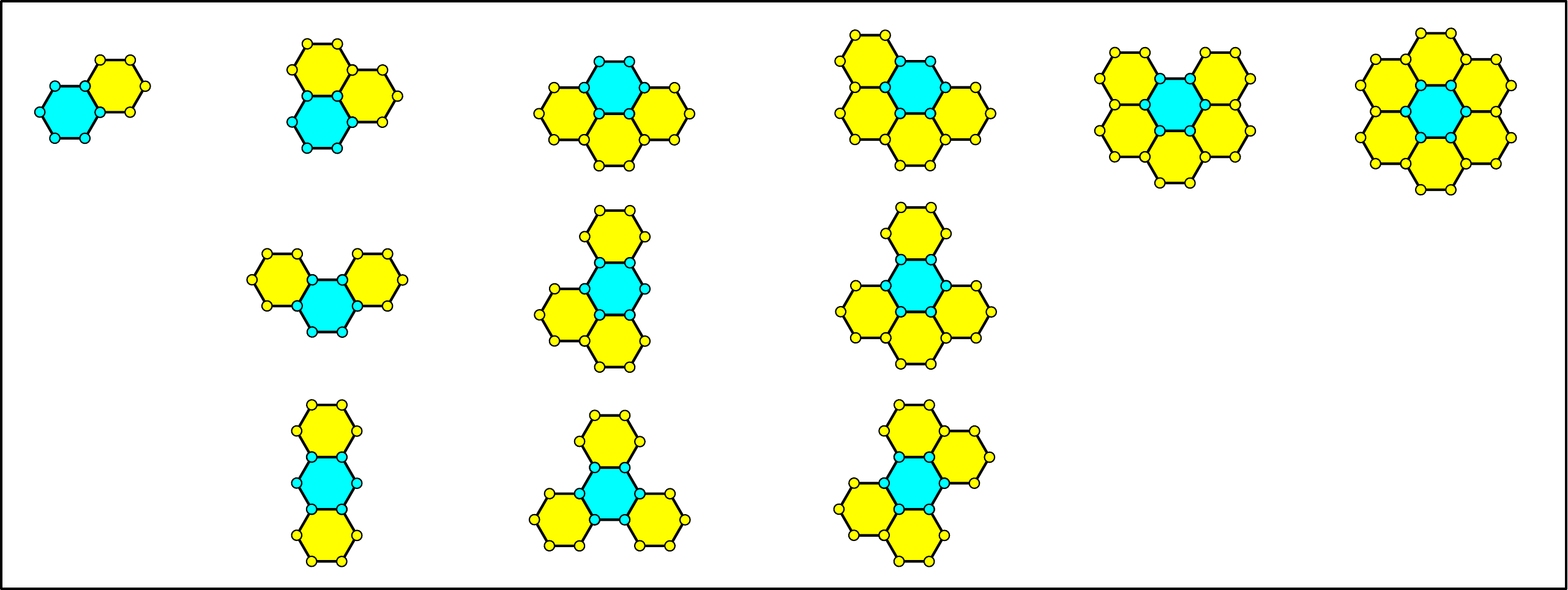}
\caption{All possible neighborhoods (i.e. sets of adjacent neighbors) for a newly added point gadget (blue), up to rotation and after the first.  In columns from left to right, $1,2,3,4,5,$ then $6$ neighbors.}
\label{fig:point-scaling-nbrhoods}
\end{figure}

Now, a new shape $S_G$, and the Hamiltonian cycle (HC), are created by calling $(S_G,HC) = \texttt{ADD-GADGETS}(L')$.  $S_G$ contains a ``point gadget'' for each point in $S'$, which is simply the point and its $6$ adjacent neighbor points (see Figure~\ref{fig:example-shape-pixel-widgets} for an example), and they are added in the order specified by $L'$.  Note that adjacent point gadgets share boundary points. The HC is created by first creating a cycle through the points of the first gadget to be added, and then by extending it to include the points of each subsequently added gadget, one by one.  As each gadget is added, we first note its neighborhood, which is the arrangement of any neighboring gadgets which were previously added to $S_G$ and the edges of the HC which run through them and along the boundary of the newly added gadget.  Modulo rotation and reflection, there are only 12 possible arrangements of neighboring gadgets after the placement of the first.  See Figure~\ref{fig:point-scaling-nbrhoods} for depictions of each. We then locate the specific neighborhood scheme (again, modulo rotation and reflection) from the top rows in Figures~\ref{fig:point-scaling-1nbr} and Figures\ref{fig:point-scaling-2nbrs}-\ref{fig:point-scaling-6nbrs}, and then apply the depicted addition and modification of edges in the HC to extend it to cover all new points of the added gadget, while still covering all previously added points. 


\begin{figure}[ht]
\centering
	\begin{subfigure}[t]{0.6in}
	\centering
	\includegraphics[width=0.5in]{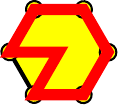}
	\caption{}
	\label{fig:point-scaling-first}
	\end{subfigure}
\quad
	\begin{subfigure}[t]{1.5in}
	\centering
	\includegraphics[width=1.4in]{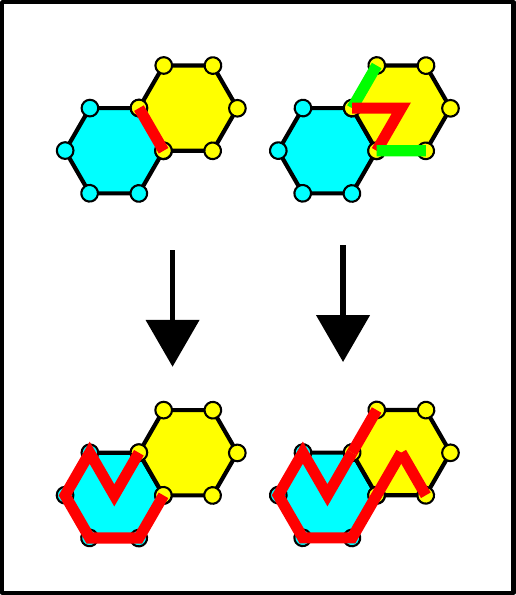}
	\caption{}
	\label{fig:point-scaling-1nbr}
	\end{subfigure}
	\caption{(a) The Hamiltonian cycle (HC) drawn through the first point gadget.  Note that all locations adjacent to this gadget have adjacent to them a straight edge or $V$ of the HC, thus maintaining the boundary invariant for any gadget that could be added in one of those locations, (b) The extension of the Hamiltonian cycle through a newly added point gadget (blue) which only has a single neighboring point gadget (yellow) when it is added. (In this and subsequent figures, only the edges of the existing HC which need to be observed and/or manipulated to extend it into the new gadget are shown, in red and/or green.)  In the left case, the newly added gadget has a straight edge of the HC adjacent (shown in red, on the top), which can be extended into the new gadget as shown (in red, on the bottom left). Due to the boundary invariant, we know that the only other possible scenario is that shown on the right, in which a $V$ is adjacent to the newly added gadget.  Additionally, since we know that the adjacent locations in neighbor positions $0$ and $2$ are empty (because we're in the case with only a single occupied adjacent location relative to the glue gadget), then we also know that the additional edges colored green must be present in the HC (on the top right), because otherwise with the $V$ present, the points which are shared with the new (blue) gadget couldn't be included in the existing HC.  Therefore, the existing red and green HC edges (top) can be replaced with those on the bottom while still including all previously covered points in yellow and now covering all new points in blue.  Note that both extensions result in the same previously covered points and same end points for the line segments, thus not disrupting any other portion of the HC, while covering all new points, and also maintaining the boundary invariant by exposing straight edges or $V$'s on the boundary of the newly added gadget.}
\label{fig:point-scaling-0-and-1}
\end{figure}

We now prove that $S_G$ must have an HC and that one is correctly generated by this procedure.  The specific methods for extending the HC into each new gadget are shown in Figures~\ref{fig:point-scaling-1nbr}-\ref{fig:point-scaling-6nbrs}, and the correctness is maintained due to the following facts:
\begin{enumerate}
    \item Every gadget (after the first) is added in a location adjacent to at least one existing gadget (and this is guaranteed by the ordering of $L$ created in Algorithm~\ref{alg:order-points}).
    \item As each gadget is added, it is guaranteed to have on its boundary an existing edge of the HC or a ``V'' (an example ``V'' can be seen in Figure~\ref{fig:point-scaling-first} in the direction which would be facing a neighbor in position $4$, as numbered in Figure~\ref{fig:point-nbrs}) which includes two of its exterior points.  We will refer to this as the \emph{boundary invariant}, and it will be maintained throughout the addition of new gadgets, as will be shown.
    \item Figure~\ref{fig:point-scaling-nbrhoods} shows all possible neighborhood configurations, modulo rotation and reflection, into which a new point gadget can be added. This is clear by inspection. Figures~\ref{fig:point-scaling-1nbr}-\ref{fig:point-scaling-6nbrs} depict all possible scenarios, modulo rotation, for a point gadget addition. (It is important to note that, in the scenario of each figure, the full set of gadgets adjacent to the newly added gadgets are shown, i.e. empty gadget locations adjacent to the new gadgets are guaranteed to be empty, as there is a figure depicting each scenario where they are filled, up to rotation and reflection).
\end{enumerate}

For each extension of the existing HC into a new point gadget, the necessity is for the replaced edge(s) to be replaced in such a way that the new series of segments (1) has the same end points as the replaced edge(s), (2) all points previously covered by the original edge(s) are covered by the new series of edges, and (3) all points of the newly added gadget which weren't already included in the HC are now included.  The methods for extending the HC while doing that, while also maintaining the boundary invariant, are shown for each possible point gadget addition in Figures~\ref{fig:point-scaling-1nbr}-\ref{fig:point-scaling-6nbrs}.

We prove the correctness of the generation of the HC through the points of $S_G$ using induction.  Our induction hypothesis is the following:

After $n$ points from $L'$ have been added to $S_G$, then
\begin{enumerate}
    \item the HC at that time is a valid Hamiltonian cycle through all points in $S_G$, and
    \item for every location $l$ adjacent to $S_G$ into which a point gadget could be validly placed (i.e. at the correct offset for a neighboring gadget), there is an adjacent gadget already in $S_G$ such that the boundary it shares with $l$ consist of either a straight edge occupying both of the shared points, or a ``V'' (as previously defined) which occupies both shared points. (Note that this is the previously mentioned boundary invariant.)
\end{enumerate}

For our base case, we simply inspect the single gadget and simple HC in Figure~\ref{fig:point-scaling-first}) which exist after the addition of the gadget for the first point from $L'$, and note that this is a Hamiltonian cycle through all 7 points (the outer 6 and the center 1), and that for the valid locations for neighbor gadgets in positions $1,2,3,5,$ and $6$ (as numbered in Figure~\ref{fig:gadget-nbrs}) the HC through the existing gadget has a straight edge through the potential shared points, and for that in position $4$ it shares a ``V'' through those points.  Thus, it holds for the base case.

To prove that if the induction hypothesis holds after $n$ points from $L'$ have been added, it must also hold after $n+1$, we rely on inspection of the scenarios depicted in Figures~\ref{fig:point-scaling-1nbr}-\ref{fig:point-scaling-6nbrs}.  It is easy to verify that in each, after the addition of a gadget, no points previously covered by the HC become uncovered.  It is also easy to verify that in each, the points of the newly added gadget (always depicted in light blue) which were not already included in the border of a previous gadget become covered by the HC.  This ensures that all points have been covered after the addition of the $(n+1)$th gadget.  Finally, it can be seen by inspection that whenever a newly added gadget causes a new neighboring location, which could potentially receive a gadget in the future, to become adjacent to the gadgets of $S_G$, the boundary which is adjacent to that location contains either a straight edge or a ``V''.  (It is important to note that, oftentimes, some boundaries exposed to adjacent locations contain neither of those patterns.  However, whenever that is the case, it is also the case that some other gadget in $S_G$ was already adjacent to that location and must have shared such a boundary.  It is never the case that this previously existing boundary is switched to some other configuration, and thus the boundary invariant is maintained.)  This proves that the induction holds, and thus a Hamiltonian cycle is correctly generated through the scaled and rotated points of $S$.

\subsection{Algorithms for the proof of Theorem~\ref{thm:hard-coded}}\label{sec:finite-shapes-algorithms}

In this section, Figure~\ref{fig:numbering-schemes} gives a visual representation of the numbering schemes for the neighbors of points and gadgets.  Then, the algorithms used to calculate an ordering for the points of an input shape, as well as to scale and rotate it, are presented.

\begin{figure}[ht]
\centering
	\begin{subfigure}[t]{1.5in}
	\centering
	\includegraphics[width=1.3in]{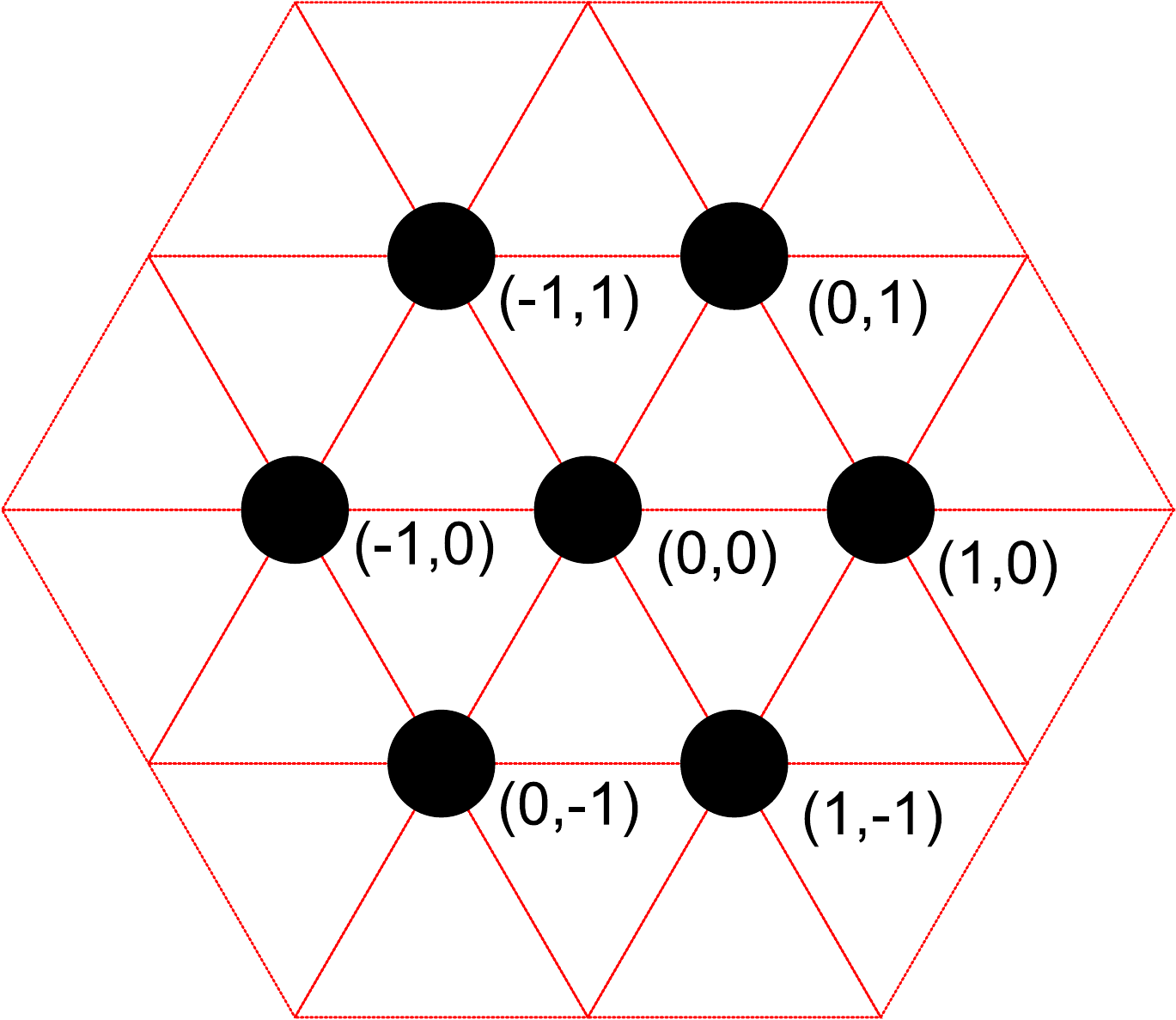}
	\caption{}
	\label{fig:coord-nbrs}
	\end{subfigure}
\quad
	\begin{subfigure}[t]{1.5in}
	\centering
	\includegraphics[width=1.3in]{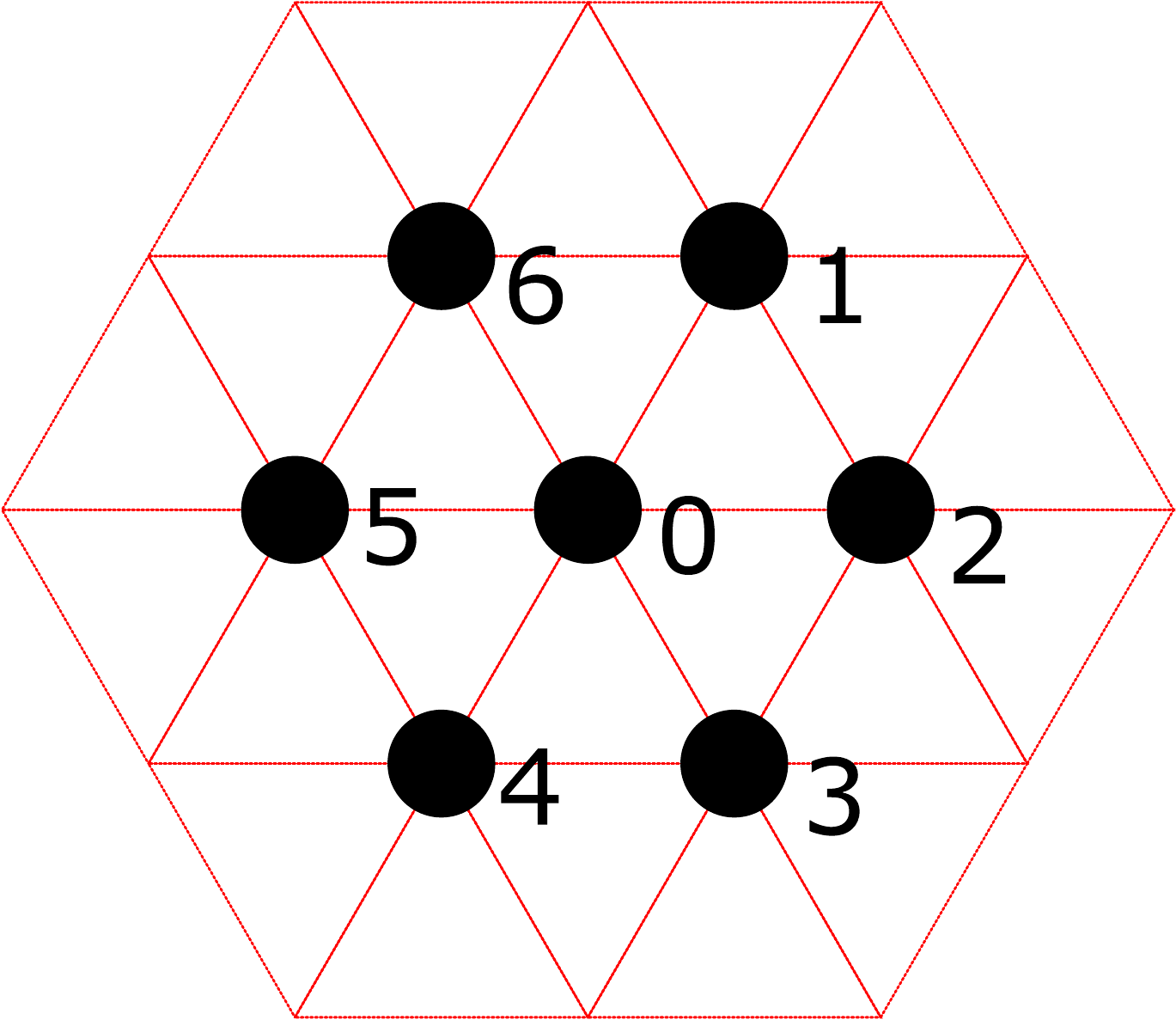}
	\caption{}
	\label{fig:point-nbrs}
	\end{subfigure}
\quad
	\begin{subfigure}[t]{2.5in}
	\centering
	\includegraphics[width=2.3in]{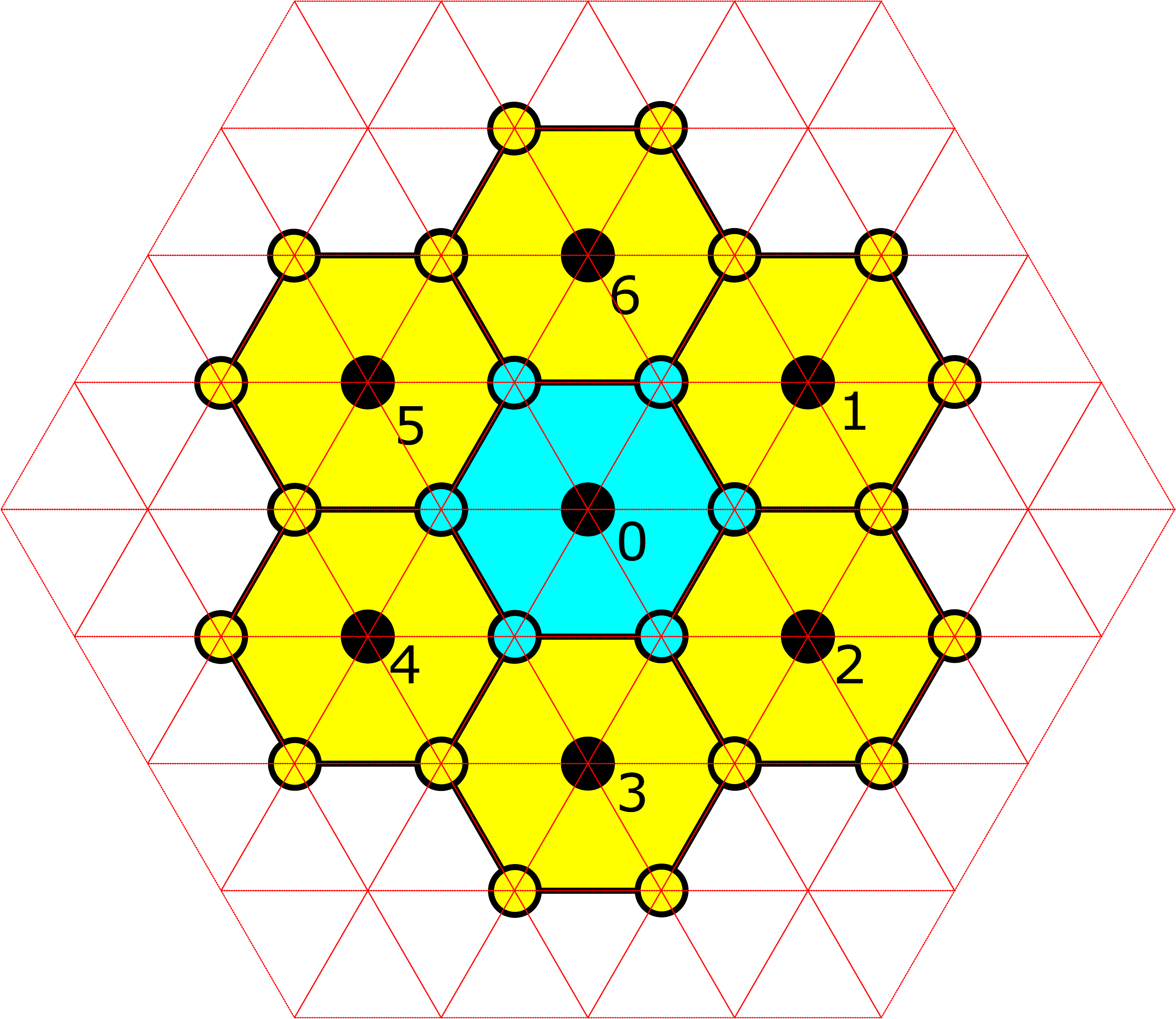}
	\caption{}
	\label{fig:gadget-nbrs}
	\end{subfigure}
	\caption{(a) The coordinate offsets of the neighbors of the point $(0,0)$, (b) the numbering scheme for the neighbors of a point, (c) the numbering for scaled and rotated points replaced by point gadgets.}
	\label{fig:numbering-schemes}
\end{figure}

\begin{algorithm}[H]
\caption{A procedure to assign an ordering to the points in a shape}
\label{alg:order-points}
\begin{algorithmic}[1]
\Procedure{\texttt{ORDER-POINTS}}{$S$} \Comment{Takes a set of points $S$}
    \State $t = \texttt{GET-TOP-LEFT-POINT}(S)$
    \State $L = \{t\}$
    \State $i = 0$
    \State $n = |L|$
    \While {$i < n$}
        \State $p = L[i]$
        \State $L = L + \texttt{GET-TOP-NBRS}(p,L,S)$
        \State $L = L + \texttt{GET-RIGHT-NBR}(p,L,S)$
        \State $L = L + \texttt{GET-BOTTOM-NBRS}(p,L,S)$
        \State $L = L + \texttt{GET-LEFT-NBR}(p,L,S)$
        \State $i = i + 1$
        \State $n = |L|$
    \EndWhile
\State \textbf{return} $L$
\EndProcedure
\end{algorithmic}
\end{algorithm}

\begin{algorithm}[H]
\caption{A procedure to get the leftmost of the top points of a shape}
\label{alg:get-top-left-point}
\begin{algorithmic}[1]
\Procedure{\texttt{GET-TOP-LEFT-POINT}}{$S$} \Comment{Takes a set of points $S$}
    \State $S' = \{\}$
    \State $p = NULL$
    \ForAll {$q \in S$}
        \If {$q == NULL$}
            \State $p = q$
        \Else
            \If {$q_y > p_y$}
                \State $p = q$
            \Else
                \If {$q_y == p_y$ and $q_x < p_x$}
                    \State $p = q$
                \EndIf
            \EndIf
        \EndIf
    \EndFor
\State \textbf{return} $p$
\EndProcedure
\end{algorithmic}
\end{algorithm}

\begin{algorithm}[H]
\caption{A procedure to get the specified neighbor of a point if it exists within the definition of a shape}
\label{alg:get-nbr}
\begin{algorithmic}[1]
\State $NBRS = [(0,1),(1,0),(1,-1),(0,-1),(-1,0),(-1,1)]$
\Procedure{\texttt{GET-NBR}}{$p, i, S$} \Comment{Takes a point $p$, a neighbor index $0 \le i < 6$, and set of points $S$}
    \State $n = NBRS[i]$
    \State $q = (p_x + n_x, p_y + n_y)$
    \ForAll {$r \in S$}
        \If {$q == r$}
            \State \textbf{return} $q$
        \EndIf
    \EndFor
\State \textbf{return} $NULL$
\EndProcedure
\end{algorithmic}
\end{algorithm}

\begin{algorithm}[H]
\caption{A procedure to find the neighbor immediately left of a given point}
\label{alg:get-left-nbr}
\begin{algorithmic}[1]
\Procedure{\texttt{GET-LEFT-NBR}}{$p, L, S$} \Comment{Takes a point $p$, an ordered list $L$, and a set of points $S$}
\State $L_{ret} = []$
\State $p_1 = \texttt{GET-NBR}(p,S,5)$
\If {$p_1 \neq NULL$ and $p_1 \not \in L$}
    \State $L_{ret} = L_{ret} + [p_1]$
\EndIf
\State \textbf{return} $L_{ret}$
\EndProcedure
\end{algorithmic}
\end{algorithm}

\begin{algorithm}[H]
\caption{A procedure to find the neighbor immediately right of a given point}
\label{alg:get-right-nbr}
\begin{algorithmic}[1]
\Procedure{\texttt{GET-RIGHT-NBR}}{$p, L, S$} \Comment{Takes a point $p$, an ordered list $L$, and a set of points $S$}
\State $L_{ret} = []$
\State $p_1 = \texttt{GET-NBR}(p,S,2)$
\If {$p_1 \neq NULL$ and $p_1 \not \in L$}
    \State $L_{ret} = L_{ret} + [p_1]$
\EndIf
\State \textbf{return} $L_{ret}$
\EndProcedure
\end{algorithmic}
\end{algorithm}

\begin{algorithm}[H]
\caption{A procedure to find the set of neighbors immediately above a given point}
\label{alg:get-top-nbrs}
\begin{algorithmic}[1]
\Procedure{\texttt{GET-TOP-NBRS}}{$p, L, S$} \Comment{Takes a point $p$, an ordered list $L$, and a set of points $S$}
\State $L_{ret} = []$
\State $p_1 = \texttt{GET-NBR}(p,S,0)$
\If {$p_1 \neq NULL$ and $p_1 \not \in L$}
    \State $L_{ret} = L_{ret} + [p_1]$
\EndIf
\State $p_2 = \texttt{GET-NBR}(p,S,1)$
\If {$p_2 \neq NULL$ and $p_2 \not \in L$}
    \State $L_{ret} = L_{ret} + [p_2]$
\EndIf
\State \textbf{return} $L_{ret}$
\EndProcedure
\end{algorithmic}
\end{algorithm}

\begin{algorithm}[H]
\caption{A procedure to find the set of neighbors immediately below a given point}
\label{alg:get-bottom-nbrs}
\begin{algorithmic}[1]
\Procedure{\texttt{GET-BOTTOM-NBRS}}{$p, L, S$} \Comment{Takes a point $p$, an ordered list $L$, and a set of points $S$}
\State $L_{ret} = []$
\State $p_1 = \texttt{GET-NBR}(p,S,4)$
\If {$p_1 \neq NULL$ and $p_1 \not \in L$}
    \State $L_{ret} = L_{ret} + [p_1]$
\EndIf
\State $p_2 = \texttt{GET-NBR}(p,S,3)$
\If {$p_2 \neq NULL$ and $p_2 \not \in L$}
    \State $L_{ret} = L_{ret} + [p_2]$
\EndIf
\State \textbf{return} $L_{ret}$
\EndProcedure
\end{algorithmic}
\end{algorithm}

\begin{algorithm}[H]
\caption{A procedure to scale and rotate the points in a shape}
\label{alg:scale-and-rotate-points}
\begin{algorithmic}[1]
\Procedure{\texttt{SCALE-AND-ROTATE-POINTS}}{$L$} \Comment{Takes an ordered of points $L$}
    \State $p = L[0]$
    \State $L' = [p]$
    \For {$0 < i < |L|$}
        \State $q = L[i]$
        \State $d_x = q_x - p_x$
        \State $d_y = q_y - p_y$
        \State $s_x = (2 * d_x) + d_y$
        \State $s_y = d_y - d_x$
        \State $r = (p_x + s_x, p_y + s_y)$
        \State $L' = L' \cup \{r\}$
    \EndFor
\State \textbf{return} $L'$
\EndProcedure
\end{algorithmic}
\end{algorithm}

\begin{algorithm}[H]
\caption{A procedure to replace all points of an input shape with point gadgets, returning the set of points and a Hamiltonian cycle through them.}
\label{alg:add-gadgets}
\begin{algorithmic}[1]
\Procedure{\texttt{ADD-GADGETS}}{$L$} \Comment{Takes a list of points $L$}
    \State $S_G = \emptyset$
    \State $HC = \emptyset$
    \For {$0 < i < |L|$}
        \State $S_G = S_G \cup \{L[i]\}$
        \For {$0 \le i < 6$}
            \State $n = NBRS[i]$
            \State $q = (p_x + n_x, p_y + n_y)$
            \If {$q \not \in S_G$}
                \State $S_G = S_G \cup \{q\}$
            \EndIf
        \EndFor
        \State $HC = \texttt{EXTEND-HC}(L[i],S_G,HC)$
    \EndFor
\State \textbf{return} $(S_G, HC)$
\EndProcedure
\end{algorithmic}
\end{algorithm}

Explicit pseudocode is not provided for the $\texttt{EXTEND-HC}$ procedure due to its greater complexity\footnote{However, a version which has been implemented in Python can be downloaded from \url{http://www.self-assembly.net}}, but its general functionality is to first inspect the nodes of $S_G$ and the current edges of the HC to determine the pattern of the edges of any gadgets neighboring the newly added gadget.  After determining the number of neighbors, their relative arrangement, and the configuration of their edges, it simply compares the pattern to those seen in Figures\ref{fig:point-scaling-first},\ref{fig:point-scaling-1nbr}, and ~\ref{fig:point-scaling-2nbrs}-\ref{fig:point-scaling-6nbrs}.  Once it finds a match (perhaps after rotation and/or reflection), which it is guaranteed to find since those figures depict all possible scenarios which are possible due to the way the points are added and the HC is extended, it then extends the HC as depicted in the matching figure.  After gadgets have been extended to all points in $L'$, the HC will be correctly completed.

\subsection{Extension of the HC for the proof of Theorem~\ref{thm:hard-coded}}\label{sec:finite-shapes-HC-gadgets}

In this section, we provide graphical representations of the methods for extending the HC into gadgets as they are added in order.  Figures~\ref{fig:point-scaling-6nbrs}-\ref{fig:point-scaling-2nbrs} show gadgets being added into neighborhoods with 2,3,4,5, and 6 existing neighbor gadgets.

\begin{figure}[htp]
\centering
\includegraphics[width=6.0in]{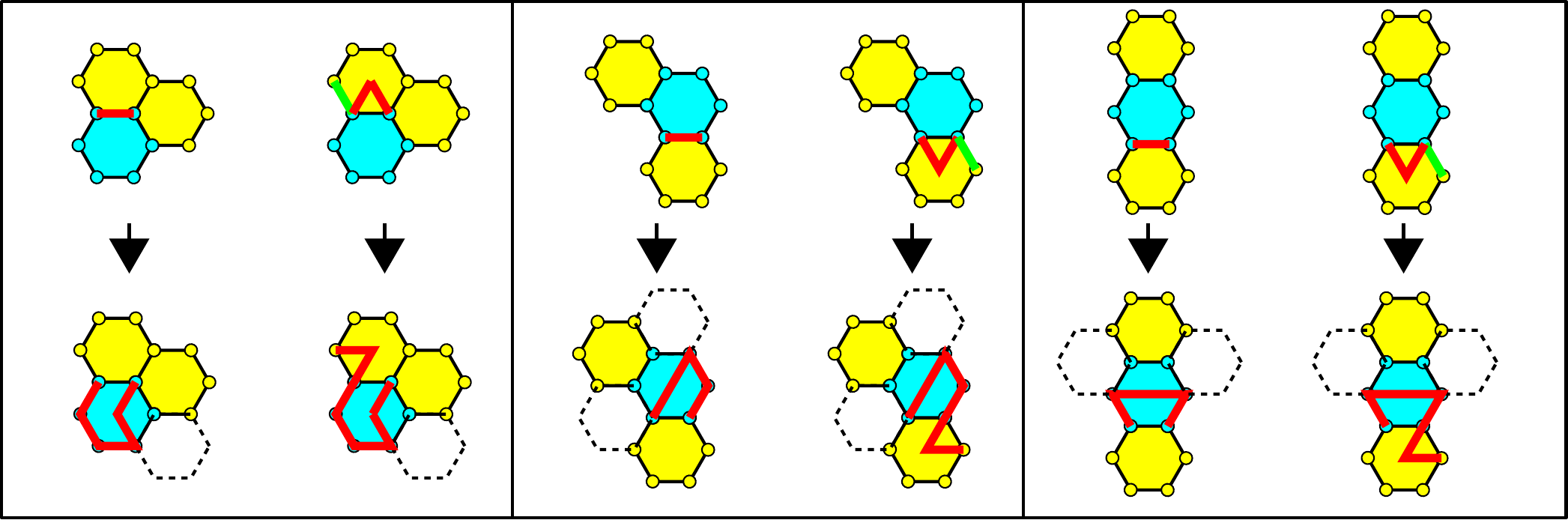}
\caption{The extension of the Hamiltonian cycle through a newly added point gadget (blue) which has exactly two neighboring point gadgets (yellow) when it is added.  (The same principles and manipulations are used as for the cases in Figure~\ref{fig:point-scaling-1nbr}.) For the adjacency configuration in the left box, if one of the adjacent gadgets has a solid edge on the boundary on the edge of the blue gadget, the left option is taken (symmetrically if it is the upper right neighbor).  Otherwise, one of them must have a $V$ adjacent and the right option is taken.  This is also how each of the other two possible adjacency configurations are handled.  Note that all necessary points are covered and line segment end points are maintained in all scenarios, so we must now verify that the boundary invariant is maintained.  Adjacent locations for potential future neighboring gadgets are shown outlined with dashed boundaries if the perimeter of the newly placed blue gadget does not contain either a straight edge or a $V$ on its boundary.  However, for each such location, before the addition of the new gadget (blue), that location was already adjacent to a point gadget contained within the shape, and thus by the induction hypothesis it must already have adjacent to it one of the necessary edge configurations (none of which were modified during the current gadget addition).  Therefore, the boundary invariant is maintained because at least one edge of each such location will have edges with the necessary configuration.}
\label{fig:point-scaling-2nbrs}
\end{figure}

\begin{figure}[htp]
\centering
\includegraphics[width=4.5in]{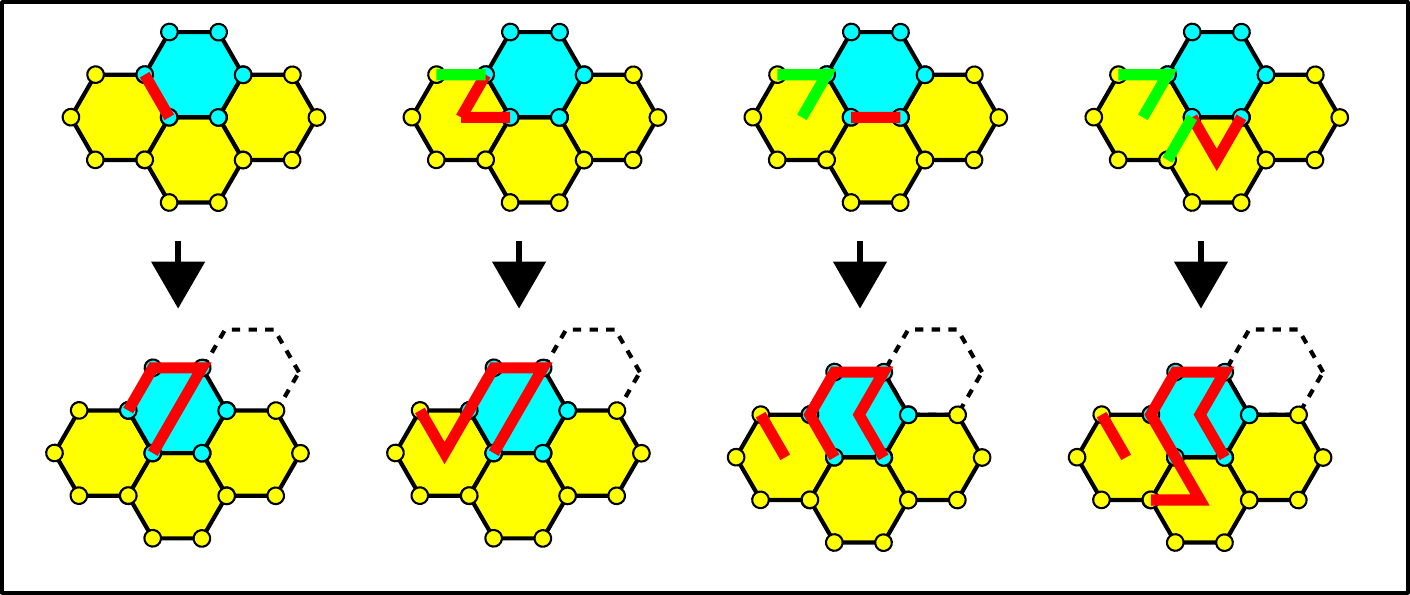}
\caption{The extension of the Hamiltonian cycle through a newly added point gadget (blue) which has exactly three neighboring point gadgets (yellow), in the first of three possible configurations, when it is added.  (The same principles and manipulations are used as for the cases in Figure~\ref{fig:point-scaling-1nbr}.) It is important to note that the gadget used in these scenarios are selected in preference from left to right. Note that this guarantees the existence of the green segments in the third and fourth column.}
\label{fig:point-scaling-3nbrs-1}
\end{figure}

\begin{figure}[htp]
\centering
\includegraphics[width=2in]{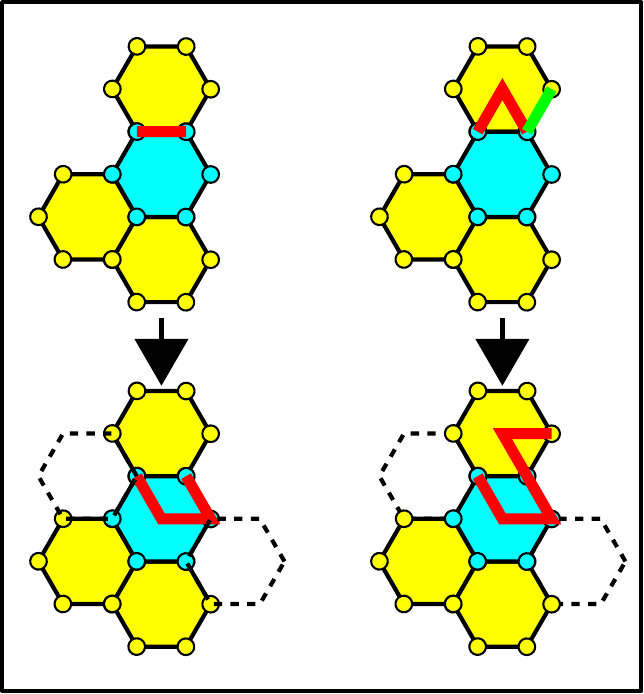}
\caption{The extension of the Hamiltonian cycle through a newly added point gadget (blue), which has exactly three neighboring point gadgets (yellow), in the second of three possible configurations, when it is added.  (The same principles and manipulations are used as for the cases in Figure~\ref{fig:point-scaling-1nbr}.) Note that it suffices to only consider these two case for the following reason. Assume that the yellow gadget at the top does not share a straight edge of the HC with the blue gadget. Since the yellow gadget does not have a gadget to its southeast, the edge between its southeast point and central point must be in the HC. By the same argument, the absence of a gadget to its southwest implies the inclusion of the edge between its southwest point and central point in the HC. Consequently, the yellow gadget provides a ``V'' towards the blue gadget. Hence we need only consider the two cases depicted here. }
\label{fig:point-scaling-3nbrs-2}
\end{figure}

\begin{figure}[htp]
\centering
\includegraphics[width=2.5in]{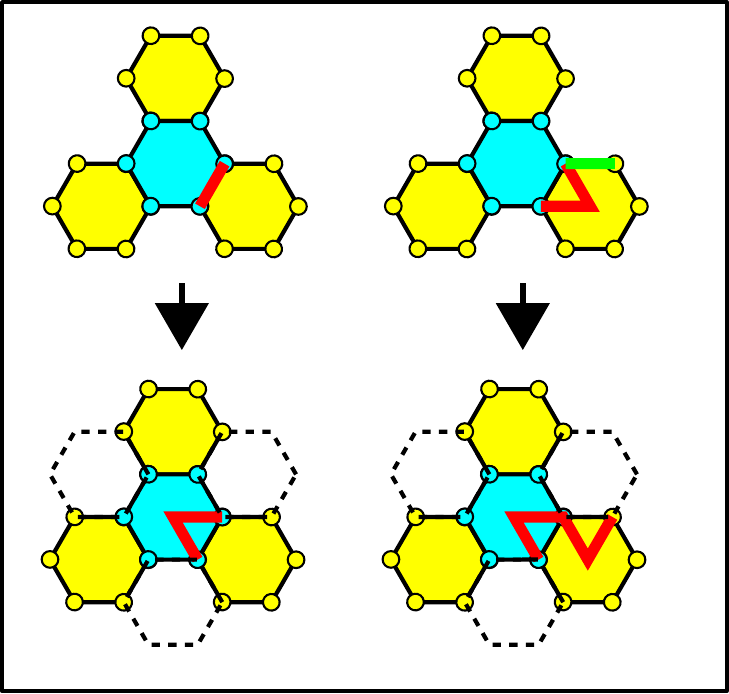}
\caption{The extension of the Hamiltonian cycle through a newly added point gadget (blue) which has exactly three neighboring point gadgets (yellow), in the third of three possible configurations, when it is added.  (The same principles and manipulations are used as for the cases in Figure~\ref{fig:point-scaling-1nbr}.)}
\label{fig:point-scaling-3nbrs-3}
\end{figure}

\begin{figure}[htp]
\centering
\includegraphics[width=4.5in]{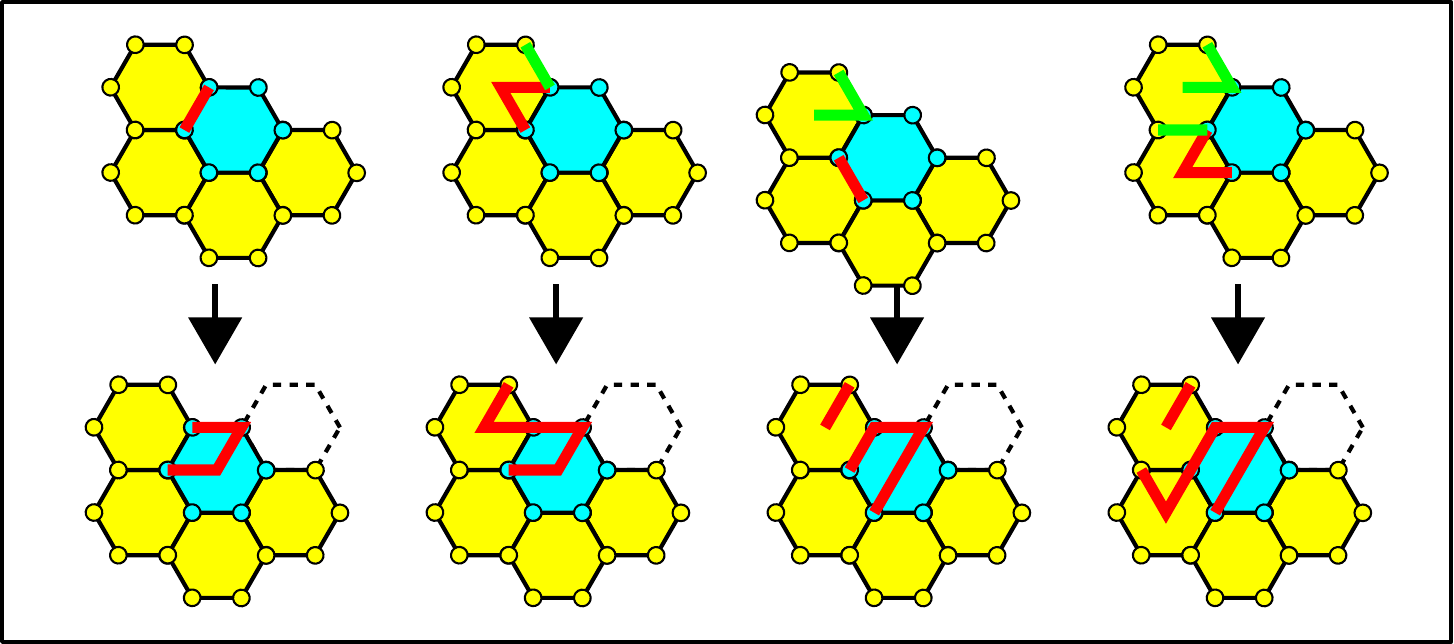}
\caption{The extension of the Hamiltonian cycle through a newly added point gadget (blue) which has exactly four neighboring point gadgets (yellow), in the first of three possible configurations, when it is added.  Again, it is important to note that the gadget used in these scenarios are selected in preference from left to right. Note that this guarantees the existence of the green segments in the third and fourth column.}
\label{fig:point-scaling-4nbrs-1}
\end{figure}

\begin{figure}[htp]
\centering
\includegraphics[width=2.1in]{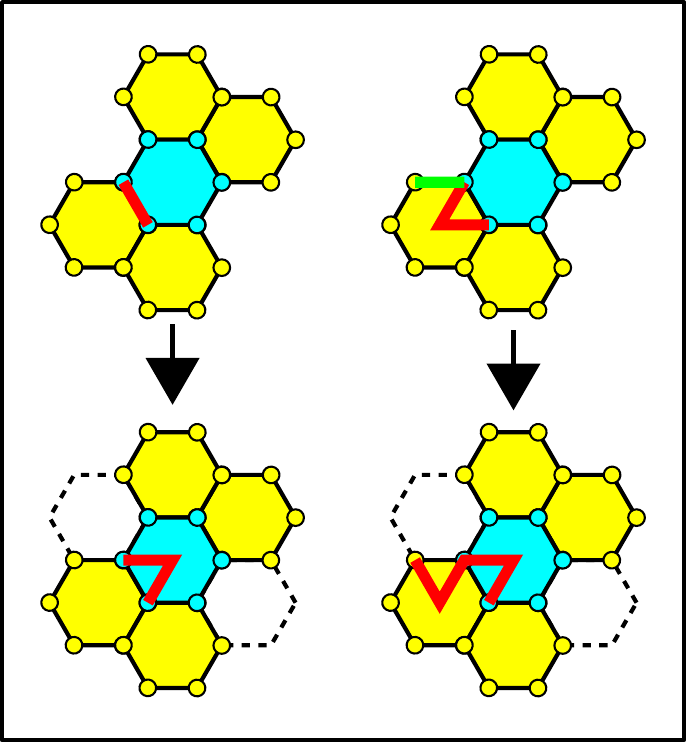}
\caption{The extension of the Hamiltonian cycle through a newly added point gadget (blue) which has exactly four neighboring point gadgets (yellow), in the second of three possible configurations, when it is added.  (The same principles and manipulations are used as for the cases in Figure~\ref{fig:point-scaling-1nbr}.)}
\label{fig:point-scaling-4nbrs-2}
\end{figure}

\begin{figure}[htp]
\centering
\includegraphics[width=2.0in]{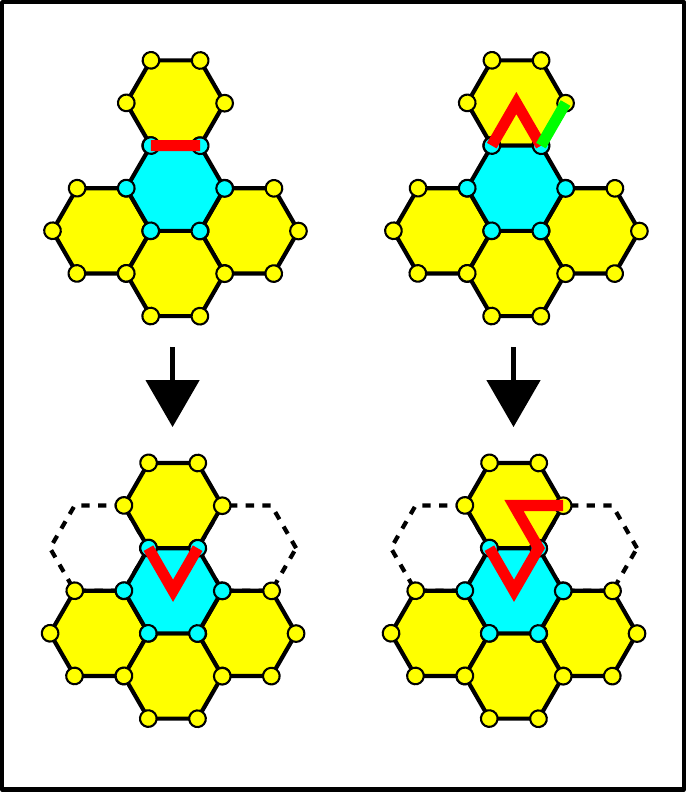}
\caption{The extension of the Hamiltonian cycle through a newly added point gadget (blue) which has exactly four neighboring point gadgets (yellow), in the third of three possible configurations, when it is added.  (The same principles and manipulations are used as for the cases in Figure~\ref{fig:point-scaling-4nbrs-1}.) Note that it suffices to only consider the two cases depicted here by the same argument given in caption of Figure~\ref{fig:point-scaling-3nbrs-2}.}
\label{fig:point-scaling-4nbrs-3}
\end{figure}

\begin{figure}[htp]
\centering
\includegraphics[width=4.8in]{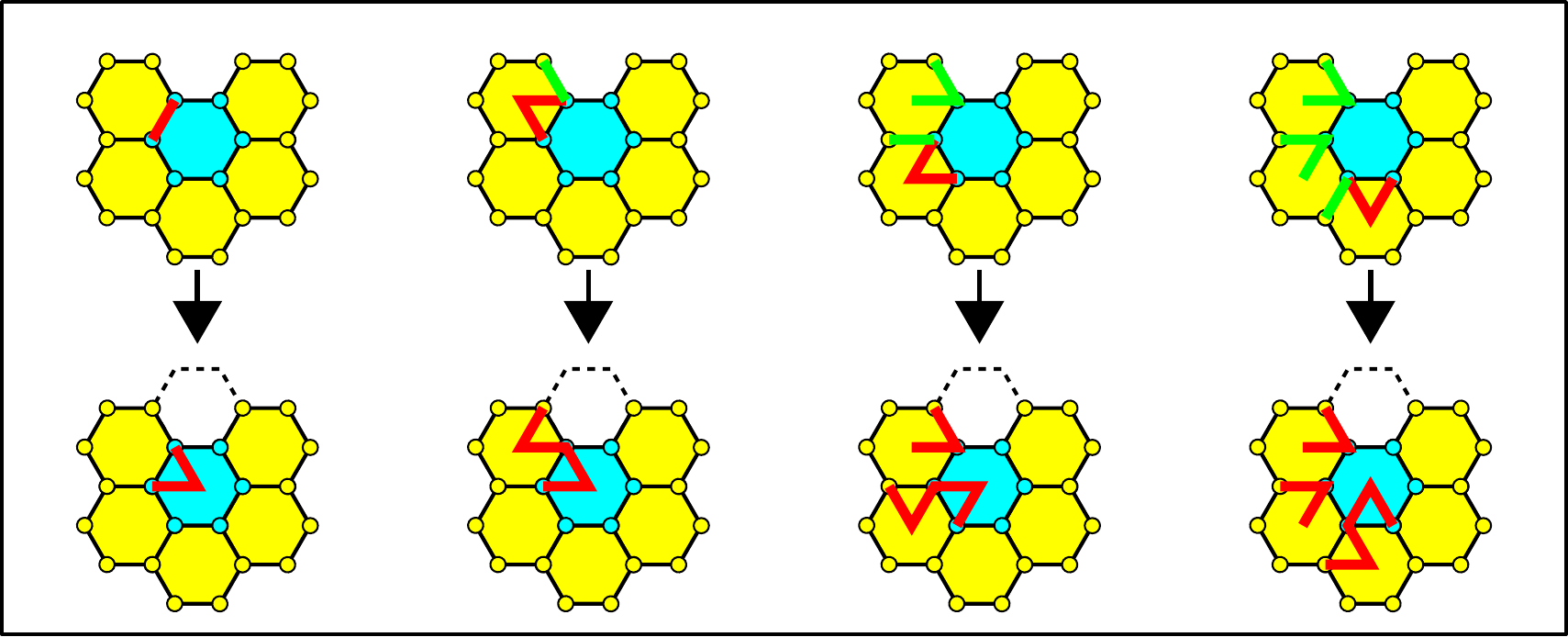}
\caption{The extension of the Hamiltonian cycle through a newly added point gadget (blue) which has exactly five neighboring point gadgets (yellow), in the only possible configuration, when it is added. It is important to note that the gadget used in these scenarios are selected in preference from left to right. Note that this guarantees the existence of the green segments in the third and fourth column. (The same principles and manipulations are used as for the cases in Figure~\ref{fig:point-scaling-4nbrs-1}.)}
\label{fig:point-scaling-5nbrs}
\end{figure}

\begin{figure}[htp]
\centering
\includegraphics[width=2.3in]{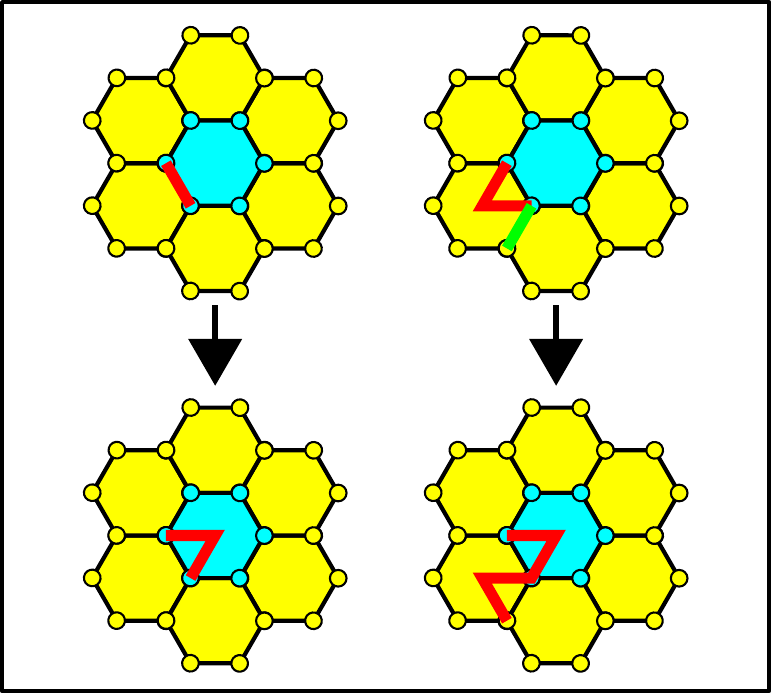}
\caption{The extension of the Hamiltonian cycle through a newly added point gadget (blue) which has exactly six neighboring point gadgets (yellow) when it is added. If any of the adjacent gadgets shares a straight edge of the HC on its boundary, the left option is chosen.  If none of the adjacent gadgets share a straight edge, then at least one must have a $V$ facing the new gadget. Furthermore, in at least one such gadget with a $V$ facing the new location, an edge along the boundary of that gadget (modulo symmetry, as shown in green on the top right) must also be included in the HC. This is because otherwise, to avoid including such an edge, each $V$ would have to connect to another $V$ with that pattern continuing completely around the blue gadget and creating a cycle which only includes those $V$s, which contradicts the fact that a single HC existing before the addition of the blue gadget.  Therefore, in this scenario the red and green edges must be included in the HC and can be modified as shown to extend the HC into the single new additional point.}
\label{fig:point-scaling-6nbrs}
\end{figure}

We now show that an HC can similarly be created through $S$ at scaling \scaling B2.

\begin{lemma}\label{lem:scaling-B2}
For any finite shape $S$ in the triangular grid graph, there exists a scaling \scaling B2 of $S$, say $S'$, such that there exists a Hamiltonian cycle through the points of $S'$.
\end{lemma}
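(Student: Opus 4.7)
The plan is to mirror the incremental construction of Lemma~\ref{lem:scaling}, adapting the point-gadget patterns to the geometry of \scaling B2. I would order the points of $S$ using the same BFS routine $\texttt{ORDER-POINTS}$, then map each $p_i \in S$ to its \scaling B2 cell $\cell(p_i) = \linscaling B2(p_i) + H_2$, so that $S'$ becomes a disjoint union of $7$-vertex hexagonal gadgets. The HC is then built by adding gadgets in the computed order and extending the partial HC at each step.

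The crucial geometric difference from \scaling A2 is that \scaling B2 cells are pairwise disjoint: two neighboring cells $\cell(p) \sim \cell(q)$ share no vertices and are connected only by a single pair of lattice-adjacent boundary points -- a single ``bridge edge'' spanning the gap between them. Consequently, extending the HC into a freshly added gadget reduces to two semi-independent tasks: (i) picking a Hamiltonian path through all $7$ vertices of the $H_2$ gadget whose endpoints are the chosen bridge points to already-placed neighbors, and (ii) splicing this path into the existing HC using those bridge edges. Because $H_2$ is small, a finite check establishes that it admits a Hamiltonian path between every pair of boundary vertices that can arise as a bridge endpoint in the case analysis below.

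The inductive invariant is the direct analog of the ``boundary invariant'' of Lemma~\ref{lem:scaling}: after placing the first $k$ gadgets, the HC is a single cycle covering all their points, and for every not-yet-used bridge edge leaving a placed gadget, the local HC exhibits a configuration (a straight segment or a ``V'' at the boundary half-edge facing the bridge) that enables future splicing. The inductive step enumerates the possible configurations of already-placed neighbors of the newly added gadget, up to rotation and reflection -- these parallel Fig.~\ref{fig:point-scaling-nbrhoods} and number only a handful of cases -- and exhibits, for each, the concrete Hamiltonian path in $H_2$ together with the local rerouting of the existing HC that splices it in using the relevant bridge edges, while preserving the invariant on the boundary half-edges facing still-empty neighbor locations.

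I expect the main obstacle to be the case enumeration for the splicing patterns. However, because the gadgets in \scaling B2 are disjoint, distinct bridges never share vertices, so the splicing at one bridge cannot interfere with the splicing at another -- a feature that makes the case analysis strictly simpler than in the proof of Lemma~\ref{lem:scaling}. The base case (the first gadget, which admits a trivial $7$-vertex HC) together with the inductive step then yields a Hamiltonian cycle through $S'$, completing the proof.
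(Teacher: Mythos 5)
Your high-level plan (order the cells, add their $7$-point hexagonal gadgets one at a time, splice each into the growing cycle, and maintain a boundary invariant) is the same skeleton the paper uses, but the geometric premise on which your splicing mechanism rests is wrong, and the error is load-bearing. At scale \scaling B2 two neighboring cells are \emph{not} joined by a single bridge edge: each cell's side facing the neighbor contains \emph{two} lattice points, and the four facing points are joined by three lattice edges (e.g.\ for the cells centered at $(0,0)$ and at $\linscaling B2(1,0)=(1,-2)$, the side $\{(1,0),(0,-1)\}$ of the first faces the side $\{(1,-1),(0,-2)\}$ of the second, with bridge edges $(1,0)$--$(1,-1)$, $(0,-1)$--$(0,-2)$ and $(0,-1)$--$(1,-1)$). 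This matters because a Hamiltonian cycle must enter and leave every gadget through two distinct external edges. Under your single-bridge assumption, the very common case in which a newly added cell has exactly one already-placed neighbor (e.g.\ the second cell of any BFS order) is impossible to handle: there would be only one edge available to connect the new $7$ points to the rest of the cycle. The construction that actually works, and that the paper uses, exploits the two \emph{parallel} bridges between the facing sides: delete one side edge inside each of the two hexagonal cycles and insert the two parallel bridges through the four facing points, merging the two cycles (with two further local patterns for the situations where no neighbor exposes a usable flat side and a ``V'' must be used instead).

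A secondary inaccuracy is your claim that, because the gadgets are disjoint, ``distinct bridges never share vertices, so the splicing at one bridge cannot interfere with the splicing at another.'' This is false: a single boundary point of a cell is an endpoint of bridge edges towards two different neighboring cells (e.g.\ $(1,0)$ above is adjacent both to $(1,-1)$ in the cell at $\linscaling B2(1,0)$ and to $(2,0),(2,1)$ in the cell at $\linscaling B2(1,1)$), and the three bridges between one pair of neighbors already share endpoints among themselves. So some bookkeeping of which boundary edges of already-placed gadgets remain available (the paper's flat-side/``V'' invariant) is still required; the disjointness of \scaling B2 cells makes the case analysis shorter than for \scaling A2, but it does not make the splices independent. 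Once you replace the single-bridge picture with the four-point/two-parallel-edge merge and restate the invariant in terms of which sides expose a flat edge or a ``V'', your argument becomes essentially the paper's proof.
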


The proof of Lemma~\ref{lem:scaling-B2} is a trivial modification of the proof of Lemma~\ref{lem:scaling} which replaces all cases of extending the HC into a new cell with the three cases shown in Figure~\ref{fig:scaling2B-HC}.  Since the sides of cells do not share points, it is much easier to add new cells while maintaining the HC, and the only cases to be considered are handled as shown in that figure.

\begin{figure}[htp]
\centering
\includegraphics[width=4.8in]{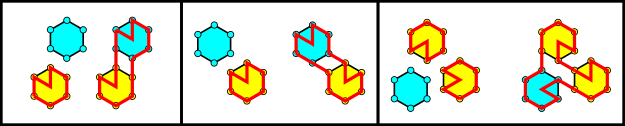}
\caption{The extension of the Hamiltonian cycle through a newly added point gadget (blue) in scaling \scaling B2.  (left) The first choice is taken if at least one neighbor of a newly added pixel gadget exposes an adjacent flat side.  Simply rotate the new pixel gadget so that flat walls face each other and connect them through the four points of those sides. (middle) If all neighbors have adjacent sides exposing ``V''s but there are no two which are adjacent to each other, perform the extension shown which changes the exposed side of the existing neighbor which is closest from a flat side into a ``V''.  However, since there was not a mutual neighbor for that gadget and the newly added gadget, that shared adjacent location must be empty, and if a pixel gadget is ever added there later, it can connect via the flat side of this newly added gadget. (right) If all neighbors expose ``V''s and two of them are adjacent to each other, extend the HC as shown, which modifies no other exposed sides of the existing pixel gadgets.}
\label{fig:scaling2B-HC}
\end{figure}

Finally, we show that an HC can similarly be created through $S$ at scaling \scaling C2.

\begin{lemma}\label{lem:scaling-C2}
For any finite shape $S$ in the triangular grid graph, there exists a scaling \scaling C2 of $S$, say $S'$, such that there exists a Hamiltonian cycle through the points of $S'$.
\end{lemma}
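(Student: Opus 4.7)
The plan is to prove Lemma~\ref{lem:scaling-C2} by following the same template as the proofs of Lemma~\ref{lem:scaling} and Lemma~\ref{lem:scaling-B2}. As in those proofs, I would first fix a BFS ordering $L$ of the points of $S$ (via \texttt{ORDER-POINTS}) so that every point after the first is adjacent to some point already listed. I would then map the ordered points into the \scaling C2 upscaling and process them incrementally: at step $i$ I add the irregular hexagonal cell $\Linscaling C2(L[i])$ and simultaneously extend the Hamiltonian cycle maintained on the previously placed cells so that it also covers the $12$ vertices of the newly added cell. For the base case (adding the very first cell), I would use a fixed Hamiltonian cycle through the $12$ vertices of a single \scaling C2 cell, chosen so that every one of its six sides exposes a canonical ``straight edge'' or ``V'' pattern to each of the six possible neighbor positions.

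The key structural observation, inherited from the analysis of \scaling Bn and \scaling Cn in Section~\ref{sec:scaling:algo:Bn}, is that cells in \scaling C2 do not overlap: the vertices of a cell are not shared with its neighbors, and between two neighboring cells there is a well-defined pair of ``neighboring sides'' (one of length~$2$ and one of length~$3$, depending on the direction). Consequently, merging the cycle fragments of two adjacent cells only requires rewiring edges that lie along the two neighboring sides, without disturbing the rest of the global cycle. This is exactly the simplification that made the proof of Lemma~\ref{lem:scaling-B2} a routine modification of Lemma~\ref{lem:scaling}, and the same simplification applies here.

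Concretely, I would carry out an induction on $i$ maintaining the invariant that, after step $i$, (a) the current routing is a single Hamiltonian cycle on the vertices of the already-placed cells, and (b) on each of the six sides of each placed cell that is still exposed to an empty neighbor slot, the cycle presents one of a small fixed family of patterns (straight edge or ``V''). For the induction step, I would enumerate the configurations of neighbors a newly added cell can have, up to the $6$-fold rotational symmetry of the \scaling C2 hexagon; because the \scaling C2 cell is not rotationally symmetric under a $60^\circ$ rotation (its long and short sides alternate), the enumeration needs to distinguish cases according to whether the shared side with each existing neighbor is a long or a short side. For each such configuration I would exhibit a local rewiring, entirely analogous to the three cases depicted in Fig.~\ref{fig:scaling2B-HC}, that connects the new cell into the cycle and restores invariant~(b) on all newly exposed sides.

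The main obstacle will be the combinatorial case analysis: the alternating side lengths of the \scaling C2 cell double the number of adjacency patterns one must draw compared to \scaling B2, and one has to check carefully that whenever a neighbor slot exposes neither a straight edge nor a ``V'', a rewiring step is available that fixes it using only edges along the two neighboring sides. However, no case is essentially different from \scaling B2: in every pattern the two neighboring sides together offer enough degrees of freedom (at least four vertices on each side) to reroute, because the sides have length $\geq 2$. Once all cases are verified, the induction goes through and the final cycle, built when $i=|L|$, is a Hamiltonian cycle of $\Linscaling C2(S)$, as required.
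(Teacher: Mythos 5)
Your overall template --- BFS ordering, incremental addition of cells, and local extension of a maintained Hamiltonian cycle --- is the same one the paper uses for all three scalings. However, for \scaling C2 the paper's proof is deliberately \emph{not} a replay of the \scaling B2 case analysis: it rests on the single observation that every pixel gadget can carry one fixed internal cycle pattern (Fig.~\ref{fig:scaling2C-HC}) which exposes a flat side toward any neighbor, and that because \scaling C2 cells are vertex-disjoint, merging the new gadget's cycle with the existing HC is always the same purely local operation --- delete one boundary edge from each of the two facing flat sides and add two parallel connecting edges through the four points involved. No boundary invariant, no ``V'' patterns, and no enumeration of neighbor configurations is needed, because connecting the new cell to just \emph{one} previously placed neighbor already merges the two cycles, and the fixed pattern guarantees such a flat-to-flat connection is always available. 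Your plan instead commits to maintaining invariant~(b) on every exposed side and to a case analysis over all neighbor configurations of the anisotropic cell, which you yourself identify as the main obstacle and do not carry out; that is where the actual content of your proof would live, and it is left as a promissory note.

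There is also a concrete error in the one argument you offer for why all your cases should close: you claim the two neighboring sides ``offer enough degrees of freedom (at least four vertices on each side).'' At scale \scaling C2 the cell $H'_2$ has $12$ points, only $9$ of which are on the boundary; its sides alternate between $2$ and $3$ vertices (i.e.\ $1$ and $2$ edges), and a neighboring pair consists of a long side facing a short side. A one-edge side cannot host a ``V'' in the sense of the \scaling A2 analysis, so the degrees-of-freedom argument you rely on does not apply as stated. None of this is fatal --- the lemma is true and your scaffolding could be completed --- but as written the proposal both omits the verification that constitutes the proof and supports its plausibility with a false premise, whereas the paper's route shows the verification can be collapsed to a single uniform merging step.
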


The proof of Lemma~\ref{lem:scaling-C2} is an even more trivial modification of the proof of Lemma~\ref{lem:scaling} which replaces all cases of extending the HC into a new cell with the simple observation that every pixel gadget can be of the shape shown in Figure~\ref{fig:scaling2C-HC}, and that every new pixel gadget is able to place a flat side of its pattern adjacent to that of a flat side of a neighbor, allowing the HC to be extended into the new gadget by simply extending two parallel edges between the gadgets through the four points of those sides. 

\begin{figure}[htp]
\centering
\includegraphics[width=2.0in]{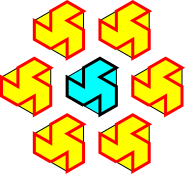}
\caption{The extension of the Hamiltonian cycle through a newly added point gadget (blue) in scaling \scaling C2 is trivial since all gadgets can mai.}
\label{fig:scaling2C-HC}
\end{figure}

\subsection{Time Complexity}

We note that by inspection of the algorithms given in Section~\ref{sec:finite-shapes-algorithms} that the algorithm to produce $S'$ and the HC runs in time $O(|S|^2)$.

\subsection{Self-Assembling Finite Shapes At Small Scale And Arbitrary Delay: Technical Details}\label{sec:hard-coded-details}

We prove Theorem \ref{thm:hard-coded} by splitting each scale factor into its own lemma and proving each separately.

\begin{lemma}\label{lem:hard-coded-2}
Let $S$ be an arbitrary shape such that $|S| < \infty$.  There exists OS $\mathcal{O}_S$ such that $\mathcal{O}_S$ self-assembles $S$ at scaling \scaling A2.
\end{lemma}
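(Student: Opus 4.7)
The plan is to invoke Lemma~\ref{lem:scaling} to obtain an upscaled shape $S' = \Linscaling{A}{2}(S)$ together with a Hamiltonian cycle $H = p_1 p_2 \cdots p_N p_1$ (with $N = |S'| = O(|S|)$) through every point of $S'$. I would designate the first three points $p_1,p_2,p_3$ as the seed configuration $\sigma$, and hard-code a transcript $w = b_4 b_5 \cdots b_N$ by assigning a \emph{distinct} bead type $b_i$ to each remaining point $p_i$ of $H$. The attraction rule $\heart$ is then defined so that $b_i \heart b_j$ precisely when $p_i \sim p_j$ in $\Tlat$ and $|i-j|>1$; that is, two bead types attract iff they correspond to lattice-adjacent but non-consecutive points along $H$. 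Finally, I set the delay $\delta = N - 3 = O(|S|)$ (so that the entire remaining transcript is always visible in the lookahead) and arity $\alpha = 4$.

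The core argument is that, at every transcription step, the dynamics $\Dynamics$ places the next bead at its HC position. First I would count the number $B$ of potential bonds realized by the routing that traces $H$ from $p_1$ to $p_N$: by construction this equals the number of pairs $(i,j)$ with $p_i\sim p_j$ and $|i-j|>1$, which is a fixed quantity depending only on $S'$ and $H$. Then I would argue that no $\delta$-elongation of $\sigma$ can realize more than $B$ bonds: every bead type appears exactly once in $w$, so each attracting pair $(b_i,b_j)$ can contribute at most one bond, and this bond is realized only if $b_i$ and $b_j$ come to occupy lattice-adjacent positions. Thus $B$ is an absolute upper bound on the bonds of \emph{any} elongation of $\sigma$, and it is attained uniquely by the routing $p_1,\ldots,p_N$, because every $b_i$ must land on $p_i$ to satisfy all of its prescribed adjacencies simultaneously (any permutation either breaks adjacency of some attracting pair or would require the self-avoiding path to revisit a vertex).

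With this, determinism follows by induction on $t$: assume the partial configuration $c^t$ coincides with the prefix $p_1,\ldots,p_{3+t}$ of $H$. Because $\delta$ equals the number of beads still to be transcribed, the dynamics at step $t$ evaluates the maximum number of bonds attainable by completing the transcript; by the previous paragraph this maximum is $B$, attained uniquely when the next bead $b_{4+t}$ is placed at $p_{4+t}$. Hence $\arg\max$ is a singleton, $c^{t+1}$ extends $H$ by one more vertex, and the induction proceeds. After $N-3$ steps, the terminal configuration has shape $\{p_1,\ldots,p_N\} = S'$, proving the lemma.

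The main obstacle is the uniqueness half of the bond-maximization claim: it is intuitively clear that any deviation from $H$ strictly loses bonds, but one must rule out the possibility of a self-avoiding routing through a proper subset of $S'$ that still matches every attracting pair. The argument hinges on the fact that all bead types are distinct, so the set of attracting pairs is fixed in advance and each pair must be geometrically realized; any vertex of $S'$ missed by the routing immediately loses all the bonds incident to it in $H$, while any self-avoiding routing visiting all of $S'$ with a different order would force some attracting pair to be non-adjacent, again losing at least one bond. Making these two observations precise, and verifying that the seed of size~$3$ suffices to break rotational/reflective symmetries of the initial placement of $b_4$, completes the proof.
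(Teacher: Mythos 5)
Your construction is the same as the paper's: scale via Lemma~\ref{lem:scaling}, hard-code one distinct bead type per point of the Hamiltonian cycle, seed the first three beads, set the delay to the full remaining transcript length, and let $\heart$ relate exactly the lattice-adjacent, non-consecutive pairs along the cycle. The upper-bound half of your bond count is also fine: each attracting pair can be realized at most once, so the designed configuration's bond count $B$ bounds every elongation. The gap is exactly where you flag it, and your proposed way of closing it does not work as stated. Saying that ``any self-avoiding routing visiting all of $S'$ in a different order would force some attracting pair to be non-adjacent'' is a restatement of the uniqueness claim, not an argument for it: a priori there could be a different self-avoiding embedding of the abstract graph (transcript edges plus $\heart$-edges) into $\Tlat$ that realizes every single $\heart$-pair as a lattice adjacency, and nothing in your global counting rules this out. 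Likewise, a routing need not miss a vertex of $S'$ to deviate --- it could occupy a congruent but differently folded region --- so ``any missed vertex loses its incident bonds'' does not cover all competitors.

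The paper closes this gap with a local rigidity argument that your sketch does not supply. Each point of $S$ becomes a $7$-bead pixel gadget in $S'$; the center bead of a gadget has two transcript neighbors and must bond to the remaining four perimeter beads (five for the seed gadget) to reach the maximum, which forces those seven beads into a hexagon with the correct center. The cyclic order of the six perimeter beads is then pinned because every adjacent perimeter pair is either a transcript edge or an attracting pair, fixing the gadget up to rotation and reflection. The seed's three fixed positions remove that residual symmetry for $g_0$, and since consecutive gadgets share two beads, the orientation propagates by induction along the cycle, yielding uniqueness of the terminal configuration. If you want to keep your more global phrasing, you would still have to prove a statement of this rigidity type; as written, the claim that $B$ ``is attained uniquely by the routing $p_1,\ldots,p_N$'' is asserted rather than established, and it is the entire content of the lemma beyond the routine setup.
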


\begin{proof}
We prove Lemma~\ref{lem:hard-coded-2} by construction.  Therefore, assume $S$ is an arbitrary shape such that $|S| < \infty$, and let $S'$ be a $2$-scaling of $S$, produced following the algorithm for the proof of Lemma~\ref{lem:scaling}. Let $H$ be the Hamiltonian cycle (HC) through $S'$ found by that algorithm, and for $n = |S'|$ (the number of locations in $S'$), let $P = p_0, p_1, \ldots,p_{n-1}$ be an ordering of the points of $H$ such that $p_0$,$p_1$, and $p_2$ are points $6$, $1$, and $2$, respectively, (as the points of a pixel gadget are labeled in Figure~\ref{fig:point-nbrs}, i.e. NW, NE, and E) of the first pixel gadget added by the algorithm.\footnote{The only requirement for these three points is simply that they are $3$ consecutive connected points in the same pixel gadget in $H$, and by the definition of the algorithm which creates $H$, these three points are guaranteed to be such since the first pixel gadget represents the leftmost of the top points of $S$ and therefore there are no neighboring points which could cause those edges to be removed as future pixel gadgets are added.} We will break $H$ between $p_{n-1}$ and $p_0$ to form our transcript sequence. We now define OS $\mathcal{O}_{S'} = (B, w,\heart,\delay,\alpha)$
such that $\mathcal{O}_{S'}$ self-assembles $S'$. The set of bead types $B$ will contain a unique bead type for each point in $H$, and thus $|B| = n$. The seed $\sigma$ will be the first three bead types in locations $p_0$, $p_1$, and $p_2$. The transcript $w$ will be a finite transcript, with $|w| = n-3$, which enumerates the bead types $p_3,p_4,\ldots,p_{n-1}$, in the order of $P$. The delay factor $\delay$ will be $n-3$ (i.e. the full length of the transcript), and arity $\alpha = 4$.  The rule set $\heart$ is defined by inspecting $H$ overlaid with the ordering of bead types $P$, and adding a pair of bead types to $\heart$ for every pair of adjacent beads which are not connected via the routing, therefore allowing a bond to form between every pair of adjacent beads not already connected by the routing.

\begin{figure}[htb]
\centering
\includegraphics[width=4.0in]{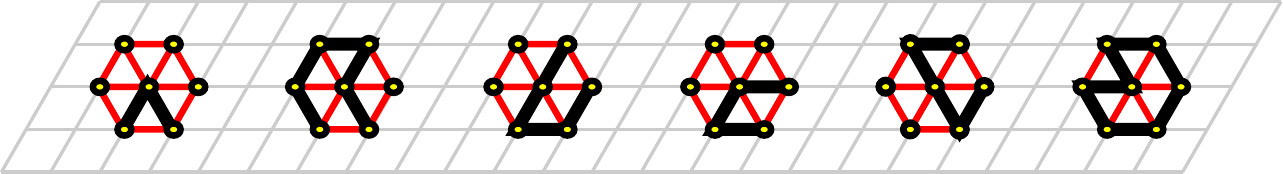}
\caption{The beads of pixel gadgets in various configurations, shown with portions of routing (black) connecting some of them and bonds (red) between them.}\label{fig:gadget-configs}
\end{figure}

We now prove that $\mathcal{O}_{S'}$ is a deterministic system whose single terminal configuration has shape $S'$.  First, it is obvious by the design of $\mathcal{O}_{S'}$ that the beads can be placed into the shape of $S'$ by exactly tracing the HC through $S'$ with the $n$ beads corresponding to the $n$ locations in $S'$.  We call this the \emph{designed configuration} and note that $\mathcal{O}_{S'}$, specifically the rule set $\heart$, is designed so that when the beads are laid out in this configuration, every neighboring pair of beads which is not connected by the transcript can form a bond, and furthermore, no bead can form a bond with any other bead other than those which are neighboring in this configuration.  Therefore, the designed configuration contains the maximal number of bonds which can be formed.

To complete the proof, we must simply show that there is no configuration other than the designed configuration which can contain as many bonds.  We prove this by first noting that in $S'$, there are $7$ points which form the pixel gadget corresponding to each point of $S$ (note that points other than the center may be shared by adjacent pixel gadgets), and proving the following series of claims about the beads of each pixel gadget.

Let $k = |S|$ and $0 \le i < k$, then $g_i$ is the pixel gadget of $S'$ corresponding to point $i$ of $|S|$. In the designed configuration of $\mathcal{O}_{S'}$, there are $7$ beads which correspond to each $g_i$.

\begin{claim}\label{claim:gadget-configs}
There is exactly one configuration of the beads of each $g_i$, modulo rotation and reflection, which allows for the formation of the maximum number of bonds which can be formed among those beads, and that is the subset of the designed configuration corresponding to those beads, modulo rotation and reflection.
\end{claim}

\begin{proof}
To prove this claim, we first note that the bead in the center of the designed configuration of $g_i$ must be (1) connected to two other beads of $g_i$ by transcript connections (by the definition of the transcript) and, in order to form the maximum number of possible bonds, it must form bonds with the other $4$ beads of $g_i$, or (2) in the case of $g_0$ it may be connected via the transcript to only one other bead of $g_0$ due to the location where the HC was broken (to form the path of the transcript), but will then be able to form bonds with the other $5$ beads of $g_0$. In order for this single bead to have all of these connections, it must be situated in the center of a hexagon with those beads surrounding it.  This guarantees that the $7$ beads of the $g_i$ must be arranged in the shape of a pixel gadget (i.e. a hexagon) with the bead in the center matching the center bead of the designed configuration.  To prove that the $6$ beads around the perimeter are in the same relative locations as in the designed configuration (and can't be reordered), we consider the transcript routing and/or bonds between them.  Depending on the arrangement and ordering of pixel gadgets in locations neighboring $g_i$, there may or may not be a connection between a pair of beads on $g_i$'s perimeter formed by the transcript.  However, for any pair that are neighbors in the designed configuration for which there is not such a connection, those two beads can form a bond. Therefore, for the transcript ordering to be maintained as well as for the maximum number of bonds to be formed, the set of all pairs of neighboring beads on the perimeter must match the set of all pairs on the perimeter of the designed configuration, which fixes the ordering of the beads (modulo rotation and reflection).  This, along with the fact that the center bead matches that of the designed configuration, proves the claim that the beads of $\mathcal{O}_{S'}$ corresponding to $g_i$ must match the designed configuration, modulo rotation and reflection.
\end{proof}

\begin{claim}\label{claim:seed-orient}
The beads corresponding to the pixel gadget $g_0$ (which contains the seed), must have the same orientation as $g_0$ in $S'$.
\end{claim}

\begin{proof}
The proof of this claim follows immediately from the fact that the placement of the first three beads of $g_0$ are fixed by the definition of the seed.  Given that $g_0$ must have the same configuration as the corresponding beads in the designed configuration (by the previous claim), which matches that portion of $S'$, the fixed location of the first $3$ beads forces its orientation, i.e. rotation and reflection, to match that of $S'$.
\end{proof}

\begin{claim}\label{claim:gadget-orients}
For each $0 \le i < k$, the beads corresponding to $g_k$ must have the same orientation as $g_k$ in $S'$.
\end{claim}

\begin{proof}
We prove this claim by induction on $g_i$.  Our inductive hypothesis is that, given that the beads corresponding to $g_i$ have the same orientation as $g_i$ in $S'$, then the same must hold for those of $g_{i+1}$.  Our base case is $g_0$, which holds by the previous claim.  Given that the beads corresponding to $g_i$ are oriented to match $S'$, and that by definition of $H$, $g_{i+1}$ must share $2$ beads with $g_i$ and that the beads corresponding to $g_{i+1}$ must be in the configuration matching $S'$ (by the first claim), then the only possible orientation for the beads of $g_{i+1}$ is that which matches $S'$.
\end{proof}

Finally, the proof of Lemma~\ref{lem:hard-coded-2} follows from the fact that $\mathcal{O}_{S'}$ results in a configuration with exactly as many beads as points in $S'$ and the previous three claims which prove that those beads must fold into a configuration in shape $S'$.
\end{proof}

We now provide the statements of the lemmas for the remaining two scalings, \scaling B2 and \scaling C2, and since they are proved by constructions nearly identical to that for scaling \scaling A2, we just refer to that construction.

\begin{lemma}\label{lem:hard-coded-2B}
Let $S$ be an arbitrary shape such that $|S| < \infty$.  There exists OS $\mathcal{O}_S$ such that $\mathcal{O}_S$ self-assembles $S$ at scaling \scaling B2.
\end{lemma}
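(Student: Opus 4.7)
The plan is to mirror the construction from the proof of Lemma~\ref{lem:hard-coded-2} almost verbatim, substituting Lemma~\ref{lem:scaling-B2} for Lemma~\ref{lem:scaling} as the source of the Hamiltonian cycle. First, I would apply Lemma~\ref{lem:scaling-B2} to produce $S' = \Linscaling B2(S)$ together with a Hamiltonian cycle $H$ through $S'$, and pick three consecutive points $p_0,p_1,p_2$ of $H$ lying inside the same pixel gadget. Such a triple exists because every \scaling B2-gadget is a $7$-point hexagon whose central bead is flanked on $H$ by two of its perimeter beads (as can be checked in every pattern of Figure~\ref{fig:scaling2B-HC}). I would then break $H$ just before $p_0$ to obtain an ordering $P = p_0,p_1,\ldots,p_{n-1}$ of the $n = |S'|$ points, in which consecutive pairs correspond to adjacent positions in $\Tlat$.

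Next I would define $\mathcal{O}_{S'} = (B, w, \heart, \delay, \alpha)$ with $|B| = n$ fresh bead types (one per point of $H$), seed $\sigma$ of size $3$ placing the first three bead types at $p_0,p_1,p_2$, transcript $w$ of length $n-3$ spelling the remaining bead types in the order of $P$, delay $\delay = n-3$, arity $\alpha = 4$, and rule $\heart$ containing exactly those pairs of bead types whose designated positions in $S'$ are adjacent in $\Tlat$ but not consecutive in the routing. By construction the designed configuration tracing $H$ realises every bond permitted by $\heart$, so it attains the maximum possible bond count for any elongation of $\sigma$ along $w$, which shows that it is a valid terminal configuration.

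It then remains to argue that no other folding of $w$ from $\sigma$ attains this maximum bond count, and here the three-claim structural argument from the proof of Lemma~\ref{lem:hard-coded-2} carries over almost unchanged. The first claim — that the seven beads corresponding to any gadget $g_i$ must realise the local designed hexagon up to a rigid motion — follows exactly as in the \scaling A2 case, since the central bead can only achieve its four perimeter bonds when surrounded by its six designated neighbours, after which the perimeter ordering is forced by the transcript connections and the remaining pairwise adjacencies. The second claim, fixing the orientation of $g_0$, is immediate from the three-bead seed. The third claim, the inductive propagation of orientation from $g_i$ to $g_{i+1}$, is where adaptation is needed: in \scaling A2 two consecutive gadgets literally share two perimeter beads, whereas in \scaling B2 they are joined only through the four ``bridging'' beads lying on a pair of facing flat sides (recall Figure~\ref{fig:scaling2B-HC}).

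The main obstacle I expect is precisely this inductive step. I would need to verify, for each of the three HC extension patterns produced by Lemma~\ref{lem:scaling-B2}, that the transcript edges together with the bonds induced between the two facing flat sides of $g_i$ and $g_{i+1}$ are sufficiently rigid to uniquely anchor the orientation of $g_{i+1}$ relative to that of $g_i$ — in particular, that no reflection of $g_{i+1}$ or lateral shift along the corridor joining the two gadgets preserves the total bond count. Once this local rigidity check is carried out for each of the three \scaling B2 patterns, an immediate induction along $H$ shows that every gadget folds into its designed orientation, so the unique terminal configuration of $\mathcal{O}_{S'}$ has shape $S'$, proving the lemma.
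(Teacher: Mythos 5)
Your proposal follows essentially the same route as the paper, which simply states that the result follows from Lemma~\ref{lem:scaling-B2} together with an OS ``nearly identical'' to the one built for scaling \scaling A2 in Lemma~\ref{lem:hard-coded-2}. In fact you are more explicit than the paper about the one place where the argument genuinely needs adaptation --- propagating orientation between gadgets that no longer share perimeter beads but are joined through facing flat sides --- and your plan for checking rigidity of each of the three extension patterns of Figure~\ref{fig:scaling2B-HC} is exactly the verification the paper leaves implicit.
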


The proof of Lemma~\ref{lem:hard-coded-2B} follows immediately from Lemma~\ref{lem:scaling-B2} (which shows it is possible to form a Hamiltonian cycle $H$ through $S'$, which is the shape $S$ at scaling \scaling B2) and the observation that an OS nearly identical to that constructed for the proof of Lemma~\ref{lem:hard-coded-2} can be created to self-assemble $S'$.

\begin{lemma}\label{lem:hard-coded-2C}
Let $S$ be an arbitrary shape such that $|S| < \infty$.  There exists OS $\mathcal{O}_S$ such that $\mathcal{O}_S$ self-assembles $S$ at scaling \scaling C2.
\end{lemma}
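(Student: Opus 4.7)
The plan is to mirror the construction used for Lemma~\ref{lem:hard-coded-2}, substituting the \scaling C2 Hamiltonian-cycle machinery from Lemma~\ref{lem:scaling-C2} for the \scaling A2 one. First, apply Lemma~\ref{lem:scaling-C2} to obtain $S' = \Linscaling{}{\scaling C2}(S)$ together with a Hamiltonian cycle $H$ through every point of $S'$. Pick an arbitrary pixel gadget and break $H$ into a linear routing $P=p_0,p_1,\dots,p_{n-1}$ (with $n=|S'|$) in such a way that $p_0,p_1,p_2$ lie on three consecutive perimeter positions of the same gadget, so that fixing their positions already fixes the orientation of that gadget. Then define $\TMO_{S'}=(B,w,\heart,\delta,\alpha)$ by taking one fresh bead type per point of $S'$, seed $\sigma$ to be the three beads at $p_0,p_1,p_2$, transcript $w$ to enumerate $p_3,\dots,p_{n-1}$ in order, delay $\delta=n-3$, arity $\alpha=4$, and $\heart$ to contain exactly those pairs of bead types whose designed positions are lattice-neighbors but are not already consecutive along $P$.

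The dynamics argument is then identical in structure to Lemma~\ref{lem:hard-coded-2}: because $\delta$ equals the entire remaining transcript length, every elongation considered by the folding dynamics covers the whole of $w$, so the system simply selects, at each step, the placement of the next bead that is consistent with some completion maximizing the global bond count. The designed configuration laying the beads along $H$ realizes every edge of $\heart$ as a bond and uses up every feasible bond slot per bead, so it is bond-optimal. Consequently it remains to show it is the \emph{unique} bond-optimum up to the seed's anchoring, after which determinism and the shape follow immediately.

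The main obstacle is establishing uniqueness for the \scaling C2 pixel gadget, whose mold $H'_2$ is an irregular hexagon rather than the regular one of \scaling A2. I would prove it by the same three-claim induction as in Lemma~\ref{lem:hard-coded-2}: (i) a per-gadget rigidity claim asserting that the $|H'_2|$ beads of each gadget must lie in the $H'_2$-shape with the center bead at the unique position adjacent to all others (forced because the center bead is linked to two neighbors by the transcript and must realize its remaining $|H'_2|-3$ bonds, which is only possible at the center of the irregular hexagon and with perimeter beads occupying the surrounding sites); (ii) a seed-orientation claim, immediate from the choice of $p_0,p_1,p_2$ on the perimeter; and (iii) an inductive gadget-orientation claim propagating the orientation from $g_i$ to $g_{i+1}$ through the two perimeter beads they share along the \scaling C2 neighboring-side geometry guaranteed by Lemma~\ref{lem:scaling-C2}. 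The slight subtlety relative to \scaling A2 is that the \scaling C2 perimeter is no longer symmetric under arbitrary $60\degree$ rotations, but this actually simplifies (i): the irregular shape admits fewer automorphisms, so once the center bead is fixed the alternating long/short sides determine the perimeter ordering up to the shared-edge constraints with already-oriented neighbors, closing the induction. The lemma then follows exactly as in the proof of Lemma~\ref{lem:hard-coded-2}.
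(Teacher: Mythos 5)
Your overall route is exactly the paper's: the paper disposes of this lemma in two sentences, citing Lemma~\ref{lem:scaling-C2} for the Hamiltonian cycle through $S'$ and observing that an OS ``nearly identical'' to the one built for Lemma~\ref{lem:hard-coded-2} (hard-coded transcript along the broken cycle, delay equal to the remaining transcript length, arity $4$, rule set consisting of all adjacent non-consecutive pairs) self-assembles $S'$. Your setup, your claims (ii) and (iii), and your induction structure are a faithful and more explicit version of that.

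The one place where you commit to a concrete detail and it fails is claim (i). The cell mold for \scaling C2 is $\muscaling C2 = H'_2$, which has $12$ points (three interior, nine on the boundary), not $7$; in particular there is \emph{no} point of $H'_2$ adjacent to all the others, and no bead can realize $|H'_2|-3 = 9$ bonds, since a bead has only six lattice neighbours, two of which are consumed by the backbone (and the construction caps arity at $4$). So the ``center bead forced to the unique position adjacent to everything'' argument, which is the engine of Claim~\ref{claim:gadget-configs} at scale \scaling A2, does not transfer verbatim to $H'_2$, and your parenthetical justification is arithmetically impossible as stated. The rigidity of the $H'_2$ gadget has to be argued differently --- for instance via the three mutually adjacent interior points, each of which must realize all six of its designed adjacencies (two backbone links plus four bonds), which pins down the $12$-point cluster; or by a counting argument that the designed configuration realizes every adjacent pair of $H'_2$ as a backbone edge or a bond, so any non-congruent placement loses at least one. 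The paper itself glosses over this, so the gap is inherited rather than introduced, but claim (i) as written needs to be replaced before the induction closes.
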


The proof of Lemma~\ref{lem:hard-coded-2C} follows immediately from Lemma~\ref{lem:scaling-C2} (which shows it is possible to form a Hamiltonian cycle $H$ through $S'$, which is the shape $S$ at scaling \scaling C2) and the observation that an OS nearly identical to that constructed for the proof of Lemma~\ref{lem:hard-coded-2} can be created to self-assemble $S'$.

By Lemmas \ref{lem:hard-coded-2}, \ref{lem:hard-coded-2B}, and \ref{lem:hard-coded-2C}, Theorem~\ref{thm:hard-coded} is proved.

\afterpage\clearpage

\section{Omitted contents from Section~\ref{sec:scaling:algo}: Self-assembling finite shapes at scale $n\geq 3$ with delay $1$}

\subsection{Omitted contents from Subsection~\ref{sec:scaling:algo:univ:OS}: Universal tight OS}

\begin{omittedproof}{Theorem}{thm:scaling:algo:universal:bead:type}
Let $\inter m = \{0,\ldots,m-1\}$. Let $\dirSet = \allDir$ be the set of all directions in $\Tlat$. Consider the following \emph{affine $19$-coloring} of the vertices $(i,j)$ of~$\Tlat$:
$$
\vertexColor(i,j) = (2i+3j) \mod 19.
$$
For each $d\in\dirSet$, let $\Delta_d$ be the difference of the colors (modulo $19$) of a vertex and its $d$-neighbor (as the coloring is affine, $\Delta_d$ only depends on $d$): ${\Delta_\dirSE = -\Delta_\dirNW = 5}$, ${\Delta_\dirSW = -\Delta_\dirNE = 3}$, ${\Delta_\dirE = -\Delta_\dirW = 2}$. One can check (see Fig.~\ref{fig:scaling:algo:color}) that every of the $19$ vertices of any translation of the hexagon $H_2$ gets a distinct color. 
   
\begin{figure}[h]
\centerline{\includegraphics[width=4cm]{coloring_scheme.pdf}}
\caption{Affine coloring of $\Tlat$. Note that every vertex in any translation of the hexagon $H_3$ receives a distinct color in $\inter{19}$. The neighbor of a given color in a given direction, always receives the same color.} 
\label{fig:scaling:algo:color}
\end{figure}


We consider the bead type set ${U=\{(k,d): k\in\inter{19} \text{ and } d\in\dirSet\}}$ together with the symmetric rule $\heart$ defined by: for all $(k,d)$ and $(k',d')$ in $U$,  
$$
(k,d)\heart(k',d')
~\Leftrightarrow~
k' = (k + \Delta_d) \mod 19
\text{ or }
k = (k' + \Delta_{d'}) \mod 19
$$
that is to say, if and only if $k'$ is the neighboring color of $k$ in direction $d$, or $k$ is the neighboring color of $k'$ in direction $d'$. 

Let $\TMO$ be a tight folding. Let us consider the routing $r$ of the result of the folding of $\TMO$ starting from an arbitrary vertex in $\Tlat$. We assign to each vertex $(i,j)$ of $r$, the bead type $(k,d)$ where $k = \vertexColor(i,j)$ and $d$ is the direction of the unique bond it makes when it is placed during the folding $\TMO$. By construction, the transcript obtained by reading the bead types along the routing $r$ will exactly fold into the same shape: indeed, as the delay is $1$, the to-be-placed beads might only get in touch with beads at distance at most $2$ from the last placed beads; as every bead within radius $2$ gets a different color, the unique location where the to-be-placed bead can make its bond, is uniquely defined by the color of the bead it will connect to, which is in turn uniquely characterized by the color of the to-be-placed bead and the direction of the bond it can make, i.e. by its bead type in $U$. \qed
\end{omittedproof}


\paragraph{Bead type representation in the figures.} 
A bead with bead type $(k,d)$ will be represented as a small ball of color $k$ inside a link of the same color as the small ball, of color ${k'= (k + \Delta_d) \mod 19}$, of its neighbor in direction $d$ it is pointing to. The routing is shown as a thick translucent black line.

\begin{figure}[H]
\hfill
\includegraphics[width=2cm]{beads-folding.pdf}
\hfill
\includegraphics[width=2cm]{beads-folded.pdf}
\hfill\,
\caption{\captionpar{Bead type representation. Left:} a bead looking for its final position; \captionpar{Right:} the fully folded transcript.}
\end{figure}

\afterpage\clearpage

\subsection{Omitted contents from Subsection~\ref{sec:scaling:algo:cover:hex}: Pseudohexagon routing}

Note that we must have: ${a+b = d+e}$, ${b+c = e+f}$ and ${a+f = c+d}$, since ${a\,\vec\dirSE} + {b\,\vec\dirSW} + {c\,\vec\dirW} + {d\,\vec\dirNW} + {e\,\vec\dirNE} + {f\,\vec\dirW} = {(a-c-d+f) \vec\dirE + (a+b-d-e) \vec\dirSW} =  0$. 

\begin{omittedproof}{sketch of Theorem}{thm:scaling:algo:cover:hex}
The algorithm proceeds by covering 6 areas numbered from $A$ to $F$. As illustrated on Fig.~\ref{fig:scaling:algo:cover:hex}, there are four cases depending on the parity if the \dirSW- and \dirS-side lengths $c$ and $d$. 
\begin{figure}
    \centering
    \includegraphics[width=0.9\textwidth]{figs/pseudo-hexagon-path.pdf}
    \caption{The four cases to design of the self-supported tight path for large enough pseudo-hexagons. Note that the clean edge in ref is not part of the (convex) pseudo-hexagon, but is the edge upon which the pseudo-hexagon is folded.}
    \label{fig:scaling:algo:cover:hex}
\end{figure}
Area $A$ consists in a simple \dirSE-zigzag pattern, or a \dirSE-zigzag pattern with a shift depending on the relative position of the supporting clean edge. Area $B$ consists in a simple zigzag. Area $C$ consists in long \dirNE/\dirSW-switchbacks. The junction to area $D$ is either a simple edge ($c$~even) or a ``$\lambda$''-shape ($c$ odd). Area $D$ consists in long \dirNE/\dirSW-switchbacks that stick along the shape of area $C$'s switchbacks. The junction to the next area is either a simple edge ($c$ and $d$ even) or a ``$\lambda$''-shape ($c$ or $d$ odd). Area $E$ does not exist if $c$ and $d$ are even. If $c$ or $d$ are odd, then area $E$ is either a long \dirNE/\dirSW-switchback ($c$ and $d$ odd), or a \dirNE-zigzag ($c$ and $d$ of opposite parity). Then area $F$ consists in a simple counterclockwise tour of the \cdirNW- to \cdirNE-sides. 
\qed
\end{omittedproof}

Note that as all the routings computed by \algoCoverHex\/ are self-supported and tight, Theorem~\ref{thm:scaling:algo:universal:bead:type} applies and provides an OS with $114$ bead types in linear time that folds each of them correctly. Note also that in the routings generated by this algorithm, all the edges on the five sides \cdirNW\/ to \cdirNE\/ are clean, except for the five edges originating at a corner.

\afterpage\clearpage


\subsection{Omitted contents from Subsection~\ref{sec:scaling:algo:Bn}: Scale \scaling Bn and \scaling Cn with $n\geq 3$}

\begin{figure}
    \subfigB4
    \hfill 
    \subfigB5
    \\[2mm]
    \subfigB6
    \hfill 
    \subfigB7
    \\[2mm]
    \subfigB8
    \hfill 
    \subfigB9
    \caption{\captionpar{The self-supported tight routing extensions for scales \scaling B4 to \scaling B9:} in purple, the clean edge used to extend the routing in this cell; in red, the ready-to-use new clean edges in every direction; highlighted in orange, the seed.}
    \label{fig:scaling:algo:routing:B4-9}
\end{figure}


\begin{figure}
    \resizebox{\textwidth}{!}{\tabMovieThirteenFrames{.15\textwidth}{flowerB}}
    \caption{The step-by-step construction of a routing folding into a flower shape at scale \scaling B3 according to the algorithm in Section~\ref{sec:scaling:algo:B}}
    \label{fig:scaling:algo:movie:B3}
\end{figure}

\afterpage\clearpage


\paragraph{Scale \scaling Cn}
\label{sec:scaling:algo:C}
is anisotropic. Thus, there are three cases to consider up to rotations: either the cell is the first, or it will plug onto a neighboring clean edge that belongs to either a larger or a smaller side. In \scaling Cn, the clean edges that we will plug onto, are (1) the \emph{counterclockwise-most} of each smaller side, and (2) the \emph{second clockwise-most} of each larger side, of an neighboring occupied cell. For $n\geq 7$, we rely on Theorem~\ref{thm:scaling:algo:cover:hex} to construct such a routing. The tight and self-supported routings for \scaling C3 are listed in Fig.~\ref{fig:scaling:algo:routing:C3} (see Fig.~\ref{fig:scaling:algo:routing:C4-7} for $n=4\ldots7$).
\NSomitted{(see Fig.~\ref{fig:scaling:algo:routing:C4-7} and~\ref{fig:scaling:algo:routing:C8-10} for $n=4\ldots10$).
}
\begin{figure}
    \centering
    \tablePatsXFive{1}{3}
    \caption{\captionpar{The self-supported tight routing extensions for scale \protect\scaling C3:} in purple, the clean edge used to extend the routing in this cell; in red, the ready-to-use new clean edges in every direction; highlighted in orange, the seed.}
    \label{fig:scaling:algo:routing:C3}
\end{figure}

\begin{figure}
    \subfigC4
    \\[2mm]
    \subfigC5
    \\[2mm]
    \subfigC6
    \\[2mm]
    \subfigC7
    \caption{\captionpar{Extensions for scales \scaling C4 to \scaling C7:} in purple, the clean edge used to extend the routing in this cell; in red, the ready-to-use new clean edges in every direction; highlighted in orange, the seed.}
    \label{fig:scaling:algo:routing:C4-7}
\end{figure}
\NSomitted{
\begin{figure}
    \subfigC8
    \\[2mm]
    \subfigC9
    \\[2mm]
    \subfigC{10}
    \caption{\captionpar{The self-supported tight routing extensions for scales \scaling C8 to \scaling C{10}:} in purple, the clean edge used to extend the routing in this cell; in red, the ready-to-use new clean edges in every direction; highlighted in orange, the seed.}
    \label{fig:scaling:algo:routing:C8-10}
\end{figure}
}
Fig.~\ref{fig:scaling:algo:movie:C3} (p.~\pageref{fig:scaling:algo:movie:C3}) presents a step-by-step execution of the routing extension algorithm at scale \scaling C3.
Theorem~\ref{thm:scaling:algo:routing:Cn}
 follows, as above, from Theorem~\ref{thm:scaling:algo:universal:bead:type}.
 %


\begin{figure}
    \resizebox{\textwidth}{!}{\tabMovieThirteenFrames{.15\textwidth}{flowerC}}
    \caption{The step-by-step construction of a routing folding into a flower shape at scale \scaling C3 according to the algorithm in Section~\ref{sec:scaling:algo:C}}
    \label{fig:scaling:algo:movie:C3}
\end{figure}

\afterpage\clearpage


\subsection{Omitted contents from Subsection~\ref{sec:scaling:algo:An>=5}: scale \scaling An with $n\geq 5$}

\emph{We will always root the signature of an empty cell on the clockwise side of a segment}. With this convention, The two least significant bits of the signature of an empty cell with at least one and at most 5 neighboring occupied cells is always \texttt{01}. We are then left with the following possible signatures for an empty cell, sorted by the number of segments around this cell (see Fig.~\ref{fig:scaling:algo:all:CWmost:signatures} p.~\pageref{fig:scaling:algo:all:CWmost:signatures}):
\begin{description}
    \item[No segment:] \NSpattern{0}
    \item[1 segment:] \NSpattern{1}, \NSpattern{100001}, \NSpattern{110001}, \NSpattern{111001}, \NSpattern{111101}, \NSpattern{111111}.
    \item[2 segments:] \NSpattern{101}, \NSpattern{1001}, \NSpattern{10001}, when both have length $1$; \NSpattern{100101}, \NSpattern{101001}, \NSpattern{1101}, \NSpattern{11001}, when their lengths are $1$ and $2$; \NSpattern{101101} when both have length $2$; \NSpattern{110101}, \NSpattern{11101} when one has length $3$.
    \item[3 segments:] \NSpattern{10101}.
\end{description}
\begin{figure}
    \centering
    \includegraphics[width=\textwidth]{figs/all-CWmost-rooted-signatures-long.pdf}
    \caption{List of the 18 possible signatures rooted on the clockwise-most side of a segment, placed on the \protect\cdirN-side.}
    \label{fig:scaling:algo:all:CWmost:signatures}
\end{figure}

\fitfigurePatsXZero5 
\fitfigurePatsXZero6
\fitfigurePatsXZero7
\fitfigurePatsXZero8

\NSomitted{
\fitfigurePatsXZero6
\fitfigurePatsXZero7
\fitfigurePatsXZero8
\fitfigurePatsXZero9
\fitfigurePatsXZero{10}
\fitfigurePatsXZero{11}
}

\afterpage\clearpage


\newpage

\subsection{Scale \scaling A4}
\label{sec:scaling:algo:A4}

At scale \scaling A4, the side are $3$ edges-long and the support of a clean edge may not belong to the same cell as the edge. We must then need to pay attention to the timing of the clean edges in the path. We solve this issue by always rooting the signature on the side with the \emph{latest} clean edge, where the time of a clean edge is the time of its origin. We are then guaranteed (by an immediate induction) that the support of this clean edge will always have been placed by the folding before the clean edge is folded. 

We can however no more freely rotate the signature and must then design the 33 routing extensions. At scale \scaling A4, the clean edges are located at the second clockwise-most edge of every available side of an occupied cell. The 33 routings are shown on Fig.~\ref{fig:scaling:algo:routing:A4}.
\begin{figure}[tbh]


    \resizebox{\textwidth}{!}{\tabFigFourZero
{0.8}}
    \caption{\captionpar{The 33 tight routing extensions for \scaling A4:} in purple, the latest edge, which is clean and can thus be used to extend the path in this cell; in red, the new potential clean edges available to extend the path for the neighboring cells (only the latest one around the empty cell might be clean); highlighted in orange, the seed for signature \NSpattern0.}
    \label{fig:scaling:algo:routing:A4}
\end{figure}
One can check that by immediate induction:

\begin{lemma}
At every step, the computed routing is self-supported and tight, covers all the cells inserted, and contains a potential clean edge on every available side with the exception of the \dirN-side of the initial cell $\cell(p_1)$. Furthermore, the potential clean edge of the latest available side of every empty cell is always clean. 
\end{lemma}

Theorem~\ref{thm:scaling:algo:universal:bead:type} allows then to conclude that:

\begin{theorem}
Any shape $S$ can be folded by a tight OS at scale \scaling A4.
\end{theorem}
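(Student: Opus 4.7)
The plan is to prove Theorem~\ref{thm:scaling:algo:An>=4} by the same incremental strategy already used for scalings \scaling Bn and \scaling Cn: process the cells of $S$ in the order given by a search $p_1,\ldots,p_{|S|}$, extend the routing into cell $\cell(p_i)$ starting from a clean edge exposed by some previously covered neighbor, and show that after each extension the routing is self-supported, tight, and exposes new clean edges in the right places. The conclusion will then follow from Theorem~\ref{thm:scaling:algo:universal:bead:type}, which converts any such tight, self-supported routing on a seed of size~$3$ into an OS with $114$ bead types.

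First I would fix the convention, already settled in the subsection, that at scale \scaling An with $n\geq 5$ clean edges sit at the \emph{second counterclockwise-most} edge of every available side of an occupied cell. I would then prove by induction on $i$ the invariant: after insertion of $\cell(p_i)$, the routing is self-supported and tight, covers $\cup_{j\leq i}\cell(p_j)$, and every available side of every occupied cell carries a clean edge at that position. The base case reduces to providing a routing for the isolated cell \NSpattern{0}. For the inductive step, because in \scaling An with $n\geq 5$ the support and bouncer of the clean edge we plug onto both lie in the neighboring cell, any clean edge on any available side can be used regardless of its time, so rotational symmetry cuts the $1+2^5=33$ neighborhood shapes down to the $14$ signatures listed in Fig.~\ref{fig:scaling:algo:all:CWmost:signatures}. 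For each such signature one exhibits a tight self-supported extension that consumes the chosen clean edge and installs fresh clean edges at the prescribed position on every newly available side; for $n=5,6,7$ this is done by direct inspection of the designed patterns, while for $n\geq 8$ the empty cell is a $(a,b,c,d,e,f)$-pseudohexagon with every side length at least $5$, so Theorem~\ref{thm:scaling:algo:cover:hex} directly produces the desired self-supported tight routing ending with a counterclockwise tour of the remaining sides, which automatically leaves clean edges on them.

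The case $n=4$ is genuinely trickier because the sides of each cell are only $3$ edges long, so the support of a clean edge may sit in a different cell than the edge itself. Here I would modify the invariant in two ways: (i)~instead of demanding a clean edge on every available side, demand only a \emph{potential} clean edge (i.e.\ an edge in the right position with correct orientation) on every available side, with the stronger guarantee that on the \emph{latest} available side it is actually clean; and (ii)~abandon free rotation of signatures, instead always rooting the signature on the latest available side. Chronological rooting ensures by induction that when we come to use the clean edge on a later cell, its support has already been placed by the folding. This forces us to design the full $33$ routing extensions rather than $14$, one per signature, each again chosen to be self-supported and tight and to expose a potential clean edge on every newly available side, with the latest such side actually clean.

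I expect the main obstacle to be precisely the scale~$n=4$ bookkeeping: designing all $33$ extensions so that they simultaneously (i)~consume the current clean edge on the latest side, (ii)~remain self-supported and tight in the tight and partly overlapping geometry of \scaling A4 cells, and (iii)~preserve the ``latest side carries a clean edge'' property on every newly emptied neighbor. Once these patterns are given (as in Fig.~\ref{fig:scaling:algo:routing:A4}), the inductive invariant propagates mechanically, the assumption that the \cdirN-neighbor of $\cell(p_1)$ is empty rules out the only configuration with no previously covered neighbor, and Theorem~\ref{thm:scaling:algo:universal:bead:type} delivers the promised tight OS with seed of size~$3$, uniformly for all $n\geq 4$.
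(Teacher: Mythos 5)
Your proposal matches the paper's proof essentially verbatim: the paper likewise handles scale \scaling A4 by rooting the signature on the side carrying the latest clean edge (so that an immediate induction guarantees the support is placed before the edge is used), designing all $33$ routing extensions with a potential clean edge on every available side and an actually clean edge on the latest one, and then invoking the universal $114$-bead-type theorem. Your invariant (i)--(ii) is precisely the paper's lemma for \scaling A4, so the approach is correct and not a different route.
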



\afterpage\clearpage

\newpage

\subsection{Omitted contents of subsection~\ref{sec:scaling:algo:A3}: Scale \scaling A3}
\label{sec:scaling:algo:A3:apx}

We say that an available (occupied) side of an empty cell $\cell(p)$ \emph{flows clockwise} (see Fig.~\ref{fig:scaling:algo:A3:flows:CW})if its 3 vertices $a,c,d$ (taken in clockwise order around $\cell(p)$) plus the vertex $b$ neighboring $a$ and $c$ inside the occupied neighboring cell, appear in clockwise order in the current routing, i.e. if
$$
\max(\rtime(a),\rtime(b))<\rtime(c)<\rtime(d)
$$
($a$ and $b$ may appear in any relative order). 
\begin{figure}[t]
    \centering
    \includegraphics[scale=1.5]{figs/scaleA3-CW-edge.pdf}
    \caption{At all time, the vertices $a,b,c,d$ on and inside an available side verify the clockwise order property around an empty cell:\\ 
    \hspace*{3cm} ${\max(\rtime(a),\rtime(b))<\rtime(c)<\rtime(d)}$.} 
    \label{fig:scaling:algo:A3:flows:CW}
\end{figure}
This property ensures that the edge $c\rightarrow d$ is clean ($a$ and $b$ being resp. its support and bouncer) whenever it belongs to the routing, and can thus be used to extend the path (recall Fig.~\ref{fig:scaling:algo:grow:from:clean}).

\smallskip

Regardless of the algorithm, the following invariants are valid for any sequence of cell insertions/anomaly fixing made according to Fig.~\ref{fig:scaling:algo:A3:routing}. These are proved by inspecting the extension patterns together with an immediate induction: 

\begin{invariant}\upshape
\label{inv:algo:A3}
After any sequence of cell insertions with patterns according to Fig.~\ref{fig:scaling:algo:A3:routing},
\begin{enumerate}[itemsep=2mm]
    \item
    \label{inv:cover:once:forever}
    every vertex covered once remains covered;
    \item 
    \label{inv:side:covered:CW}
    the empty sides of an empty cell are covered by the routing one after the other in clockwise order starting from the counter-clockwise side to the clockwise side of the insertion side; 
    \item 
    \label{inv:relative:order:unchanged}
    as the routing is extended by inserting a pattern on an edge, the relative order in the routing of the vertices outside the newly covered cell is unchanged by a insertion a new cell or fixing an anomaly (we consider that fixing an anomaly according to Fig.~\ref{fig:scaling:algo:A3:fix:anomalies} as a new cell insertion here);
    \item 
    \label{inv:side:without:anomaly:flows:CW}
    every available (occupied) side of an empty cell that is not marked as a time-anomaly, flows clockwise.
    \item 
    \label{inv:enters:CCW:leaves:CW}
    the routing enters the first time and leaves a cell for the last time from the same cell side (the latest side of the cell at the step of its insertion): it enters at its middle vertex and exits at the clockwise-most vertex;
\end{enumerate}
\end{invariant}

In the patterns listed in \figPatAAA, some vertices are marked with yellow or red dots, these are respectively path- and time-anomalies and they require special attention in Algorithm~\ref{algo:scaling:algo:ext:A3}.

After step of the algorithm, the current routing covers a connected set of cells. An \empty{empty area} is a connected component of not-fully-covered cells. Every empty area has a boundary which is a cycle made of neighboring cell sides (see Fig.~\ref{fig:scaling:algo:A3:boundary}).
\begin{figure}[t]
    \centering
    \includegraphics[width=.75\textwidth]{}
    \caption{\captionpar{A configuration with four empty areas:} their boundaries (outlined in green) contain exactly one time-anomaly each (the red dots) (recall that we consider the originating side of the routing (in red) as one time-anomaly).} 
    \label{fig:scaling:algo:A3:boundary}
\end{figure}
The topological lemma~\ref{lem:scaling:algo:one:boundary} page~\pageref{lem:scaling:algo:one:boundary} is the key to our result.

\begin{omittedproof}{Lemma}{lem:scaling:algo:one:boundary}
The proof relies on applying Jordan's theorem to the current routing. Consider an empty area $A$. Because its boundary is a cycle, it must contain at least one time-anomaly: indeed, according to invariant~\refInvAAA{inv:side:without:anomaly:flows:CW}, time increases clockwise along the sides of an empty area without time-anomaly; as it must decrease at some point, it must contain at least one time-anomaly. Now, consider a time-anomaly $a$ and its clockwise and earlier neighbor $e$ on the boundary. $a$ was produces by one of the patterns in \figPatAAA. We illustrate the proof with pattern~\NSpattern{1101} here (see Fig.~\ref{fig:scaling:algo:A3:one:anomaly}); the proof works identically with all patterns containing a time-anomaly, as they are all topologically identical w.r.t. this result. 
\begin{figure}[t]


    \begin{subfigure}{.42\textwidth}
    \centering
    \includegraphics[scale=.8]{figs/scaleA3-jordan1.pdf}
    \caption{The routing from $e$ to $a$ goes to the right} 
    \label{fig:scaling:algo:A3:one:anomaly:jordan1}
    \end{subfigure}
    \hfill
    \begin{subfigure}{.55\textwidth}
    \centering
    \includegraphics[scale=.8]{figs/scaleA3-jordan2.pdf}
    \caption{The routing from $e$ to $a$ goes to the left} 
    \label{fig:scaling:algo:A3:one:anomaly:jordan2}
    \end{subfigure}
    \caption{In both cases, all vertices on the boundary of the empty area $A$ must be earlier than the time-anomaly~$a$.} 
    \label{fig:scaling:algo:A3:one:anomaly}
\end{figure}
As $e$ is earlier in the routing than $a$, the routing connects $e$ to $a$ by a self-avoiding path (in red on Fig.~\ref{fig:scaling:algo:A3:one:anomaly}) that goes either (a) to the left or (b) to the right. As $a$ and $e$ are next to each other, together with the path in the pattern, they both ''seal'' this path, which thus encloses the empty area $A$ (in its outside in (a); in its inside in (b)). According to the pattern~\NSpattern{1101}, the routing must continue to the right after passing through $a$, to get back to the origin. By Jordan's theorem, the part of the routing after $a$ (in blue) is thus entirely isolated from the empty area by the red path, and the origin must lie there as well. It follows again by Jordan's theorem, that the part of the routing connecting the origin to $e$ (in green) is also isolated from the empty area by the red path. It follows that the only occupied vertices exposed at the boundary of $A$ belong to the red path. \emph{All} the vertices at the boundary of $A$ are thus \emph{earlier} than $a$. There can thus not be any other anomaly on this boundary; otherwise both anomalies would be earlier than each other.  
\qed
\end{omittedproof}

The key lemma implies that after the while loop, the empty cell is only surrounded by ``regular" edges of the boundary, which all flow clockwise by the invariant~\ref{inv:algo:A3}
 above. It follows that invariant~\ref{inv:scaling:algo:A3:CW:last} is now verified and applying the pattern corresponding to the new empty cell signature extends the routing to cover the cell while ensuring its foldability. Indeed:

\begin{omittedproof}{Corollary}{cor:scaling:algo:CW:clean}
First, observe by invariant~\refInvAAA{inv:side:without:anomaly:flows:CW}, that every edge which is not a time-anomaly flows clockwise. By the key topological lemma~\ref{lem:scaling:algo:one:boundary}, this implies that every side on the boundary of an empty cell is clean, unless the empty cell is neighboring the only time-anomaly on that boundary. Furthermore, as time increases clockwise, it also implies that the latest edge around the empty cell is not only clean but located at the clockwise end of a segment.

Let us thus first focus on time-anomalies. One can observe in the patterns in Fig.~\ref{fig:scaling:algo:A3:fix:anomalies}(b-d) that the pattern fixing a time-anomaly moves the anomaly outside the sides of the empty cell, while preserving its connectivity with an already covered and earlier cell side, ensuring that we are back to the case treated above. 

We can now assume that for the empty we want to cover, its latest side clean and is  the clockwise most of the segment. If this side is not a path-anomaly, a simple inspection of the patterns shows that any pattern can be plugged into this side safely while preserving the tight foldability of the routing. However, if the side contains a path-anomaly, then plugging the pattern, as is, might prevent the routing from folding or create an unfixable time-anomalie (see the \dirSW-path-anomaly in pattern \NSpattern{101} for instance). But, one can observe by inspecting carefully Fig.~\ref{fig:scaling:algo:A3:fix:anomalies}(b-d) that fixing a path-anomaly according to the prescribed patterns, ensures that: 
\begin{itemize}
    \item if the fixed edge is plugged into \emph{immediately afterwards}, then the routing will be foldable. For instance, observe the pattern \NSpattern{101\NSnx\dirSW} fixing the \dirSW-path-anomaly in pattern \NSpattern{101}: it cannot be folded as is, but will be foldable if an other pattern is plugged to the \dirSW opening.
    \item the fixed edge will be plugged immediately afterwards by the algorithm, because it is the latest edge of the to-be-inserted empty cell (otherwise it would not have been fixed)
\end{itemize}
It follows that fixing anomalies requires fixing at most 2 anomalies and that the resulting path is foldable and tight.
\qed
\end{omittedproof}

\begin{omittedproof}{Theorem}{thm:scaling:algo:A3}
It follows from corollary~\ref{cor:scaling:algo:CW:clean}, that every steps requires constant time computation, \emph{once we know the rank of each vertex in the current routing}. This can be maintained using your favorite balanced search tree in $O(\log(n))$ time per vertex query after $n$ cell insertions. 

Algorithm~\ref{algo:scaling:algo:ext:A3} outputs then a tight routing covering any shape in time $O(n\log n)$. This tight routing is then transformed into a delay-1 OS with seed of size 3 using the universal 114 bead types by Theorem~\ref{thm:scaling:algo:universal:bead:type} in linear time, which concludes the proof. 
\end{omittedproof}

\subsection{Examples of step-by-step construction of the routing of a shape}


\begin{figure}
    \resizebox{\textwidth}{!}{\tabMovieThirteenFrames{.15\textwidth}{flowerA}}
    \caption{The step-by-step construction of a routing folding into a flower shape at scale \scaling A3 according to the algorithm in Section~\ref{sec:scaling:algo:A3}}
    \label{fig:scaling:algo:movie:A3}
\end{figure}

\NSomitted{
\begin{figure}
    \resizebox{\textwidth}{!}{\tabMovieThirteenFrames{.15\textwidth}{flowerA2}}
    \caption{Same as Fig.~\ref{fig:scaling:algo:movie:A3} with a different bead types representation.}
    \label{fig:scaling:algo:movie:A3:other}
\end{figure}
}


\begin{figure}
    \resizebox{\textwidth}{!}{\tabMovieFourteenFrames{.15\textwidth}{anomalyA1}}
    \caption{The step-by-step construction of a routing folding into a shape at scale \scaling A3 according to Algorithm~\ref{algo:scaling:algo:ext:A3}, involving solving anomalies $\NSpattern{101} \rightarrow \NSpattern{101\protect\NSnx\dirNW} \rightarrow \NSpattern{101\protect\NSnx\dirNW\protect\NSnx\dirS}\rightarrow \NSpattern{101\protect\NSnx\dirNW\protect\NSnx\dirS\protect\NSnx\dirSW}$ in the four last steps.}
    \label{fig:scaling:algo:movie:A3:anomaly}
\end{figure}
\begin{figure}
    \resizebox{\textwidth}{!}{\tabMovieFourteenFrames{.15\textwidth}{anomalyA2}}
    \caption{The step-by-step construction of a routing folding into a shape at scale \scaling A3 according to Algorithm~\ref{algo:scaling:algo:ext:A3}, involving solving anomalies $\NSpattern{101} \rightarrow \NSpattern{101\protect\NSnx\dirSW} \rightarrow \NSpattern{101\protect\NSnx\dirSW\protect\NSnx\dirS}$ and $\NSpattern{11101} \rightarrow \NSpattern{11101\protect\NSnx\dirNW} = \NSpattern{111101}$ (rotated clockwise) in the four last steps.}
    \label{fig:scaling:algo:movie:A3:anomalyB}
\end{figure}

\NSomitted{
\begin{figure}
    \resizebox{\textwidth}{!}{\tabMovieFourteenFrames{.15\textwidth}{anomalyA2B}}
    \caption{Same as Fig.~\ref{fig:scaling:algo:movie:A3:anomalyB} with a different bead types representation.}
    \label{fig:scaling:algo:movie:A3:anomalyB:other}
\end{figure}
}

\afterpage\clearpage

\newpage

\section{Details for a shape which can be assembled at delay $\delta$ but not $<\delta$} \label{sec:sdw}
We now present the full details of the proof of Theorem~\ref{thm:small-delay-weak}.  To best fit the figures to the page, in this section we discuss configurations which are situated on a rotated triangular grid relative to $\mathbb{T}$.

\begin{figure}[htp]
\centering
\includegraphics[width=4.0in]{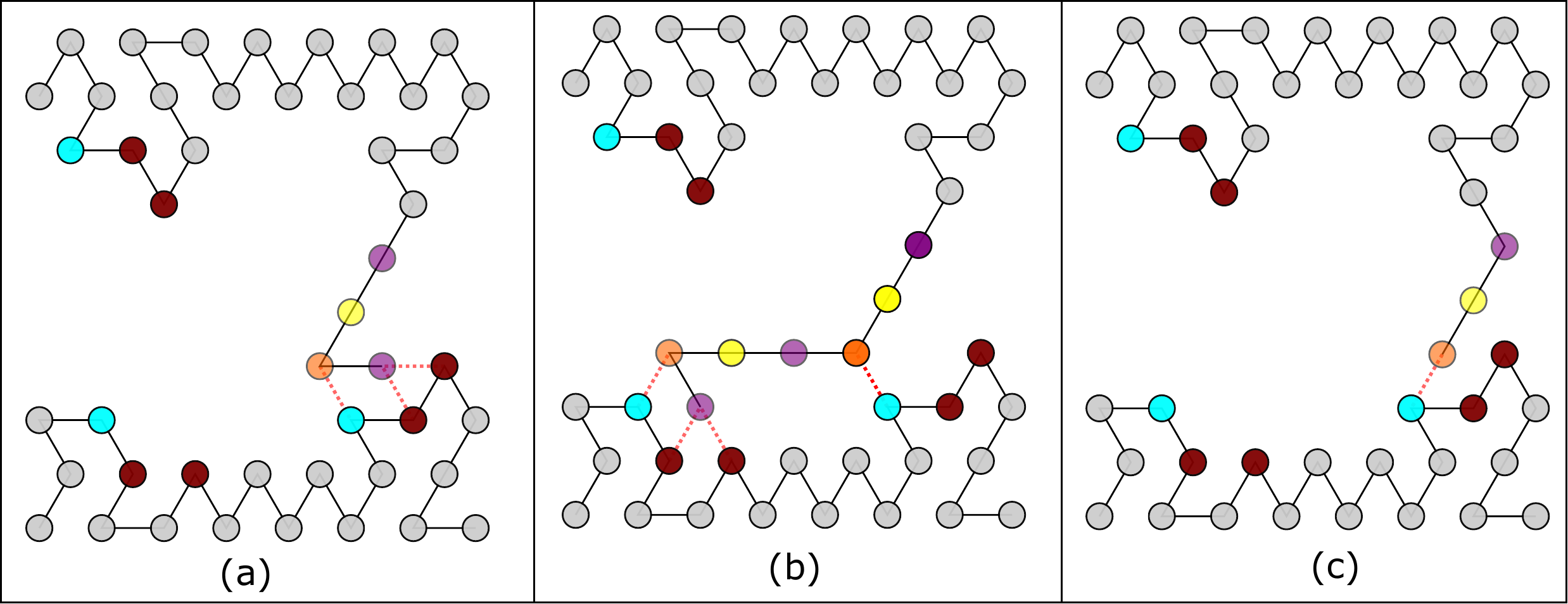}
\caption{Parts (a) and (b) show an example of how the rule set of $\Xi_{\delta}^*$ is used to assemble $R_{\delta}'$ when $\delta=4$.  Part (c) shows an incorrect configuration that results if only visible bonds are used to stabilize the purple bead.  Transparent beads represent nascent beads and opaque beads represent stabilized beads.}
\label{fig:sw-main-line}
\end{figure}

\subsection{Full description of $S_{\delta}$}
To formally describe $S_{\delta}$, we first describe a finite routing $R_{\delta}'$.  The routing is defined so that the points in the routing are a subset of $S_{\delta}$.  We then show that there exists a deterministic OS $\Xi_{\delta}^*$ whose terminal configuration $C$ has $R_{\delta}'$ as a routing.  Since there exists an OS which can trivially assemble any shape with a Hamiltonian path for any $\delta$ (by encoding it as a seed), we show a straight forward way to extend $\Xi_{\delta}^*$ to an infinite system $\Xi_{\delta}$ which assembles an infinite number of copies of the routing side-by-side.  The shape $S_{\delta}$ is then defined to be the domain of the terminal configuration of $\Xi_{\delta}$

\subsubsection{Description of $R_{\delta}'$} \label{sec:rdelta}
Algorithm~\ref{alg:build-routing} creates the routing $R_{\delta}'$ by calling a number of subroutines.  It begins with an empty routing $R$ and adds beads to the routing by calling subroutines.  Each subroutine returns a routing which is then added to $R$.  We say that the routing returned by each subroutine is a \emph{gadget}.  For example, we call the routing returned by the \texttt{LEFT-WALL} subroutine the \texttt{LEFT-WALL} gadget.  The algorithm begins by adding three beads to the routing $R$.  We call the routing of these three beads the \texttt{SEED} gadget due to the fact that it will act as the seed for the OS which assembles this routing.

Algorithm~\ref{alg:build-routing} and its subroutines make use of a few primitive subroutines which we now describe.  The \texttt{PATH} subroutine defined in Algorithm~\ref{alg:path} takes as input some length ``l'' and outputs a routing of width $2$ and ``length'' l.  An example of the output of the \texttt{PATH} subroutine is shown as the beads with a blue outline in Figure~\ref{fig:swGrowthGadgets}.  The \texttt{SMALL-BUMP} and \texttt{BIG-BUMP} subroutines are not explicitly defined due to their simple nature.  They do not take any arguments and the output of the routines is shown in Figure~\ref{fig:bumps}.  We refer to the output of either of these routines as a \texttt{BUMP} gadget.  An example of the \texttt{BUMP} gadgets are shown in Figure~\ref{fig:swGrowthGadgets} as beads with a red outline.

Algorithm~\ref{alg:build-routing} then calls the \texttt{LEFT-WALL} subroutine.  This subroutine adds $\lceil \frac{\delta(5(\delta-1)+1)}{4} \rceil$ repeating units to $R$ which consist of a \texttt{SMALL-BUMP} and a \texttt{BIG-BUMP} (the \texttt{SMALL-BUMP} and \texttt{BIG-BUMP} gadgets are shown in Figure~\ref{fig:bumps}) spaced out using \texttt{PATH} gadgets by some amount dependent on $\delta$.  The \texttt{LEFT-WALL} subroutine passes arguments to \texttt{SMALL-BUMP} and \texttt{BIG-BUMP} so that the labels of the beads in Figure~\ref{fig:bumps} have the symbol $x$ replaced by $l$ and $i \in \mathbb{N} \cap [1, \frac{6\delta}{4}]$. More specifically, these \texttt{BUMP} gadgets are spaced out so that the Euclidean distance between the point to the northeast of the $lc_i$ bead in the \texttt{BIG-BUMP} gadget and the point to the southeast of the $la_i$ bead in the \texttt{SMALL-BUMP} gadget is $\delta-1$.  The yellow beads in Figure~\ref{fig:swGrowthGadgets} show an example of the \texttt{LEFT-WALL} gadget when $\delta=4$.  Notice that the $lc_i$ and $la_i$ beads in this example correspond to the beads which are adjacent to the blue beads.  The spacing between the \texttt{BIG-BUMP} and \texttt{SMALL-BUMP} is such that $\delta$ beads can be placed in a straight line beginning from the northeast of the $lc_i$ bead and ending at the southeast of the $la_i$ bead.

The next subroutine to be called from Algorithm~\ref{alg:build-routing} is the \texttt{WIDE-TURN} subroutine.  This subroutine adds two paths to $R$ which form a wide ``V'' shape in relation to each other (as shown by the red beads in the example in Figure~\ref{fig:swGrowthGadgets}.  The length of these two paths is dependent on $\delta$.  We selected the lengths of these two paths so that after all gadgets assemble, the line of beads which stretch from the \texttt{LEFT-WALL} gadget to the \texttt{RIGHT-WALL} gadget (described below) consist of $\delta$ beads.  For an example notice in Figure~\ref{fig:swGrowthGadgets} how the length of the \texttt{WIDE-TURN} gadget (the beads shown in red) allows for a line of $4$ beads to stretch from the \texttt{LEFT-WALL} to the \texttt{RIGHT-WALL}.

After the \texttt{WIDE-TURN} gadget is added to $R$, the \texttt{SCAFFOLD} (shown in Figure~\ref{fig:swGrowthGadgets} as purple beads) and \texttt{UTURN}(shown in Figure~\ref{fig:swGrowthGadgets} as green beads) gadgets are added to $R$.  The \texttt{SCAFFOLD} gadget just consists of a long path whose length is determine by $\delta$.  The purpose of the \texttt{SCAFFOLD} and \texttt{UTURN} gadgets is to position the \texttt{RIGHT-WALL} gadget so that the \texttt{BUMP} gadgets in the \texttt{RIGHT-WALL} gadget lie in a particular position relative to the \texttt{BUMP} gadgets in the \texttt{LEFT-WALL} gadget.  The purpose for this is to allow particular beads in the \texttt{BEAD-LINE} gadget (described below) to be adjacent to exactly one bead in the \texttt{BUMP} gadgets.

The next subroutine to be called from Algorithm~\ref{alg:build-routing} is the \texttt{RIGHT-WALL} subroutine.  This subroutine adds a reflected version of the \texttt{LEFT-WALL} gadget to $R$ with a couple of caveats.   The \texttt{RIGHT-WALL} subroutine passes arguments to \texttt{SMALL-BUMP} and \texttt{BIG-BUMP} so that the labels of the beads in Figure~\ref{fig:bumps} have the symbol $x$ replaced by $r$ and $i \in \mathbb{N} \cap [1, \frac{6\delta}{4}]$.  The \texttt{RIGHT-WALL} gadget has a \texttt{PATH} gadget of length $\delta$ tacked onto the last repeating unit with a fixed size triangle adjoined to it.  An example of a \texttt{RIGHT-WALL} gadget is shown in Figure~\ref{fig:swGrowthGadgets} as grey beads.

The subroutine \texttt{BEAD-LINE} is the next subroutine to extend the routing $R$.  The routing returned by this subroutine is $\lceil \frac{\delta(5(\delta-1)+1)}{4} \rceil$ repeating units consisting of 4 lines of beads.  The first line of beads in a unit is a line of beads of length $\delta-1$ which grows to the northwest.  Attached to that is another line of beads of length $\delta-1$ which grows to the north. Next, a line of $\delta-1$ beads grow to the northeast.  Finally, another line of length $\delta-1$ beads grows to the north.  The set of blue colored beads in Figure~\ref{fig:swGrowthGadgets} depicts an example of the \texttt{BEAD-LINE} gadget when $\delta=4$.  Notice that the gadgets have been designed so that every $\delta-1$ beads in the \texttt{BEAD-LINE} gadget there is a bead which lies adjacent to exactly one \texttt{BUMP} gadget.

The last subroutine to be called by Algorithm~\ref{alg:build-routing} is the \texttt{SPACER} subroutine.  This subroutine returns a \texttt{SPACER} gadget which grows over the \texttt{UTURN} gadget using a series of \texttt{PATH} and \texttt{TURN} gadgets, and then it grows a path of length $2\delta + 4$ to the southeast.  The purpose for this \texttt{SPACER} gadget is to allow us to attach $R_{\delta}'$ routings to each other in a side-by-side manner.

\subsubsection{Algorithm~\ref{alg:build-routing} and its subroutines}

A detailed description of the shape $S_{\delta}$ is now provided.  We begin by providing some useful notation and a list of straightforward auxiliary methods.  For a finite directed path $P$, we use $|P|$ to denote the index of the last element in the sequence.  Also, $P(i)$ denotes the $i^{th}$ element in the sequence which in our case corresponds to a point in $\mathbb{Z}_{\Delta}$.  We now provide a list of simple auxiliary methods.  We define $D= (N, NE, SE, S, SW, NW)$ to be the set of \emph{directions}.  Given a direction $d$ and a point $p$, we define the \emph{$d$ neighbor of $p$} to be the point corresponding to the direction $d$ in Figure~\ref{fig:dirs}.  We denote the $d$ neighbor of $p$ by $d(p)$.  In this section, given an element of a routing $r_i = (p_i, b_i)$, we use $dom(r_i)$ to denote $p_i$, that is $dom(r_i)=p_i$.

\begin{figure}[htp]
\centering
\includegraphics[width=1.0in]{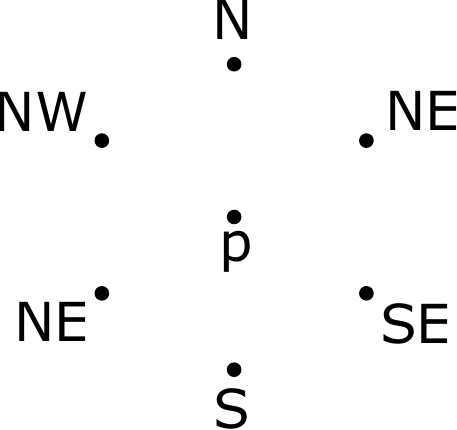}
\caption{A point $p$ and all of its $d$ neighbors for $d \in D$.}
\label{fig:dirs}
\end{figure}

Let $P$ be a directed path in $\mathbb{T}$.  We call a point $p \in \mathbb{T}$ \emph{empty with respect to $P$} provided that for all $i \in [1, n], p_i \neq p$.  The method \texttt{SHARED-NHBR($P, i, j$)} takes as input a directed path in $\mathbb{T}$, $P$, and two indices $i, j \in \mathbb{N}$.  If $1 \leq i, j \leq |P|$ and $P[i]$ and $P[j]$ have exactly one shared neighbor which is empty with respect to $P$, \texttt{SHARED-NHBR} returns this point.  Otherwise, \texttt{SHARED-NHBR} returns \texttt{null}.

We now describe some subroutines used in Algorithm~\ref{alg:build-routing} which are not explicitly defined.  The subroutine \texttt{UTURN} in Algorithm~\ref{alg:build-routing} takes as input some routing and it returns a new routing.  As shown in Figure~\ref{fig:swGrowthGadgets}, Algorithm~\ref{alg:build-routing} is designed so that the subroutine \texttt{UTURN} will always receive a routing which consists exactly of the seed (darkly shaded), the left wall (yellow), the wide turn (red) and the scaffold (purple) parts of the routing.  Note that the last position in this routing is the north most bead in the scaffold. The subroutine \texttt{UTURN} creates the routing shown in green (where the bead types are some generic labels which are unique) and returns this routing.

The \texttt{SMALL-BUMP} and \texttt{BIG-BUMP} routines called in Algorithm~\ref{alg:left-wall} and Algorithm~\ref{alg:right-wall} each take three arguments: 1) a routing $R$, 2) a symbol $x$ and 3) a number $i$.  The result of these two routines is shown in Figure~\ref{fig:bumps}.  The darkly shaded beads in both (a) and (b) of Figure~\ref{fig:bumps} represent beads that are in the routing $R$ which is passed as an argument.  The two routines extend the routing $R$ by adding the routing indicated by the lightly shaded beads.  All non-labeled beads in Figure~\ref{fig:bumps} are unique bead types with some generic labels.  The labeled beads are given bead types corresponding to their label.  For example, the routine call $\texttt{SMALL-BUMP}(R, ``r'', 1)$ would return a routing where the labeled bead types would be $ra_1$, $rb_1$ and $rb_1'$.

\begin{figure}[htp]
\centering
\includegraphics[width=3.0in]{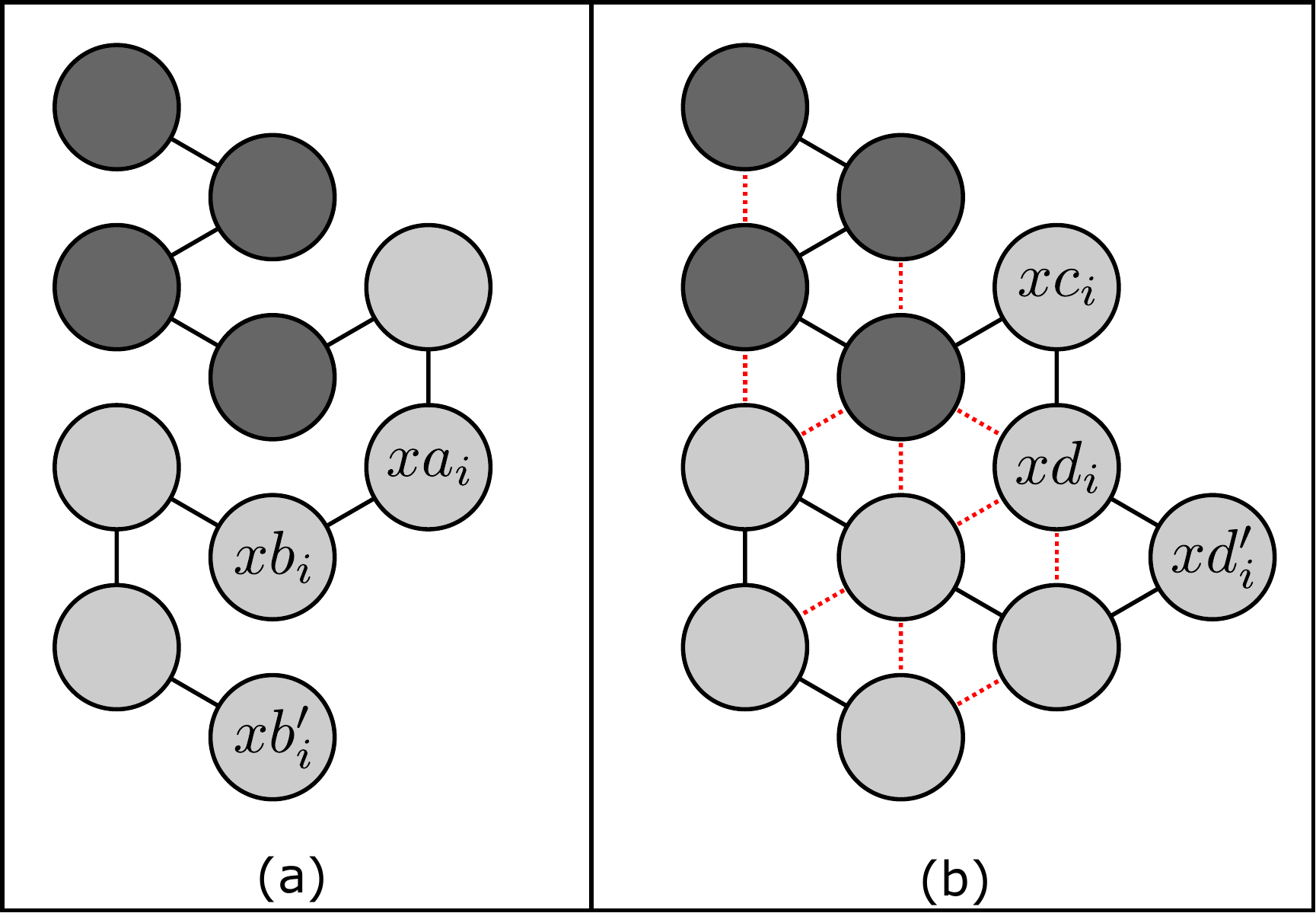}
\caption{The lightly shaded beads of (a) show the routing returned from the routine \texttt{SMALL-BUMP} and the specially labeled beads.  The lightly shaded beads of (b) show the routing returned from the routine \texttt{BIG-BUMP} and the specially labeled beads.  The red dotted lines represent attraction rules between the beads which allow them to assemble at any delay greater than two.}
\label{fig:bumps}
\end{figure}

\begin{algorithm}[H]
\caption{A procedure to build the routing $R_{\delta}'$}
\label{alg:build-routing}
\begin{algorithmic}[1]
\Procedure{\texttt{BUILD-ROUTING}}{$\delta$} \Comment{Takes $\delta \in \mathbb{N}$}
    \LineComment{Initialization}
    \State $R =\{ \}$ \Comment{An ordered list which holds the routing}
    \LineComment{Add seed to the routing}
    \State $R = R + ((0,0), b_1)$
    \State $R = R + ((\frac{-1}{2}, \frac{-\sqrt{3}}{2}), b_{2})$
    \State $R = R + ((0, -1), b_{3})$

    \LineComment{Add the yellow bead portion of the routing shown in Figure~\ref{fig:swGrowthGadgets}}
    \State $R = R + \texttt{LEFT-WALL}(R, \delta)$

    \LineComment{Add the red bead portion of the routing shown in Figure~\ref{fig:swGrowthGadgets}}
    \State $R = R + \texttt{WIDE-TURN}(R, \delta)$

    \LineComment{Add the purple portion of the routing shown in Figure~\ref{fig:swGrowthGadgets}.  We call this part of the routing the scaffold.}
    \State $R = R + \texttt{PATH}(R, NI(2(\delta-3)+4+(\delta-1))+\delta)$

    \LineComment{Add the green portion of the routing shown in Figure~\ref{fig:swGrowthGadgets}}
    \State $R = R + \texttt{UTURN}(R)$ \Comment{This routine is not explicitly defined. See text for details.}

    \LineComment{Add the grey bead portion of the routing shown in Figure~\ref{fig:swGrowthGadgets}}
    \State $R = R + \texttt{RIGHT-WALL}(R, \delta)$

    \LineComment{Add the blue portion of the routing shown in Figure~\ref{fig:swGrowthGadgets}}
    \State $R = R + \texttt{BEAD-LINE}(R, \delta)$

    \LineComment{Add the orange bead portion of the routing shown in Figure~\ref{fig:swGrowthGadgets}}
    \State $R = R + \texttt{SPACER}(R, \delta)$

\State \textbf{return} $R$
\EndProcedure
\end{algorithmic}
\end{algorithm}

\begin{algorithm}[H]
\caption{A procedure to build the \texttt{LEFT-WALL} gadget of $R_{\delta}'$ (shown as the yellow filled beads in Figure~\ref{fig:swGrowthGadgets}). }
\label{alg:left-wall}
\begin{algorithmic}[1]
\Procedure{\texttt{LEFT-WALL}}{$R$, $\delta$} \Comment{Takes a routing $R$ and $\delta \in \mathbb{N}$}
    \State $S = \{ \}$ \Comment{Routing to be returned from this procedure.}
    \State $sc = \lceil \frac{\delta(5(\delta-1)+1)}{4} \rceil$
    \For {$sc > 0$}
        \State $S = \texttt{PATH}(R, 2(\delta-3))$
        \State $S = S + \texttt{SMALL-BUMP}(S, \text{``l''}, sc)$ \Comment{This routine is not explicitly defined. See text for details.}
        \State $S = S + \texttt{PATH}(S, \delta-1)$
        \State $S = S + \texttt{BIG-BUMP}(S, \text{``l''}, sc)$ \Comment{This routine is not explicitly defined. See text for details.}
    \EndFor
\State \textbf{return} $S$
\EndProcedure
\end{algorithmic}
\end{algorithm}

\begin{algorithm}[H]
\caption{A procedure to build the \texttt{WIDE-TURN} gadget of $R_{\delta}'$ (shown as the red filled beads in Figure~\ref{fig:swGrowthGadgets}). }
\label{alg:wide-turn}
\begin{algorithmic}[1]
\Procedure{\texttt{WIDE-TURN}}{$R$, $\delta$} \Comment{Takes a routing $R$ and $\delta \in \mathbb{N}$}
    \State $S = \{ \}$ \Comment{Routing to be returned from this procedure.}
    \State $S = \texttt{PATH}(R, 2)$
    \State $S = S + \texttt{TURN}(S)$
    \State $S = S + \texttt{PATH}(S, \delta+1)$
    \State $S = S + \texttt{TURN}(S)$
    \State $S = S + \texttt{PATH}(S, \delta+1)$
    \State $S = S + \texttt{TURN}(S)$
\State \textbf{return} $S$
\EndProcedure
\end{algorithmic}
\end{algorithm}

\begin{algorithm}[H]
\caption{A procedure to build the \texttt{RIGHT-WALL} of $R_{\delta}'$ (shown as the grey filled beads in Figure~\ref{fig:swGrowthGadgets}). }
\label{alg:right-wall}
\begin{algorithmic}[1]
\Procedure{\texttt{RIGHT-WALL}}{$R$, $\delta$} \Comment{Takes a routing $R$ and $\delta \in \mathbb{N}$}
    \State $S = \{ \}$ \Comment{Routing to be returned from this procedure.}
    \State $sc = \lceil \frac{\delta(5(\delta-1)+1)}{4} \rceil$
    \For {$sc > 0$}
        \State $S = \texttt{PATH}(R, 2(\delta-3))$
        \State $S = S + \texttt{SMALL-BUMP}(S, \text{``r''}, sc)$ \Comment{This routine is not explicitly defined. See text for details.}
        \State $S = S + \texttt{PATH}(S, \delta-1)$
        \State $S = S + \texttt{BIG-BUMP}(S, \text{``l''}, sc)$ \Comment{This routine is not explicitly defined. See text for details.}
    \EndFor
    \State $S = S + \texttt{PATH}(S, \delta)$
    \State $S = S + \texttt{TURN}(S)$
    \State $S = S + \texttt{PATH}(S, 1)$
\State \textbf{return} $S$
\EndProcedure
\end{algorithmic}
\end{algorithm}

\begin{algorithm}[H]
\caption{A procedure to build the \texttt{BEAD-LINE} gadget of $R_{\delta}'$ (shown as the blue filled beads in Figure~\ref{fig:swGrowthGadgets}). }
\label{alg:bead-line}
\begin{algorithmic}[1]
\Procedure{\texttt{BEAD-LINE}}{$R$, $\delta$} \Comment{Takes a routing $R$ and $\delta \in \mathbb{N}$}
    \State $S = R[|R|]$ \Comment{$S$ gets the last bead in the routing $R$}
    \State $NI = \lceil \frac{\delta(5(\delta-1)+1)}{4} \rceil$
    \State $S = S + (NW(dom(S[|S|])), ca_{i})$
    \For{$0 < i \leq NI$}
        \For{$0 \leq j < \delta -3$}
            \State $S = S + (NW(dom(S[|S|])), l_{|S|+1})$ \Comment{Recall this denotes the point to the northwest of the last position in R}
        \EndFor

        \State $S = S + (NW(dom(S[|S|])), cb_{i})$
        \State $S = S + (N(dom(S[|S|])), cc_{i})$

        \For{$0 \leq j < \delta -3$}
            \State $S = S + (N(dom(S[|S|])), l_{|S|+1})$
        \EndFor

        \State $S = S + (NW(dom(S[|S|])), cd_{i})$
        \State $S = S + (N(S[|S|]), ce_{i})$

        \For{$0 \leq j < \delta -3$}
            \State $S = S + (NE(dom(S[|S|])), l_{|S|+1})$
        \EndFor

        \State $S = S + (NW(dom(S[|S|]), cf_{i})$
        \State $S = S + (N(dom(S[|S|])), cg_{i})$

        \For{$0 \leq j < \delta -2$}
            \State $S = S + (N(dom(S[|S|])), l_{|S|+1})$
        \EndFor
        \State $S = S + (N(dom(S[|S|])), ch_{i})$
    \EndFor
\State \textbf{return} $S$
\EndProcedure
\end{algorithmic}
\end{algorithm}

\begin{algorithm}[H]
\caption{A procedure to build the \texttt{SPACER} gadget of $R_{\delta}'$ (shown as the blue filled beads in Figure~\ref{fig:swGrowthGadgets}) to the routing. }
\label{alg:spacer}
\begin{algorithmic}[1]
\Procedure{\texttt{SPACER}}{$R$, $\delta$} \Comment{Takes a routing $R$ and $\delta \in \mathbb{N}$}
    \State $S = \{\}$ \Comment{Routing to be returned from this procedure.}
    \State $S = (N(dom(R[|R|])), s_1)$
    \State $S = S + (N(dom(R[|R|])), s_2)$
    \State $S = S + (N(dom(S[|S|])), s_3)$
    \State $S = S + (NE(dom(S[|S|])), s_4)$
    \State $S = S + \texttt{PATH}(S, 2(\delta-3) + 2)$
    \State $S = S + \texttt{TURN}(S)$
    \State $S = S + \texttt{PATH}(S, 2)$
    \State $S = S + \texttt{TURN}(S)$
    \State $S = S + \texttt{PATH}(S , 2\delta + 4)$
    \State $S = S + \texttt{TURN}(S)$
\State \textbf{return} $S$
\EndProcedure
\end{algorithmic}
\end{algorithm}

\begin{algorithm}[H]
\caption{A procedure to build the \texttt{PATH} gadgets of $R_{\delta}'$ (shown as the beads with a blue outline in Figure~\ref{fig:swGrowthGadgets}). }
\label{alg:path}
\begin{algorithmic}[1]
\Procedure{\texttt{PATH}}{$R$, l} \Comment{Takes a routing $R$ and a length $l \in \mathbb{N}$}
    \State $S = \{\}$ \Comment{Routing to be returned from this procedure.}
    \State $S = S + R[|R|-1]$ \Comment{Add the second to the last bead in $R$ to $S$}
    \State $S = S + R[|R|]$   \Comment{Add the last bead in $R$ to $S$}

    \For{$0 \leq i < l$}
        \State $S = S + (\texttt{SHARED-NHBR}(S, dom(S[|S|-1]), dom(S[|S|])), b_{|R|+|S|+1})$ \Comment{Add a bead with a unique generic type in the location next to the previous two beads.}
        \State $S = S + (\texttt{SHARED-NHBR}(S, dom(S[|S|-1]), dom(S[|S|])), b_{|R|+|S|+1})$ \Comment{Add a bead with a unique generic type in the location next to the previous two beads.}
    \EndFor

\State \textbf{return} $S$
\EndProcedure
\end{algorithmic}
\end{algorithm}

\begin{algorithm}[H]
\caption{A procedure to build the \texttt{TURN} gadgets of $R_{\delta}'$ (shown as the beads with a green outline in Figure~\ref{fig:swGrowthGadgets}). }
\label{alg:turn}
\begin{algorithmic}[1]
\Procedure{\texttt{TURN}}{$R$} \Comment{Takes a routing $R$}
\State \textbf{return} $(\texttt{SHARED-NHBR}(R, dom(R[|S|-2]), dom(R[|R|])), b_{|R|+1})$
\EndProcedure
\end{algorithmic}
\end{algorithm}

\subsubsection{A system at delay $\delta$ which assembles $R_{\delta}'$}\label{sec:finite-system}
\begin{figure}[htp]
\centering
\includegraphics[width=3.0in]{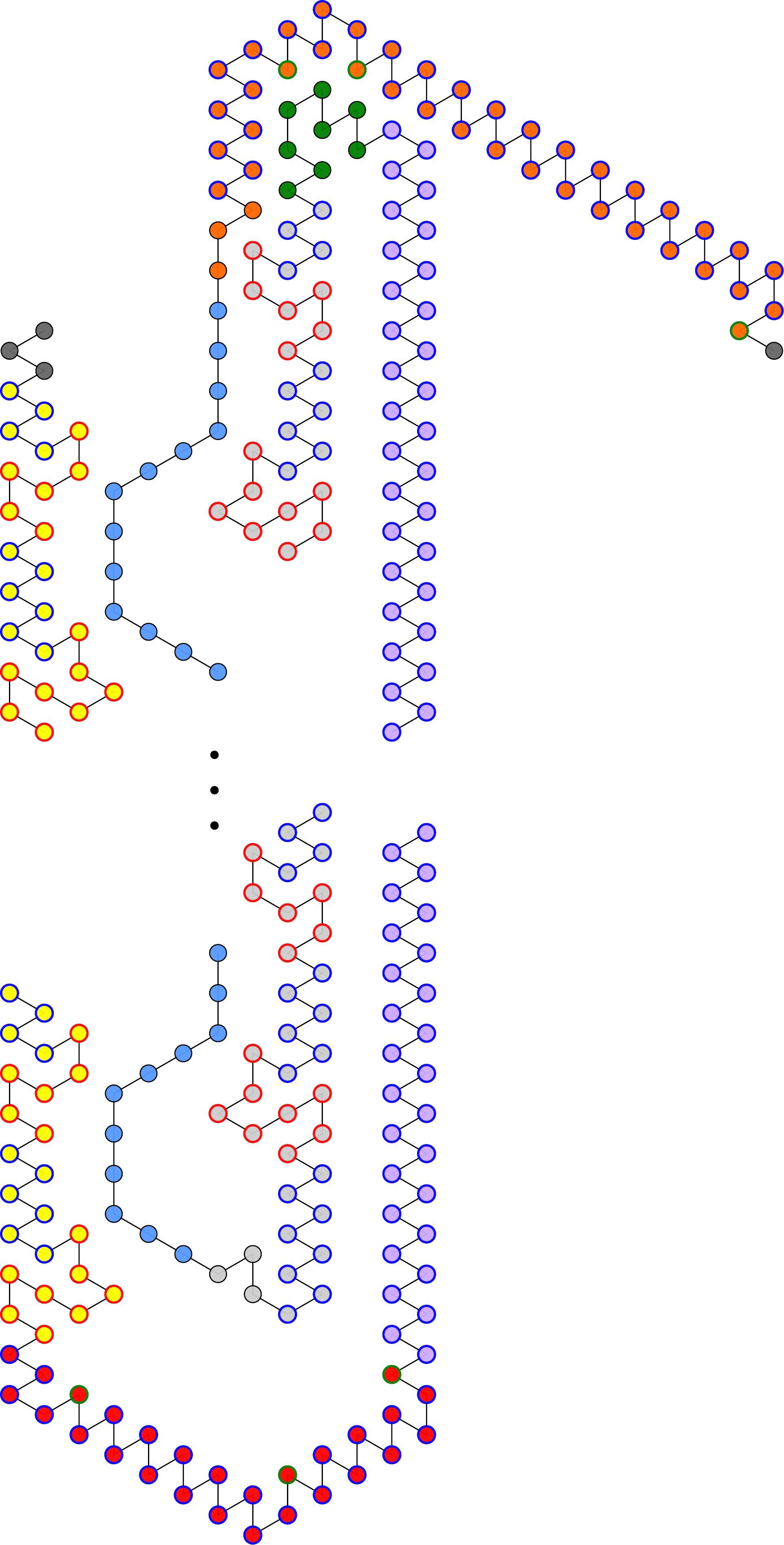}
\caption{An example of the routing $R_{\delta}'$ when $\delta=4$.  The beads are colored according to the routines from which they were returned.}
\label{fig:swGrowthGadgets}
\end{figure}

We now show that there exists an OS at delay $\delta$ which assembles the routing $R_{\delta}'$ described in Section~\ref{sec:rdelta}.  Formally, we say that a system $\Xi$ assembles a routing $R$ if for every $C \in \termasm{\Xi}$ the routing of $C$ is $R$.  Likewise, we say that a system $\Xi$ assembles a directed path $P'$ provided that for every $C = (P, w, H) \in \termasm{\Xi}$, $P' = P$.

\begin{lemma} \label{lem:shapeConditions}
Let $P = p_1 p_2 ... p_n$ be a  finite directed path in $\mathbb{T}$ with the property that there exists $j \in [4, n]$ such that for all $k \geq j$, $p_k$ has two neighbors $p_r, p_s \in \{p_1, ..., p_{k-1} \}$ such that there is exactly one empty point with respect to the directed path $p_1 p_2 ...  p_{k-1}$ which is adjacent to both $p_r$ and $p_s$.  Then, for every $\delta \in \mathbb{N}$, there exists an oritatami system with a seed of size $j$ and delay $\delta$ which assembles $P$.
\end{lemma}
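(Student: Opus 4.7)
The plan is to hard-code the path $P$ into a deterministic oritatami system whose seed fixes the first $j$ vertices and whose transcript feeds the remaining $n-j$ vertices one at a time, with attraction rules engineered so that the hypothesis geometrically forces each new bead to snap to its prescribed location, regardless of $\delta$. Concretely, I would set $B = \{b_1, \ldots, b_n\}$ with all types distinct, take as transcript the word $w = b_1 b_2 \cdots b_n$, and use as seed $\sigma$ the configuration placing $b_i$ at $p_i$ for each $1 \leq i \leq j$ together with every potential bond among those beads (so $|\sigma| = j$). For each $k \in \{j+1, \ldots, n\}$ the hypothesis supplies two already-placed neighbors $p_{r_k}, p_{s_k}$ of $p_k$ such that $p_k$ is the unique empty point adjacent to both; I then add precisely the pairs $b_k \heart b_{r_k}$ and $b_k \heart b_{s_k}$ to the (symmetric) rule and nothing else, and take arity $\alpha = 6$.

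Correctness is an induction on the step count $s \in \{0, 1, \ldots, n-j\}$, with the claim that after $s$ folding steps the current configuration is the unique length-$(j+s)$ prefix of $P$ with all potential bonds realised. Assuming this for $s-1$ and setting $k = j+s$, I need to prove that $\Dynamics$ uniquely places $b_k$ at $p_k$. The core observation is a \emph{bond budget} per bead: because the rule attaches $b_m$ to exactly two other types, and bonds require an index gap $>1$, the number $\beta_m$ of bonds $b_m$ contributes when placed at $p_m$ lies in $\{1,2\}$, with $\beta_m = 2$ iff both $r_m, s_m \leq m-2$; the value $0$ is excluded because $r_m \neq s_m$ forbids both from being $m-1$. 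Moreover $\beta_m$ is also an upper bound on the bonds $b_m$ can form in \emph{any} placement, since it has only two attractors. The uniqueness clause of the hypothesis then forces any alternative placement of $b_m$ to contribute at most $\beta_m - 1$ bonds, because no other empty point is simultaneously adjacent to both $p_{r_m}$ and $p_{s_m}$.

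Applying this bead budget to the $\delta$-elongations of the two kinds of $1$-elongation at step $k$, the correct $1$-elongation extends to a $\delta$-elongation along $P$ scoring $\sum_{l=0}^{\min(\delta,\, n-k+1)-1} \beta_{k+l}$ new bonds, whereas \emph{every} $\delta$-elongation starting with $b_k$ misplaced scores at most $(\beta_k - 1) + \sum_{l=1}^{\min(\delta,\, n-k+1)-1} \beta_{k+l}$, exactly one fewer. Hence $\Dynamics$ returns the singleton whose routing extends $c$ by $p_k$, closing the induction, and the terminal configuration has routing exactly $P$.

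The main obstacle to anticipate is the delay-$\delta$ global-vs-local tension: for $\delta>1$, the dynamics might in principle trade the immediate bond deficit at $b_k$ for extra downstream bonds. The sparsity of $\heart$ is precisely what kills this: each future bead $b_{k+l}$ still has only two attractors, so its contribution is capped at $\beta_{k+l}$ in the misplaced branch just as in the correct one, and the single-bond deficit at $b_k$ cannot be recovered later. One corner case deserves explicit verification, namely $r_k = k-1$: since $p_k'$ must neighbor $p_{k-1} = p_{r_k}$ yet by uniqueness cannot also neighbor $p_{s_k}$, the misplaced bead contributes $0$ bonds whereas the correct placement contributes $\beta_k = 1$, preserving the strict gap.
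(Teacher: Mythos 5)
Your proposal is correct and follows essentially the same route as the paper: a hard-coded transcript of pairwise-distinct bead types, a rule set pairing each $b_k$ with its two designated earlier neighbors, and an induction showing that the uniqueness of the empty point adjacent to both forces the correct placement to realize strictly more bonds than any alternative, uniformly over the $\delta$-window. The only cosmetic difference is bookkeeping: where you charge each bond to its higher-indexed endpoint and compare explicit per-bead budgets $\beta_m$, the paper packages the same inequality as a ``saturated elongation'' argument (the correct elongation attains the geometry-free bond maximum, misplacements cannot), and it omits the rule to the immediate predecessor rather than adding it and discounting it as you do.
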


The intuition behind the proof is as follows.  We construct an OS $\Xi = (\Sigma, w, \calH, \delta, \alpha)$ with seed $\sigma$ and $\alpha=5$ based on $P$ by first creating a hard-coded sequence of beads (i.e. every bead in the sequence is unique).  We then construct $\calH$ so that if $p_r$ or $p_s$ are equal to $p_{i-1}$, WLOG let's assume $p_r=p_{i+1}$, we add $\{p_k, p_s\}$ to $\calH$.  If that's not the case we add the rules $\{p_k, p_s\}$ and $\{p_k, p_r\}$ to $\calH$.  We argue that this assembles by inductively showing that each configuration $C_{i+1}$ in the assembly sequence stabilizes bead type $w[i+1]$ in the correct position.  To see this, note that there exists a favorable elongation of $C_i$ which stabilizes bead type $w[i+1]$ in the correct position and this elongation makes every bond possible between the beads.  Any elongation which places the bead type $w[i+1]$ in the incorrect position must ``break a bond'' to do so, and, consequently must not be a favorable configuration.

\begin{proof}
Let $P = p_1 p_2 ... p_n$ be a  finite directed path in $\mathbb{T}$ with the property that there exists $j \in [4, n]$ such that for all $k \geq j$, $p_k$ has two neighbors $p_r, p_s \in \{p_1, ..., p_{k-1} \}$ such that there is exactly one empty point with respect to the directed path $p_1 p_2 ...  p_{k-1}$ which is adjacent to both $p_r$ and $p_s$.  Also, let $\delta \geq 2$.

We now describe a system $\Xi$ which we claim can assemble $P$.  Let $\Xi = (\Sigma, w, \calH, \delta, \beta, \alpha, \sigma)$ where
\begin{itemize}
  \item $\Sigma = \{b_i | i \in [1, |P|] \}$,
  \item $w = (b_i)_{i=j}^{|P|}$,
  \item $\beta = 1$,
  \item $\alpha = 5$,
  \item $\sigma$ is the configuration $((p_i)_{i=1}^{j-1}, (b_i)_{i=1}^{j-1}, \emptyset)$.
\end{itemize}
To generate $\calH$ we iterate over $i \in [j, |P|]$ and consider two cases for each $i$.  The first case we consider is that there exists $p_{i-1}$ and $p_l$ for some $l \in [1, i-2]$ such that they are both adjacent to $p_i$ and there is exactly one empty point with respect to the directed path $p_1 p_2 ...  p_{i-1}$ which is adjacent to both $p_{i-1}$ and $p_l$. If more than one point in $P$ satisfies this condition for $p_l$, we chose the one with the lowest index as a convention.  In this case, we add the rule $(b_{i-1}, b_l)$ to $\calH$\footnote{Though we do not explicitly state it, when we add rule $(a,b)$ to the set $\calH$, we also add the rule $(b,a)$ to ensure that $\calH$ defines a symmetric relation.}.  In the case that no such $p_l$ exists, then it must be the case that exists $p_r, p_s \in \{p_1, ..., p_{k-1} \}$ such that there is exactly one empty point with respect to the directed path $p_1 p_2 ...  p_{k-1}$ which is adjacent to both $p_r$ and $p_s$.  If multiple such $p_r$ and $p_s$ exist we chose the pair of indices with the lowest lexicographical ordering as a convention.  In this case, we add the rules $(b_{i}, b_s)$ and $(b_i, b_r)$ to $\calH$.  See Figure~\ref{fig:con-shape-cases} for an example of the two cases we consider.

\begin{figure}[htp]
\centering
\includegraphics[width=4.0in]{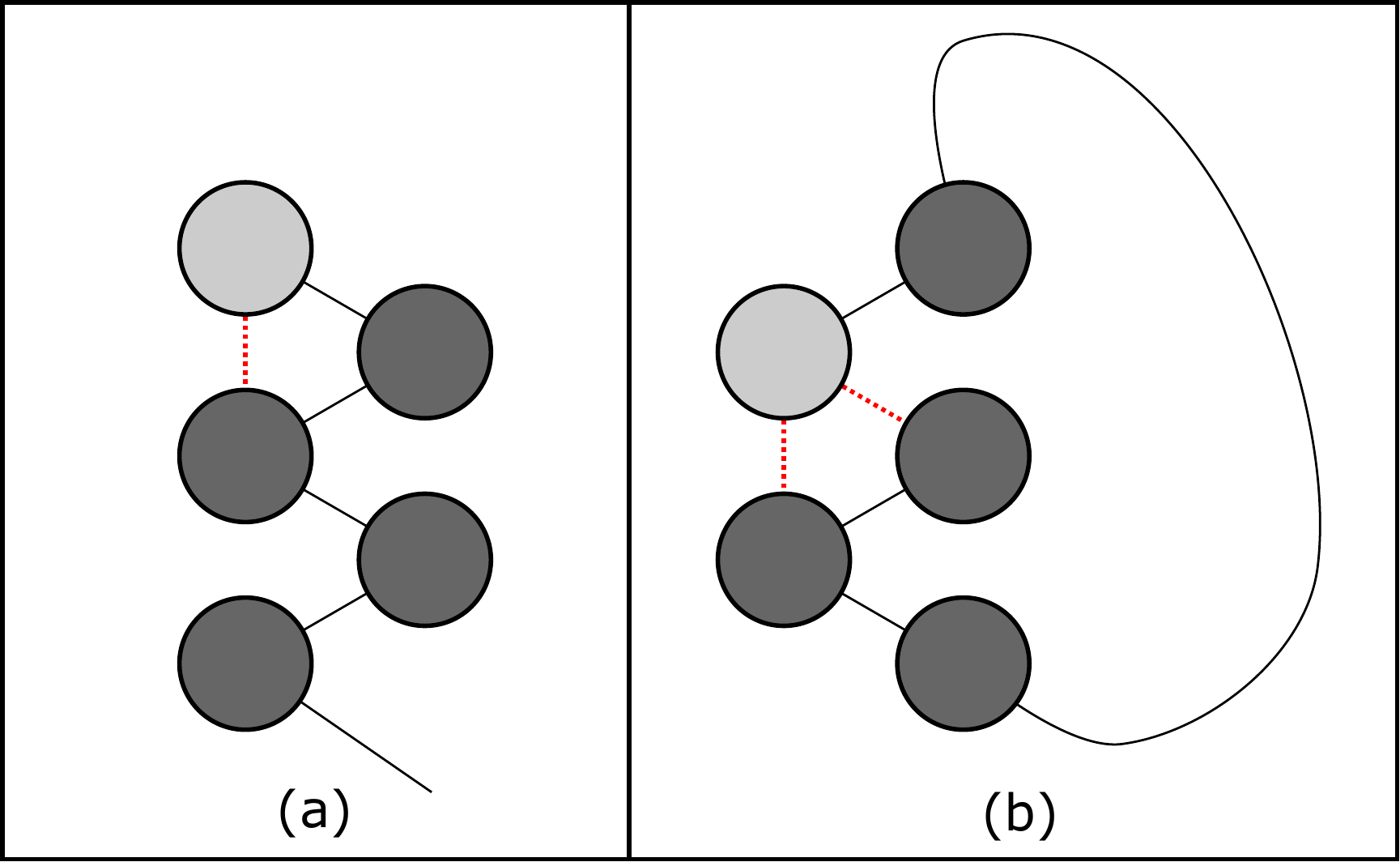}
\caption{The two cases we consider in the proof of Lemma~\ref{lem:shapeConditions}.  Part (a) corresponds to the first case in the proof and part (b) corresponds to the second case of the proof.}
\label{fig:con-shape-cases}
\end{figure}

Let $\alpha \in [1,5]$, let $H$ be a rule set, let $\Sigma$ be a set of bead types and let $C = (P, w, H)$ be an $\calH$-valid configuration where $w \in \Sigma^*$.  Also, let $t \in \Sigma^*$.  Recall that $\calP_{\calH, \alpha}^{\leq t}(C)$ is the set of $\calH$-valid elongations by prefixes of $t$.  We call a configuration $C' = (P', w', H') \in \calP_{\calH, \alpha}^{\leq t}(C)$ a \emph{saturated $\calH$-valid $\alpha$-$\delta$-elongation of $C$ by $t$} provided that for all other configurations $C^* = (P^*, w^*, H^*) \in \calP_{\calH, \alpha}^{\leq t}(C)$, $H^* \subseteq H'$.  Intuitively, a configuration $C$ is saturated provided that even if we made a ``configuration'' $C'$ by allowing bonds to form between nascent beads which are not adjacent (that is we remove geometry), $C'$ would have the same bonds as $C$.

\begin{observation} \label{obs:sat}
Suppose that there exists a saturated $\calH$-valid $\alpha$-$\delta$-elongation of $C$ by $t$.  Then every favorable $\alpha$-$\delta$-elongation of $C$ is saturated.
\end{observation}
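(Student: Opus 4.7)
The plan is to derive Observation~\ref{obs:sat} in two short bookkeeping steps on bond sets.

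First, I would note that any saturated elongation is automatically favorable. Indeed, if $C' = (P', w', H')$ is saturated, then for every $\calH$-valid elongation $C^* = (P^*, w^*, H^*) \in \calP_{\calH,\alpha}^{\leq t}(C)$ the definition gives $H^* \subseteq H'$, hence $|H^*| \le |H'|$. So $C'$ attains the maximum bond count over $\calP_{\calH,\alpha}^{\leq t}(C)$, which is exactly the definition of favorable.

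Next, I would let $C^{**} = (P^{**}, w^{**}, H^{**})$ be an arbitrary favorable elongation of $C$ by $t$. Since both $C^{**}$ and $C'$ maximize the bond count over the same set $\calP_{\calH,\alpha}^{\leq t}(C)$, their bond sets have equal cardinality $|H^{**}| = |H'|$. But saturation of $C'$ also gives $H^{**} \subseteq H'$, and an inclusion between finite sets of equal cardinality forces $H^{**} = H'$. Now for any elongation $C^* \in \calP_{\calH,\alpha}^{\leq t}(C)$, saturation of $C'$ yields $H^* \subseteq H' = H^{**}$, which is precisely the saturation condition for $C^{**}$.

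I do not anticipate a genuine obstacle: the proof is a pure comparison of bond sets and never needs to unpack how the routings $P^{**}$ and $P'$ relate geometrically. The only subtlety worth flagging is that $C^{**}$ and $C'$ may differ as configurations (distinct routings or even distinct sequence lengths among the prefixes of $t$) while sharing the same bond set; since both favorability and saturation are defined purely through the bond-set cardinality and set-inclusion on $H$, this causes no trouble in the argument.
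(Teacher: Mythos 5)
Your proof is correct: the paper states Observation~\ref{obs:sat} without any argument, and your two-step verification (a saturated elongation is favorable because its bond set contains, hence is at least as large as, every competitor's; and any favorable elongation's bond set is sandwiched between inclusion in and equal cardinality with the saturated one's, forcing equality and hence saturation) is exactly the routine bookkeeping the authors implicitly rely on. Your closing remark is also apt --- the argument only ever compares bond sets $H$, so distinct routings among favorable elongations cause no difficulty.
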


To prove $\Xi$ assembles $P$ we inductively show that at each step in the folding process, the configuration which stabilizes the next bead in the correct position is the one and only favorable configuration.  Let $C \in \termasm{\Xi}$, and let $\vec{C}=(C_i)_{i=0}^{l}$ be a foldable sequence such that $res(\vec{C})=C$.  For the base case, note that $C_0=(P_0, w_0, \emptyset)$, which is the seed, is such that $P_0$ is a prefix of $P$.  For the inductive step, assume $C_i=(P_i, w_i, H_i)$ is such that $P_i$ is a prefix of $P$.  We now show that the configuration $C_{i+1} = (P_{i+1}, w_{i+1}, H_{i+1})$ stabilizes bead type $w[i+1]$ at position $p[i+1]$.  Consequently, this means $P_{i+1}$ is a prefix of $P$.  To see this, first note that by construction of $\Xi$, there exists $C^* = (P^*, w^*, H^*) \in \calF_{\calH, \alpha}^{\leq w[i ..i+\delta-1]}$ such that $P^*$ is a prefix of $P$.  This is due to the fact that the set $H_{sat} = \{\{l, h\} \mid \{w[l], w[h]\} \in \calH \text{ and } l \in [i, i+\delta-1] \text{ or } h \in [i, i+\delta-1]\}$ is a subset of $H^*$.  In other words, the beads in the ``nascent portion'' of $C^*$ make every bond that they possibly can.  Thus, $C^*$ is a favorable configuration.  Also, note that the fact $H_{sat} \subseteq H^*$ implies that $C^*$ is saturated.

We now show that any configuration $\bar{C}$ which does not stabilize bead type $w[i+1]$ at location $p_{i+1}$ is not saturated.  It then follows from Observation~\ref{obs:sat} that $\bar{C}$ cannot be a favorable configuration (since there exists a saturated elongation).  Thus, the only favorable configurations are those which stabilize $w[i+1]$ at location $p_{i+1}$.

Let $\bar{C}=(\bar{P}, \bar{w}, \bar{H}) \in \calP_{\calH, \alpha}^{\leq w[i .. i+ \delta-1]}(C)$ be a configuration such that $\bar{P}(i+1 - k) \neq p_{i+1}$ (the $-k$ expression appears due to the offset caused by the seed).  If $p_{i+1}$ falls into the first case listed above, then there was a single rule $\{b_{i}, b_l\}$ added to $\calH$ for some $l \in [1,i-2]$.  Note that in any configuration it is necessary that bead $b_i$ is adjacent to bead $b_{i+1}$ due to the fact they are next to each other on the transcript. Observe that two adjacent points in $\calT$ have exactly two common neighbors.  It now follows from the assumption that there is exactly one empty point with respect to the directed path $p_1 p_2 ...  p_{i-1}$ which is adjacent to both $p_{i-1}$ and $p_l$ that if the bead $b_{i+1}$ is stabilized in the incorrect position, it must be stabilized in a way such that it is not adjacent to the bead type $b_l$.  Consequently, $\{w[i+1], w[l] \} \notin \bar{H}$.  Hence, the configuration $\bar{C}$ is not saturated.  A similar argument shows that the configuration $\bar{C}$ is not saturated in the event $p_{i+1}$ falls into the second case mentioned above.
\end{proof}

\begin{lemma}
There exists a deterministic oritatami system with delay $\delta$ which assembles $R_{\delta}'$.
\end{lemma}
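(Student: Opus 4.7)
The plan is to invoke Lemma~\ref{lem:shapeConditions}. Since $R_{\delta}'$ begins with the three-bead SEED gadget and is otherwise extended one bead at a time by the subroutines of Algorithm~\ref{alg:build-routing}, it suffices to verify the geometric hypothesis of that lemma for the path underlying $R_{\delta}'$: for every bead placed after the SEED, its position $p_k$ has two earlier neighbors $p_r, p_s$ whose unique empty common neighbor (with respect to $p_1 \ldots p_{k-1}$) is $p_k$ itself. Once this is established, Lemma~\ref{lem:shapeConditions} immediately produces, for every $\delta$, a deterministic OS with delay~$\delta$ that assembles $R_{\delta}'$.

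For the ``easy'' gadgets the verification is by direct inspection of the local geometry. Inside a PATH gadget the routing performs a width-two zigzag, so each new bead $p_k$ has both $p_{k-1}$ and $p_{k-2}$ as earlier neighbors, and the other common neighbor of this pair in $\mathbb{T}$ is exactly the previously placed bead of the companion row, which pins $p_k$ uniquely. The TURN, WIDE-TURN, SCAFFOLD, SPACER, UTURN, SMALL-BUMP and BIG-BUMP gadgets are each of constant size, and the required pair $(p_r,p_s)$ together with the already-occupied ``other'' common neighbor can simply be read off Fig.~\ref{fig:bumps} and Fig.~\ref{fig:swGrowthGadgets}. The LEFT-WALL and RIGHT-WALL gadgets are repeated concatenations of PATH, SMALL-BUMP and BIG-BUMP gadgets and so reduce to the previous cases.

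The hard part is the BEAD-LINE gadget. It is a single-bead-wide zigzag of straight runs of length $\delta-1$ threaded through the cave of width~$\delta$ enclosed by the two walls; in the middle of a straight run the only earlier BEAD-LINE bead neighboring $p_k$ is $p_{k-1}$, so the second neighbor $p_s$ must come from outside BEAD-LINE. Here the parameters chosen by Algorithm~\ref{alg:build-routing} (the number $\lceil \delta(5(\delta-1)+1)/4 \rceil$ of wall repeats, and the PATH length $\delta-1$ between SMALL-BUMP and BIG-BUMP) are tuned so that every BEAD-LINE vertex is adjacent either to a WALL bead previously placed by an interior PATH of a wall, or to one of the labeled beads $la_i,lb_i,lc_i,ra_i,rb_i,rc_i$ of a BUMP; moreover, in each case the \emph{other} common neighbor of $p_{k-1}$ and $p_s$ is itself already covered by the wall. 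The main obstacle is precisely this case analysis: I would tabulate, for each of the four BEAD-LINE run directions (\dirNW, \dirN, \dirNE, \dirN) and each offset modulo the wall period, the pair $(p_r,p_s)$ and check that the second apex of the triangle on $p_r p_s$ is occupied, and then handle the corner beads $ca_i,cb_i,\ldots,ch_i$ at the direction changes individually, using the BUMP labels as second neighbors. Once this bookkeeping is completed, Lemma~\ref{lem:shapeConditions} closes the proof.
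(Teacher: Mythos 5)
There is a genuine gap, and it sits exactly where you locate the ``hard part.'' Your plan is to verify the hypothesis of Lemma~\ref{lem:shapeConditions} for every bead of $R_{\delta}'$, including the \texttt{BEAD-LINE} beads, by exhibiting for each bead two earlier occupied neighbors $p_r,p_s$ whose unique empty common neighbor is the bead's target position. For the interior beads of a \texttt{BEAD-LINE} run this is impossible: each straight run consists of $\delta$ beads stretching all the way across the cave, whose width was deliberately chosen so that only the two endpoint beads of a run (the labelled $ca_i,cb_i,\dots,ch_i$) are adjacent to a bump bead on a wall, while the $\delta-2$ interior beads are adjacent to no previously placed bead other than their predecessor on the transcript. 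The second earlier neighbor $p_s$ you need simply does not exist, and no tabulation over offsets can produce it --- indeed, if every \texttt{BEAD-LINE} vertex did satisfy the two-earlier-neighbor hypothesis, Lemma~\ref{lem:shapeConditions} would yield a system of \emph{every} delay (in particular delay $1$) assembling $S_{\delta}$, contradicting Theorem~\ref{thm:small-delay-weak}, which is the entire point of this construction.

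The paper's proof accordingly invokes Lemma~\ref{lem:shapeConditions} only for the portions of $R_{\delta}'$ \emph{not} produced by \texttt{BEAD-LINE} or \texttt{BIG-BUMP}, and stabilizes the bead line by a different mechanism: it adds explicit attraction rules pairing the run-endpoint bead types ($cb_i, cc_i, \dots$) with the bump bead types on the opposite wall, and relies on the full delay-$\delta$ lookahead --- a favorable elongation lays all $\delta$ nascent beads of a run in a straight line so that the \emph{last} one reaches the far wall and forms a bond (Fig.~\ref{fig:sw-main-line}); it is this single distant bond, discoverable only with delay at least $\delta$, that forces the whole run into position, not any local two-neighbor pinning. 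To repair your argument you must replace the \texttt{BEAD-LINE} case analysis with an argument of this kind, and likewise treat \texttt{BIG-BUMP} separately, since its stabilization depends on the extra bonds shown in Fig.~\ref{fig:bumps} rather than on the hypothesis of Lemma~\ref{lem:shapeConditions}.
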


\begin{proof}
Let $R_{\delta}' = (P_{\delta}', w_{\delta}')$ be the routing constructed in Algorithm~\ref{alg:build-routing}.  We construct an OS $\Xi_{\delta}^*$ and argue that it has a single terminal configuration which has the routing $R_{\delta}'$.  Let $\Xi_{\delta}^* = (\Sigma_{\delta}, w_{\delta}^*, \calH_{\delta}^*, \delta, \alpha, \sigma)$ where
\begin{itemize}
  \item $\Sigma_{\delta} = \{w_{\delta}'[i] \mid i \in [1, |w_{\delta}'|]\}$,
  \item $w_{\delta}^* = w_{\delta}'[4 .. |w_{\delta}'|]$ (we skip the first $3$ bead types since they are the seed),
  \item $\alpha = 5$, and
  \item $\sigma$ is the configuration created from the routing defined in lines 5-7 of Algorithm~\ref{alg:build-routing} along with the empty set.
\end{itemize}
To generate $\calH_{\delta}^*$, for every portion of $R_{\delta}$ which is not created by the \texttt{BEAD-LINE} or \texttt{BIG-BUMP} routine, we add the rules generated by the implicit algorithm in the proof of Lemma~\ref{lem:shapeConditions}.  In particular, for each of these portions of the routing, treat the preceding portion of the routing as the seed and generate the rule set to build the new portion of the routing using the algorithm which is implicit in the proof of Lemma~\ref{lem:shapeConditions}.  We note that by the construction of Algorithm~\ref{alg:build-routing}, these portions of $R_{\delta}'$ meet the criteria listed in the lemma statement.  Indeed, all these portions of the routing are made by placing beads in a position relative to beads currently in the routing using the subroutine \texttt{SHARED-NHBR}.

We now discuss the rules which must be added to assemble the \texttt{BEAD-LINE} and \texttt{BIG-BUMP} gadgets.  The \texttt{BIG-BUMP} gadget can be assembled by adding interaction rules so that the bonds shown in part (b) Figure~\ref{fig:bumps} form.   Recall that the routing created in the $i^{th}$ iteration of routines \texttt{LEFT-WALL}, \texttt{RIGHT-WALL}, and \texttt{BEAD-LINE} are translations of the routings shown in Figure~\ref{fig:labeled-sub} (which shows an example when $\delta=4$).  In order to allow the \texttt{BEAD-LINE} gadget to assemble in $\Xi_{\delta}^*$ we add the interaction rules $(cc_i, ld_i)$, $(cc_i, ld_i')$, $(cb_i, lc_i)$, $(ce_i, lb_i)$, $(ce_i, lb_i')$, $(cd_i, la_i)$, $(cg_i, rd_i)$, $(cg_i, rd_i')$, $(cf_i, rc_i)$, $(ca_{i+1}, rb_i)$, $(ca_{i+1}, rb_i')$, $(ch_i, ra_i)$ to $\calH_{\delta}^*$.

\begin{figure}[htp]
\centering
\includegraphics[width=4.0in]{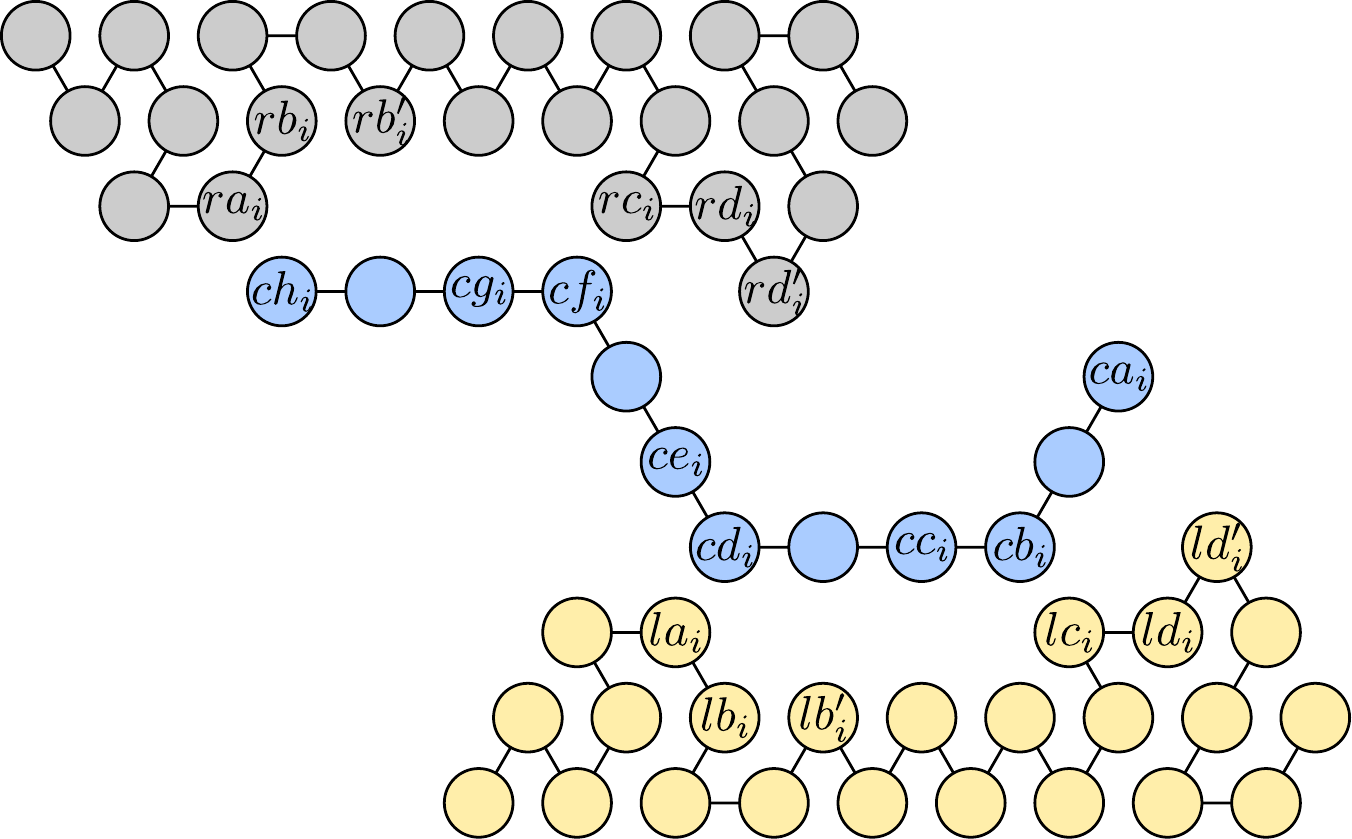}
\caption{This image is a $90^\circ$ rotation from the actual configuration. An example of the routings created in the $i^{th}$ iteration of routines \texttt{LEFT-WALL}, \texttt{RIGHT-WALL}, and \texttt{BEAD-LINE} when $\delta=4$.  Only the important bead types are labeled.}
\label{fig:labeled-sub}
\end{figure}

To see that $\Xi_{\delta}^*$ does indeed assemble the routing $R_{\delta}'$, first note that it follows from the proof of Lemma~\ref{lem:shapeConditions} that the only terminal configuration $\Xi_{\delta}$ folds is $R_{\delta}'$ provided that the \texttt{BIG-BUMP} and \texttt{BEAD-LINE} gadgets are assembled correctly.  It's easy to check these interactions allow the \texttt{BIG-BUMP} gadget to form at any delay.  To see that the \texttt{BEAD-LINE} gadget can assemble correctly, note that the added rules allow $\Xi_{\delta}^*$ to assemble the \texttt{BEAD-LINE} gadget as shown in Figure~\ref{fig:sw-main-line}.
\end{proof}

\subsubsection{An infinite version of $\Xi_{\delta}^*$}
\begin{figure}[htp]
\centering
\includegraphics[width=4.0in]{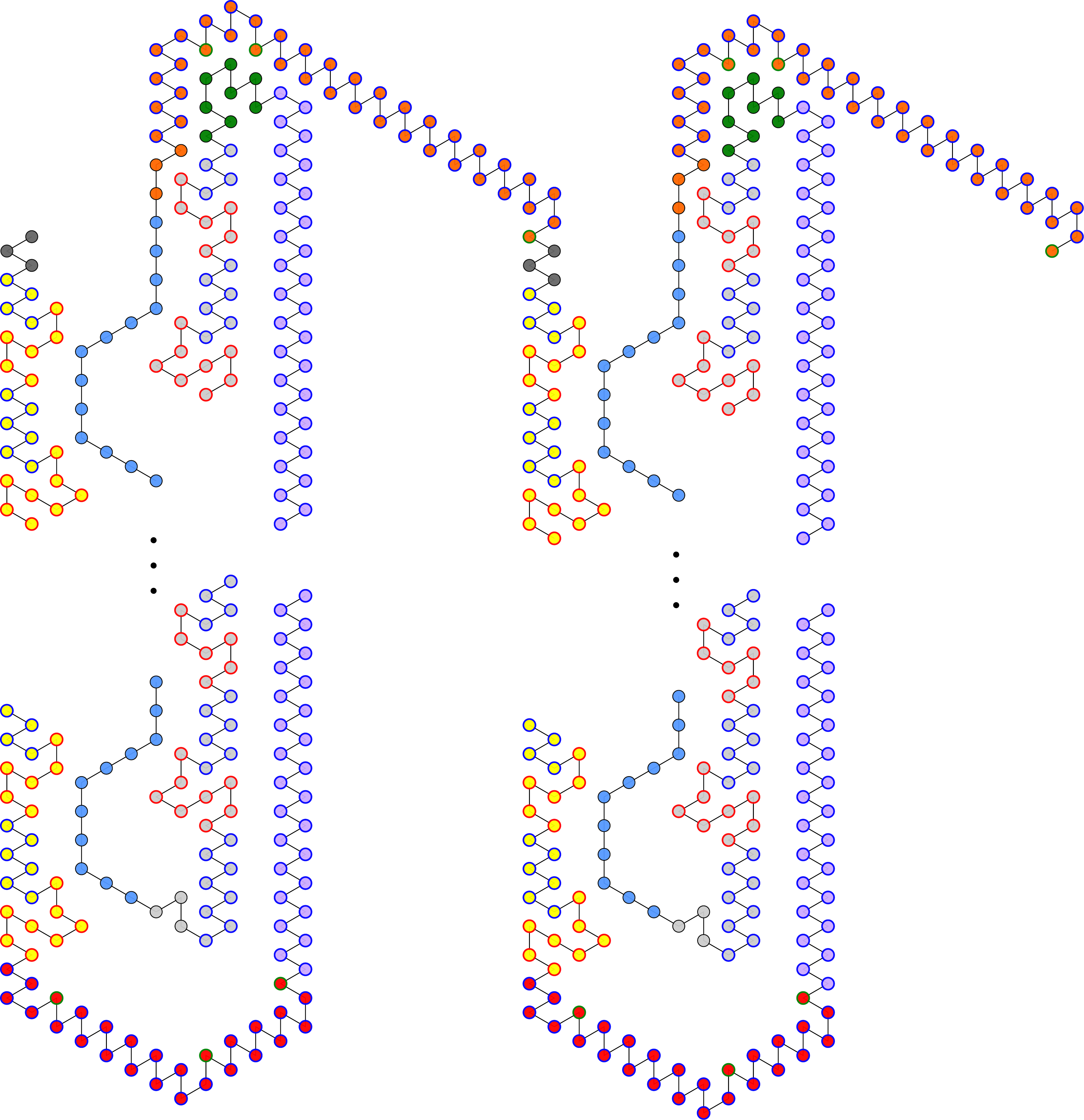}
\caption{A portion of the routing of the terminal configuration of $\Xi_{\delta}$ when $\delta=4$.}
\label{fig:sw-stacked}
\end{figure}

We define an infinite version of $\Xi_{\delta}^*$, which we call $\Xi_{\delta}$, so that $\Xi_{\delta}$ assembles an infinite number of copies of $R_{\delta}'$ stacked on top of each other as shown in Figure~\ref{fig:sw-stacked}.  Let $\Xi_{\delta}^* = (\Sigma_{\delta}, w_{\delta}^*, \calH_{\delta}^*, \delta, \alpha, \sigma)$ be the OS defined in Section~\ref{sec:finite-system}.  Also, let $t_{\delta} = w_{\delta}^* \cdot s_1 \cdot s_2 \cdot s_3$, that is $t_{\delta}$ is the bead sequence $w_{\delta}^*$ with the three bead types in the seed concatenated onto it. Let $\Xi_{\delta}$ be the oritatami system defined by $\Xi_{\delta} = (\Sigma_{\delta}, w_{\delta}, \calH_{\delta}, \delta, \alpha, \sigma)$ where $w_{\delta}$ is the infinite bead sequence defined by $w_{\delta}(i) = t( i \mod |t| + 1)$ and $\calH_{\delta}$ is the rule set $\calH_{\delta}^*$ with rules added so that the bead types $s_1, s_2$, and $s_3$ (the beads in the seed $\sigma$) assemble in a position relative to the \texttt{SPACER} gadget as shown in Figure~\ref{fig:sw-stacked}.
To see that $\Xi_{\delta}$ is deterministic recall that the \texttt{SPACER} gadget grows an ``arm'' where the last bead in the arm is greater than $\delta$ away from other gadgets in the routing.  This means that beads can not ``accidently'' interact with beads in other copies of the routing.  This combined with the fact that $\Xi_{\delta}^*$ is deterministic implies $\Xi_{\delta}$ is deterministic.  We denote the terminal assembly of $\Xi_{\delta}$ by $C_{\delta}$.  For $\delta > 2$, we define the shape $S_{\delta}$ by $S_{\delta} = dom(C_{\delta}$).

\subsection{$S_{\delta}$ cannot be assembled by any system with delay $< \delta$}
Let $C_{\delta}$ be the terminal configuration of $\TMO_{\delta}$ and let $C^*$ be the terminal configuration of $\TMO_{\delta}^*$ (Recall $\TMO_{\delta}^*$ was the system which assembled the finite routing $R_{\delta}'$ constructed in Algorithm~\ref{alg:build-routing}.  We call a set of points $D_{iter} \subseteq dom(C_{\delta})$ an \emph{iteration} if there exists $\vec{v} \in \mathbb{R}^2$ such that $D_{iter} = \{ \vec{p} \mid \vec{p} = \vec{v} + \vec{x} \text{ for some } x \in \dom(C^*) \}$.  In other words, an iteration is just a translation of the set of points in $R_{\delta}'$.  We call the points added to the routing in the \texttt{BEAD-LINE} routine, (Algorithm~\ref{alg:bead-line}), the \emph{bead-line points}. Let $B$ be the set of bead-line points.  Let $\vec{v}$ be such that $D_{iter} = \{ \vec{p} \mid \vec{p} = \vec{v} + \vec{x} \text{ for some } x \in \dom(C^*) \}$.  The \emph{bead-line points of iteration $D_{iter}$} is the set $B_{iter} = \{x \mid x = \vec{v} + \vec{y} \text{ for some } y \in B\}$.  The locations of the all the beads in Figure~\ref{fig:swGrowthGadgets} is an example of an iteration when $\delta=4$.  Furthermore, the points of the blue beads in the figure make up the set of bead-line points in the iteration.

Let $R_{\delta}$ be the routing of $C_{\delta}$.  For convenience, we let $p(b) \subset \mathbb{T}$ be the set of points defined by $p(b) = \{\vec{x} \mid (b, \vec{x}) \in R_{\delta}$\}.  That is $p(b)$ is the set of points where bead type $b$ is located in the configuration $C_{\delta}$.  Given a specific iteration $D$, we define $p_D(b) = \vec{x}$ where $x \in p(b)$.  Note this is well defined since the bead types placed at points in an iteration in $R_{\delta}$ are unique.

\begin{lemma}
Let $\delta > 2$.  There does not exist any system $\TMO' = (\Sigma', w', \calH', \delta', \alpha', \sigma')$ with $\delta' < \delta$ such that $\TMO'$ assembles $S_{\delta}$.
\end{lemma}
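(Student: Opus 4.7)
The plan is to derive a contradiction from assuming that a deterministic OS $\TMO' = (\Sigma', w', \calH', \delta', \alpha', \sigma')$ with $\delta' < \delta$ has a terminal configuration $C'$ whose domain equals $S_\delta$. Let $R' = (P', w')$ be the routing of $C'$. The argument will pinpoint a single step of the folding dynamics of $\TMO'$ at which two distinct positions for the next bead yield the same number of bonds in every $\delta'$-elongation, contradicting determinism.

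First I would use the geometry of $S_\delta$ to constrain the shape of $R'$. Each iteration of $S_\delta$ contains a cave bounded by \texttt{LEFT-WALL} and \texttt{RIGHT-WALL}, and within the cave the only points of $S_\delta$ form the single-bead-wide zigzag corridor produced by \texttt{BEAD-LINE}. Since $R'$ must visit every point of $S_\delta$ exactly once, inside each cave $R'$ must trace this corridor as a simple path; moreover, the narrow N-directed transits of the corridor force $R'$ to cross from a position adjacent to one wall to a position adjacent to the opposite wall through exactly $\delta-1$ intermediate beads, none of which has any neighbor in $S_\delta$ other than its predecessor and successor in~$R'$.

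Next I would locate the critical step. Consider the moment when $\TMO'$ is about to stabilize an interior bead $b$ of such a transit, far enough from both walls that the closest wall bead is at transcript distance at least $\lceil \delta/2 \rceil$. The predecessor $b_{\text{prev}}$ of $b$ in $R'$ lies in an empty neighborhood, so $b$ has at least two geometrically admissible placements $p$ and $p'$ which are related by the reflection across the line through $b_{\text{prev}}$ in the cave direction. Any $\delta'$-elongation starting from $p$ can be mirrored into a $\delta'$-elongation from $p'$ of the same length and same set of bonds with already-stabilized beads, because the lookahead of length $\delta'-1 < \delta-1$ cannot reach any wall bead of the opposite wall, and the wall adjacent to the transit has not been reached yet by the transit segment (there are no stabilized beads close enough to break the reflection symmetry). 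Therefore $p$ and $p'$ belong to the same $\argmax$ of the dynamics, so $\Dynamics(\{c^t\})$ is not a singleton, contradicting the determinism of $\TMO'$.

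The main obstacle will be making the symmetry argument precise: I must exclude the possibility that a bond formed within the lookahead window with some previously stabilized bead (e.g., a bead placed earlier in the same iteration, or a bead of an adjacent iteration) breaks the reflection symmetry and favors exactly one of $p, p'$. To handle this, I will exploit the local isometry of the cave: the $\delta'$-elongations from $p$ and from $p'$ stay inside the cave (since a path of length $\delta' - 1 < \delta - 1$ cannot exit a corridor of width~$\delta$), and the reflection used to build the bijection fixes the center line of the cave pointwise and maps the stabilized beads on one wall into themselves while keeping bond counts invariant. Combined with the definition of $S_\delta$ (which is symmetric in each transit segment up to the wall labels), this yields the desired equality of maximum bond counts between the two candidate elongations, completing the contradiction.
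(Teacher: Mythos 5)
There is a genuine gap, and it sits exactly where the real difficulty of this lemma lies. Your contradiction rests on two premises: (i) that from a bead in the middle of a transit the $\delta'$-lookahead ``cannot reach any wall bead,'' and (ii) that the already-stabilized environment is invariant under the reflection exchanging $p$ and $p'$. Both fail. For (i), the cave has width $\delta$ while the lookahead places $\delta'$ nascent beads with $\delta'$ as large as $\delta-1$; from a point at distance roughly $\delta/2$ from either wall, a nascent path of length $\delta'-1 \leq \delta-2$ easily reaches either wall, and it can also simply fold back onto the portion of the corridor already traced behind $b_{\text{prev}}$. So the elongation can always form bonds with previously stabilized beads — these are what the paper calls \emph{phantom bonds}, bonds present in a favorable elongation but absent from the terminal configuration. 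For (ii), the reachable environment is not symmetric about any line through $b_{\text{prev}}$: the corridor behind the growth front lies on one side, and the two walls carry different, offset bump structures. An adversarial rule set can therefore arrange that elongations from the correct position $p$ achieve strictly more (phantom) bonds than any elongation from $p'$, breaking the tie you need. In short, the possibility of phantom bonds is precisely the obstacle, and your argument assumes it away.

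The paper's proof confronts this head-on: after showing (via an endpoint-counting argument over infinitely many iterations) that some iteration's bead-line must be traversed contiguously, it tracks the number of bonds made by nascent beads in the favorable elongations and proves it is non-decreasing and must strictly increase at least once per $\delta$ wall-adjacent beads (a pigeonhole argument on how many beads can be stabilized ``correctly'' per step). Since a $\delta'$-elongation can involve at most $5\delta'< 5(\delta-1)+1$ nascent bonds, the $\Theta(\delta^2)$ repetitions of the bump unit exhaust this budget and yield the contradiction. Note also that the lemma is proved for arbitrary, possibly non-deterministic $\TMO'$ whose every terminal configuration has domain $S_\delta$, whereas you assume determinism at the outset; that part is repairable (a tied wrong position $p'\notin S_\delta$ yields a terminal configuration with the wrong domain), but the symmetry step is not.
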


\begin{proof}
For the sake of contradiction, suppose that there exists a system $\TMO' = (\Sigma', w', \calH', \delta', \alpha', \sigma')$ with $\delta' < \delta$ which assembles $S_{\delta}$.  Let $C' = (P', w', H') \in \termasm{\TMO'}$ (note that $\TMO'$ isn't necessarily deterministic, so there may be more than one terminal configuration) and let $\vec{C'}=(C_i')_{i=0}$, where $C_i' = (P_i', w_i', H_i')$, be the foldable sequence of $\TMO'$ such that $res(\vec{C'}) = C'$.  By assumption, $dom(C') = S_{\delta}$.

Intuitively, the next claim states that in at least one of the iterations, $P'$ must pass through the set of bead-line points in a contiguous manner.  That is, the path doesn't ``exit'' the set of bead line points and then ``re-enter''.  Indeed, the point at which $P'$ exits must have an adjacent neighbor with which it doesn't share an edge with in $P'$.  Consequently, that neighbor only shares an edge with one other point in $P'$ and, consequently, it is an endpoint.  Since a directed path can only have two endpoints and there are an infinite number of iterations, the claim is proven.

\begin{claim} \label{clm:route}
There exists an iteration $D$ such that the subsequence of $P'$ consisting of exactly the bead-line points of $D$ is a contiguous subsequence of $P'$, and no point in $D$ is contained in $\dom(\sigma')$.
\end{claim}

\begin{proof}
Before we prove this claim we introduce the notion of an edge in a directed path.  We say that there is an edge between $p_i$ and $p_j$ in a directed path $P$ provided that $|i-j| = 1$.  Note that for a single element $p$ of a directed path, there are at most two elements such that there is an edge between those elements and $p$. And, if an element only has one edge in a directed path, then it is an endpoint.

To prove this claim, we show that any iteration $D_{iter}$ where the subsequence of $P'$ consisting of exactly the bead-line points of $D_{iter}$ is not a contiguous subsequence of $P'$ must contain an endpoint of the routing of $C'$.  To see this, let $D_{iter}$ be such an iteration.  Let $P^*$ be the minimal contiguous subsequence of $P'$ which contains all the bead-line points of $D_{iter}$.  By assumption, $P^*$ contains points which are not in the bead-line points of $D_{iter}$.  We assume the first point in $P^*$ is adjacent to a point which is in $S_{\delta}$ but not in the set of bead-line points of $D_{iter}$.  Otherwise, it would be the case that the first point of $P^*$ is part of the seed $\sigma'$ which would imply that $D_{iter}$ contains an endpoint of the routing of the configuration $C'$.

We now consider the case where the first point in $P^*$ which is not a point in the bead-line points of $D_{iter}$ is in the set $\cup_{i \in [1, \lceil \frac{\delta(5(\delta-1)+1)}{4} \rceil]} \{p(ra_i), p(rc_i), p(la_i), p(lc_i)\}$ (the beads located at these points are shown in Figure~\ref{fig:labeled-sub}).  Without loss of generality, we assume the first point in $P^*$ which is not a bead-line point is $p(la_i)$ for some $i \in [1, \lceil \frac{\delta(5(\delta-1)+1)}{4} \rceil]$.  This means that there is not an edge in the directed path between $p(cd_i)$ and one of the neighbors which is adjacent to it in the set of bead-line points of $D_{iter}$, which we denote by $\vec{t}$, since it must share an edge with the point which directly preceded it and it must share an edge with $p(la_i)$.  Consequently, $\vec{t}$ must be an endpoint of the directed path of the configuration $C'$ since $\dom(C') = S_{\delta}$, $\vec{t}$ only has two neighbors in $S_{\delta}$, and one of $\vec{t}$'s neighbors does not share an edge with $\vec{t}$.

The second case we consider is that the first point in $P^*$ which is not a point in the bead-line point of $D_{iter}$ is not in the set $\cup_{i \in [1, \lceil \frac{\delta(5(\delta-1)+1)}{4} \rceil]} \{p(ra_i), p(rc_i), p(la_i), p(lc_i)\}$.  This means it is either the point in the path of the \texttt{LEFT-WALL} gadget which is adjacent to $ca_1$ or the point in the path of the \texttt{SPACER} gadget which is adjacent to the bead-line points of $D_{iter}$.  In either case, it must be the case that the first bead in $P^*$ is adjacent to one of the points in $\cup_{i \in [1, \lceil \frac{\delta(5(\delta-1)+1)}{4} \rceil]} \{p(ra_i), p(rc_i), p(la_i), p(lc_i)\}$ (by the assumption $P^*$ isn't a contiguous subsequence of $P'$).  A similar argument to the first case we considered shows that one of the beads adjacent to the first bead in $P^*$ must be an end point of the path of $C'$.

Since every iteration $D_{iter}$ where the subsequence of $P'$ consisting of exactly the bead-line points of $D_{iter}$ is not a contiguous subsequence of $P'$ must contain an endpoint of $P'$, there can be at most two iterations where the subsequence of $P'$ consisting of exactly the bead-line points of $D_{iter}$ is not a contiguous subsequence of $P'$.  This along with the fact $C'$ is infinite and $\sigma'$ is finite shows that there exists an iteration $D_{iter}$ such that the subsequence of $P'$ consisting of exactly the bead-line points of $D_{iter}$, denoted by $\bar{P}$, is a contiguous subsequence of $P'$, and no point in $D_{iter}$ is contained in $\dom(\sigma')$.  This concludes the proof of the claim.
\end{proof}

To reduce notation, we drop the $\alpha$, $\delta$ and $\calH$ when referring to elongations, favorable elongations and $\xrightarrow[\calH, t]{\alpha, \delta}$ since they are clear from context.  That is, for the rest of this section, these terms are implicitly referring to the parameters of $\TMO'$.  Another convenient piece of terminology we use is we say that a configuration $C_i'$ in $\vec{C'}$ \emph{stabilizes} a bead if that bead appears in $C_i'$ but not in $C_{i-1}'$.  Similarly, we say a configuration $C_i'$ stabilizes a bond provided that bond appear in $C_i'$ but not in $C_{i-1}'$.  Let $BL$ be a set of bead-line points in an iteration $D$ with a contiguous routing.  We know such a $D$ exists due to Claim~\ref{clm:route}.  For the rest of this section we shorten the notation $p_D(b)$ to just $p(b)$ since $D$ is clear from context.

Let $C_a = (P_a, w_a, H_a) \in \prodasm{\TMO'}$ be an element of the assembly sequence $\vec{C'}$ and let $C_b=(P_b, w_b, H_b)$ be an $\calH$-valid $\alpha$-$\delta$-favorable elongation of $C_a$.  We call the set of bonds in the set $H_b \setminus (H_b \cap H')$ the set of \emph{phantom bonds of $C_a$} and we denote this set by $\calP \calH (C_b)$.  When $H_b = \calP\calH(C_b)$ we say that $C_b$ is \emph{stabilized by only phantom bonds}.  We call $H_b \setminus \calP \calH (C_b)$ the set of \emph{visible bonds of $C_b$}.  Intuitively, the phantom bonds of a favorable elongation are the bonds which help to stabilize a bead in the next configuration in the foldable sequence, but do not show up in the terminal configuration.  The set of visible bonds do show up in the terminal configuration.  Note here that $C'$ is fixed and we are always talking about the phantom bonds with respect to $C'$.  Figure~\ref{fig:sw-main-line} shows an example of both phantom and visible bonds.  Note that the bonds between the purple bead and the maroon beads are phantom bonds (since they do not appear in the terminal configuration), and the bond between the orange and aqua bead is a visible bond.  Note that since any elongation can have at most $\delta$ nascent beads and a bead can have at most $5$ bonds, there can be at most $5\delta$ phantom bonds in any favorable elongation.  We denote the total number of bonds made by the nascent beads in an elongation $C_e$ by $\calNB (C_e)$.

Let $R'$ be the routing of $C'$. Without loss of generality, we assume that the first bead to be stabilized by $\TMO'$ in $BL$ is at location $p(ca_1)$ (the right most bead in Figure~\ref{fig:sw-main-full-example}. Let $u$ be the subsequence (which is not contiguous) of beads in $R'$ constructed by adding a bead $(\vec{x}, b)$ to $u$ if and only if $\vec{x} \in  \cup_{i \in [1, \lceil \frac{\delta(5(\delta-1)+1)}{4} \rceil]} \{p(ca_i), p(cc_i), p(ce_i), p(cg_i)\}$ (these correspond to the points where the purple beads are located in the Figure~\ref{fig:sw-main-full-example}.  Let $o$ be the subsequence of beads in $R'$ constructed by adding a bead  $(\vec{x}, b)$ to $o$ if and only if $\vec{x} \in \cup_{i \in [1, \lceil \frac{\delta(5(\delta-1)+1)}{4} \rceil]} \{p(ra_i), p(rc_i), p(la_i), p(lc_i)\}$ (these correspond to the points where the orange beads are located in the Figure~\ref{fig:sw-main-full-example}.  Now, let $\vec{C_u}$ be the subsequence of configurations in $\vec{C'}$ constructed by adding a configuration $C_i'$ to $\vec{C_u}$ provided that $C_i'$ stabilizes a bead in the subsequence $u$.  Similarly, let $\vec{C_o}$ be the subsequence of configurations in $\vec{C'}$ constructed by adding a configuration $C_i$ to $\vec{C_o}$ provided that $C_i$ stabilizes a bead in the subsequence $o$.  Recall that given a bead $b = (\vec{x}, a)$ (which is a bead type along with a point), $\dom(b) = \vec{x}$.  We define the sequence $\dom(u)$ and $\dom(o)$ to be sequences in $\mathbb{T}$ such that $\dom(u) = (\vec{x}_i)_{i=1}$ where $\vec{x}_i = \dom(u(i))$ for all $i$ and $\dom(o) = (\vec{x}_i)_{i=1}$ where $\vec{x}_i = \dom(o(i))$ for all $i$

Let $\vec{C^*} = (C_i^*)_{i=0}$, where $C_i^* = (P_i^*, w_i^*, H_i^*)$, be a sequence of configurations such that $C_i^*$ is a favorable elongation of $C_i'$ and $C_{i+1} \sqsubseteq C_i^*$.  We define $\vec{C_o^*}$ to be the subsequence of configurations in $\vec{C^*}$ such that $C_i^*$ is in $\vec{C_u^*}$ if and only if $C_{i-1}'$ is in $\vec{C_o}$.  Similarly, we define $\vec{C_u^*}$ to be the subsequence of configurations in $\vec{C^*}$ such that $C_i^*$ is in $\vec{C_u^*}$ if and only if $C_{i-1}'$ is in $\vec{C_p}$.  Intuitively, these are the favorable elongations which are projected to stabilize beads in $o$ and $u$.

\begin{claim} \label{obs:no-vis-bonds}
Suppose that $i$ is such that $H_{i+1}' = H_i'$.  Then $\calNB (C_{i+1}^*) \geq \calNB (C_{i}^*)$.
\end{claim}

Suppose $t \in \mathbb{N}$ is such that $C_{i}^*$ is an elongation of $C_i'$ by $w[t ... t+\delta-1]$.  Then it must be the case that all bonds in $C_i^*$ occur between some bead type and a bead type in $w[t+1 .. t+\delta-1]$ (otherwise during the projection the bond would be added to $C_i'$ and consequently $H_{i+1}'$ would not equal $H_i'$).  Now, consider the favorable elongations of $C_{i+1}'$.  Since, by definition, the favorable elongations include elongations of length shorter than $\delta$, it also includes the favorable elongations by $w[t+1 .. t+\delta-1]$.  Now, observe that there exists a configuration $\bar{C} = (\bar{P}, \bar{w}, \bar{H})$ with $H_i' \subset \bar{H}$ such that $\bar{C}$ is an elongation of $C_{i+1}'$ by $w[t+1 .. t+\delta-1]$.  Consequently, any favorable elongation of $C_{i+1}'$ by $w[t+1 .. t+\delta]$ $C''$ must be such that $\calNB (C'') \geq \calNB (\bar{C}) \geq \calNB (C_{i}^*)$.

\begin{claim} \label{clm:nbgeqpb}
For any $i \in \mathbb{N}$, $\calNB(C_{i+1}^*) \geq |\calPB(C_i^*)|$.
\end{claim}

Suppose $t \in \mathbb{N}$ is such that $C_{i}^*$ is an elongation of $C_i'$ by $w[t ... t+\delta-1]$.  Then it must be the case that all bonds in $\calPB(C_i^*)$ occur between some bead type and a bead type in $w[t+1 .. t+\delta-1]$ (otherwise during the projection the bond would be added to $C_i'$ and consequently the bond would not be a phantom bond).  Now, consider the favorable elongations of $C_{i+1}'$.  Since, by definition, the favorable elongations can include elongations of length shorter than $\delta$, it also considers the favorable elongations by $w[t+1 .. t+\delta-1]$.  Now, observe that there exists a configuration $\bar{C} = (\bar{P}, \bar{w}, \bar{H})$ with $\calPB(C_{i}^*) \subset \bar{H}$ such that $\bar{C}$ is an elongation of $C_{i+1}'$ by $w[t+1 .. t+\delta-1]$.  Consequently, any favorable elongation $C''$ of $C_{i+1}'$ by $w[t+1 .. t+\delta]$ must be such that $\calNB (C'') \geq \calNB (\bar{C}) \geq |\calPB(C_{i}^*)|$.

Since $S_{\delta}$ was constructed so that the Euclidean distance between point $\dom(o(i))$ and point $\dom(o(i+1))$ is $\delta-1$ and the delay factor of $\TMO$ is assumed to be $\leq \delta-1$,  the only way for phantom bonds to form in $C_{k-1}$ is if the following observation holds.

\begin{observation} \label{obs:naughty-bead}
Let $h \in \lceil \frac{\delta(5(\delta-1)+1)}{4} \rceil$, and suppose $C_k'$ stabilizes $u(h)$.  Then $o(h) \notin \dom(C_{k-1}^*)$.
\end{observation}

If this observation didn't hold, then it would not be possible for any bead in the nascent portion to be adjacent to a bead with which it could bind since the nascent portion would be ``fully stretched out'' as shown in part (a) of Figure~\ref{fig:swBad}.  A similar argument also allows us to see that $C_{k-1}^*$ must be stabilized by only phantom bonds.

\begin{observation} \label{obs:bad-stab}
Let $h \in \lceil \frac{\delta(5(\delta-1)+1)}{4} \rceil$, and suppose $C_k'$ stabilizes $u(h)$.  Then $C_{k-1}^*$ is stabilized by only phantom bonds.
\end{observation}

\begin{claim} \label{clm:jacobs}
Let $\delta > 2$ and $\delta' < \delta$.  For every sequence $a = (a_i)$ of length $\delta$ or greater where $a_i \leq \delta'$,  there exists $k,l \in \mathbb{N}$ such that $\delta \cdot k < \sum_{i=1}^l(a_i) < \sum_{i=1}^{l+1}(a_i) \leq \delta \cdot (k+1)$.
\end{claim}

\begin{proof}
To see this claim, note that since $a_i \leq \delta' < \delta$, $\sum_{i=1}^{\delta}(a_i) \leq \delta(\delta -1) = \delta^2 - \delta$. Now consider the $\delta - 1$ many sets given by $[j\delta, (j+1)\delta)$ for each $j\in \N$ such that $0\leq j \leq \delta-1$. Note that there are $\delta$ many sums $\sum_{i=1}^{m}(a_i)$ for each $m\in \N$ such that $1\leq m \leq \delta$. By the pigeonhole principle, there exists a $k$ such that at least two such sums must be numbers in the set $(j \delta, (j-1)\delta]$ for $j = k+1$. Let $l$ be such that $\sum_{i=1}^{l}(a_i)$ is the first of these two sums. The existence of $k$ and $l$ prove the claim.
\end{proof}

\begin{observation} \label{obs:1}
If $C_i'$ stabilizes a bead which has a position in $BL$ except for the last $\delta'$ beads in $BL$, $C_i^*$ must have phantom bonds.
\end{observation}

\begin{observation} \label{obs:2}
Observation~\ref{obs:1} implies that if $C_i'$ stabilizes a bead which has position in $BL$, $C_i^*$ at most $\delta'-1$ (notice $\delta'-1 \leq \delta-2$) nascent beads can be in the correct position.  In other words, the last bead in the nascent portion of $C_i^*$ must always be in the incorrect position.
\end{observation}

This final claim allows us to say that every $\delta$ orange beads the bonds required by a favorable configuration to stabilize an orange increases.  It does this by showing that there exists an ``intermediate configuration'' $C_k'$ between configurations which stabilize orange beads in the foldable sequence such that $C_k^*$ does not have a nascent orange bead in the proper position.  To show that such a $C_k'$ exists, we consider the case where $C_k^*$ has a visible bond, but it is not in the right position.  An example where there is a visible bond, but the orange bead isn't in the correct position is shown in part (b) Figure~\ref{fig:swBad}.  In this case, the configuration which stabilizes this bead, must use an ``extra bond'' to cause the bead to be stabilized in the proper position.  Consequently, this configuration must have one more nascent bond than the configuration which stabilized the previous orange bead.  In the case where all configurations in the foldable sequence have elongations which place the orange bead in the correct position, we show that since only $\delta-2$ beads can be stabilized in the correct positions (by Observation~\ref{obs:1}, there comes a point where an extra intermediate configuration $C_k'$ occurs between configurations which stabilize orange beads in the foldable sequence.  This configuration $C_k'$ is required to only be stabilized using phantom bonds.  Thus, the next configuration to stabilize an orange bead must ``overcome'' those phantom bonds by using one more nascent bond than the elongation which stabilized the previous orange bead.

\begin{claim} \label{clm:final}
For every $j \in [1, 5(\delta-1)+1]$, there exists $C_k'$ such that $C_o(j\delta) \rightarrow C_k'$, $C_k' \rightarrow C_o((j+1)\delta)$ and $\calPB(C_o^*(j\delta)) < \calPB(C_k^*) < \calNB(C_o^*((j+1)\delta))$.
\end{claim}
\begin{proof}
Let $j \in [1, 5(\delta-1)+1]$.  Also, let $\vec{B}$ be the configuration sequence from $C_o(j \delta )$ to $C_o(j(\delta + 1))$.  Due to the constraints placed on how $\TMO'$ can assemble $BL$ in Observation~\ref{clm:route}, we know that only one such sequence exists.  Without loss of generality, we assume that the first configuration $C_i'$ in $\vec{C'}$ such that $\dom(C_i') \cap BL \neq \emptyset$ has $p(ca_1) \in \dom(C_i)$.  That is, the system assembles the bead line points shown as blue beads in Figure~\ref{fig:swGrowthGadgets} starting from the bottom.  We consider two cases: 1) for every configuration $C_i'$ in $\vec{B}$ there exists a favorable elongation $C_i^*$ of $C_i'$ such that every visible bond in $C_i^*$ involves a bead which has a position in $BL$ and 2) there exists some configuration $C_i'$ of $\vec{B}$ such that all favorable elongations of $C_i'$ have a visible bond which involves a bead which has a position not in $BL$.

For the first case, we can assume that for every $C_i^*$ which is an elongation of some configuration in $\vec{B}$, there is no bead in $C_i^*$ which is outside $BL$ and has a visible bond.  We first show that there exists a configuration $C_l'$ such that $C_o(j\delta) \rightarrow C_l'$, $C_l' \rightarrow C_o((j+1)\delta)$ and $C_l'$ is stabilized by only phantom bonds.  Let $d, h \in \mathbb{N}$ be such that $C_d' = C_o(j\delta)$ and  $C_{d + h}' = C_o((j+1)\delta)$.  Note that for $d','h\in \mathbb{N}$ if $R'(d')$ contains $\dom(o(i))$ and $R'(h')$ contains $\dom(o(i+1))$, $h'-d' = \delta-1$.  This means that $h=\delta(\delta-1)=\delta^2-\delta$ which implies $\dom(C_{d+h}') \setminus \dom(C_d') = \delta^2 - \delta$.

Define $C^{**}$ to be the subsequence of $C^*$ such that $C^{**} = C_d^*$ and $C_i^*$ is in $C^{**}$ if and only if $d < i \leq d+h$ and $C_i^*$ is not an elongation of any favorable elongation of  $C_{i-1}'$.  Intuitively, $C^{**}$ includes a configuration $C$ if that configuration is a favorable elongation which forces a bead that was in one position in a previous favorable elongation to switch to a new position due to phantom bonds, .  Note that since every configuration $C$ which stabilizes a new bead in $BL$ must ``incorrectly stabilize'' the last bead (per Observation~\ref{obs:2}), this set is not empty since another configuration $C'$ will stabilize that bead in the correct position and consequently there will not be any favorable elongation of $C'$ which is an elongation of $C$.  In fact, from Observation~\ref{obs:2} it follows that $C^{**}$ has at least $\delta$ configurations since $\frac{\delta^2-\delta}{\delta - 2} \geq \frac{\delta (\delta-1)}{(\delta - 1)} = \delta$.

Observe that in order for $C_i^{**}$ to contain a visible bond, it must be the case that there exist $r$ such that $\dom(o(r)) \in \dom(C_i^{**})$ (by the assumption of case 1).  As noted above if $R'(d')$ contains a point $\dom(o(j))$ and $R'(h')$ contain $\dom(o(j+1))$, then $h'-d' = \delta-1$.  Hence, there exists $\vec{c}$ such that $R'(c + (\delta-1)i')$ contains a point in $o$ for all of $i' \in[1, 5(\delta-1)+1]$.  Let $b=(b_i)$ be the sequence in $\mathbb{N}$ such that $b_i$ is number of beads in $C_i^{**}$ which are correctly stabilized.  Recall that $C^{**}$ has at least $\delta$ elements which implies $b$ does as well, and note that it follows from Observation~\ref{obs:2} that for all $i$, $b_i < \delta-1$.  It now follows from Claim~\ref{clm:jacobs} that there exists $l', k \in \mathbb{N}$ such that $c + (\delta-1)k < \Sigma_{i=1}^{l'} b_i < \Sigma_{i=1}^{l'+1} b_i \leq c + (\delta-1)(k+1)$.  But, this means that the elongation $C_{l'+1}^{**}$ does not contain any visible bonds since it can only potentially stabilize the beads $R'(\vec{c} + \Sigma_{i=1}^{l} (b_i) ), R'(\vec{c} + \Sigma_{i=1}^{l} (b_i) + 1), ..., R'(\vec{c} + \Sigma_{i=1}^{l+1} (b_i) - 1)$ in the correct positions and this does not include any bead which has a position in $o$.

Let $l = l' + 1$, and let $k$ be such that $C_k' = \rho_{\delta}(C_l^{**})$.  So far, we have established that there exists $C_l^{**}$ such that $C_o(j\delta) \rightarrow C_{k}'$, $C_{k}' \rightarrow C_o((j+1)\delta)$ and $C_k^*$ is stabilized by only phantom bonds.  Let $t$ be such that $C_o(t) \rightarrow C_k'$ and $C_o(t)$ is maximal.  It follows from the fact that $C_k^*$ is in the sequence $C^{**}$ that $|\calPB(C_o^*(t))| < \calNB(C_k^*)$.  Indeed, suppose for the sake of contradiction that $|\calPB(C_o^*(t))| \geq \calNB(C_k^*)$. Then there exists a favorable elongation $C_{lb}$ of $C_k'$ where the bonds in $\calPB(C_o^*(t))$ ``dominate'' the elongation.  This implies that there exists an elongation of $C_o(t)$ such that $C_k^*$ is an elongation of $C_o(t)$.  This contradicts the way we constructed $C^{**}$.

For the second case, let $C_r'$ be the configuration such that all elongations of $C_r'$ have a nascent bead involved in a visible bond which has a position outside of $BL$.  Furthermore, let $i$ be such that $C_o(i)$ is the minimal element such that $C_r' \rightarrow C_o(i)$.  Intuitively, $C_r^*$ makes a visible bond by placing an orange bead in a position which is next to an aqua bead, but not in $S_{\delta}$ as shown in part (b) of Figure~\ref{fig:swBad}.  The configuration $C_o(i)$ is a configuration where this same orange bead is stabilized in the ``correct'' position.  Now, we prove that $\calPB(C_o^*(i-1)) \leq \calNB(C_u^*(i))$, $\calNB(C_p^*(i)) < \calNB(C_r^*)$, and $\calNB(C_r^*) < \calNB(C_o^*(i))$ (otherwise, the orange bead could be incorrectly stabilized).

First, note that $\calPB(C_o^*(i-1)) \leq \calNB(C_u^*(i))$ follows directly from Claim~\ref{clm:nbgeqpb}.  To see that $\calNB(C_p^*(i)) < \calNB(C_r^*)$ note that if this was not the case there would exist an elongation where the bonds of $C_p^*(i)$ dominate and there would exist an elongation of $C_r'$ with no visible bond (since $C_p^*(i)$ can't have a visible bond).  To see that $\calNB(C_r^*) < \calNB(C_o^*(i))$ note that anything otherwise would mean that a malformed configuration could result.  This implies $\calPB(C_o^*(i-1)) < \calNB(C_r^*) < \calNB(C_o^*(i))$.
\end{proof}

Since all bond strengths are natural numbers, Claim~\ref{clm:final} implies that for every $j$, $\calNB(C_o^*(j\delta)) = |\calPB(C_o^*(j\delta))| + 1 < \calNB(C_o^*((j+1)\delta))$.  It follows from the construction of $S_{\delta}$ that there are at least $\delta \times (5 (\delta-1) + 1)$ elements in $o$.  Thus, $\calNB(C_o^*(\delta \times (5 (\delta-1) + 1))) \geq 5 (\delta-1) + 1$.  But, this contradicts the fact that the total number of bonds involving nascent beads in a $\delta'$-elongation of any configuration is $5(\delta')$ (since $\delta' < \delta$).
\end{proof}

\begin{figure}[htp]
\centering
\includegraphics[width=4.0in]{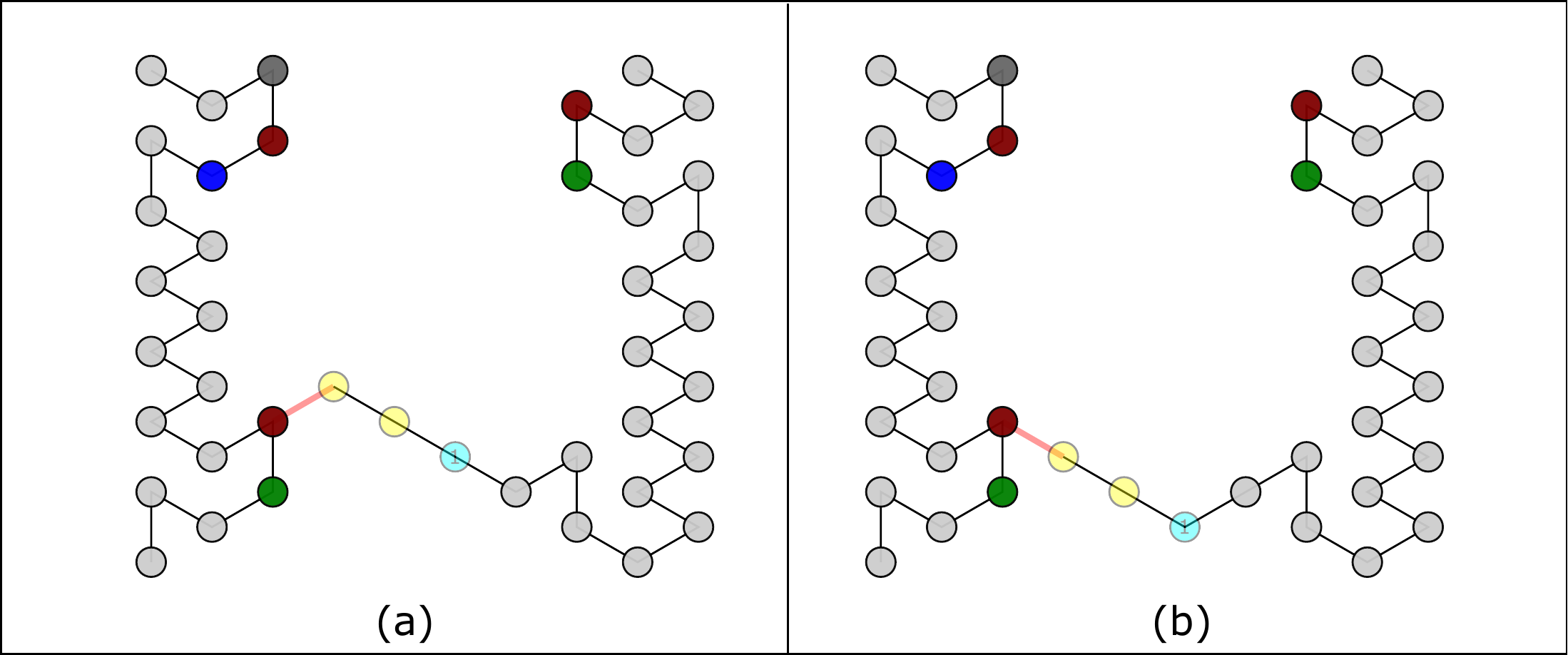}
\caption{Any system which uses delay less than $4$ must use only phantom bonds to stabilize the first bead otherwise something bad can happen.}
\label{fig:swBad}
\end{figure}

	\section{Finiteness of delay-1, arity-1 deterministic oritatami systems}\label{sec:d1a1}

\tikzstyle{mol} = [fill, circle, inner sep=1.25pt]
\tikzstyle{point} = [fill, circle, inner sep=0.5pt]

\begin{figure}[h]
\centering
\begin{minipage}{0.3\linewidth}
\centering
\scalebox{0.9}{\begin{tikzpicture}

\draw[red, thick] (0, 0) node[mol]{} node[above] {$a$}
-- ++(300:1) node[mol]{} 
-- ++(240:1) node[mol]{} node[below] {$\overline{b}$}
-- ++(0:1) node[mol]{} node[below] {$b$}
-- ++(60:1) node[mol]{} 
-- ++(120:1) node[mol]{} node[above] {$\overline{a}$}
;
\draw[-latex, thick] (1, 0) -- ++(0:1) node[mol]{} node[above] {$a$}
-- ++(300:1) node[mol]{} 
-- ++(240:1) node[mol]{} node[below] {$\overline{b}$}
-- ++(0:1) node[mol]{} node[below] {$b$}
-- ++(60:1) node[mol]{} 
-- ++(120:1) node[mol]{} node[above] {$\overline{a}$}
-- ++(0:1)
;
\draw[dotted, thick, red] (0, 0) -- ++(0:1); 
\draw[dotted, thick] (2, 0) -- ++(0:1) ++(240:2) -- ++(180:1);

\draw(4,0)++(300:1) node {\Large $\cdots$};

\end{tikzpicture}}
\end{minipage}
\begin{minipage}{0.025\linewidth}
\ \\
\end{minipage}
\begin{minipage}{0.3\linewidth}
\centering
\scalebox{0.9}{\begin{tikzpicture}

\draw[red, thick] (0, 0) node[mol]{} node[above] {$a$}
-- ++(300:1) node[mol]{} node[left] {$\overline{b}$}
-- ++(240:1) node[mol]{} node[below] {$\overline{b}$}
-- ++(0:1) node[mol]{} node[below] {$b$}
-- ++(60:1) node[mol]{} node[left] {$\overline{a}$}
-- ++(120:1) node[mol]{} node[above] {$\overline{a}$}
;
\draw[-latex, thick] (1, 0) -- ++(0:1) node[mol]{} node[above] {$a$}
-- ++(300:1) node[mol]{} node[left] {$\overline{b}$}
-- ++(240:1) node[mol]{} node[below] {$\overline{b}$}
-- ++(0:1) node[mol]{} node[below] {$b$}
-- ++(60:1) node[mol]{} node[left] {$\overline{a}$}
-- ++(120:1) node[mol]{} node[above] {$\overline{a}$}
-- ++(0:1)
;
\draw[dotted, thick, red] (0, 0) -- ++(0:1) ++(240:1) -- ++(300:1);
\draw[dotted, thick] (0,0) ++(300:2) -- ++(0:1) ++(120:1) -- ++(60:1) -- ++(0:1) ++(240:1) -- ++(300:1);

\draw(4,0)++(300:1) node {\Large $\cdots$};

\end{tikzpicture}}
\end{minipage}
\begin{minipage}{0.025\linewidth}
\ \\
\end{minipage}
\begin{minipage}{0.3\linewidth}
\centering
\scalebox{0.9}{\begin{tikzpicture}

\draw[red, thick] (0, 0) node[mol]{} node[below] {$a$}
-- ++(60:1) node[mol]{} node[above] {$b$}
-- ++(300:1) node[mol]{} node[below] {$\overline{a}$}
;
\draw[-latex, thick] (1, 0)
-- ++(60:1) node[mol]{} node[above] {$\overline{b}$}
-- ++(300:1) node[mol]{} node[below] {$a$}
-- ++(60:1) node[mol]{} node[above] {$b$}
-- ++(300:1) node[mol]{} node[below] {$\overline{a}$}
-- ++(60:1) node[mol]{} node[above] {$\overline{b}$}
-- ++(300:1) node[mol]{} node[below] {$a$}
-- ++(60:1) node[mol]{} node[above] {$b$}
-- ++(300:0.5)
;

\draw(5,0)++(60:0.5) node {\Large $\cdots$};

\draw[dotted, thick] (0, 0) -- ++(0:4) (0, 0)++(60:1) -- ++(0:4);
\end{tikzpicture}}
\end{minipage}

\caption{Deterministically foldable infinite shapes: 
(Left) A glider at delay-3 and arity-1; 
(Middle) A glider at delay-2 and arity-2, and 
(Right) A zigzag at delay-1 and arity-2.
Seeds are colored in red. 
The common bead type set consists of four letters $a, \overline{a}, b, \overline{b}$ and the rule set used in common is complementary: $a$ with $\overline{a}$ and $b$ with $\overline{b}$. 
}
\label{fig:infinite_shapes}
\end{figure}


In this section, we prove that oritatami systems cannot yield any infinite terminal conformation at delay~1 and arity~1 deterministically. 
The finiteness stems essentially from the particular setting of values to delay and arity. 
The \textit{glider} is a well-known infinite conformation foldable deterministically by an oritatami system at delay~3 and arity~1; see Figure~\ref{fig:infinite_shapes} (Left). 
The glider can be ``widened'' in order to be folded deterministically at arbitrarily longer delays. 
The glider can be ``reinforced'' with more bonds so that it folds at a shorter delay~2 with arity~2 as suggested in Figure~\ref{fig:infinite_shapes} (Middle). 
Even at the shortest possible delay, that is, 1, arity being 2 enables oritatami systems to fold an infinite structure deterministically, as exemplified by a zigzag conformation shown in Figure~\ref{fig:infinite_shapes} (Right). 
These infinite conformations leave just two possible settings of delay and arity under which infinite conformations cannot be folded deterministically: arity is set to 1 and delay is set to either 1 or 2. 
We will show the finiteness of deterministic folding in the first case in the rest of this paper, and leave the case of delay~2, arity~1 open. 
Note that even at these settings infinite shapes can be folded nondeterministically; an infinite transcript of inert beads folds into an arbitrary non-self-intersecting path at an arbitrary delay, and arity does not matter because beads are inert. 

\begin{figure}[tb]
\centering
\begin{tikzpicture}

\draw[-latex, thick] (0, 0) node[mol] {} node[below] {$a_{i-1}$} 
-- ++(0:1) node[mol] {} node[below] {$a_i$}
;
\draw (1,0)++(60:1) node[mol] {} node[above] {$a_j$};
\draw[dashed, thick] (1, 0) -- ++(60:1);

\foreach \x in {5, 9, 13} {
	\draw (\x, 0)++(60:1) node[mol] {} node[above] {$a_{j_1}$};
	\draw (\x, 0)++(120:1) node[mol] {} node[above] {$a_{j_4}$};
	\draw (\x, 0)++(240:1) node[mol] {} node[below] {$a_{j_3}$};
	\draw (\x, 0)++(300:1) node[mol] {} node[below] {$a_{j_2}$};
}

\draw (4, 0) node[point] {} ++(0:1) node[point]{} node[above] {$p$} ++(0:1) node[point]{};

\draw (7, 0) node {$\Rightarrow$};

\draw[-latex, thick] (8, 0) node[mol] {} node[below] {$a_{i-2}$} 
-- ++(0:1) node[mol] {} node[below] {$a_{i-1}$};
\draw (10, 0) node[point] {} node[above] {};

\draw (11, 0) node {$\Rightarrow$};

\draw[-latex, thick] (12, 0) node[mol] {} node[below] {$a_{i-2}$} 
-- ++(0:1) node[mol] {} node[below] {$a_{i-1}$}
-- ++(0:1) node[mol] {} node[below] {$a_i$}
;

\end{tikzpicture}
\caption{The two ways for a bead to get stabilized in oritatami systems at delay 1 and arity 1: 
(Left) by being bound to a bead $a_j$ for some $j \le i-2$; and 
(Right) through a tunnel section formed by the four beads $a_{j_1}, a_{j_2}, a_{j_3}, a_{j_4}$. 
}
\label{fig:stabilization_2ways}
\end{figure}


\begin{figure}[tb]
\centering
\scalebox{1}{\begin{tikzpicture}

\draw[-latex, thick] (-1, 0) node[mol]{} node[below] {$a_{i-2}$} -- ++(0:1) node[mol]{} node[above] {$a_{i-1}$} -- ++(240: 1) node[mol]{}  node[below] {$a_i$};
\draw (0, 0) ++(120:1) node {$\times$};
\draw (0, 0) ++(60:1) node {$\times$};
\draw (0, 0) ++(0:1) node {$\times$};
\draw (0, 0) ++(300:1) node {$\times$};

\draw[-latex, thick] (2.5, 0) node[mol]{} node[below] {$a_{i-2}$} -- ++(0:1) node[mol]{} node[above] {$a_{i-1}$} -- ++(300: 1) node[mol]{} node[below] {$a_i$};
\draw (3.5, 0) ++(120:1) node {$\times$};
\draw (3.5, 0) ++(60:1) node {$\times$};
\draw (3.5, 0) ++(0:1) node {$\times$};
\draw (3.5, 0) ++(240:1) node {$\times$};

\draw[-latex, thick] (6, 0) node[mol]{} node[below] {$a_{i-2}$} -- ++(0:1) node[mol]{} node[above] {$a_{i-1}$} -- ++(0: 1) node[mol]{} node[above] {$a_i$};
\draw (7, 0) ++(120:1) node {$\times$};
\draw (7, 0) ++(60:1) node {$\times$};
\draw (7, 0) ++(300:1) node {$\times$};
\draw (7, 0) ++(240:1) node {$\times$};

\draw[-latex, thick] (9.5, 0) node[mol]{} node[below] {$a_{i-2}$} -- ++(0:1) node[mol]{} node[below] {$a_{i-1}$} -- ++(60: 1) node[mol]{} node[above] {$a_i$};
\draw (10.5, 0) ++(120:1) node {$\times$};
\draw (10.5, 0) ++(0:1) node {$\times$};
\draw (10.5, 0) ++(300:1) node {$\times$};
\draw (10.5, 0) ++(240:1) node {$\times$};

\draw[-latex, thick] (13, 0) node[mol]{} node[below] {$a_{i-2}$} -- ++(0:1) node[mol]{} node[below] {$a_{i-1}$} -- ++(120: 1) node[mol]{} node[above] {$a_i$};
\draw (14, 0) ++(60:1) node {$\times$};
\draw (14, 0) ++(0:1) node {$\times$};
\draw (14, 0) ++(300:1) node {$\times$};
\draw (14, 0) ++(240:1) node {$\times$};

\draw (0, 0)++(300:1)++(240:1) node {$t_{-120}$};
\draw (3.5, 0)++(300:1)++(240:1) node {$t_{-60}$};
\draw (7, 0)++(300:1)++(240:1) node {$t_{0}$};
\draw (10.5, 0)++(300:1)++(240:1) node {$t_{+60}$};
\draw (14, 0)++(300:1)++(240:1) node {$t_{+120}$};

\end{tikzpicture}}
\caption{All possible tunnel sections: acute right turn $t_{-120}$, obtuse right turn $t_{-60}$, straight $t_0$, obtuse left turn $t_{+60}$, and acute left turn $t_{+120}$.
}
\label{fig:tunnel_sections}
\end{figure}


Let $\Xi$ be a deterministic oritatami system of delay~1 and arity~1. 
Assume its seed $\sigma$ consists of $n$ beads for some $n \ge 1$, and along its primary structure, we index these $n$ beads as $a_{-n+1}, a_{-n+2}, \ldots, a_{-1}, a_0$. 
Let us denote its transcript by $w = a_1 a_2 a_3 \cdots$ for some $a_1, a_2, a_3, \ldots \in \Sigma$.
For $i \ge 0$, let $C_i$ be the unique elongation of $\sigma$ by $w[1..i]$ that is foldable by $\Xi$. 
Hence, $C_0 = \sigma$. 
We assume that the directed path of $C_i$ is indexed rather as $-n+1, -n+2, \ldots, 0, 1, \ldots, i$. 

Let us consider the stabilization of the $i$-th bead $a_i$ upon $C_{i-1}$. 
The bead cannot collaborate with any succeeding beads $a_{i+1}, a_{i+2}, \ldots$ at delay~1. 
There are just two ways to get stabilized at delay~1. 
One way is to be bound to another bead, as shown in Figure~\ref{fig:stabilization_2ways} (Left).
The other way is through a \textit{tunnel section}. 
A tunnel section consists of four beads that occupy four neighbors of a point. 
See Figure~\ref{fig:stabilization_2ways} (Right). 
Assume that four of the six neighbors of a point $p$ are occupied by beads $a_{j_1}, a_{j_2}, a_{j_3}, a_{j_4}$ with $-n+1 \le j_1 < j_2 < j_3 < j_4 < i-2$ while the other two are not occupied. 
If the beads $a_{i-2}$ and $a_{i-1}$ are stabilized respectively at one of the two free neighbors and at $p$ one after another, then the next bead $a_i$ cannot help but be stabilized at the other free neighbor. 
In this way, $a_i$ can get stabilized without being bound. 

Let us now formalize the tunnel section. 
Four beads $a_{j_1}, a_{j_2}, a_{j_3}, a_{j_4}$ with $-n+1 \le j_1 < j_2 < j_3 < j_4$ \textit{form a tunnel section around a point $p$} if there exist an index $k \ge j_4+2$ and the foldable configuration $C_{k} = (P, u, H) \in \mathcal{A}(\Xi)$ such that 
\begin{enumerate}
\item For all $s \in \{j_1, j_2, j_3, j_4\}$, $(P[s], p) \in E_\Delta$;
\item $(P[k-1], p) \in E_\Delta$; and 
\item $P[k] = p$. 
\end{enumerate}
We call the four beads $a_{j_1}, a_{j_2}, a_{j_3}, a_{j_4}$ \textit{walls} of this tunnel. 
The walls and the bead $a_{k-1}$ leave at most one of the neighbors of $p$ free in $C_k$. 
If the neighbor is not free, $C_k$ is terminal. 
Otherwise, $a_{k+1}$ is to be stabilized at the neighbor and yields $C_{k+1}$. 
The bead $a_{j_4}$ can be regarded the newest wall because of $j_1, j_2, j_3 < j_4$. 
If $j_4 \ge 1$, that is, if it is transcribed, then we say the tunnel section is \textit{created by the bead $a_{j_4}$}. 
Otherwise, we say it is \textit{equipped in the seed}. 
%
Figure~\ref{fig:tunnel_sections} exhibits all the five kinds of tunnel sections depending on which neighbors are walls (indicated by $\times$'s), modulo types and indices of wall beads. 

Tunnel sections and unbound beads, or more precisely, their \textit{one-time} capability of binding, are the resources for beads to get stabilized deterministically at delay~1 and arity~1 (at longer delays, other ways of non-binding stabilization are possible due to so-called ``hidden rule,'' which never appears in any terminal conformation but indispensable, as argued in \cite{OtaSeki2017}). 
The seed $\sigma$ of $\Xi$, consisting of $n$ beads, provides at most $n$ binding capabilities, one per bead. 
Claim that it can be equipped with at most $n$ tunnel sections. 
If $n < 4$, it cannot be equipped with any tunnel section. 
For larger $n$, any bead $a_j$ has its predecessor or successor or both. 
By definition, it cannot be a wall of any tunnel around the point where its predecessor or successor is. 
Therefore, the first bead $a_{-n+1}$ and the last bead $a_0$ can be a wall of at most five tunnel sections, whereas any other bead $a_j$ with $-n+2 \le j \le -1$ can be a wall of at most four tunnel sections. 
One tunnel section consists of four beads. 
Therefore, the seed can be equipped with at most $\lfloor (4n+2)/4 \rfloor = n$ tunnel sections. 

Being bound for stabilization, a bead will not be able to bind to another bead later due to arity~1. 
In contrast, if it is stabilized through a tunnel section, then it can provide one-time binding capability and create tunnel sections. 

\begin{theorem}\label{thm:d1a1_det_finiteness}
	Let $\Xi$ be an oritatami system of delay 1 and arity 1 whose seed consists of $n$ beads, and let $w$ be the transcript of $\Xi$. 
	If $\Xi$ is deterministic, then $|w| \le 9n$. 
\end{theorem}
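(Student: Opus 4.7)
My plan is an amortized counting argument tracking two resources: \emph{binding capabilities} (BCs, one per bead) and \emph{tunnel sections}. By the stabilization analysis in the paragraph preceding the theorem, each transcribed bead $a_i$ is deterministically stabilized in exactly one of two ways: either by forming exactly one bond with an earlier bead (call these the \emph{case A} beads, which consume two BCs --- the new bead's and its partner's), or by traversing a pre-existing tunnel section (the \emph{case B} beads, which consume the tunnel but leave $a_i$'s BC intact). Let $|A|$ and $|B|$ count the two categories among the $N = |w|$ transcribed beads, so that $N = |A| + |B|$.

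A direct BC-conservation argument handles $|A|$: every bead, seed or transcribed, contributes exactly one BC to the global pool, and each case-A step drains two, hence $2|A| \le n + N$, equivalently $|A| \le (n+N)/2$. The main technical task is to establish $|B| \le 4n$. Once both inequalities are in hand,
\[
    N \;=\; |A| + |B| \;\le\; \tfrac{n+N}{2} + 4n
\]
simplifies to $N \le 9n$, the conclusion of the theorem.

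To bound $|B|$, I would set up a potential function $\Phi = B + T$, where $B$ counts currently unused BCs and $T$ counts tunnel sections already formed but not yet used. By the wall-counting claim established just above the theorem statement, the seed contributes at most $n$ to each of $B$ and $T$, so $\Phi_0 \le 2n$. The invariant to prove is that $\Phi$ is on average non-increasing across the folding, which formalizes the informal observation in the excerpt that each tunnel consumes two BCs (to guide the transcript in and to decide its exit direction) while creating only one. Combined with the seed bound, this would cap the total number of case-B steps by a linear function of $n$, yielding the desired $|B| \le 4n$.

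The main obstacle is making the tunnel-creation accounting rigorous. A single bead placement can, in principle, simultaneously close up to several tunnel sections around different neighboring points of the lattice, so a naive per-step charging could blow up the creation rate above the consumption rate. The resolution should exploit the fact that each newly created tunnel requires three pre-existing walls, each of which was itself stabilized earlier by a bond or by an earlier tunnel --- a recursive dependency that I expect to unwind into a discharging-style argument on the triangular lattice, ultimately tracing every new tunnel back to a constant number of seed resources (consistent with the earlier bound of at most four wall-participations per internal bead). Nailing down this discharging step is what yields the precise factor $4n$ on $|B|$ and, through the combination above, the theorem's $9n$ bound.
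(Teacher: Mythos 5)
Your decomposition $N=|A|+|B|$ and the closing arithmetic are fine, and the matching bound $2|A|\le n+N$ is correct at arity $1$. But the proof has a genuine gap at its load-bearing step, the claim $|B|\le 4n$. The potential $\Phi=B+T$ you propose is not non-increasing, even on average: a case-A bead contributes $\Delta B=+1-2=-1$ but can complete up to three new tunnel sections as a fourth wall, so $\Delta T\le+3$ and $\Delta\Phi\le+2$; no linear reweighting $\alpha B+\beta T$ fixes this, because making case-A steps non-increasing forces $\alpha\ge 3\beta$, which then makes the tunnel-exit step increasing. The two resources cannot be bounded independently the way you propose: tunnel sections are manufactured by bound beads, so a bound on $|B|$ of the form $4n$ cannot be had from $2|A|\le n+N$ alone (that inequality only gives $|A|\lesssim 5n$, which would allow on the order of $15n$ tunnel sections). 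The paper's actual mechanism is different and stronger: a bead stabilized by binding has spent its one bond and can never supply a binding capability to a \emph{later} bead, so the pool of usable capabilities is replenished only by the seed and by the single exit bead of each tunnel; this is what caps the number of bound beads near $n$ rather than near $(n+N)/2$. The paper then factorizes the $b/t$ stabilization sequence into blocks of the form $b$ or $bt^{+}b$, charges at least one net binding capability per block, and bounds tunnel-section creation per block (at most $3$ for a lone $b$, at most $4$ for a whole tunnel --- not per tunnel bead, since interior tunnel beads are walled in and create nothing).

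The second missing idea is the tandem-tunnel case $bt^{+}bt^{+}b$, which is precisely where the naive ``each tunnel nets $-1$ capability'' accounting fails: the exit bond of one tunnel doubles as the entry bond of the next, and the exiting bead may itself supply a fresh capability, so two tunnels in tandem can cost only one net capability. The paper needs a Jordan-curve argument to show that this rebate is only available when the second tunnel turns acutely, and that chaining a third tunnel does not improve the ratio; this yields $m\le 2n$ tunnels rather than $m\le n$, and is exactly where the constant $9$ (rather than $5$) comes from. Your proposal never engages with tandems, and the ``discharging back to seed resources'' sketch does not address either this topological step or the fact that the per-step potential genuinely increases on bound-bead steps. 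As written, the argument does not close.
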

\begin{proof}
	Assume $\Xi$ is deterministic. 
	Let us represent its transcript $w$ as $w = a_1 a_2 a_3 \cdots$ for beads $a_1, a_2, a_3, \ldots \in \Sigma$. 
	Each of these beads is stabilized either by being bound or through a tunnel section (or by both). 
	How they are stabilized can be described by a binary sequence $S$ of $b$'s (bound) and $t$'s (tunnel section); priority is given to $t$, that is, $S[i] = t$ if the $i$-th bead $a_i$ is stabilized not only by being bound but also through a tunnel section. 
	For $\ell \ge 1$, we call a factor $bt^\ell b$ of $S$ a \textit{tunnel of length $\ell$}. 
	See Figure~\ref{fig:stabilization} (right) for a tunnel of length 3, where $S[i-3 .. i+1] = btttb$; observe that the bead $a_{i-2}$ is stabilized both by both ways but due to the priority, $S[i-2] = t$. 


\begin{figure}[tb]
\begin{minipage}{0.3\linewidth}
\centering
\begin{tikzpicture}
\draw[-latex, thick] (0, 0) node[mol]{} node[below] {$a_{i-1}$} -- ++(0:1) node[mol]{} -- ++(60:1) node[mol]{} node[above] {$a_{i+1}$};

\draw[very thick, dotted] (1, 0) -- ++(120:1) node{$\times$};

\draw[-latex, dashed] (1, 0) -- ++(240: 0.75);
\draw[-latex, dashed] (1, 0) -- ++(300: 0.75);
\draw[-latex, dashed] (1, 0) -- ++(0: 0.75);

\draw (1,0)++(240:1) node[point] {};
\draw (1,0)++(300:1) node[point] {};
\draw (1,0)++(0:1) node[point] {};

\draw (2,0)++(60:1) node{$\times$} -- ++(300:1) node{$\times$} -- ++(240:1) node{$\times$} -- ++(240:1) node{$\times$} -- ++(180:1) node{$\times$} -- ++(180:1) node{$\times$};

\end{tikzpicture}
\end{minipage}
\begin{minipage}{0.05\linewidth}
\ \\
\end{minipage}
\begin{minipage}{0.6\linewidth}
\centering
\begin{tikzpicture}

\draw (0, 0) ++(60:1) node[point]{};
\draw (0, 0) ++(300:1) node[point]{};

\foreach \x in {1, 2, 3, 4} {
\draw (\x, 0) ++(60:1) node{$\times$};
\draw (\x, 0) ++(300:1) node{$\times$};
}
\draw (1,0)++(60:1) -- ++(0:3);
\draw (1,0)++(300:1) -- ++(0:3);

\draw[-latex, thick] (0, 0) node[mol]{} node[below] {$a_{i-5}$} -- ++(0:1) node[mol]{} node[below] {$a_{i-4}$} -- ++(0:1) node[mol]{} node[below] {$a_{i-3}$} -- ++(0:1) node[mol]{} node[below] {$a_{i-2}$} -- ++(0:1) node[mol] {} node[above] {$a_{i-1}$} -- ++(0:1) node[mol]{} node[right] {$a_i$} -- ++(60:1) node[mol]{} node[above left] {$a_{i+1}$} -- ++(120:1) node[mol] {} node[left] {$a_{i+2}$};

\draw[very thick, dotted] (2, 0) -- ++(60:1);
\draw[very thick, dotted] (4, 0) -- ++(300:1);
\draw[very thick, dotted] (4, 0)++(60:1) -- ++(0:1);

\draw (5, 0)++(300:1) node[point]{} ++(60:1) node[point]{} ++(60:1) node[point]{} ++(120:1) node[point]{} ++(180:1);

\draw[-latex, dashed] (5, 0) -- ++(300:0.75);
\draw[-latex, dashed] (5, 0)++(60:1) -- ++(300:0.75);
\draw[-latex, dashed] (5, 0)++(60:1) -- ++(0:0.75);
\draw[-latex, dashed] (5, 0)++(60:1) -- ++(60:0.75);

\draw (4,0)++(60:3) node{$\times$}
-- ++(0:1) node{$\times$}
-- ++(300:1) node{$\times$}
-- ++(300:1) node{$\times$}
-- ++(240:1) node{$\times$}
-- ++(240:1) node{$\times$}
-- ++(240:1) node{$\times$}
;

\end{tikzpicture}
\end{minipage}
\caption{
Stabilization of a bead $a_i$ (Left) by being bound, and (Right) through a tunnel of length 3.
The symbol $\times$ indicates that the point is occupied, while the small dot means that the point is free. 
A dashed arrow indicates that the bead at its origin creates a tunnel at the pointed free point. 
}
\label{fig:stabilization}
\end{figure}

If $S[i] = b$, that is, if the bead $a_i$ is stabilized not through a tunnel section but by being bound, then it can be involved in three separate tunnel sections as a wall but no more. 
Indeed, two of the six neighbors of the point at which $a_i$ is stabilized have been already occupied by its predecessor $a_{i-1}$ and by the bead to which $a_i$ is bound. 
It cannot be a wall of a tunnel section around the point where the successor $a_{i+1}$ will be stabilized. 
A tunnel is of the form $b t^+ b$ by definition, that is, it consumes two binding capabilities: one for the transcript to enter it and one for the transcript to decide which way to go after exit; while only the bead stabilized by its last tunnel section can provide a new binding capability. 
That is, a tunnel consumes at least one binding capability in total. 
For instance, in Figure~\ref{fig:stabilization} (Right), $a_{i-3}$ enters a tunnel of length 3 by being bound, its three successors $a_{i-2}, a_{i-1}, a_i$ are stabilized by the tunnel, and $a_{i+1}$ is also bound, while $a_i$ provides a new binding capability. 
Since $a_{i-1}$ wastes one unnecessary binding capability so that this tunnel consumes two binding capabilities in total. 
A tunnel can let the transcript create at most 4 tunnel sections, as suggested in Figure~\ref{fig:stabilization} (Right). 

If the sequence $S$ is free from any subsequence of the form $bt^+bt^+b$, then it can factorize as $S = u_1 u_2 u_3 \cdots$ for some $u_1, u_2, u_3, \ldots \in \{b\} \cup bt^+ b$. 
As argued above, each of these factors $u_1, u_2, \ldots$ consumes at least one binding capability. 
Since the seed can provide at most $n$ binding capabilities, there exists $m \le n$ such that $S = u_1 u_2 \cdots u_m$. 
Let $m_1$ be the number of tunnels among the $m$ factors. 
The $m_1$ tunnels can create at most $4m_1$ tunnel sections in total and the remaining $m-m_1$ factors, which correspond to beads that are bound for stabilization, can create at most $3(m-m_1)$ tunnel sections. 
The seed is equipped with no more than $n$ tunnel sections. 
The sum of the length of the $m_1$ tunnels is hence at most $n + 4m_1 + 3(m-m_1) = n+3m+m_1$. 
Consequently, $|S| \le n+3m+m_1 + m - m_1 = n+4m \le 5n$.  

\begin{figure}[tb]
\centering
\begin{tikzpicture}

\draw[-latex, thick] (0, 0) node[mol] {} -- ++(0:1) node[mol]{} node[above] {$a_{i-1}$} -- ++(0:1) node[mol]{} node[right]{$a_i$} -- ++(60:1) node[mol]{} node[left] {$a_{i+1}$} -- ++(0:1) node[mol] {} -- ++(0:1) node[mol] (ai3) {} node[above] {$a_{i+3}$} -- ++(300:1) node[mol] (ai4) {} node[right] {$a_{i+4}$};

\draw (0,0)++(60:1) node{$\times$} --++(0:1) node{$\times$} --++(60:1) node{$\times$} --++(0:1) node{$\times$} --++(0:1) node{$\times$};
\draw (0,0)++(300:1) node{$\times$} --++(0:1) node{$\times$} ++(60:1) ++(0:1) node{$\times$} --++(0:1) node{$\times$};

\draw[thick, dotted] (1, 0) -- ++(300:1);
\draw[thick, dotted] (2, 0)++(60:1) -- ++(60:1);
\draw[thick, dotted] (ai4) -- ++(180:1);

\draw (2, 0)++(300:1) node[mol]{} node (goal) {} node[below] {$a_j$};

\draw[-latex, dashed, thick] (ai4) to [out=300, in=0] (goal);

\draw[dotted, thick, -latex] (2, 0) -- ++(300:0.75);
\draw[dotted, thick, -latex] (ai3) -- ++(0:0.75);

\end{tikzpicture}
\caption{A tandem of two tunnels.}
\label{fig:tunnel_tandem}
\end{figure}


Now we have to handle a subsequence of the form $bt^ibt^jb$ of $S$ for $i, j \ge 1$, which is a tandem of tunnels. 
Figure~\ref{fig:tunnel_tandem} shows two tunnels in tandem. 
The transcript diverts the binding capability which it uses to exit the first tunnel in order to enter the second. 
Moreover, the beads $a_i$ and $a_{i+3}$, which are the last beads stabilized by the first and second tunnels, respectively, can provide one binding capability each. 
Thus, these two tunnels appear to lose only one binding capability \textit{in total} by forming a tandem. 
This argument is incorrect unless the second tunnel turns right acutely (see Figure~\ref{fig:tunnel_sections}). 
Unless turning right acutely, the second tunnel is provided with a right wall. 
The index of a bead that serves as a right wall must be smaller than $i-2$, and by definition, the bead is connected to $a_{i-1}$ by a primary structure of $C_i$. 
If $a_i$ provided a binding capability for a future bead, say $a_j$ ($j > i$), then their bond once formed would close the curve along the transcript from $a_i$ to $a_j$ and isolate a region including the right wall from the rest of the plane, which includes $a_{i-1}$ due to the Jordan curve theorem. 
This is contradictory because the primary structure of a conformation is defined to be non-self-interacting. 
The second tunnel should turn right acutely or $a_i$ cannot provide any binding capability (in order for $a_i$ to provide a binding capability rather to the left of the transcript, then the second tunnel is required to turn rather left acutely). 

\begin{figure}[tb]
\centering
\begin{minipage}{0.4\linewidth}
\centering
\begin{tikzpicture}

\draw[-latex, thick] (0, 0) node[mol]{} 
-- ++(0:1) node[mol] (ai-1) {} node[above] {$a_{i-1}$} 
-- ++(0:1) node[mol] (ai) {} node[right] {$a_i$} 
-- ++(60:1) node[mol] (ai+1) {} node[left] {$a_{i+1}$} 
-- ++(300:1) node[mol] (ai+2) {} 
-- ++(0:1) node[mol] (ai+3) {}
-- ++(60:0.5)
;

\draw[thick, dotted] (ai-1) -- ++(300:1);
\draw[thick, dotted] (ai+1) -- ++(60:1);
\draw[thick, dotted] (ai+3) -- ++(120:1);

\draw[-latex, thick, dotted] (ai) -- ++(300:0.75);
\draw[-latex, thick, dotted] (ai+2) -- ++(300:0.75);

\draw (0,0)++(60:1) node{$\times$} --++(0:1) node{$\times$} --++(60:1) node{$\times$} --++(0:1) node{$\times$} --++(300:1) node{$\times$};
\draw (0,0)++(300:1) node{$\times$} --++(0:1) node{$\times$};

\end{tikzpicture}
\end{minipage}
\begin{minipage}{0.05\linewidth}
\ \\
\end{minipage}
\begin{minipage}{0.5\linewidth}
\centering
\begin{tikzpicture}

\draw[-latex, thick] (0, 0) node[mol]{} 
-- ++(0:1) node[mol] (ai-1) {} node[above] {$a_{i-1}$} 
-- ++(0:1) node[mol] (ai) {} node[right] {$a_i$} 
-- ++(60:1) node[mol] (ai+1) {} node[left] {$a_{i+1}$} 
-- ++(300:1) node[mol] (ai+2) {} 
-- ++(0:1) node[mol] (ai+3) {}
-- ++(240:1) node[mol] {}
-- ++(300:1) node[mol] (ai+5) {}
-- ++(0:0.5)
;

\draw[thick, dotted] (ai-1) -- ++(300:1);
\draw[thick, dotted] (ai+1) -- ++(60:1);
\draw[thick, dotted] (ai+3) -- ++(0:1);
\draw[thick, dotted] (ai+5) -- ++(60:1);

\draw[-latex, thick, dotted] (ai) -- ++(300:0.75);

\draw (0,0)++(60:1) node{$\times$} 
--++(0:1) node{$\times$} 
--++(60:1) node{$\times$} 
--++(0:1) node{$\times$} 
--++(300:1) node{$\times$}
--++(0:1) node{$\times$}
--++(300:1) node{$\times$}
--++(240:1) node{$\times$}
;
\draw (0,0)++(300:1) node{$\times$} --++(0:1) node{$\times$};

\end{tikzpicture}
\end{minipage}
\caption{A tandem of two tunnels can save binding capability but the third one just wastes a bond.}
\label{fig:tandem_beneficial}
\end{figure}


If the second tunnel turns right acutely, the tandem can provide two binding capabilities at the cost of three as shown in Figure~\ref{fig:tandem_beneficial}. 
We cannot improve this ratio further even if another tunnel is concatenated to this tandem. 
Not turning right acutely, the tunnel makes the binding capabilities provided by $a_i$ or $a_{i+2}$ useless, as discussed above based on the Jordan curve theorem. 
Turning right acutely, on the other hand, the third tunnel just narrows a binding region of the second tunnel so that these three tunnels in tandem can provide at most two binding capabilities at the cost of four. 
Therefore, the number of binding capabilities decrements every two tunnels. 
Let $m$ be the number of tunnels in the sequence $S$, that is, $m \le 2n$. 
We denote their length by $\ell_1, \ell_2, \ldots, \ell_m$. 
These tunnels consume at least $\lceil m/2 \rceil$ binding capabilities. 
Hence, at most $n - \lceil m/2 \rceil$ beads can be stabilized by being bound. 
These beads can create at most $3(n-\lceil m/2 \rceil)$ tunnels. 
The $m$ tunnels can stabilize $\sum_{i=1}^m \ell_m$ beads and create at most $4m$ tunnels. 
Initially, the system can have at most $n$ tunnels. 
Combining all of these together, the number of beads that the system can stabilize deterministically is at most 
\[
	n - \left\lceil \frac{m}{2} \right\rceil + \sum_{i=1}^m \ell_i 
		\le n - \left\lceil \frac{m}{2} \right\rceil + n + 3 \left( n- \left\lceil \frac{m}{2} \right\rceil \right) + 4m \le 5n+2m \le 9n.
\]
Thus, the transcript can be of length at most $9n.$ 
\end{proof}

\afterpage\clearpage 

\section{Tribute galery}

This section displays the oritatami foldings of the same iconic shape at the three scales \scaling A3, \scaling B3 and \scaling C3, and it is a kind of a tribute to the field.

\begin{figure}
\centering
\rotatebox{90}{\includegraphics[trim={0cm 7cm 0cm 3cm}, clip, width=.9\textheight, height=\textwidth,keepaspectratio]{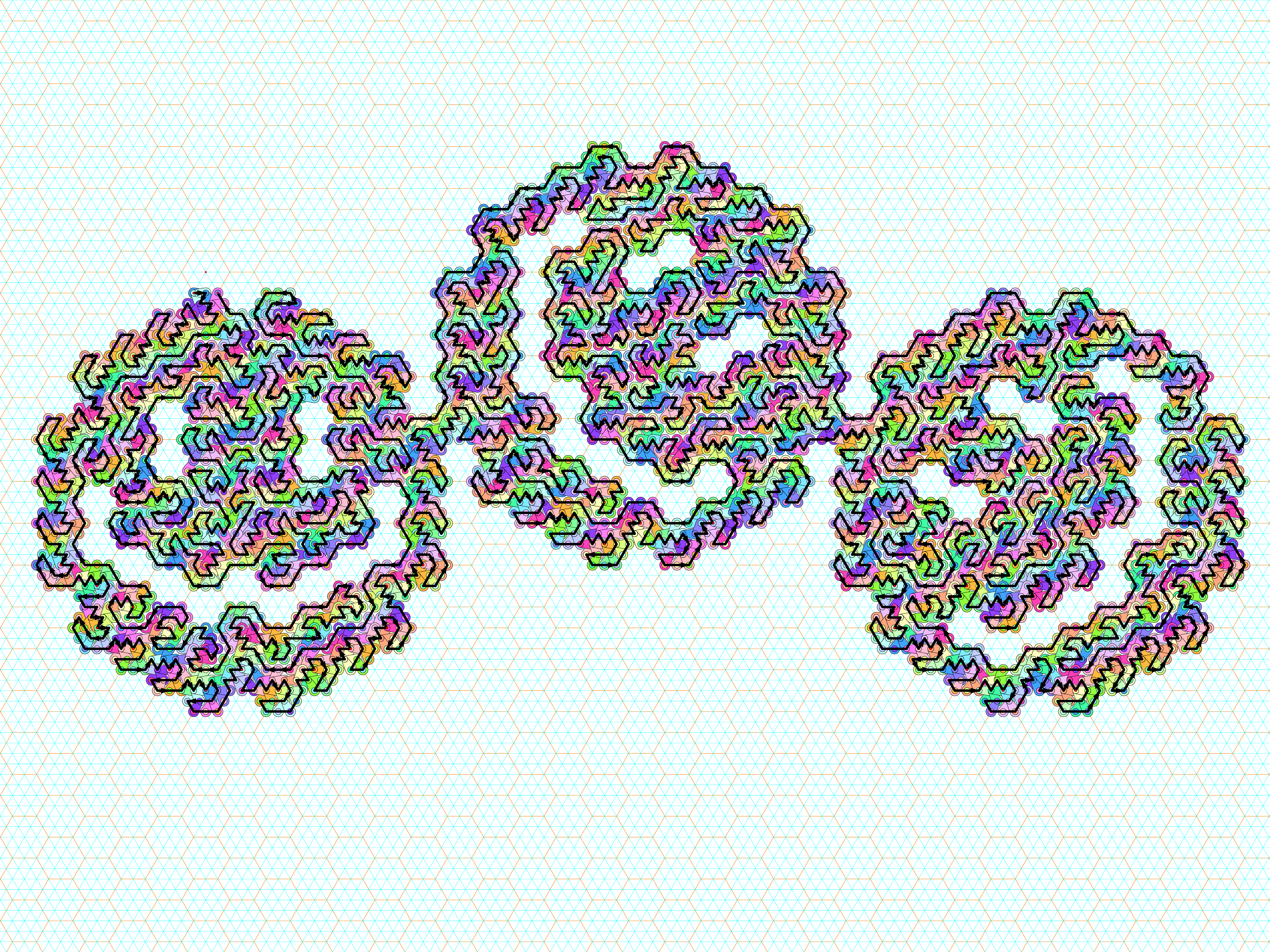}}
\caption{Oritatami ``stacking smileys'' at scale \scaling A3.}
\label{fig:tribute:stacking:A3}
\end{figure}

\begin{figure}
\centering
\rotatebox{90}{\includegraphics[trim={0cm 5cm 0cm 4cm}, clip, width=.9\textheight, height=\textwidth,keepaspectratio]{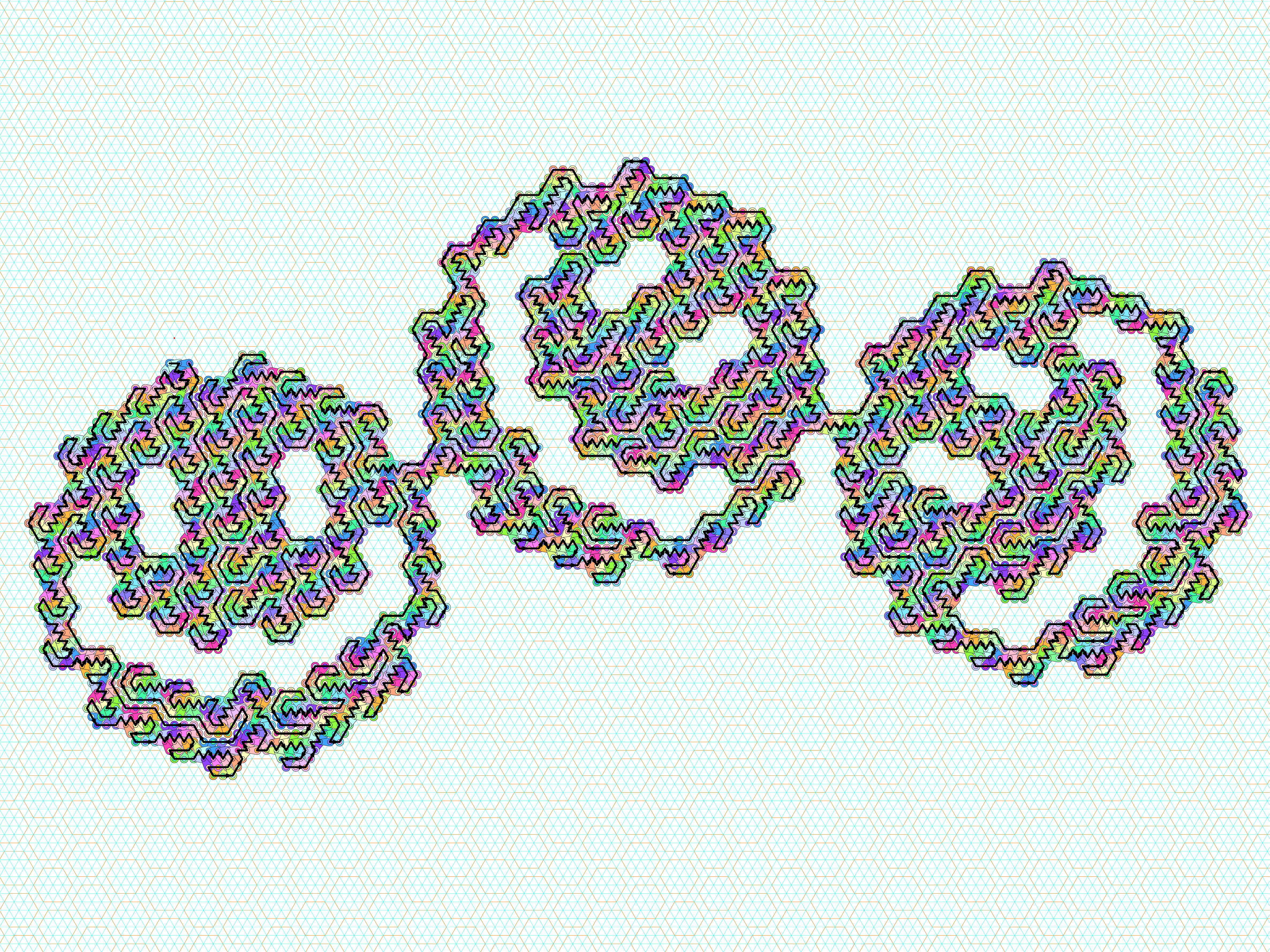}}
\caption{Oritatami ``stacking smileys'' at scale \scaling B3.}
\label{fig:tribute:stacking:B3}
\end{figure}

\begin{figure}
\centering
\rotatebox{90}{\includegraphics[trim={0cm 6cm 0cm 4cm}, clip, width=.9\textheight, height=\textwidth,keepaspectratio]{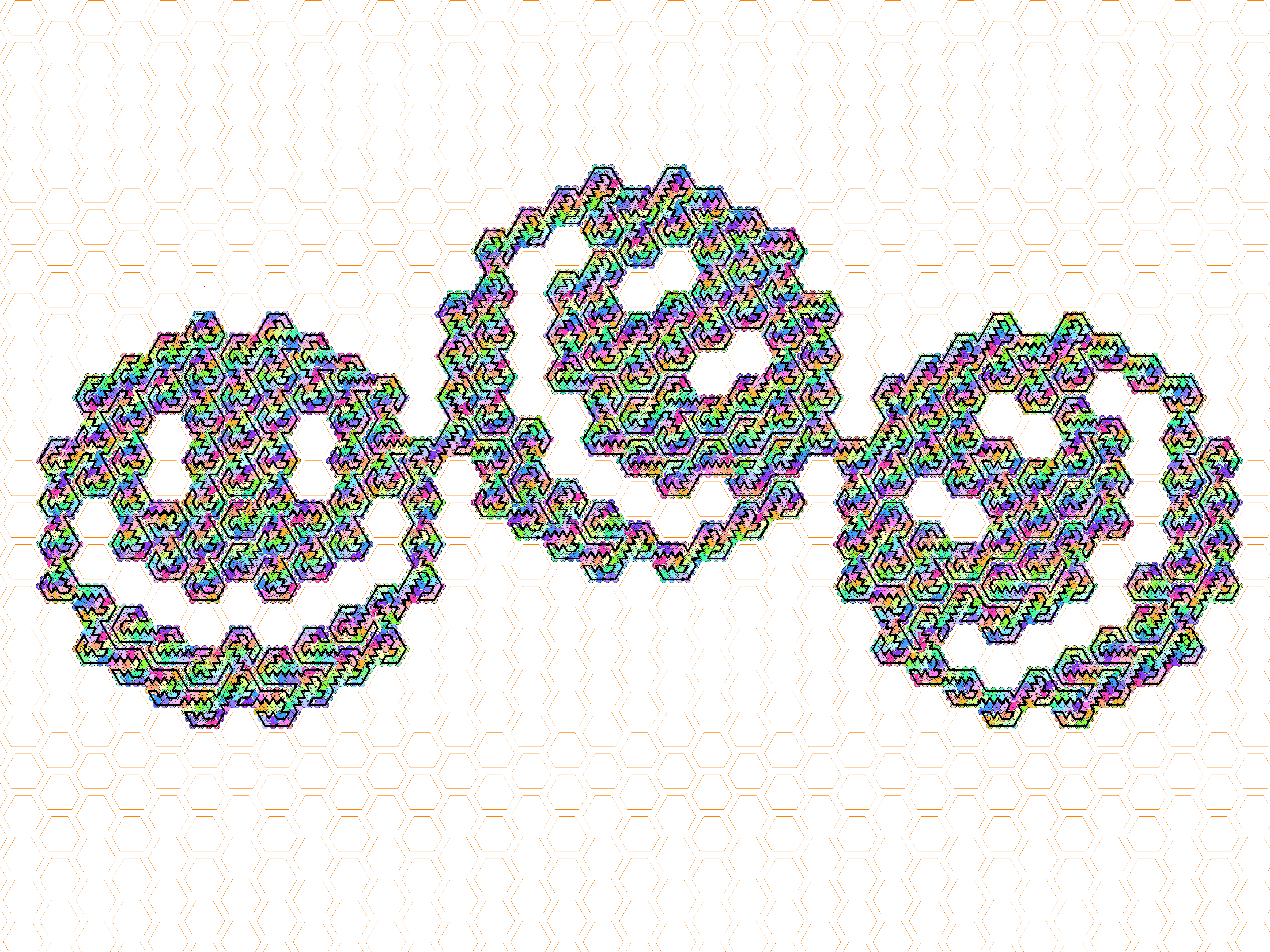}}
\caption{Oritatami ``stacking smileys'' at scale \scaling C3.}
\label{fig:tribute:stacking:C3}
\end{figure}


\ifabstract
\newpage
\appendix
\magicappendix
\fi

}{
}

\end{document}